\theoremstyle{plain}
\newtheorem{theorem}{Theorem}
\newtheorem*{theorem*}{Theorem}
\newtheorem{corollary}[theorem]{Corollary}
\newtheorem{lemma}[theorem]{Lemma}
\newtheorem{proposition}[theorem]{Proposition}
\newtheorem{observation}[theorem]{Observation}
\newtheorem{definition}[theorem]{Definition}
\newtheorem{remark}[theorem]{Remark}
\let\epsilon=\varepsilon
\newcommand{\G}{\mathcal{G}}
\newcommand{\B}{\mathcal{B}}
\newcommand{\A}{\mathcal{A}}
\newcommand{\emm}{\mathrm{e}}
\newcommand{\Ec}{\mathcal{E}}
\newcommand{\Nc}{\mathcal{N}}
\newcommand{\Eb}{\mathbf{E}}
\newcommand{\E}{\mathds{E}}
\newcommand{\V}{\mathcal{V}}
\renewcommand{\P}{\mathds{P}}
\newcommand{\TV}{\mathrm{TV}}
\newcommand{\Vin}{\V_\gamma}
\newcommand{\out}{\mathrm{out}}
\newcommand{\1}{\mathbb{1}}
\newcommand{\Kb}{\mathbb{K}}
\newcommand{\ord}{\mathrm{ord}}
\newcommand{\dis}{\mathrm{dis}}
\newcommand{\Bin}{\text{Bin}}
\newcommand{\Poisson}{\text{Poisson}}
\DeclareMathOperator{\CW}{CW}
\title{Low-temperature Sampling  on Sparse Random Graphs\thanks{For the purpose of Open Access, the authors have applied a CC BY public copyright licence to any Author Accepted Manuscript version arising from this submission. All data is provided in full in the results section of this paper.}} 
\author{Andreas Galanis \and Leslie Ann Goldberg \and Paulina Smolarova}
\date{23rd July 2026}
\begin{document}

\maketitle
\begin{abstract}
We consider sampling in the so-called low-temperature regime, which is typically characterised by non-local behaviour and strong global correlations. Canonical examples include sampling independent sets on bipartite graphs and sampling from the ferromagnetic $q$-state Potts model. Low-temperature sampling is computationally intractable for general graphs, but recent advances based on the polymer method have made significant progress for graph families that exhibit certain expansion properties that reinforce the correlations, including for example expanders, lattices and dense graphs.

One of the most natural graph classes that has so far escaped this algorithmic framework is the class of sparse Erd\H{o}s-R\'enyi random graphs whose expansion only manifests for sufficiently large subsets of vertices; small sets of vertices on the other hand have vanishing expansion which makes them behave independently from the bulk of the graph and therefore weakens the correlations. At a more technical level, the expansion of small sets is crucial for establishing the Kotecky-Priess condition which underpins the applicability of the  framework.

Our main contribution is to develop the polymer method in the low-temperature regime for sparse random graphs. As our running example, we use the Potts and random-cluster models on $G(n,d/n)$ for $d=\Theta(1)$, where we show a polynomial-time sampling algorithm for all sufficiently large $q$ and $d$, at all temperatures. Our approach applies more generally for models that are monotone. Key to our result is a simple polymer definition that blends easily with the connectivity properties of the graph and allows us to show that polymers have size at most $O(\log n)$.

\end{abstract}

\section{Introduction}

 We consider the problem of sampling from high-dimensional distributions in the so-called low-temperature regime, which is typically characterised by non-local behaviour. 
A classical example that has been studied in this context is the problem of sampling independent sets on bipartite graphs where it is perhaps intuitive to expect that a typical independent set is correlated with one of the two sides of the bipartition. Another standard example is the ferromagnetic Potts model  on (not necessarily proper) $q$-colourings weighted by the number of monochromatic edges (see below for definitions) where, at low temperatures, a typical colouring is expected to be  correlated with  one of the $q$ monochromatic colourings.

Starting from \cite{helmuth2019algorithmic, doi:10.1137/19M1286669}, a series of works have demonstrated that 
 this correlation can be utilised  to  obtain fast sampling algorithms for certain graph classes that exhibit strong expanding properties; perhaps the most prominent graph example is the class of  random regular graphs.
 Lattices   (such as $\mathbb{Z}^d$) can also be treated using contour models. By contrast, 
it is far from clear how to apply this intuition to general graphs. In fact,  sampling at low-temperatures is conjectured to be hard on general graphs, captured by so-called \#BIS-hardness in the case of the Potts and independent-set examples. 

Perhaps the most natural class of graphs that  has so far remained elusive from this algorithmic framework is the class of sparse Erd\H{o}s-R\'enyi random graphs like $\G(n,d/n)$ for $d=\Theta(1)$. This class has actually posed challenges even for high-temperature sampling \cite{Kuikui,efthymiou, DBLP:conf/approx/BlancaZ23, bezakova, efthymiou_colorings, yin_zhang, MosselSly} where the presence of high-degree variables typically complicates the applicability of standard sampling  techniques. For low-temperature sampling however the obstacles come primarily from another side, namely from the presence of many small ``sparse'' parts which have weak, if any, expansion.  For example, in $\G(n,d/n)$, this phenomenon not only causes the graph to be disconnected, but even inside the giant component there are \emph{induced} paths of size roughly $\Theta(\log n$) that therefore have almost no expansion; similarly, one can find various other sparse induced components which are scattered inside the giant.

Our main contribution is new polymer model for the random cluster model for sparse random graphs in the low-temperature regime, applicable even for graphs with non-uniform and unbounded degrees. As our running example, we use $\G(n,d/n)$  for $d=\Theta(1)$ large enough, where we show a polynomial-time approximate sampling algorithm for all sufficiently large $q$ at all temperatures. The key to our result is a simple polymer definition which blends easily with the connectivity properties of the graph and allows us to show that polymers have size at most $O(\log n)$. 
We also note that our approach more broadly applies to other monotone models.

To formally state our result, we recall a few standard definitions. The random cluster (RC) model with real parameters \(q,\beta > 0\) is a weighted edge model arising from statistical physics, assigning probabilities to the edge subsets of a given graph \(G = (V, E)\). For each edge subset \(F\subset E\), the associated weight of the \textit{configuration} \(F\) is given as
\[w_G(F)=w_{G;q,\beta}(F) = q^{c(F)} (\emm^\beta-1)^{|F|},\] where \(c(F)\) is the number of components in the graph \((V,F)\). We refer to edges in $F$ as the \emph{in-edges} of the configuration (and edges in $E\backslash F$ as the \emph{out-edges}).  Denote by \(\Omega_G\) the set of all configurations, i.e., all edge subsets of \(G\). The Gibbs distribution  \(\pi_G=\pi_{G;q,\beta}\)  is defined by 
$\pi_G(F) = {w_G(F)}/{Z_G}$ for $F\in \Omega_G,$ where $Z_G = \sum_{F'\in\Omega_G} w_G(F')$ is the partition function. The random cluster model can be viewed as an equivalent edge-representation of the  \textit{Potts model} (when \(q\) is an integer), which is supported on vertex assignments $\sigma:V\rightarrow \{1,\hdots,q\}$ weighted by $e^{\beta m(\sigma)}$ where $m(\sigma)$ is the number of monochromatic edges under $\sigma$.

For the random graph $\G(n,d/n)$ which is our focus here, the qualitative structure is conjectured to align with that on the random regular graph that has been extensively studied. In the so-called high temperature regime (low $\beta$)  a typical configuration is expected to be correlated with the all-out configuration $(F=\emptyset)$ and to consist of many small components -- this set of configurations is known as the ``disordered phase''; whereas in the so-called low temperature regime (high $\beta$) a typical configuration is expected to be correlated with the all-in configuration $(F=E)$ and to consist of a giant component (and other small components) -- this set of configurations is known as the ``ordered phase''. The interesting feature of the random-cluster model is that, on several classes of graphs, there is a critical threshold $\beta_c$ where the two phases coexist and each have probability bounded below by a constant. This coexistence typically causes severe complications to sampling in a window around $\beta_c$, see for example \cite{coja2023,mean1,gore1997swendsen} for various metastability phenomena and \cite{GJ,GSVY,CAI2016690} for computational hardness results that build on this.

Understanding this picture on $\G(n,d/n)$ more precisely is quite a bit more complicated than the random regular graph, even with currently available analysis tools; the key difference is that the local neighbourhood of a vertex is given by a Poisson tree (rather than a regular tree). For example,  the value of the log-partition partition function $\lim_{n\rightarrow \infty}\frac{1}{n}\log Z_G$ for low-temperatures is likely a rather involved expectation over Poisson trees based on understanding fixed-point equations on the underlying set of trees.\footnote{This is based on the fact that the model is ``replica symmetric'' \cite{replica}; to the best of our knowledge the formula for the log-partition function has not been established yet.} Likewise, understanding the performance of Markov chain algorithms on $\G(n,d/n)$ becomes extremely involved; in fact, even getting a fairly precise understanding of what happens  on a Poisson tree for low temperatures is open for both the Potts and random cluster models.

Our main result is to obtain an algorithm for $\G(n,d/n)$ by 
introducing suitable developments to  the polymer framework and showing how to use these together with recent Markov Chain Monte Carlo (MCMC) techniques. As a corollary of our techniques, we obtain various tools for the RC distribution on $\G(n,d/n)$ that shed light on the properties of the distribution.
\begin{theorem}\label{thm:mainthm}
    Let \(d\) be a large enough real. Then, for all sufficiently large reals \(q\), there is a (randomised) algorithm $\mathcal{A}$ such that the following holds whp over \(G\sim\G(n,d/n)\), for any inverse temperature $\beta>0$.
    On input $G$ and $\epsilon\in(0,1)$, with probability at least~$3/4$, the algorithm $\mathcal{A}$ outputs in $\mathrm{poly}(n,1/\epsilon)$ time a sample $F\in \Omega_G$ whose distribution is within TV-distance $\epsilon$ from the RC distribution $\pi_{G;q,\beta}$, and an estimate $\hat Z$ for the RC partition function $Z_G=Z_{G;q,\beta}$ that satisfies $\hat Z=(1\pm \epsilon)Z_G$. 
\end{theorem}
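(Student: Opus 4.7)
The plan is to split the partition function as $Z_G = Z^{\ord}_G + Z^{\dis}_G$ via a partition of the configuration space $\Omega_G = \Omega^{\ord}_G \sqcup \Omega^{\dis}_G$, where $\Omega^{\ord}_G$ consists of configurations $F$ that are close to the all-in ground state $F=E$ (the \emph{ordered} phase) and $\Omega^{\dis}_G$ consists of those close to the all-out ground state $F=\emptyset$ (the \emph{disordered} phase). On top of each phase I would build an abstract polymer model whose partition function matches the corresponding $Z^{\bullet}_G$ up to a pure ground-state prefactor, verify a Kotecky-Preiss-type condition for each, and then invoke the now-standard cluster-expansion plus MCMC pipeline to obtain both a multiplicative $\emm^{-\Omega(n)}$ estimate of each polymer partition function and an $\emm^{-\Omega(n)}$-TV sampler from each restricted measure. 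Combining the two via the estimated ratio $Z^{\ord}_G / Z^{\dis}_G$ produces the algorithm for the full RC distribution at any $\beta > 0$.

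The main novelty, and the hardest step, will be the polymer definition itself, because the standard expander/lattice approach fails on $\G(n,d/n)$: small vertex subsets can have vanishing expansion (induced paths of length $\Theta(\log n)$, tree-like tendrils attached to the $2$-core), so one cannot bound polymer weights by the usual local-expansion argument. The plan is to define polymers in a way that \emph{blends with the connectivity of $G$}: instead of naive vertex subsets, polymers will be connected subgraphs of an auxiliary graph built on $G$ that simultaneously tracks deviated edges and the attachment points of sparse substructures, thereby forcing any candidate polymer to either lie inside a genuinely expanding region or to encompass an entire sparse piece at once. The central technical claim to establish is then that every polymer carrying non-negligible weight has size at most $O(\log n)$. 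I expect this to follow from (i) monotonicity (FKG) of the RC model, used to dominate configurations by a simpler reference measure; (ii) whp properties of $\G(n,d/n)$ that bound the number of connected subgraphs of size $s$ rooted at any fixed vertex by $\emm^{O(s)}$; and (iii) the fact that for large $q, d$ each additional polymer vertex/edge costs a factor $(d/q)^{\Omega(1)}$ that dominates this entropy.

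With the $O(\log n)$ size bound in hand, verifying Kotecky-Preiss reduces to a standard weighted sum over rooted polymers, which converges once $q$ and $d$ are large enough. The partition function estimate is then obtained by truncating the cluster expansion at level $C \log n$: there are only $n^{O(1)}$ terms in the truncated sum and each can be computed in $\mathrm{poly}(n)$ time, incurring only an $\emm^{-\Omega(n)}$ error. Sampling from each phase proceeds via a Glauber-type dynamics on the polymer model, which mixes rapidly because polymers are small and interact only locally; the final sample is produced by tossing a biased coin with probabilities proportional to the two estimated phase partition functions and outputting a sample from the corresponding phase-restricted sampler. The hardest obstacle throughout will be isolating the sparse parts of $\G(n,d/n)$, namely the long induced paths and tree-like tendrils hanging off the $2$-core, so that they do not inflate polymer sizes beyond $O(\log n)$; overcoming this will likely require handling such pieces explicitly via tree dynamic programming or designing the auxiliary graph so that each sparse substructure is ``packaged'' atomically inside a single polymer whose weight can be computed exactly.
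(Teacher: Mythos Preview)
Your high-level decomposition into ordered and disordered phases, and the plan to combine them via an estimated ratio, matches the paper. But your proposed technical route --- verify a Kotecky--Preiss condition, truncate the cluster expansion, and run Glauber dynamics \emph{on the polymer model} --- is not what the paper does, and there is a real obstruction to making it work as you have described it.

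The obstruction is precisely the one the paper highlights in its introduction: on $\G(n,d/n)$ there exist small connected subgraphs (induced paths of length $\Theta(\log n)$, tendrils, etc.) whose expansion is essentially zero, so the per-vertex weight decay $(d/q)^{\Omega(1)}$ you invoke in step (iii) simply does not hold for small polymers. Your statement ``every polymer carrying non-negligible weight has size at most $O(\log n)$'' goes in the wrong direction for KP: it says large polymers are light, but KP additionally needs that the aggregate weight of \emph{small} polymers rooted at a vertex is bounded, and this is exactly what fails here. Your suggestion to ``package each sparse substructure atomically inside a single polymer whose weight can be computed exactly'' might in principle rescue this, but as stated it is too vague to be a plan; in particular, sparse pieces are interleaved throughout the giant component, not cleanly separable, and it is unclear what polymer model would make their contribution multiplicative.

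The paper sidesteps full KP entirely. Its polymers (Definition~\ref{def:polymers}) are defined \emph{relative to a fixed vertex} $v$ and are used only to bound the probability, under $\pi_C^\ord$, that a long path from $v$ avoids the giant component of $(V_C,F)\setminus v$. The paper proves a KP-type bound \emph{only for polymers of size at least} $\tfrac{A}{d}\log n$ (Proposition~\ref{prop:no-large-ordered-polymer}); small polymers are never controlled. This suffices to establish \emph{weak spatial mixing within the phase} (Lemma~\ref{lem:ordered-WSM}), which together with monotonicity and polynomial-time mixing on treelike balls (Lemma~\ref{lem:wired-mixing}) yields fast convergence of the \emph{random-cluster} Glauber dynamics from the all-in/all-out start (Theorems~\ref{thm:ordered-mixing} and~\ref{thm:disordered-mixing}). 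The partition-function estimate then comes from self-reducibility over a sequence of $\beta$'s, not from truncating a cluster expansion. If you want to pursue a cluster-expansion route, you would need a genuinely new idea for the small-polymer contribution; the paper's approach avoids this issue altogether by never asking for convergence of the full polymer series.
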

The success probability of~$3/4$ in Theorem~\ref{thm:mainthm} can be powered in the standard way.
As we will describe later in more detail (Section~\ref{sec:RCdynamics}), our algorithm is based on running the \textit{Glauber dynamics} for the random-cluster model from a suitable initial configuration, though it is a bit more involved than that since, in an interval of temperatures $[\beta_0,\beta_1]$, it needs to approximate the appropriate mixture of the ordered and disordered phases. 

The running time of the algorithm in Theorem~\ref{thm:mainthm} is $\text{poly}(n, 1/\epsilon)$ where the implicit constant in the exponent of scales as $O(\log q)$; this is largely because of a crude polynomial mixing-time bound on Poisson trees with wired boundary (i.e., Poisson trees where the leaves are conditioned to belong to the same component, say  by contracting them into a single vertex). It is an open question  to obtain sharper bounds with the wired boundary condition, even for regular trees when $q$ is non-integer (cf. \cite[Theorem 5]{SWtrees}).

The lower bound on $q$ in Theorem~\ref{thm:mainthm} is roughly  $\exp(\Omega(d \log d))$, which is essentially the ``standard'' bound where polymer-based techniques work (at least on bounded-degree graphs). Having $q$ this large facilitates showing closeness of a typical sample to the extreme configurations. In the case of bounded-degree graphs,  there has been some progress in lowering the value of $q$, mainly on random regular graphs 
\cite{blanca2021random,galanis2024plantingmcmcsamplingpottsarxiv}. For large~$\beta$ these results come from  first/second moment methods, though these results do not extend to non-integer~$q$ or to our~$G(n,d/n)$ setting.

We can improve significantly upon the running time using a different (deterministic) algorithm based on estimating the probability that an edge is an in-edge (based on a certain correlation decay property known as weak spatial mixing (WSM) within the phase, see Section~\ref{sec:proofThm2}) and integrating the expected number of in-edges. We expect that this different approach to utilising WSM will be useful in other settings where the underlying graph has tree-like neighbourhoods.
\begin{theorem}\label{thm:mainthm2}
    Let $R>0$ be arbitrarily large. For all sufficiently large reals $d$,   for all sufficiently large reals \(q\), there is an algorithm~$\mathcal{B}$ such that the following holds w.h.p. over \(G\sim\G(n,d/n)\), for any temperature $\beta>0$.
    On input $G$, the algorithm $\mathcal{B}$ outputs in $n^{1+\frac{1}{R}}$ time an estimate $\hat Z$ for the RC partition function $Z_G=Z_{G;q,\beta}$  that satisfies $\hat Z=(1\pm \tfrac{1}{n^{R}})Z_G$. 
\end{theorem}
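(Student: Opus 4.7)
The plan is to express $\log Z_G$ as an integral over $\beta$ of the expected number of in-edges, and to estimate this expectation edge-by-edge by exploiting WSM to reduce each term to a local computation on a tree-like neighbourhood.

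First, following the polymer representation developed for Theorem~\ref{thm:mainthm}, I would decompose $Z_{G;q,\beta} = (1 \pm \emm^{-\Omega(n)})\bigl(Z_\ord(\beta) + Z_\dis(\beta)\bigr)$, where $Z_\ord$ and $Z_\dis$ are the partition functions restricted to the ordered (mostly-in) and disordered (mostly-out) phases. Since the target multiplicative error $1/n^R$ is much larger than $\emm^{-\Omega(n)}$, it suffices to approximate each $Z_\star$, $\star \in \{\ord, \dis\}$, within multiplicative error $O(1/n^R)$ and add the two estimates. For each phase, I would use the identity
\[ \frac{d}{d\beta}\log Z_\star(\beta) \;=\; \frac{\emm^\beta}{\emm^\beta - 1}\, \sum_{e \in E} p_e^\star(\beta), \qquad p_e^\star(\beta) := \P_{\pi_\star^{(\beta)}}[e \in F], \]
and integrate from a base $\beta_*$ where $Z_\star(\beta_*)$ is directly computable: $\beta_* = 0$ with $Z_\dis(0) = q^n$ for the disordered phase; a large $\beta_*$ with $Z_\ord(\beta_*) \approx q(\emm^{\beta_*}-1)^{|E|}$ (together with a convergent tail integral involving expected \emph{out}-edges for the correction) for the ordered phase.

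Next, for each edge $e$ and each $\beta'$ on a $\mathrm{poly}(n)$-point grid discretising the $\beta'$-integral, I would estimate $p_e^\star(\beta')$ locally. WSM within phase~$\star$ asserts that $p_e^\star(\beta')$ is well approximated by the marginal $\hat p_{e,r}^\star(\beta')$ of $e$ computed on the subgraph $B_r(e)$ induced by vertices within graph-distance~$r$ of $e$, under the phase-appropriate boundary condition (wired for $\ord$, free for $\dis$) imposed on the outer boundary, with error at most $\emm^{-\alpha(d,q)\, r}$. In $G(n,d/n)$ with $d=\Theta(1)$, whp every such neighbourhood with $r \le \tfrac{\log n}{2 \log d}$ is a tree, so $\hat p_{e,r}^\star(\beta')$ is computable in $\mathrm{poly}(|B_r(e)|) = \mathrm{poly}(d^r)$ time via standard random-cluster tree recursions, and the few edges whose neighbourhoods contain short cycles can be handled by brute-force on a slightly larger local subgraph (their contribution to $\log Z$ is $o(1)$).

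The running time and accuracy are then balanced by setting $r = \Theta(\log n /(R \log d))$, so that $|B_r(e)| = n^{\Theta(1/R)}$, the per-edge cost is $n^{O(1/R)}$, and the total cost across all $n$ edges and grid points is $n^{1+O(1/R)}$, which can be sharpened to $n^{1+1/R}$ by adjusting the implicit constants in $r$. The accumulated WSM error is bounded by $\mathrm{poly}(n) \cdot \emm^{-\alpha r}$, which is driven below $1/n^R$ by taking $d$ (and $q$) large enough that $\alpha(d,q) \ge C R \log d$ for a sufficiently large constant $C$. The main obstacle is establishing WSM within each phase with such a quantitatively strong decay rate: globally, low temperatures exhibit long-range correlations because of phase coexistence, so the polymer decomposition is essential to ``freeze'' one phase; one must then transfer a tree-level spatial decay estimate (on the Poisson offspring tree or the appropriate infinite-volume limit) to $G(n,d/n)$ while controlling the sparse, non-tree-like substructures that escape the expansion properties of the random graph.
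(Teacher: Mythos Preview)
Your overall architecture matches the paper's: decompose into phases, write $\log Z_\star$ as an integral of $\sum_e p_e^\star(\beta)$, and replace each $p_e^\star$ by the marginal on a small tree-like ball using WSM within the phase. The genuine gap is in how you evaluate the $\beta$-integral. You propose to discretise on a ``$\mathrm{poly}(n)$-point grid'', but this cannot deliver the stated running time: to get additive error $1/n^{R}$ on $I=\int g_C(\beta)\,\mathrm d\beta$ (which is what a $1/n^R$ multiplicative error on $Z$ requires) when the integrand has size $\Theta(n)$, a Riemann/trapezoidal discretisation needs on the order of $n^{\Theta(R)}$ grid points, and hence $n^{\Theta(R)}$ total work --- not $n^{1+1/R}$. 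The paper flags exactly this obstruction (``we would need roughly $n^R$ many $\hat\beta$'s, which would lead to a huge running time''), and its key extra idea is to avoid discretisation altogether: on a $1$-treelike ball the local marginal $\pi_{B_r^+(v)}(1_e)$ is computed \emph{as an explicit rational function} $g_e(x)$ in $x=\emm^\beta-1$ via tree recursions (Lemma~\ref{lem:rationalfun}), and then each $\int (x+1)g_e(x)\,\mathrm dx$ is evaluated by partial fractions and numerical root-finding (Lemma~\ref{lem:integrate}). This is what buys the $n^{1+1/R}$ bound; without it your proposal would only yield an $n^{O(R)}$-time algorithm, essentially recovering Theorem~\ref{thm:mainthm} by a different route rather than proving Theorem~\ref{thm:mainthm2}.

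Two smaller points. First, the paper does not choose $r$ to depend on $R$; it takes $r=\lceil A\log n/d\rceil$ (so $|B_r|=n^{O(\log d/d)}$) and instead takes $q$ large enough relative to $d$ and $R$ so that the WSM error $q^{-r/30}\le 1/n^{R+3}$. Your scaling $r=\Theta(\log n/(R\log d))$ with the requirement $\alpha(d,q)\ge CR\log d$ is off by a factor of $R$ (you would need $\alpha\gtrsim R^2\log d$), and in any case the WSM Lemma~\ref{lem:ordered-WSM} is only established at the specific radius $\Theta(\log n/d)$ coming from the polymer size threshold. Second, the anchor values are handled by Lemma~\ref{lem:betainf} ($Z^\ord$ at $\beta_\infty=n^{1/R}$) and the trivial $Z^\dis(0)=q^{n_C}$, exactly as you suggest; that part of your plan is fine.
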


It may help to explain the parameters in Theorem~\ref{thm:mainthm2}. The parameter~$q$
controls the accuracy of the algorithm via the decay rate in Lemma~\ref{lem:ordered-WSM}, and $d$ controls the running time via the length $r$ in Lemma~\ref{lem:ordered-WSM} (capturing the size of the polymer).  So, for any fixed ``target'' (captured by $R>0$), by first taking $d$~large (w.r.t.~$R$), and then taking $q$~large (w.r.t.~$d$), we can make both the running time and the accuracy sufficiently small in terms of the target (as per the theorem statement). See Remark~\ref{rem:d} for further comments about the restriction that~$d$ be sufficiently large, and how this arises (via Proposition~\ref{prop:no-large-ordered-polymer}) in the proof of Lemma~\ref{lem:ordered-WSM}.

\subsection{Discussion and Further Related Work}

Most of the known low-temperature algorithms are based on the so-called polymer method which was developed in \cite{helmuth2019algorithmic, doi:10.1137/19M1286669}. Intuitively, a polymer represents a local part of a configuration that deviates from a certain  extremal/ground state. The success of the method typically relies on the Koteck\'y-Preiss condition~\cite{KP} that roughly controls the number of polymers and the growth of their weights.  This condition guarantees that the log partition function can be approximated by truncating a relevant convergent series expansion (the cluster expansion) and then applying the interpolation techniques of~\cite{barvinokbook,PR}; see also \cite{fastpolymers} for MCMC variants. The 
method has been 
very successful  in the low-temperature regime~\cite{helmuth2019algorithmic, doi:10.1137/19M1286669,  
liao2019counting, Pottsexp, RCM-Helmuth2020, Unique, 10.1145/3470865,containers1, spectral, 10756117}. 

The main difficulty in applying the polymer method on ${\cal G}(n,d/n)$ is that small polymers, such as induced paths of length $\Theta(\log n)$ or other sparse induced subgraphs present in the graph, can have unusually large weight violating the growth rate required by Koteck\'y-Preiss type conditions. This difficulty causes undesirable restrictions; for example, \cite{GALANIS2022104894} worked on a  sparse random graph model with degrees $\geq 3$, a condition that ensures the absence of such non-expanding parts and precludes therefore ${\cal G}(n,d/n)$. As we will see in the following section, the main contribution of the present paper is to develop the polymer method for sparse random graphs such as $\mathcal{G}(n,d/n)$ by introducing a suitable polymer definition.

The algorithm provided in our proof of Theorem~\ref{thm:mainthm} 
is based on the   Glauber dynamics Markov chain for the random-cluster model. For the case of random regular graphs, the mixing properties of Glauber dynamics are  well-studied. Blanca and Gheissari~\cite{Blanca2}  showed that the random cluster Glauber dynamics mixes in time \(O(n\log n)\) for all \(q\geq 2\) and \(\beta < \beta_u\) where $\beta_u$ is 
the uniqueness threshold on the regular tree, and they showed that the same bound applies on $\G(n,d/n)$ as well \cite{rcm-on-unbounded-degree-graphs}; see also \cite{BlaGhe} for mixing-time results on other graph classes that apply for large $\beta$. However, for \(\beta \) near the  ordered/disordered threshold~$\beta_c$ (satisfying $\beta_c>\beta_u$),   the chain undergoes an exponential slowdown due to metastability phenomena and phase coexistence \cite{RCM-Helmuth2020,coja2023}. Our results suggest a similar exponential slowdown around criticality.

However, despite the worst-case mixing result, it is still possible to obtain a fast sampling algorithm based on Glauber dynamics using an appropriate initialisation from a ground state, see  \cite{SinclairsGheissari2022,RClattice,Galanis_Goldberg_Smolarova_2024}. Polymer techniques are helpful in this setting since they can be used to show a notion called weak spatial mixing within a phase introduced in \cite{SinclairsGheissari2022}, see Section~\ref{sec:proofThm2} for definitions. In the next section, we discuss more thoroughly how to adapt the polymer method for $\G(n,d/n)$ which is the main bottleneck of our results.

\subsection{Overview: old and new polymers}

As we have noted, polymers
capture small, independent deviations  from a certain extremal configuration called the ``ground state''. For example, for the random cluster model the natural ground states are  the all-in and all-out configurations, and for the ferromagnetic Potts model the ground states are the monochromatic colourings. 

To apply the polymer framework, the \textit{weight} of a polymer needs to be defined in a way that
enables 
us to 
relate the weight of a configuration to the product of weights of its polymers, and thus to relate the probability of a polymer appearing in a configuration to its weight. For models with only local interactions, such as the Potts model, the characterization of these small deviations, i.e., the definition of polymers, is clear from the model. For instance, for the Potts model on  low temperatures (high $\beta$), a typical configuration has the majority of vertices having the same colour. Thus the natural choice for polymers of a $q$-colouring $\sigma$ are maximal connected sets $\gamma$ which are not coloured with the majority colour. The weight  $w_\gamma$ measures how much  weight is lost by having bichromatic edges within the polymer $\gamma$.

However, when considering how a polymer should be defined for a configuration $F$ of the random cluster model, the answer is not as straightforward. For low-temperatures (high $\beta$), the natural ground state is the all-in configuration, thus a natural choice would be to consider (connected) sets of all-out edges. However the random cluster model has long-distance interactions, and in particular, the weight (and thus the probability) of a configuration $F$ depends not only on the number of out-edges (or in-edges), but also on the number of components. For a component of \((V,F)\), 
it may not be the case that the set of out-edges separating it from the rest of the graph is connected. If multiple polymers  are allowed to enclose a single component, it is not clear how to capture that component's contribution to the weight of the configuration using (multiplicative) polymer weights.

One insightful solution to this problem was given for the case of the random regular graph by Helmuth, Jenssen and Perkins \cite{RCM-Helmuth2020}, where they iteratively added more edges to the polymers, and then used the strong expansion properties of random regular graphs to ensure the desired multiplicative properties.  For completeness, we will define their polymers for a configuration $F$: 
Let \(\B_0 = E\setminus F\). For $k\geq 0$, define inductively \(\B_{k+1}\) to be the set of edges in \(\B_k\) along with all edges incident to vertices that have at least a \(\tfrac 59\)-fraction of their 
incident edges 
in \(\B_k\). Then the set of \textit{polymer edges} is \(\B_{\inf}(F)=\bigcup_{k\geq 0} \B_k\) and polymers are connected components of edges in \(\B_{\inf}\). Using the strong expansion of random regular graphs, they show that for any ordered configuration  \((V,F)\), there is one giant component containing \(>n/2\) vertices, and every other component has all its incident edges 
contained in a polymer. While the constant~\(\tfrac{5}{9}\) in the definition could be lowered to any \(\tfrac12 + \tau\), to avoid \(\B_{\inf}(F)= E\) it has to be greater than a half. 

The fact that an ordered configuration has a giant component also applies for $\G(n,d/n)$.  However, 
having the constant $5/9$ (or anything larger than $1/2$) causes problems in graphs where degrees can be small.  For instance,  a \(3\)-cycle of a \(3\)-regular graph has expansion $\tfrac{1}{2}$. Suppose that the edges of the $3$-cycle are in-edges, but the edges separating it from the rest of graph are out. 
Then these separating edges are not necessarily contained in a single polymer. This makes it difficult to capture the component's contribution, as explained earlier.
These problems persist even  if 
the  expansion of sufficiently large subgraphs 
is above~$1/2$ by any margin.

In the case of \(\G(n,d/n)\) the same problem is present in a more severe form:  while  large-enough connected sets can be shown to have expansion \(>\tfrac 12\) (for \(d\) large enough), there are small subgraphs with as many as $\Theta(\log n)$ vertices with extremely low expansion. These small subgraphs can  not be captured by these expansion-based polymers and it is therefore not clear how to extend the polymer definition of \cite{RCM-Helmuth2020} for the case of \(\G(n,d/n)\).

To start working towards a polymer definition, note that expanding properties of \(\G(n,d/n)\) that hold for sets of size $\Omega(\log n)$ can be used to show  that \((V,F)\) contains a giant component and all the other components are small (provided that $|F|$ is sufficiently large relative to $|E|$). So, a natural-looking solution for the polymer-definition problem is  to simply take the  polymers of a configuration $F$ to be components formed by the union of the set of out-edges $E\backslash F$ and the set of edges in $E$ incident to vertices in 
small components of $(V,F)$ (irrespectively of whether they belong to $F$). While this definition can be endowed with a polymer-weight definition that does satisfy the desired multiplicative properties, these polymers do not capture properly the deviations from the all-in  configuration. For instance, suppose there was a connected subgraph \(S\) of \((V,F)\) with exactly one in-edge $e$ in the cut \((S,\overline S)\). Then it would make sense that vertices of \(S\) belonged to a polymer since the removal of a single edge $e$ carves out $S$ as a component, affecting significantly  the marginal probability that an edge in the cut $(S,\overline S)$ is an in-edge.
Now, if the cut \((S,\overline S)\) is large, having $S$ as a  
component in a configuration (after removal of $e$) is a large deviation from the ordered state and is extremely unlikely to happen, and the marginal probabilities of edges in the cut should not be so sensitive to the status of a single edge. Hence, such ``almost-components'' must be enclosed in a polymer.

For our analysis specifically, we care about marginals of edges incident to a particular vertex $v$, thus we do not want the marginals of edges not in a polymer to be sensitive to the state of the edges incident to $v$.  Thus, as we show in Section~\ref{sec:ordered}, 
the natural choice of ordered polymers   is to take polymer edges to be the union of the set of out-edges of $F$ and the set of edges of $G$ incident to vertices in subgraphs that would be disconnected by removing this vertex $v$. 
In Section~\ref{sec:ordered} we show that with good probability the polymers have size \(O(\log n)\) and we use this to prove \textit{weak spatial mixing within the phase}. This is then used for bounding the mixing time from the Glauber dynamics (Theorem~\ref{thm:mainthm}) and converting to the counting algorithm (Theorem~\ref{thm:mainthm2}).

\section{Proof Outline of Theorems~\ref{thm:mainthm} and~\ref{thm:mainthm2}}\label{sec:proof-walkthrough} 

\subsection{The largest component of the graph}
 
It is well-known that for a graph $G$   with multiple components, the Gibbs distribution $\pi_G$ is a product distribution over the individual components, and that the partition function \(Z_G\) is the product over the partition function over individual components. 

For \(d > 1\), \(\G(n,d/n)\) w.h.p. contains one component of linear size while all remaining components have size \(O(\log n)\) and contain at most one cycle, see for example \cite[Section 5]{random-graphs-janson-book}.    We can therefore brute-force \(O(\log n)\)-sized components by enumerating all possible edge configurations in $\text{poly}(n)$ time; in fact, even  the mixing time from \textit{worst-case} initial configuration on such a component is going to be at most $\text{poly}(n)$ (see, e.g., \cite[Lemma 6.7]{blanca2021random}). 

 So, we only need to focus on the largest component of \(\G(n,d/n)\), which we denote by \(C = (V_C,E_C)\). Let \(n_C = |V_C|\) be the number of vertices. It is well-known (see e.g. \cite[Theorem 5.4]{random-graphs-janson-book}) that \(n_C\) is determined by the conjugate $\mu\in (0,1)$ of \(d\) which satisfies $\mu \emm^{-\mu}=d \emm^{-d}$. Using this, it is not hard to verify that for large $d$ we have w.h.p. \(n_C\geq(1-\emm^{-d/3})n\) and $|E_C|= \frac{dn}{2}\pm 2n \emm^{-d/3}$, see Appendix~\ref{sec:graph-proofs} for details.

For clarity of notation, we use \(\Omega_C\) to denote the set of configurations, \(w_C\) for the weights of configurations, \(\pi_C\) for the Gibbs distribution and \(Z_C\) for the partition function for the random cluster model on \(C\). For \(F\in\Omega_C\), we use \(c(F)\) to refer to the number of components in \((V_C,F)\).
 
\subsection{Phases of random cluster model on \(\G(n,d/n)\)}

Next, we formally introduce the notion of the \textit{ordered} and \textit{disordered} phases on \(C=C(\G(n,d/n))\), as well as some necessary notation.

\begin{definition}[Phases]\label{def:phases}
Let \(\eta := {1}/{1000}\).
The \emph{ordered phase} is \(\Omega^\ord_C := \{F\in\Omega_C \mid |F|\geq (1-\eta)|E_C|\}\). The  ordered distribution $\pi^\ord_C$  is defined for $F\in \Omega_C^\ord$ by $\pi^\ord_C(F) := {w_C(F)}/{Z^\ord_C}$, where \(Z^\ord_C := \sum_{F\in\Omega^\ord_C}w_C(F)\).
The \emph{disordered phase} is \(\Omega^\dis_C := \{F\in\Omega_C \mid |F| \leq \eta|E_C|\}\). The disordered distribution $\pi^\dis_C$ is defined for $F\in \Omega_C^\dis$ by $\pi^\dis_C(F) = {w_C(F)}/{Z^\dis_C}$, where \(Z^\dis_C := \sum_{F\in\Omega^\dis_C}w_C(F)\).
\end{definition}

We will use the following two   thresholds for~$\beta$, \(\beta_0\) and \(\beta_1\), defined from the following: 
\begin{equation}\label{eq:beta0beta1}
    \emm^{\beta_0} - 1 = q^{(2-1/10)/d} \mbox{ and $\emm^{\beta_1}-1 = q^{(2+1/10)/d}$}.
\end{equation}
Note that the constant \(1/10\) in the definitions above is somewhat arbitrary and could be decreased to any absolute constant \(\tau>0\). The next theorem tells us that a typical configuration on~\(\G(n,d/n)\) is either ordered or disordered and ``far away'' from phases' boundaries. It is proven in Appendix~\ref{sec:phase-transition}.

\begin{theorem}\label{thm:phase-transition}
Let $d$ be sufficiently large. Then, for all $q$ sufficiently large, the following holds w.h.p. over \(G\sim \G(n,d/n)\), where $F$ denotes a random configuration from the given distribution.
    \begin{enumerate}[(1)]
        \item For all \(\beta > 0\), \(\pi_C\Big(\tfrac{9\eta}{10}|E_C| \leq |F| \leq (1-\tfrac{9\eta}{10})|E_C|\Big) \leq \emm^{-n} \).
        \item For all \(\beta\geq \beta_0\), \(\pi_C^\ord\Big(|F| \leq (1-\tfrac{9\eta}{10})|E_C|\Big) \leq \emm^{-n} \).
        \item For all \(\beta\leq \beta_1\), \(\pi_C^\dis\Big(\tfrac{9\eta}{10}|E_C| \leq |F|\Big) \leq \emm^{-n}\).
        
        \item For \(\beta \geq \beta_1\), \(||\pi_C - \pi^\ord_C||_\TV \leq 2\emm^{-n}\).
        \item For \(\beta \leq \beta_0\), \(||\pi_C - \pi^\dis_C||_\TV \leq 2\emm^{-n}\).
    \end{enumerate}
\end{theorem}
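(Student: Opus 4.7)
The plan is to reduce all five parts to a uniform bound on the slice partition functions $Z_m := \sum_{F:\, |F|=m} w_C(F)$, showing that $Z_C = \sum_m Z_m$ concentrates on an extreme value of $m$, either near $0$ or near $|E_C|$, with the choice governed by $\beta$. The thresholds $\beta_0, \beta_1$ bracket the ``equal-weight'' temperature $\emm^{\beta_c} - 1 \approx q^{2/d}$ at which $w_C(\emptyset) = q^{n_C}$ and $w_C(E_C) = q(\emm^\beta-1)^{|E_C|}$ are of the same order; for $\beta \geq \beta_1$ (resp.\ $\beta \leq \beta_0$), the ratio $w_C(E_C)/w_C(\emptyset)$ exceeds $q^{\Omega(n)}$ (resp.\ is below $q^{-\Omega(n)}$). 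This gives the coarse lower bounds $Z^\ord_C \geq q(\emm^\beta-1)^{|E_C|}$ and $Z^\dis_C \geq q^{n_C}$.

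The core upper bound is $Z_m \leq \binom{|E_C|}{m}(\emm^\beta-1)^m q^{c^\star(m)}$, where $c^\star(m) := \max_{|F|=m} c(F)$. The key structural input from the random graph side, which holds whp over $G \sim \G(n,d/n)$, has two facets: (a) no subset of $k = o(n)$ vertices in $C$ spans more than $(1+o(1))k$ edges, and (b) subsets of $C$ of size between $\Theta(\log n)$ and $n_C/2$ have edge boundary of size $\Omega(d)$ times their size. Together these yield three bounds on $c^\star$. First, for $\alpha := m/|E_C|$ bounded away from both $0$ and $1$, any $F$ with $|F|=m$ must span $\Omega(n)$ vertices in a forest-like manner, giving $c^\star(m) \leq n_C - \Omega(n)$. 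Second, for small $\alpha$, the sparsity of small subgraphs gives $c^\star(m) \leq n_C - m + o(n)$. Third, for $\alpha = 1 - \eta'$ with $\eta'$ small, removing $\eta'|E_C|$ edges from $C$ cuts off components with total size $O(\eta'n)$ (each cut-off component of size $k$ has $C$-boundary $\Theta(kd)$ by (a)--(b), so the total cut-off vertex count is $\leq \eta'|E_C|/\Theta(d) = O(\eta'n)$), whence $c^\star(m) \leq 1 + O(\eta'n)$.

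Plugging these into the $Z_m$ bound together with $\binom{|E_C|}{m} \leq (\emm|E_C|/m)^m$ and taking $\tfrac{1}{n}\log$ yields, for $q$ and $d$ sufficiently large, a strict $\Omega(1)$ gap between $\tfrac{1}{n}\log Z_m$ and $\tfrac{1}{n}\log\max\{Z^\ord_C, Z^\dis_C\}$ uniformly over $m$ in the targeted range. Summing over $O(n)$ values of $m$ gives Part~(1). For Part~(2), the refined bound at the ordered extreme combined with $\emm^{\beta_0}-1 = q^{(2 - 1/10)/d}$ shows that for $m \in [(1-\eta)|E_C|, (1 - 9\eta/10)|E_C|]$ one has $Z_m \leq \emm^{-n} Z^\ord_C$; Part~(3) is symmetric, using the forest-like rank bound for small $\alpha$ together with $\beta \leq \beta_1$ and $Z^\dis_C \geq q^{n_C}$. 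Parts~(4) and~(5) follow by combining (1)--(3): for $\beta\geq\beta_1$, the bounds from (1) and (3) give $Z^\dis_C + \sum_{\text{middle } m} Z_m \leq \emm^{-\Omega(n)}Z^\ord_C$, so $Z^\ord_C/Z_C = 1 - \emm^{-\Omega(n)}$, and the TV bound follows from $\pi_C(F) = \pi_C^\ord(F) \cdot (Z^\ord_C/Z_C)$ on $\Omega^\ord_C$ and $\pi_C(\Omega_C \setminus \Omega^\ord_C) \leq \emm^{-\Omega(n)}$.

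The main obstacle is the structural property~(a), which must hold uniformly over all small vertex subsets of $C$ with an implicit constant sharp enough that the combined bound on $c^\star$ yields an $\Omega(1)$ free-energy gap for the chosen $\eta = 1/1000$. Deriving~(a) is a standard first-moment calculation summed over subset sizes, but tracking it through to the precise thresholds in~\eqref{eq:beta0beta1} requires taking $q$ and $d$ large enough that the entropy contribution $\tfrac{d}{2}\eta'\log(\emm/\eta')$ is strictly dominated by the free-energy margin built into the definitions of $\beta_0, \beta_1$. This is where the large-$q, d$ hypotheses enter most forcefully.
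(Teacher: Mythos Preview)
Your approach is essentially the paper's: both bound $c(F)$ in terms of $|F|$ via the same structural facts (sparsity of small connected sets, expansion of medium ones, the giant-component-after-removal lemma), then compare pointwise weights to the two extreme configurations $\emptyset$ and $E_C$, absorbing the $2^{|E_C|}$ entropy factor by taking $q$ large. The paper organises the $c(F)$ bound via a partition of vertices by component size (singletons, $2$ to $\log n$, $\log n$ to $n/6$, larger), proving the single inequality $c(F)-\tfrac{2}{d}|E_C\setminus F|\leq -\tfrac{\eta}{20}n$; your slice-by-$m$ packaging with $c^\star(m)$ is a cosmetic reorganisation of the same computation.

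One concrete correction: your bound~(2), $c^\star(m)\leq n_C-m+o(n)$, is too strong as stated and in fact false for $m$ near $\eta|E_C|\approx \eta dn/2$, which for large $d$ exceeds $n_C$ (forcing the right-hand side negative). The ``forest-like'' picture breaks down because the non-singleton vertex set $S$ can have size up to $n_C$ while spanning $\Theta(d\eta|S|)$ edges, not merely $(1+o(1))|S|$. What the paper actually obtains, and what you need, is the weaker $c(F)\leq n_C-\tfrac{|S|}{2}$ combined with $|S|\geq m/\Theta(d\eta)$, giving $c^\star(m)\leq n_C-\Theta(m/(d\eta))$. This still yields an $\Omega(n)$ gap in the exponent of $q$, which is all you need once $q$ is large relative to $d$. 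So for Part~(3) you should invoke your bound~(1) (which already applies at $\alpha\in[\tfrac{9\eta}{10},\eta]$) rather than~(2); the rest of your outline then goes through.
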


\subsection{WSM and proof of Theorem~\ref{thm:mainthm2} via WSM}\label{sec:proofThm2}

The main challenge in order to obtain our results is showing WSM within the phase, which we define more formally here for the linear-sized component. Namely, for a vertex \(v\in V_C\) and an integer \(r\), let \(B_r(v)\) be the ball of radius \(r\) around \(v\), i.e. the induced subgraph consisting of all vertices distance \(\leq r\) from \(v\).
Let \(\pi_{B_r^-(v)}\) denote the Gibbs distribution on \(B\) with the \textit{free} boundary condition (i.e., conditional on all edges outside of \(B_r(v)\) being out) and \(\pi_{B_r^+(v)}\) denote the Gibbs distribution on \(B_r(v)\) with the \textit{wired} boundary condition (i.e.,  conditional on all  vertices at distance exactly \(r\) from \(v\) as being wired into a single component). For an edge   \(e\in E_C\), we use \(1_e\) denote the event that \(e\) is an in-edge.
\begin{definition}\label{def:wsm}
    Let $r>0$ be an integer and $K\in (0,1)$ be a real. 
    
    We say that \(C\) has \emph{WSM within the disordered phase} at distance \(r\) with rate $K$  if for every edge \(e\in E_C\) and  vertex \(v\in V_C\) incident to \(e\),  it holds that $\big|\pi_C^\dis(1_e) - \pi_{B^{-}_r(v)}(1_e)\big|\leq K^r$.
    
    Similarly, \(C\) has \emph{WSM within the ordered phase} at distance \(r\) with rate $K$ if for every edge \(e\in E_C\) and vertex \(v\in V_C\) incident to \(e\), it holds that $\big|\pi_C^\ord(1_e) - \pi_{B^+_r(v)}(1_e)\big|\leq K^r.$
\end{definition}

In Section~\ref{sec:ordered}, we show that for all \(d\) and \(q\) large enough, w.h.p. \(C\) has WSM within the ordered phase for all \(\beta\geq \beta_0\). The proof of WSM within the disordered phase for all \(\beta\leq\beta_1\) is deferred to the Appendix~\ref{sec:disordered}. 

\begin{lemma}\label{lem:ordered-WSM}
    There is a constant $A>0$ such that the following holds for all real $d$ sufficiently large. For all $q$ sufficiently large,  w.h.p. over \(G\sim\G(n,d/n)\), for all $\beta\geq \beta_0$, the largest component $C$ of $G$ has WSM within the ordered phase at distance $r=\lceil\frac{A}{d} \log n\rceil$ with rate $K=q^{-1/30}$.
\end{lemma}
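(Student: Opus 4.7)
The plan is to set up an \emph{ordered polymer model} tailored to the base vertex $v$ incident to $e$, verify a Koteck\'y--Preiss (KP) condition valid for all $\beta\geq\beta_0$ once $q$ and $d$ are large, and then apply the standard cluster-expansion machinery to compare the marginal of $\1_e$ under $\pi_C^\ord$ with its marginal under $\pi_{B_r^+(v)}$. Following the sketch in the introduction, for each $F\in\Omega_C^\ord$ I would declare the \emph{polymer edges} to be the out-edges of $F$ together with every edge of $C$ incident to a vertex that lies in a connected subgraph of $(V_C,F)$ which would split off from the rest of $F$ upon removing $v$; the \emph{polymers} $\gamma$ of $F$ are the connected components of this edge set, carrying a weight $w_\gamma$ built from the $(\emm^\beta-1)^{|F|}$ factor restricted to the edges of $\gamma$ and the change in $q^{c(F)}$ attributable to $\gamma$ relative to the all-in ground state. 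The construction is chosen so that $w_C(F)=w_C(E_C)\prod_{\gamma\in\Gamma(F)}w_\gamma$, where distinct polymers in $\Gamma(F)$ are vertex-disjoint, giving $Z_C^\ord/w_C(E_C)=\Xi$ for the polymer partition function $\Xi=\sum_{\Gamma\text{ compat.}}\prod_{\gamma\in\Gamma}w_\gamma$; the wired partition function on $B_r(v)$ admits the analogous representation, restricted to polymers supported inside $B_r(v)$.

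Next I would verify KP in the form $\sum_{\gamma\ni u}|w_\gamma|\,\emm^{a|\gamma|}\leq 1$ for every $u\in V_C$ and an absolute constant $a>0$. Two ingredients enter: an enumeration bound showing that whp the number of connected vertex subsets of $C$ of size $k$ containing any fixed vertex is at most $(Md)^k$ for an absolute $M$ (standard for $\G(n,d/n)$), and a weight bound $|w_\gamma|\leq q^{-c|\gamma|}$ for some $c=c(d)>0$ whenever $\beta\geq\beta_0$. The weight bound follows by arguing that, by the definition of polymer edges, every polymer contains a number of out-edges (or cut-edges of disconnectable subgraphs) that is a constant fraction of $|\gamma|$, so the factors $(\emm^\beta-1)^{-1}\leq q^{-(2-1/10)/d}$ compound and outweigh the $q^{c(F)}$ contribution. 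Multiplying the two estimates and taking $q$ large enough yields KP with $a$ chosen so that the cluster expansion below decays at rate at least $q^{-1/30}$.

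With KP established, both $\log\Xi$ and its wired analogue on $B_r(v)$ admit absolutely convergent cluster expansions $\sum_X \phi(X)\prod_{\gamma\in X}w_\gamma$. Writing $\pi_C^\ord(\1_e)$ as the ratio $Z_C^\ord(e\in F)/Z_C^\ord$ and similarly for $\pi_{B_r^+(v)}(\1_e)$, the difference of logarithms reduces to a sum over clusters that touch $e$ and either contain a polymer lying partly outside $B_r(v)$ or otherwise straddle $\partial B_r(v)$. Every such cluster has combined polymer size at least $r$ (it must span in $C$ from $v$ to distance $r$), so the standard Koteck\'y--Preiss tail bound gives total weight at most $q^{-r/30}=K^r$; exponentiating and using that both marginals are bounded away from $0$ and $1$ converts this into the additive bound $|\pi_C^\ord(\1_e)-\pi_{B_r^+(v)}(\1_e)|\leq K^r$.

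The hard part is the polymer weight bound $|w_\gamma|\leq q^{-c|\gamma|}$. Because $\G(n,d/n)$ genuinely contains induced paths of length $\Theta(\log n)$ and other sparsely-expanding pieces, the clean ``out-edges $\geq\tfrac{1}{2}|\gamma|$'' argument available for expander-like graphs is not available here. The polymer definition is engineered precisely to absorb such low-expansion pieces into polymers, but one still has to verify that even these ``large but sparsely-connected'' polymers carry enough out-edges (or cut-edges of disconnectable components) to pay a factor of $q^{-\Omega(|\gamma|)}$. This reduces to a structural edge-counting lemma on connected subgraphs of $C$, and is precisely where the random-graph input $G\sim\G(n,d/n)$ is most forcefully used.
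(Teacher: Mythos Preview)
Your proposal has a genuine gap at exactly the point you flag as ``the hard part'': the uniform weight bound $|w_\gamma|\leq q^{-c|\gamma|}$ simply fails on $\mathcal{G}(n,d/n)$, and with it the full Koteck\'y--Preiss condition. The obstacle is not large sparsely-connected polymers but \emph{small} ones. An induced path of length $k<\tfrac{A}{d}\log n$ can support a polymer with only $O(1)$ out-edges, so its weight is $\Theta(q^{O(1)}(\emm^\beta-1)^{-O(1)})$, not $q^{-\Omega(k)}$. Summing such contributions over all connected sets through a fixed vertex $u$ diverges, and KP cannot be salvaged uniformly over all sizes. This is precisely the phenomenon the introduction warns about (``small polymers \dots\ can have unusually large weight violating the growth rate required by Koteck\'y--Preiss type conditions''), and your proposal does not offer a mechanism to circumvent it.

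The paper avoids KP and cluster expansion entirely. One inequality, $\pi_C^\ord(1_e)\leq\pi_{B_r^+(v)}(1_e)$, comes for free from monotonicity of the random-cluster model. For the reverse direction the paper reduces to bounding $\pi_C^\ord(\neg\mathcal{A}_{v,r})$, the probability that some length-$r$ path from $v$ avoids the giant of $(V_C,F)\setminus v$; this event forces a polymer of size $\geq r=\Theta(\tfrac{\log n}{d})$. Crucially, the weight bound $w_C^\ord(\gamma)\leq q^{-\deg_G(\mathcal{V}_\gamma)/(20d)}$ is proved \emph{only} for polymers above this size threshold, where Propositions~\ref{prop:average-degree-of-connected-sets} and~\ref{prop:expansion-of-core} furnish the expansion. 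The probability that a given large polymer occurs is then bounded by its weight via a direct comparison $F\mapsto F\cup E_\out(\gamma)$ (Lemma~\ref{lem:bbbb}), not via a product formula $w_C(F)=w_C(E_C)\prod_\gamma w_\gamma$; indeed no such exact factorisation is established. A union bound over large polymers (Proposition~\ref{prop:no-large-ordered-polymer}) finishes. Your cluster-expansion route would need a separate idea to neutralise the small-polymer contribution, and none is supplied.
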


Using this, we can prove Theorem~\ref{thm:mainthm2}. To outline briefly the main idea, consider arbitrarily large $R>0$ and $\beta^*\geq \beta_0$; then,  our goal is  to approximate $Z^{\ord}_{C;q,\beta^*}$ in time $n^{ O(1/R)}$ within relative error $n^{-R}$. We view $Z^{\ord}_{C;q,\beta}$ as a function of $\beta$ and set $g_C(\beta):=\frac{\partial 
 \log Z^{\ord}_{C;q,\beta}}{\partial \beta}$. Note in particular that
\begin{equation}\label{eq:gC}
\begin{aligned}
g_C(\beta)&=\frac{\frac{\partial 
Z^{\ord}_{C;q,\beta}}{\partial \beta}}{Z^{\ord}_{C;q,\beta}}=\frac{\emm^{\beta}}{\emm^{\beta}-1}\frac{\sum_{F\in \Omega^\ord_C} |F|(\emm^{\beta}-1)^{|F|}q^{c(F)}}{Z^{\ord}_{C;q,\beta}}
\\&=\frac{\emm^{\beta}}{\emm^{\beta}-1}\Eb_{F\sim \pi^{\ord}_C}\big[|F|\big]=\frac{\emm^{\beta}}{\emm^{\beta}-1}\sum_{e\in E_C} \pi^\ord_{C;q,\beta}(1_e).
\end{aligned}
\end{equation}
Moreover, by setting $\beta_\infty=n^{1/R}$ and integrating, we have that 
\begin{equation}\label{eq:interpolate123}\frac{Z^{\ord}_{C;q,\beta_\infty}}{Z^{\ord}_{C;q,\beta^*}}=\exp\bigg(\int^{\beta_\infty}_{\beta^*}g_C(\beta)\,\mathrm{d} \beta\bigg).
\end{equation}
We will show later that for $\beta=\beta_\infty$ it holds that $Z^{\ord}_{C;q,\beta_\infty}=(1\pm \emm^{-n^{\Omega(1/R)}}) q(\emm^\beta-1)^{|E_C|}$.
So, to obtain the desired approximation to $Z^{\ord}_{C;q,\beta^*}$, it suffices to approximate the integral $I:=\int^{\beta_\infty}_{\beta^*}g_C(\beta)\,\mathrm{d} \beta$. The standard way to approximate the integral would be to consider a sequence of $\beta$'s between $\beta^*$ and $\beta_\infty$ and, for each of them, approximate $g_C(\beta)$ using the WSM guarantee. Namely, by Lemma~\ref{lem:ordered-WSM}, for an edge $e\in E_C$ and a vertex $v$ incident to it, for the ball $B_e:=B_r(v)$ with $r=\lceil \tfrac{A}{d}\log n\rceil$, it holds that 
\begin{equation}\label{eq:WSMuse}
\big|\pi^\ord_{C;q,\beta}(1_e)-\pi_{B_e^+;q, \beta}(1_e)\big|\leq q^{-r/30}\leq \tfrac{1}{n^{R+3}},
\end{equation}
where the last inequality follows by taking $q$ large enough (with respect to $d$). Crucially,  $B_e$ is a 1-treelike\footnote{For a real $k$, a graph is called $k$-treelike if it becomes a tree after the removal of at most $k$ edges.} graph whose size $|V_G(B_e)|$ can be bounded by $d^r\log n=n^{O(1/R)}$; there  is a fairly standard recursive way to compute $\pi_{B_e^+;q,\beta}(1_e)$ exactly (cf. Lemma~\ref{lem:rationalfun} below).  However, in order to get the desired $ \tfrac{1}{n^R}$ relative error guarantee for the integral, we would need roughly $n^R$ many $\hat\beta$'s, which would lead to a huge running time.  

To obtain a faster algorithm, the key observation is that $\pi_{B_e^+;q,\beta}(1_e)$ is an explicit function of $\beta$ which can be represented using an explicit rational function that can be efficiently computed (using recursions) in time polynomial in $|V_G(B_e)|=n^{O(1/R)}$. Hence we can essentially perform the integration $\int^{\beta_\infty}_{\beta^*}\pi_{B_e^+;q,\beta}(1_e)$ symbolically. A technicality that arises is that we cannot get the exact antiderivative (since in general it will be in terms of algebraic numbers) but rather a numerical approximation to the antiderivative that is within additive $1/2^{n^{O(1/R)}}$ from its true value for all $\beta\in [\beta^*,\beta_\infty]$; all of that however requires only $n^{O(1/R)}$ bits of precision and hence can be carried out in $n^{O(1/R)}$ time.  

We next state more formally a few technical lemmas whose proofs are given in Appendix~\ref{sec:remproof2} and then show how to use them and conclude the proof of Theorem~\ref{thm:mainthm2}. We start with the computation of the marginals of edges in 1-treelike graphs.
\begin{restatable}{lemma}{rationalfun}\label{lem:rationalfun}
There is an algorithm that, on input an $n$-vertex  1-treelike graph $T$  rooted at $v$, an edge $e$ incident to $v$ and an integer $r\geq 1$, computes in time $2^{4r}n^{O(1)}$ rational functions $g^+(q,x), g^-(q,x)$ so that $\pi_{B^+_r(v);q,\beta}(1_e)=g^+(q,\emm^\beta-1)$ and $\pi_{B^-_r(v);q,\beta}(1_e)=g^-(q,\emm^{\beta}-1)$ for all $q,\beta>0$.
\end{restatable}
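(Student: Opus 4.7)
My plan is to reduce the computation of the marginals to the computation of two specialised partition functions of the random-cluster model on graphs of constant treewidth, and then apply a standard dynamic-programming (DP) algorithm for the Tutte-type polynomial $Z_B(q,x) = \sum_{F \subseteq E(B)} q^{c(F)} x^{|F|}$.

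The starting point is the following identity: for any graph $B$ and edge $e \in E(B)$, writing $x = \emm^\beta - 1$,
\[\pi_{B;q,\beta}(1_e) \;=\; \frac{Z^{(1)}_B(q,x)}{Z^{(0)}_B(q,x) + Z^{(1)}_B(q,x)}, \qquad Z^{(\delta)}_B(q,x) \;=\; \sum_{F\subseteq E(B)\,:\,\1\{e\in F\}=\delta} q^{c(F)} x^{|F|}.\]
Each $Z^{(\delta)}_B$ is a polynomial in $q,x$ of degree at most $|V(B)|$ and $|E(B)|$ respectively, so the marginal is automatically a rational function; the task reduces to computing these polynomials explicitly.

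Next I would show that the two relevant graphs, $B_r^-(v)$ and $B_r^+(v)$, have bounded treewidth. Since $T$ is 1-treelike, so is its induced subgraph $B_r(v)$, which therefore has treewidth at most $2$. For the wired boundary, identifying all vertices at distance exactly $r$ into a single vertex $w$ can create many new cycles (one per boundary leaf past the first), but it still admits a tree decomposition of width at most $3$: taking the rooted spanning tree of $B_r(v)$, form bags $\{u,\textrm{parent}(u),w\}$ for each non-root vertex $u$, and enlarge a single bag by one vertex to accommodate the extra edge of $B_r(v)$ if it is not a tree. Since $w$ lies in every bag, its associated subtree is trivially connected; the other vertices occupy the usual star-shaped subtrees, so this is a valid tree decomposition.

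On each tree decomposition I would run the standard DP for the Tutte-type polynomial: at each bag the state consists of a set partition of the bag vertices (encoding which bag vertices are connected via already-processed edges) together with a generating polynomial in $q$ and $x$ of degree $n^{O(1)}$. Bounded bag size makes the number of partitions constant, and polynomial arithmetic on polynomials of $n^{O(1)}$ size is easy, so the DP runs in $n^{O(1)}$ time, well within the claimed $2^{4r} n^{O(1)}$. Running the DP twice (once forcing $e\in F$, once forcing $e\notin F$) yields $Z^{(1)}$ and $Z^{(0)}$ as explicit polynomials, and $g^\pm(q,x)$ is output as the corresponding ratio. The main subtlety is correctness of the DP in the wired case, where the identification at $w$ can glue far-apart subtree components together and thus affect $c(F)$ in a non-local way; placing $w$ in every bag is exactly what allows the DP to track $w$'s connectivity class throughout, so the standard recursion goes through without modification.
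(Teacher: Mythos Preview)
Your approach is correct and genuinely different from the paper's. The paper handles trees via an explicit two-state recursion at each vertex (tracking whether the subtree root is connected by in-edges to the wired boundary), and then for the 1-treelike case it views $B_r^+(v)$ as a cycle of length at most $2r$ with hanging wired trees, brute-forcing over all $2^{2r}$ edge-configurations on the cycle together with all $2^{2r}$ ``is this cycle-vertex connected to its boundary'' patterns; this is the source of the $2^{4r}$ factor. You instead observe that both $B_r^-(v)$ and $B_r^+(v)$ have treewidth $O(1)$ and invoke the standard tree-decomposition DP for the random-cluster/Tutte polynomial, giving $n^{O(1)}$ outright. Your route is cleaner, generalises immediately to $k$-treelike inputs, and in fact removes the $2^{4r}$ factor from the running-time bound.

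One small slip to fix: in your decomposition for $B_r^+(v)$, ``enlarge a single bag by one vertex'' to cover the extra edge $e^*=\{a,b\}$ is not enough in general. If $a$ and $b$ lie in different branches of the spanning tree, adding $a$ to $b$'s bag alone places $a$ in two bags that are not adjacent in the decomposition tree, violating the connectivity axiom. The standard repair is to add one endpoint of $e^*$ to every bag along the tree-decomposition path between the two, which still keeps the width at most~$3$; with that correction your argument goes through.
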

The next lemma captures the integration part of the argument. For an integer $t$, its size is the number of bits needed to represent it, and for a rational $w=\frac{w_1}{w_2}$ with integers $w_1,w_2$, its size is given by adding the sizes of $w_1$ and $w_2$. Moreover, for a polynomial $P(x)=\sum^n_{i=0}c_i x^{i}$ of degree $n$ and rational  coefficients, its size is given by   $n+\sum^n_{i=1}\mathrm{size}(c_i)$.
\begin{restatable}{lemma}{integrate}\label{lem:integrate}
There is an algorithm that, on input positive rationals $a,b, \epsilon$ each with size $\leq n$, and integer polynomials $P(x), Q(x)$ with non-negative coefficients and size $\leq n$, computes in time $\text{poly}(n)$ a number $\hat I$ so that $\hat I=I\pm \epsilon$ where $I=\displaystyle\int^b_{a}\frac{P(x)}{Q(x)}\mathrm{d} x$.
\end{restatable}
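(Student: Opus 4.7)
The plan is to compute the integral by a symbolic--numerical approach: reduce to a proper rational function, factor its denominator and approximate its roots, and then integrate term by term.

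First, apply integer polynomial division to write $P = A Q + S$ with $A, S$ integer polynomials of size $\text{poly}(n)$ and $\deg S < \deg Q$; then $\int_a^b A\,dx$ can be computed exactly in rational arithmetic in $\text{poly}(n)$ time. For $\int_a^b S/Q\,dx$, I would first compute a squarefree factorization $Q = \prod_i Q_i^{e_i}$ in polynomial time via Yun's algorithm (iterated GCDs of integer polynomials). Next, compute complex approximations to the (distinct) roots of each $Q_i$ to $N = \text{poly}(n)$ bits of precision using a polynomial-time root-approximation algorithm, such as Pan's, thereby obtaining approximate roots $\tilde r_1, \dots, \tilde r_m$ of $Q$ with known multiplicities $e_j$. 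From these one obtains an approximate partial-fraction decomposition $S(x)/Q(x) \approx \sum_{j,k} c_{j,k} (x - \tilde r_j)^{-k}$, and the integral decomposes term by term: each term with $k \geq 2$ has the rational antiderivative $\tfrac{c_{j,k}}{1-k}(x - \tilde r_j)^{1-k}$, which can be evaluated directly at $a, b$, while each term with $k = 1$ contributes $c_{j,1} \log((b - \tilde r_j)/(a - \tilde r_j))$, where the complex logarithms at rational arguments can be evaluated to $\text{poly}(n)$ bits of precision via a standard algorithm (e.g.\ the arithmetic--geometric mean). Summing everything yields the approximation $\hat I$.

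The main obstacle will be controlling how the approximation errors propagate to the final answer. By root-separation bounds for integer polynomials, distinct roots of $Q$ differ by at least $2^{-\text{poly}(n)}$, so the partial-fraction coefficients $c_{j,k}$ (which involve reciprocals of products of such differences) can be as large as $2^{\text{poly}(n)}$; likewise, complex roots of $Q$ can come within $2^{-\text{poly}(n)}$ of the interval $[a,b]$ (even though $Q$ has no positive real roots, by non-negativity of its coefficients), so the logarithmic factors can have magnitude $\text{poly}(n)$ in absolute value. The crucial point is that the bit-lengths of all relevant magnitudes and of the required precisions are $\text{poly}(n)$; carrying out all computations with precision $N = \text{poly}(n)$ therefore suffices to guarantee that the aggregated numerical error stays below $\epsilon \geq 2^{-n}$, while the overall algorithm runs in $\text{poly}(n)$ time.
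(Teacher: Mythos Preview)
Your proposal is correct and follows essentially the same route as the paper: reduce to a proper rational function, approximate the roots of~$Q$ to $2^{-\text{poly}(n)}$ precision, form an approximate partial-fraction decomposition, and integrate term by term. The only differences are cosmetic: the paper uses Hermite's reduction to pass to the squarefree case and then a real partial-fraction decomposition (linear and irreducible quadratic denominators), whereas you use Yun's squarefree factorization and a complex decomposition handling higher multiplicities directly; and the paper phrases the error control as a uniform bound $|g(x)-\hat g(x)|\leq 2^{-\Theta(n)}$ on $[a,b]$ before integrating, while you track errors term by term after integrating. Your discussion of error propagation (root separation, coefficient blow-up, proximity of complex roots to $[a,b]$) is in fact more explicit than the paper's.
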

Finally, we will need the following estimate for the partition function for very large $\beta$.
\begin{restatable}{lemma}{betainf}\label{lem:betainf}
Let $d$ be large enough. For all $q$ sufficiently large,  for any arbitrarily small constant $\epsilon>0$, w.h.p. over $G\sim \mathcal{G}(n,d/n)$ it holds that $Z^{\ord}_{C;q,\beta}=\big(1\pm \emm^{-n^{\epsilon}}) q (\emm^{\beta}-1)^{|E_C|}$  for any $\beta\geq n^{2\epsilon}$.
\end{restatable}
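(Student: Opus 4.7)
The plan is to isolate the contribution of the all-in configuration $F=E_C$ and bound everything else. Since $C$ is connected, $c(E_C)=1$ and so $w_C(E_C) = q(\emm^\beta-1)^{|E_C|}$. Because $E_C \in \Omega^{\ord}_C$, we have the trivial lower bound $Z^{\ord}_{C;q,\beta} \geq q(\emm^\beta-1)^{|E_C|}$. The entire game is to show that the remaining configurations contribute at most a $\emm^{-n^\epsilon}$ factor of this. The one graph-theoretic input I need is the estimate from the excerpt that whp $|E_C| \leq dn$; no other expansion or structural property of $\mathcal{G}(n,d/n)$ is required.

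For an arbitrary $F \in \Omega^{\ord}_C$ with $|F| = |E_C| - k$, I will use the elementary fact that removing a single edge from a graph increases the number of components by at most one, so $c(F) \leq 1 + k$. This yields the uniform bound $w_C(F) \leq q^{1+k}(\emm^\beta-1)^{|E_C|-k}$. There are at most $\binom{|E_C|}{k} \leq (dn)^k$ such configurations. Summing over $k \geq 1$ (up to $\lfloor \eta |E_C|\rfloor$, the largest $k$ compatible with $F \in \Omega^{\ord}_C$) gives
\[
  \sum_{F \in \Omega^{\ord}_C \setminus \{E_C\}} w_C(F) \;\leq\; q(\emm^\beta-1)^{|E_C|} \sum_{k \geq 1} \left(\frac{dn\,q}{\emm^\beta-1}\right)^{k}.
\]

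For $\beta \geq n^{2\epsilon}$, I have $\emm^\beta - 1 \geq \tfrac{1}{2}\emm^{n^{2\epsilon}}$ for $n$ sufficiently large, so $\frac{dn\,q}{\emm^\beta-1} \leq \emm^{-n^{2\epsilon}/2}$ once $n$ is large enough that $\log(2dnq) \leq n^{2\epsilon}/2$ (recall $d,q$ are constants). The geometric sum is therefore bounded by $2\emm^{-n^{2\epsilon}/2}$, which in turn is at most $\emm^{-n^{\epsilon}}$ for large $n$. Combining with the lower bound on $Z^{\ord}_{C;q,\beta}$ gives $Z^{\ord}_{C;q,\beta} = (1 \pm \emm^{-n^\epsilon})q(\emm^\beta-1)^{|E_C|}$, as claimed.

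The argument is essentially routine once one notes the key inequality $c(F) \leq 1 + (|E_C| - |F|)$; there is no real obstacle. The only thing to be mindful of is that $d,q$ are fixed constants (not growing with $n$), so the factor $(dnq)^k$ is harmlessly dominated by the doubly-exponential gap between $\emm^\beta$ and any polynomial in $n$, and that the hypothesis $|E_C| = O(n)$ is invoked only through the elementary counting $\binom{|E_C|}{k} \leq (dn)^k$.
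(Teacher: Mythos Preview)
Your proof is correct, and it is genuinely more elementary than the paper's. The paper argues via edge marginals: it shows $\pi_C(1_e)\geq \tfrac{\emm^\beta}{\emm^\beta+q-1}\geq 1-\emm^{-n^\epsilon}$, transfers this to $\pi^\ord_C$ using the phase-transition result $\|\pi^\ord_C-\pi_C\|_\TV=\emm^{-\Omega(n)}$ (Theorem~\ref{thm:phase-transition}), sums over edges to get $\E[|E_C\setminus F|]\leq \emm^{-n^\epsilon}$, and then applies Markov's inequality to conclude that $F\neq E_C$ carries negligible mass. Your argument bypasses all of this: the inequality $c(F)\leq 1+(|E_C|-|F|)$ together with crude counting and a geometric series does the job directly, with the only whp input being $|E_C|\leq dn$. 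In particular you avoid any appeal to Theorem~\ref{thm:phase-transition}, which is a real simplification. The paper's route is more in keeping with its marginal-based machinery, but for this lemma your direct estimate on $Z^\ord_C$ is the cleaner argument.
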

\begin{proof}[Proof of Theorem~\ref{thm:mainthm2}]
We follow the argument outlined earlier in order to obtain a $\frac{1}{n^R}$-estimate $\hat Z^{\ord}$ for $Z^{\ord}_{C;q,\beta}$ when $\beta\geq \beta_0$. An analogous argument gives an approximation  $\hat Z^{\dis}$ for $Z^{\dis}_{C;q,\beta}$ for $\beta\leq \beta_1$ using WSM within the disordered phase (see Proposition~\ref{prop:wsm-disorder}). Then $\hat Z^\dis$ is the desired approximation for $\beta< \beta_0$,  $\hat Z^\dis+\hat Z^\ord$   for $\beta\in [\beta_0,\beta_1]$ and  $\hat Z^\ord$ for $\beta>\beta_1$.

So fix a target $\beta^*\geq \beta_0$ and focus on approximating $Z^{\ord}_{C;q,\beta^*}$. Set $\beta_\infty= n^{1/R}$. If $\beta^*\geq \beta_\infty$, from Lemma~\ref{lem:betainf} we can use $q(\emm^{\beta^*}-1)^{|E_C|}$ as our estimate $\hat Z^\ord$, so we can assume that $\beta^*<\beta_\infty$.  From~\eqref{eq:interpolate123} we have that $\frac{Z^{\ord}_{C;q,\beta_\infty}}{Z^{\ord}_{C;q,\beta^*}}=\exp\bigg(\int^{\beta_\infty}_{\beta^*}g_C(\beta)\,\mathrm{d} \beta\bigg)$ where $g_C(\beta)=\frac{\emm^{\beta}}{\emm^{\beta}-1}\sum_{e\in E_C} \pi^\ord_{C;q,\beta}(1_e)$, cf.~\eqref{eq:gC}.
From Lemma~\ref{lem:betainf} we can approximate $Z^{\ord}_{C;q,\beta_\infty}$ with relative error $\emm^{-n^{\Omega(1/R)}}$   so 
to obtain the desired approximation to $Z^{\ord}_{C;q,\beta}$, it suffices to approximate the integral $I:=\int^{\beta_\infty}_{\beta^*}g_C(\beta)\,\mathrm{d}\beta$ within additive error $\tfrac{1}{n^{R+1}}$.  
For an edge $e\in E_C$ and a vertex $v$ incident to it, let $B_e=B_r(v)$ with $r=\lceil \tfrac{A}{d}\log n\rceil$ as in \eqref{eq:WSMuse} where we saw that WSM gives $\big|\pi^\ord_{C;q,\beta}(1_e)-\pi_{B_e^+;q, \beta}(1_e)\big|\leq q^{-r/30}\leq \tfrac{1}{n^{R+3}}$ for $q$ large enough w.r.t. $d$. The $r$-neighbourhoods for $G\sim\mathcal{G}(n,d/n)$ are 1-treelike w.h.p. (see Lemma~\ref{lem:G-1-treelike}), so by Lemma~\ref{lem:rationalfun} we can compute in time $n^{O(1/R)}$ a rational function\footnote{Here we assume for convenience that $q$ is rational; if $q$ is irrational, the argument can be modified to use instead a rational approximation $\hat q$ to $q$ satisfying $|\hat q-q|\leq 1/2^{n^{\Theta(1/R)}}$.} $g_e(x)$ so that   $\pi_{B_e^+;q,\beta}(1_e)=g_e(\emm^{\beta}-1)$. Therefore for $x=\emm^{\beta}-1$ we have that
\begin{equation}\label{eq:g54545gg}
    \Big|g_C(\beta)-\frac{x+1}{x}\sum_{e\in E_C}g_e(x)\Big|\leq \frac{|E_C|}{n^{R+3}}, \mbox{ giving that } \Big|I-\sum_{e\in E_C}\int_{x^*}^{x_\infty} \frac{g_e(x)}{x} \mathrm{d} x\Big|\leq \frac{|E_C|(\beta_\infty-\beta^*)}{n^{R+3}}.
\end{equation}
Then, using the algorithm in Lemma~\ref{lem:integrate}, for each $e\in E_C$ we  estimate the integral $\int_{x^*}^{x_\infty} \frac{g_e(x)}{x} \mathrm{d} x$ within additive error $1/n^{R+3}$ in time $n^{O(1/R)}$ since each of $g_e(x)/x$ and $x_\infty$ have size $n^{O(1/R)}$. Combining the estimates gives an $\tfrac{|E_C|}{n^{R+3}}$-estimate of the sum of integrals in \eqref{eq:g54545gg}, yielding a $\tfrac{1}{n^{R+1}}$-estimate of the integral $I$. All in all, this gives the desired $\tfrac{1}{n^R}$-estimate of $Z^{\ord}_{C;q,\beta^*}$ in time $n^{1+O(1/R)}$. Since $R$ can be arbitrarily large, the theorem follows.
\end{proof}

\subsection{Graph properties}\label{sec:graph-properties}

Before proceeding to the WSM proofs, we  use the following expansion properties of \(\G(n,d/n)\) (and its largest component). While the random graph lacks 
regularity and expansion from small sets, for connected subgraphs of size \(\Omega(\log n)\) we can lower bound the average degree and thus obtain a weak expansion property. Since we have two graphs, \(G\) and its largest component \(C\), for the clarity of notation, we use \(V_G\) for the vertex set of \(G\), and \(E_G\) for its edge set.
The following propositions are proved in Sections~\ref{sec:avg-deg} and ~\ref{proof:propexpansion}.

\begin{proposition}\label{prop:average-degree-of-connected-sets}
    For any \(\epsilon\in(0,1)\), there exists \(A = A(\epsilon) > 0\) such that for all \(d\) large enough, w.h.p. over \(G\sim\G(n,d/n)\), any \emph{connected} vertex set \(S_G\subseteq V_G\) with size at least \({A\log (n)}/{d}\) has average degree at least \((1-\epsilon)d\).
\end{proposition}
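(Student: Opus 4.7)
The plan is to bound the expected number of ``bad'' pairs $(S,T)$---where $S\subseteq V_G$ has size $s\geq A\log n/d$, $T$ is a spanning tree of the complete graph on $S$ contained in $G$, and $\sum_{v\in S}\deg_G(v)<(1-\epsilon)ds$---using Cayley's formula to enumerate trees and a Chernoff bound to control the degree sum, then to apply Markov's inequality together with a geometric sum over $s$. Throughout I read ``average degree'' as $\tfrac{1}{|S|}\sum_{v\in S}\deg_G(v)$, the only interpretation consistent with the statement, since $\tfrac{2\,e(G[S])}{|S|}$ would be far below $d$ for any small connected $S$.

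First I would fix $s$ and observe that every connected $S$ of size $s$ contains at least one spanning tree of $K_S$ in $G$, so by a union bound and Cayley's formula the expected number of bad size-$s$ pairs is at most $\binom{n}{s}\,s^{s-2}\,(d/n)^{s-1}\,p_s$, where $p_s$ is an upper bound on $\Pr\!\big[\sum_{v\in S}\deg_G(v)<(1-\epsilon)ds\,\big|\,T\subseteq G\big]$ for any fixed $(S,T)$. To bound $p_s$, I would decompose $\sum_{v\in S}\deg_G(v)=2e(G[S])+e(S,V_G\setminus S)$; conditional on $T\subseteq G$ the right-hand side equals $2(s-1)+2\,\Bin\!\big(\binom{s}{2}-(s-1),\,d/n\big)+\Bin(s(n-s),\,d/n)$, with the two binomials independent, because they involve disjoint pools of potential edges. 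The conditional mean is $sd-O(s+d)$, which exceeds $(1-\epsilon/2)sd$ whenever $s\geq A\log n/d$ with $A$ and $d$ large enough, so a standard Chernoff tail bound yields $p_s\leq \exp(-c\epsilon^2 sd)$ for some absolute constant $c>0$.

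Combining these bounds and using $\binom{n}{s}\leq(en/s)^s$ gives
\[
\binom{n}{s}\,s^{s-2}\,(d/n)^{s-1}\exp(-c\epsilon^2 sd)\;\leq\; n\exp\!\big(s\log(ed)-c\epsilon^2 sd\big).
\]
Choosing $d$ large enough that $\log(ed)\leq \tfrac{c}{2}\epsilon^2 d$ makes each summand at most $n\exp(-\tfrac{c}{2}\epsilon^2 sd)$; the geometric tail over $s\geq A\log n/d$ is then $O\!\big(n^{1-c\epsilon^2 A/2}\big)=o(1)$ once $A=A(\epsilon)$ is chosen with $A>4/(c\epsilon^2)$, and Markov's inequality concludes the proof. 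The only real subtlety is the conditional Chernoff step: freezing the $s-1$ edges of $T$ in $G$ must not disturb the independence of the remaining edge indicators, but this is immediate from the edge-independent construction of $G(n,d/n)$, so the whole argument reduces to routine enumeration plus standard concentration.
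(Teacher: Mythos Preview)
Your argument is correct and in its core matches the paper's treatment of the sublinear range: both proofs over-count connected sets by spanning trees (the paper cites the bound \(n(\emm d)^s\) from \cite{fountoulakis2007evolution}, you derive it directly via Cayley), condition on the tree edges, apply Chernoff, and sum geometrically over~$s$. The difference is that the paper splits into three size regimes---$\frac{A\log n}{d}\leq|S|\leq \xi n$, $\xi n\leq|S|\leq(1-\xi)n$, and $|S|\geq(1-\xi)n$---handling the latter two by a crude $2^n$ union bound over \emph{all} vertex subsets, whereas you run the tree argument uniformly over every~$s$. Your unification works because you keep the full degree sum $2e(G[S])+e(S,\overline S)$, whose conditional mean is $sd(1+o(1))$ for all $s\leq n$; the paper's small-set lemma lower-bounds $\deg_G(S)$ by $e(S,\overline S)$ alone, which forces the restriction $|S|\leq \xi n$ and hence the subsequent case analysis. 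The only point to state more carefully is the Chernoff step for $2X+Y$: since the summands live in $[0,2]$ rather than $[0,1]$ you lose a constant factor in the exponent (e.g.\ $\exp(-\delta^2\mu/4)$ instead of $\exp(-\delta^2\mu/2)$), but this is absorbed into your unspecified constant~$c$.
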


In the following proposition, and throughout the paper, we use $\deg_G(S)$ to denote the \textit{total degree} of a set $S\subseteq V_G$, that is, $\deg_G(S)$  is
the sum of the degrees (in $G$) of vertices in $S$. For a subset $T \subseteq V_G$ we write \(E_G(S,T)\) to denote the set of edges with one endpoint in \(S\) and the other in \(T\). We also use \(e_G(S) := |E_G(S, S)|\), i.e. the internal edges in \(S\). 

\begin{restatable}{proposition}{weakexpansion}\label{prop:expansion-of-core}
    There exists a constant \(A > 0\), such that for all \(d\) large enough, w.h.p. over \(G\sim\G(n,d/n)\) the following holds.
Let \(S\subseteq V_C\) be a \emph{connected} vertex set in \(C\) with \(|S|\geq {A\log (n)}{/d}\) and \(\deg_G(S)\leq \tfrac{1}{10}\deg_G(V_C)\). Then 
\(E_G(S,\overline S) \geq \tfrac{3}{5} \deg_G(S)\).
The same holds 
if $S \subseteq V_C$ is a connected set in~$C$ with   \({A\log (n)}/{d}\leq |S|\leq n /6\).
\end{restatable}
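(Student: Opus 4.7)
The plan is to combine the average-degree lower bound of Proposition~\ref{prop:average-degree-of-connected-sets} with a concentration argument that bounds the number of internal edges in any vertex subset of moderate size.

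I first reduce the two stated hypotheses to a single one. Writing $m(S)$ for the number of edges of $G$ with both endpoints in $S$, the handshake identity gives $\deg_G(S) = 2m(S) + e_G(S)$, so the target $e_G(S) \geq \tfrac{3}{5}\deg_G(S)$ is equivalent to $m(S) \leq \tfrac{1}{5}\deg_G(S)$. Using $\deg_G(V_C) = (1 \pm o(1))dn$ (from the $|E_C|$ estimate stated in Section~\ref{sec:graph-properties}) and $\deg_G(S) \geq (1-\epsilon)d|S|$ from Proposition~\ref{prop:average-degree-of-connected-sets} applied with some small fixed $\epsilon>0$, the hypothesis $\deg_G(S) \leq \tfrac{1}{10}\deg_G(V_C)$ forces $|S| \leq n/(10(1-\epsilon))(1+o(1)) < n/6$ for $d$ large. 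Thus the first case is subsumed by the second, and it suffices to prove that whp every connected $S \subseteq V_C$ with $A\log(n)/d \leq |S| \leq n/6$ satisfies $m(S) \leq (1-\epsilon)d|S|/5$, since combined with $\deg_G(S) \geq (1-\epsilon)d|S|$ this yields $m(S) \leq \tfrac{1}{5}\deg_G(S)$.

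For the bound on $m(S)$ I drop the connectivity requirement and show the stronger statement that whp every $S \subseteq V_G$ of size $k$ in $[A\log(n)/d,\, n/6]$ satisfies $m(S) \leq (1-\epsilon)dk/5$. For a fixed such $S$, $m(S) \sim \mathrm{Bin}(\binom{k}{2}, d/n)$ has mean $\mu_k \leq dk^2/(2n) \leq dk/12$ (using $k \leq n/6$), so the target $a := (1-\epsilon)dk/5$ lies above $\mu_k$ by a factor $x := a/\mu_k \geq (12/5)(1-\epsilon)$. The multiplicative Chernoff bound gives
\[\Pr[m(S) \geq a] \leq \exp\bigl(-\mu_k\, h(x)\bigr), \qquad h(x) = x\ln x - x + 1.\]
The key computation is that $\mu_k h(x) \geq c\, dk\, \ln(n/k)$ for an absolute constant $c>0$ (with $\epsilon$ small): at the small-$k$ end $x = \Theta(nd/\log n)$ is large and $\mu_k h(x) \approx a \ln x = \Theta(dk \log(n/k))$, while at $k = n/6$ the ratio $x \approx 12/5$ gives $h(x) \approx 0.7$ and $\mu_k h(x) = \Theta(dk)$. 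Union-bounding over the $\binom{n}{k} \leq (en/k)^k$ subsets of size $k$ and taking $d$ large enough that $cd$ dominates the $\ln(en/k)/\ln(n/k) = O(1)$ factor, the expected number of bad sets of size $k$ is at most $\exp\bigl(k\ln(en/k) - c\, dk \ln(n/k)\bigr) \leq \exp(-\Omega(dk \ln(n/k)))$. Using $k \geq A\log(n)/d$ and $\ln(n/k) \geq \ln 6$, this is $n^{-\Omega(A)}$, and summing over the $O(n)$ values of $k$ gives total failure probability $n^{1-\Omega(A)} = o(1)$ for $A$ large enough.

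Combining, whp every connected $S \subseteq V_C$ in the relevant size range satisfies $e_G(S) = \deg_G(S) - 2m(S) \geq \deg_G(S) - \tfrac{2(1-\epsilon)}{5}d|S| \geq \tfrac{3}{5}\deg_G(S)$, as required. The main obstacle I anticipate is the upper end of the size range, $k \approx n/6$: here the Chernoff ratio $x$ is only $\approx 12/5$ and the constant $h(12/5) \approx 0.7$ leaves little slack, so the enumeration cost $\binom{n}{n/6} \leq e^{(n/6)\ln(6e)} \approx e^{0.47 n}$ must be beaten by $\mu_k h(x) \approx 0.01\, dn$, forcing $d$ to exceed a concrete numerical threshold (on the order of $50$). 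Taking $d$ sufficiently large, as allowed by the hypothesis, closes this gap; the small-$k$ regime is not a bottleneck because the Chernoff exponent there is $\Theta(A\log^2 n)$, easily dominating the $\binom{n}{k}$ factor.
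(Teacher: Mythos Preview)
Your proof is correct and follows the same high-level strategy as the paper: combine the average-degree lower bound of Proposition~\ref{prop:average-degree-of-connected-sets} with an upper bound on the number of internal edges, then use $\deg_G(S)=2m(S)+|E_G(S,\overline S)|$. The reduction of the first hypothesis to the second (via $\deg_G(S)\geq(1-\epsilon)d|S|$ forcing $|S|<n/6$) is exactly how the paper handles it as well.

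The genuine difference is in how the internal-edge bound is obtained. The paper (Lemma~\ref{lem:not-too-many-internal-edges}) \emph{uses} connectivity: it enumerates connected sets of size $s$ via their spanning subtrees, of which there are at most $n(ed)^s$ in expectation, and then applies Chernoff to the $\mathrm{Bin}(\binom{s}{2}-(s-1),d/n)$ extra edges beyond the tree; this yields the sharper bound $m(S)\leq(d/9+1)|S|$ for $|S|\leq n/6$. You instead drop connectivity entirely and union-bound over all $\binom{n}{k}\leq(en/k)^k$ subsets, aiming only for $m(S)\leq(1-\epsilon)dk/5$. Your enumeration is exponentially cruder at the $k\approx n/6$ end (the cost is $\approx e^{0.47n}$ rather than $n(ed)^{n/6}$), but your Chernoff target is correspondingly looser, and you correctly identify that the computation still closes once $d$ exceeds a concrete threshold around $50$. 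Under the hypothesis ``for all $d$ large enough'' both routes are perfectly valid; yours is somewhat more elementary (no need for the subtree-counting lemma), while the paper's exploitation of connectivity gives a tighter internal-edge constant and is more in keeping with the proposition being about connected sets.
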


We remark that the choice of
$3/5$ in Proposition~\ref{prop:expansion-of-core}  is arbitrary, and any constant between~$1/2$ and~$1$ could be used. Also, 
note that the lower bound $A \log(n)/d$  in both propositions 
is 
asymptotically  optimal in~$n$ and~$d$, as it can be shown that \(G\) contains an induced path of length \(\approx \frac{\log n}{d}\).

Finally, we will use the property that after removing a constant fraction of edges, \(C\) still contains a large component. This property is well-known for expanders \cite{trevisan2016expanders}; to prove it for \(C\) we use the fact that its \textit{kernel} (the graph obtained from \(C\) by contracting every induced path into a single edge and removing attached trees) is an expander. The constant \({1}/{144}\) in the lemma is somewhat arbitrary and follows from the expansion constant for the kernel (see Appendix~\ref{sec:kernel-expansion}).

\begin{restatable}{lemma}{giantafterremoval}\label{lem:giant-component-in-the-core-after-edge-removal}\label{lem:short}
    For all \(d\) large enough, w.h.p. over \(G\sim\G(n,d/n)\) the following holds.
    Suppose that \(\theta \in (0,\tfrac 12)\) is a real and \(E'\subseteq E_C\) is an edge subset such that \(|E'|\leq \frac{|E_C|}{144}\left(2\theta-\frac{1}{100}\right)\).
    Then \((V_C,E_C\setminus E')\) has 
a component with total degree \(\geq \deg_G(V_C)(1-\theta)\), and all other components of \((V_C,E_C\setminus E')\) have total degree at most \(\theta \deg_G(V_C)\). 
\end{restatable}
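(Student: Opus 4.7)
The plan is to argue by contradiction using a component-level expansion argument. Suppose the conclusion fails, and let $C^*$ be the component of $(V_C, E_C \setminus E')$ with largest total $G$-degree. Since the components of $(V_C, E_C \setminus E')$ partition $V_C$ and their $\deg_G$ values sum to $\deg_G(V_C)$, if $\deg_G(C^*) \geq (1-\theta)\deg_G(V_C)$, then every other component has $\deg_G \leq \deg_G(V_C)-\deg_G(C^*) \leq \theta \deg_G(V_C)$, which is the conclusion. So I may assume $\deg_G(C^*) < (1-\theta)\deg_G(V_C)$; writing $B := V_C \setminus C^* = \bigsqcup_{j \geq 2} C_j$ for the union of the remaining components, this forces $\deg_G(B) > \theta \deg_G(V_C) = 2\theta|E_C|$.

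To derive a contradiction I apply Proposition~\ref{prop:expansion-of-core} component-by-component. Each $C_j \subseteq B$ is connected in $C$ (being connected in the subgraph $C \setminus E'$), so whenever $|C_j| \geq A\log n/d$ and either $|C_j| \leq n/6$ or $\deg_G(C_j) \leq \deg_G(V_C)/10$, the proposition yields $e_G(C_j) \geq \tfrac{3}{5}\deg_G(C_j)$. On the other hand, any boundary edge of such a $C_j$ (i.e., any edge in $E_C(C_j, V_C \setminus C_j)$) must lie in $E'$---otherwise it would merge $C_j$ with a neighbouring component---and each edge of $E'$ has at most two endpoints in $B$, so contributes to the boundary of at most two distinct $C_j$'s. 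Thus $\sum_{j \geq 2} e_G(C_j) \leq 2|E'|$. Combining with the hypothesis $|E'| \leq \tfrac{|E_C|}{144}(2\theta - \tfrac{1}{100})$ yields
\[
\tfrac{3}{5}\deg_G(B) \;\leq\; \sum_{j \geq 2} e_G(C_j) \;\leq\; 2|E'| \;\leq\; \tfrac{|E_C|}{72}\Bigl(2\theta - \tfrac{1}{100}\Bigr),
\]
and combined with $\deg_G(B) > 2\theta|E_C|$ this forces $86.4\,\theta < 2\theta - \tfrac{1}{100}$, which is absurd for any $\theta > 0$.

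The main obstacle is handling components that fall outside the hypothesis of Proposition~\ref{prop:expansion-of-core}. Such a $C_j$ is either (i) \emph{tiny}, with $|C_j| < A\log n/d$, or (ii) \emph{large-and-dense}, with $|C_j| > n/6$ and $\deg_G(C_j) > \deg_G(V_C)/10$. For (i), each tiny $C_j$ contributes $\deg_G(C_j) = 2|E_C(C_j)| + e_G(C_j)$, where the intra-edge count $|E_C(C_j)|$ is bounded using the well-known fact that small induced subgraphs of $\G(n,d/n)$ are whp essentially tree-like; summed over all tiny components, the total is dominated by $O(n) + 2|E'|$, which for $d$ large enough is absorbed into the $\tfrac{1}{100}$ slack of the hypothesis. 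For (ii), at most five such components exist in $V_C$ (since $6 \cdot n/6 > |V_C|$), and these are handled directly: if any lies in $B$ together with $C^*$, then both $C^*$ and $C_j$ span $>n/6$ vertices in $C$, and the kernel-expansion of $C$ (the $2$-core modulo contracting induced paths is whp a constant-rate edge expander, cf.\ the discussion preceding the lemma) implies $|E_C(C_j, V_C \setminus C_j)|$ exceeds what $|E'|$ can cover. This is where the specific $\tfrac{1}{144}$ constant in the hypothesis ultimately comes from, and isolating this case is the technical heart of the lemma.
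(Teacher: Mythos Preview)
Your overall strategy (contrapositive plus component-wise edge-expansion, bounding $\sum_j e_G(C_j)$ by $2|E'|$) is sound for the ``medium'' components where Proposition~\ref{prop:expansion-of-core} applies, and the displayed contradiction $86.4\theta < 2\theta - \tfrac{1}{100}$ is correct. But the two boundary cases you flag are not really handled, and the second one in particular undercuts the point of the approach.

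For the tiny case, your claim that the total degree of the tiny components is ``$O(n)+2|E'|$'' hides a $d$-dependent constant: Lemma~\ref{lem:logn-sets-are-treelike} gives $k$-treelikeness with $k=\Theta(\log d)$, so summing $|E_C(C_j)|\le |C_j|-1+k$ over up to $|E'|$ tiny components produces a $2k|E'|$ term in addition to $2|B_{\text{tiny}}|$. This is not ``absorbed into the $\tfrac1{100}$ slack''; rather, one has to check that $2(1+k)n \lesssim \theta d n$, which does hold for $d$ large enough since $k=O(\log d)$, but it is a genuine step, not a perturbation. As written, the bound $O(n)$ is simply wrong for fixed $d$.

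For the large-and-dense case you explicitly appeal to ``the kernel-expansion of $C$'' --- i.e., Proposition~\ref{prop:expansion-of-kernel}. But that proposition, together with the standard expander fact (Lemma~\ref{lem:giant-component-after-edge-removal-in-expanders}) and the translation $|E_C|\le(1+\emm^{-d/3})|E_K|$, is exactly how the paper proves the \emph{entire} lemma in a few lines: push $E'$ down to a set $E'_K$ of kernel edges with $|E'_K|\le|E'|$, apply the expander lemma in $G_K$ (where expansion holds for \emph{all} sets, with no lower size bound), and lift the giant kernel component back to $C$. This avoids your case analysis completely; the reason is precisely that the kernel has minimum degree $\ge 3$, so the obstruction from small non-expanding sets that forces the $A\log n/d$ threshold in Proposition~\ref{prop:expansion-of-core} disappears. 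Since you need the kernel machinery anyway for case~(ii), you gain nothing by treating medium and tiny components via Proposition~\ref{prop:expansion-of-core} and tree-likeness; you only add work.
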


In light of Lemma~\ref{lem:giant-component-in-the-core-after-edge-removal}, we will order components of any subgraph $C'$ of~$C$ by their total degree. So we use the phrase ``largest component'' to refer to the component of $C'$ with largest total degree. A ``giant'' component  is a component whose total degree is more than half of the total degree of $C$, and ``small'' otherwise. 
Thus, by applying Lemma~\ref{lem:giant-component-in-the-core-after-edge-removal} for $\theta=1/10$, we have that for any $E'\in \Omega^{\ord}_C$  the subgraph $(V_C,E')$ contains a giant component and all the other components are small.

\section{Weak spatial mixing within the ordered phase}\label{sec:ordered}

In this section, we prove  WSM within the ordered phase. We bound the difference of the marginals on \(e\) by the probability that, roughly, there is a long path avoiding the giant component of \((V_C,F)\). We then use the new definition of \textit{ordered polymers} that naturally captures this event.

Formally for a vertex \(v\) and an integer \(r\geq 1\), define \(\A_{v,r}\) to be the event, that for a random configuration~\(F\), all simple length-\(r\) paths from \(v\) in \(G_C\) intersect the largest component of the graph $(V_C,F)\backslash v$, i.e., the graph $(V_C,F)$ with $v$ removed. Define \(\A_{v,r}'\) to be the event that $|F \setminus E_G(B_r(v))| \geq (1-\eta)|E_C|$. Then, we have that
\begin{equation}\label{eq:WSMbound}
    \big|\pi_{B^+_r(v)}(1_e) - \pi^\ord_C(1_e)\big|\leq 2\pi^\ord_C(\neg\A_{v,r}) + 2\pi^\ord_C(\neg\A_{v,r}').
\end{equation}
This follows by conditioning on the boundary configuration outside of $B_r(v)$. Namely, using the monotonicity of the RC model, this boundary condition is dominated above by the wired boundary condition on $B_r(v)$, where all vertices at the boundary of $B_r(v)$ are conditioned to belong to the same component, and hence $\pi^\ord_C(1_e)\leq \pi_{B_r^+(v)}(1_e)$. Likewise, the event $\mathcal{A}_{v,r}'$ gives from Lemma~\ref{lem:giant-component-in-the-core-after-edge-removal} that there is a unique giant component $C_{F}$ in the graph $(V_C\backslash B_r(v), F\backslash E_C(B_r(v)))$ and $A_{v,r}$ that there is a  set $S$ of vertices inside $B_r(v)$ whose removal disconnects 
$v$ from the vertices outside of~$B_r(v)$ and every $w\in S$ belongs to $C_F$. Therefore, the event  $\mathcal{A}_{v,r}\cap \mathcal{A}_{v,r}'$ implies that there there is a  set $S$ of vertices inside $B_r(v)$ whose removal disconnects 
$v$ from the vertices outside of~$B_r(v)$ and every $w\in S$ belongs to $C_F$, i.e.,   $S$  forms a wired boundary condition closer to $v$. From monotonicity, it follows that $\pi_{B_r^+(v)}(1_e)\leq \pi^\ord_C(1_e\mid \mathcal{A}_{v,r}\cap \mathcal{A}_{v,r}')$. Combining these estimates on $\pi_{B_r^+(v)}(1_e)$ yields \eqref{eq:WSMbound}.  (A similar argument can be found in \cite[Lemma 5.7]{galanis2024plantingmcmcsamplingpottsarxiv}.)

It is relatively easy to bound the probability of $\neg \mathcal{A}_{v,r}'$ under $\pi^{\ord}_C$. For large enough $d$, standard estimates for the neighbourhoods of $\G(n,d/n)$ (see also the upcoming Lemma~\ref{lem:ball-size}) guarantee that w.h.p. over \(G\sim\G(n,d/n)\), for any \(r\leq \tfrac{A}{d}\log n\), it holds that \(|E_G(B_r(v))|\leq \sqrt n\log n\) which is at most \(\tfrac{\eta}{10}|E_C|\) for $n$ large enough since $|E_C|=\Omega(n)$ w.h.p.. Hence, by Theorem~\ref{thm:phase-transition}(ii), $\pi^{\ord}_C(\neg \mathcal{A}_{v,r}')\leq \emm^{-n}$.

It remains to bound the probability of \(\A_{v,r}\), for which we use \textit{ordered polymers}.

\subsection{Ordered Polymers}

We want to bound for each vertex \(v\), and for each \(r \sim {\log (n)}/{d}\), the quantity \(\pi^\ord_C(\A_{v,r})\).
To do so, we need to control the size of ``polymers'' with respect to a configuration $F$ and a vertex $v\in V_C$.

We first consider non-giant components in $(V_C,F)\backslash v$; roughly, we will refer to the union of their vertices as the polymer vertices. To form polymers, we will take connected components of these vertices. 
The details are as follows.


 
\begin{definition}[Ordered polymers]\label{def:polymers}
Fix \(v\in V_C\). Consider \(F\in\Omega^\ord_C\). For any \(w\in V_C\), define the   graph $H_v(F,w)$ as follows.
    \[
    H_v(F, w) := \begin{cases}
        \text{the component of }w\text{ in }(V_C,F), & \text{if }w=v \\
        \text{the component of }w\text{ in }(V_C,F)\setminus v, & \text{if }w\neq v.
    \end{cases}\]
Let
    $
    \V(F,v) := \{w \in V_C \mid \deg_G(H_v(F, w)) \leq \tfrac{1}{2}\deg_G(V_C)\}$.
The \textit{ordered polymers} of~$F$ are subgraphs of~$C$. 
We define two types of \emph{ordered polymers}, non-singleton polymers and singleton polymers.
    \begin{itemize}
        \item  
For each maximal set  \(S\subseteq\V(F,v)\)
that is connected in \(C\),
there is a \emph{non-singleton polymer}.
The edge set of this polymer consists of all edges in~$E_C$ that are incident to vertices in~$S$.
The vertex set of this polymer is the set of endpoints of these edges.
    
\item For each edge $e\in E_C\setminus F$ that is not an edge of a non-singleton polymer of~$F$, there is a
      \emph{singleton polymers} of \(F\) 
  consisting of the single edge~$e$.    
      
\end{itemize}
    Let \(\Gamma^\ord_v(F)\) denote the set of all ordered polymers of \(F\) (with respect to $v$).

\end{definition}


As we show in Lemma~\ref{lem:ordered-polymers-are-small}, a giant exists in \((V_C,F)\) for all \(F\in\Omega_C^\ord\).
 
\begin{definition}\label{def:45f43w}
    Fix \(v\in V_C\).  Fix \(F\in\Omega^\ord_C\) and \(\gamma = (U, B)\in\Gamma^\ord_v(F)\).
    The \emph{inner vertices} of \(\gamma\) are \(\Vin := U\cap \V(F,v)\). The \emph{out-edges} of \(\gamma\) are the edges in \(E_\out(\gamma) := (E_C\setminus F)\cap B\) and \(e_\out(\gamma) = |E_\out(\gamma)|\) is the number of out-edges of~$\gamma$.
    The weight of \(\gamma\) is \(w^\ord_C(\gamma) := q^{c'(\gamma)} (\emm^\beta - 1)^{-e_\out(\gamma)}\), where \(c'(\gamma)\) is the number of  components of \((V_C, E_C\setminus E_\out(\gamma)))\) with total degree at most $(1/2) \deg_G(V_C)$.
\end{definition}

Lemma~\ref{lem:path-large-polymer}
relates the definition of the polymers to the event \(\A_{v,r}\).
Later, in Lemma~\ref{lem:relate}, we will also relate
the weight of a configuration to the product of weights of its polymers.

\begin{lemma}\label{lem:path-large-polymer}
Fix \(F\in\Omega^\ord_C\).
Let $r$ be a positive integer.  Suppose that there is a simple path \(v_0,v_1,\dots,v_r\) in \(C\) such that \emph{none} of \(v_1,\dots, v_r\) belongs to the largest component of \((V_C,F)\setminus v_0\). Then there is a polymer \(\gamma\in\Gamma_{v_0}^\ord(F)\) with 
\(\{v_1,\dots,v_r\}\subseteq\Vin\),
hence $|\Vin| \geq r$.
\end{lemma}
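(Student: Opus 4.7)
The plan is to read off Definition~\ref{def:polymers} and verify that $v_1,\dots,v_r$ are inner vertices of a single non-singleton ordered polymer. First, I would show that the graph $(V_C,F)\setminus v_0$ has exactly one giant component and that every other component has total degree at most $\tfrac{1}{2}\deg_G(V_C)$. To this end, I apply Lemma~\ref{lem:giant-component-in-the-core-after-edge-removal} with $\theta = 1/10$ to $E' := (E_C\setminus F)\cup\{e\in E_C: v_0\in e\}$; since $F\in\Omega^\ord_C$ we have $|E_C\setminus F|\leq\eta|E_C|$ with $\eta=1/1000$, and $\deg_G(v_0) = o(|E_C|)$ whp in $\G(n,d/n)$, so $|E'| \leq (\eta + o(1))|E_C|$ sits comfortably below the required threshold $\frac{|E_C|}{144}(\tfrac{2}{10}-\tfrac{1}{100})$. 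The lemma then yields a component of $(V_C,E_C\setminus E')$ of total degree at least $\tfrac{9}{10}\deg_G(V_C)$, with every other component having total degree at most $\tfrac{1}{10}\deg_G(V_C)$. The components of $(V_C,E_C\setminus E')$ other than the now-isolated vertex $v_0$ are precisely the components of $(V_C,F)\setminus v_0$, and the degree of $v_0$ is far too small for $\{v_0\}$ to be the giant; hence the conclusion transfers to $(V_C,F)\setminus v_0$.

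Next I would use the hypothesis: each $v_i$ with $i\geq 1$ lies outside the largest component of $(V_C,F)\setminus v_0$ and hence inside a non-giant component, so $\deg_G(H_{v_0}(F,v_i))\leq \tfrac{1}{10}\deg_G(V_C) < \tfrac{1}{2}\deg_G(V_C)$. Because $v_i\neq v_0$ (the path is simple), this places $v_i$ in $\V(F,v_0)$ as required by Definition~\ref{def:polymers}.

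Finally, since $v_1,\dots,v_r$ form a simple path in $C$ they are $C$-connected, so they all lie in a single maximal $C$-connected subset $S\subseteq \V(F,v_0)$. By Definition~\ref{def:polymers}, this $S$ produces a non-singleton ordered polymer $\gamma=(U,B)\in\Gamma^\ord_{v_0}(F)$ whose vertex set $U$ contains $S$ (each vertex of $S$ is an endpoint of its own incident $E_C$-edges). Therefore $\{v_1,\dots,v_r\}\subseteq S\subseteq U\cap\V(F,v_0)=\Vin$, which in particular gives $|\Vin|\geq r$.

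The bulk of the work lies in the first step; the only (minor) obstacle is checking that the constant $\eta = 1/1000$ fixed in Definition~\ref{def:phases} is small enough, relative to the $1/144$ factor in Lemma~\ref{lem:giant-component-in-the-core-after-edge-removal} at $\theta=1/10$, to absorb both the $\eta|E_C|$ out-edges of $F$ and the additional edges incident to $v_0$ that are removed when passing from $C$ to $C\setminus v_0$. Once that arithmetic is in place, the remainder is a direct unpacking of the polymer definition.
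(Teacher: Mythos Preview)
Your proposal is correct and follows essentially the same approach as the paper: invoke Lemma~\ref{lem:giant-component-in-the-core-after-edge-removal} (which the paper aliases as Lemma~\ref{lem:short}) to see that any non-largest component of $(V_C,F)\setminus v_0$ has total degree at most $\tfrac12\deg_G(V_C)$, conclude $v_1,\dots,v_r\in\V(F,v_0)$, and use $C$-connectedness of the path to place them in a single non-singleton polymer. The paper's proof is terser---it cites the lemma without spelling out the choice of $E'$ and $\theta$ or the max-degree bound needed to absorb the edges incident to $v_0$---while you unpack these details explicitly; the detailed application you give is essentially the same one the paper makes in the proof of the adjacent Lemma~\ref{lem:ordered-polymers-are-small}.
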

\begin{proof}
Consider $i\in [r]$.
Since $v_i \neq v_0$, 
$H_{v_0}(F,v_i)$ is the component of~$v_i$ in $(V_C, F) \setminus v_0$.
This is not the largest component in $(V_C,F)\setminus v_0$, so
$\deg_G(H_{v_0}(F,v_i)) \leq \tfrac12 \deg_G(V_C)$ by Lemma~\ref{lem:short}.
Now note that the $v_1,\ldots,v_r$ form a path in~$C$ so they are contained in a single set $S$ in Definition~\ref{def:polymers}. Thus,
    there is \(\gamma\in\Gamma_{v_0}^\ord(F)\) with \(\{v_1,\dots,v_r\}\subseteq\Vin\).
\end{proof}

In order to prove Lemma~\ref{lem:relate}, and to
bound the probability that there is a large polymer, we next  use Lemma~\ref{lem:giant-component-in-the-core-after-edge-removal} to get an upper bound on a size of an ordered polymer.  

\begin{lemma}\label{lem:ordered-polymers-are-small}
    For all \(d\) large enough, w.h.p. over \(G\sim \G(n,d/n)\), the following holds. Fix any \(v\in V_C\). For any \(F\in\Omega^\ord_C\), \((V_C,F)\setminus v\) contains a component of degree \(\geq (1-\tfrac{1}{10})\deg_G(V_C)\).
    Hence in particular, every polymer of \(\gamma\in\Gamma^\ord_v(F)\) has \(\deg_G(\Vin) \leq \tfrac{1}{10}\deg_G(V_C)\).
\end{lemma}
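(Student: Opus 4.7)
The plan is to apply Lemma~\ref{lem:giant-component-in-the-core-after-edge-removal} with $\theta=1/10$ to the graph obtained from $(V_C,F)$ by additionally removing every $E_C$-edge incident to $v$; the first claim will then be immediate, and the second follows because each vertex in $\Vin$ must lie in a small component of $(V_C,F)\setminus v$.

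First I would set $E':=(E_C\setminus F)\cup\{e\in E_C:v\in e\}$, so that $(V_C, E_C\setminus E')$ equals $(V_C, F\setminus\{e:v\in e\})$, in which $v$ is isolated. Since $F\in\Omega^\ord_C$, we have $|E_C\setminus F|\leq \eta|E_C|=|E_C|/1000$. A standard tail estimate for $\G(n,d/n)$ shows that whp the maximum degree is $O(\log n)$, so the number of $E_C$-edges incident to $v$ is $O(\log n)$. Combined with $|E_C|=\Omega(n)$ whp, this gives $|E'|\leq |E_C|/1000 + O(\log n)\leq \tfrac{19|E_C|}{14400}=\tfrac{|E_C|}{144}(2\theta-1/100)$ for $n$ large enough. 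Lemma~\ref{lem:giant-component-in-the-core-after-edge-removal} then produces a component $C^*$ of $(V_C,E_C\setminus E')$ with $\deg_G(C^*)\geq (1-\tfrac{1}{10})\deg_G(V_C)$ and every other component having $\deg_G$ at most $\tfrac{1}{10}\deg_G(V_C)$. Since $v$ is isolated in $(V_C,E_C\setminus E')$ and $\deg_G(v)=O(\log n)$ is much smaller than $\tfrac{9}{10}\deg_G(V_C)$, the singleton component $\{v\}$ cannot be $C^*$, so $v\notin C^*$. Hence $C^*$ is also the largest component of $(V_C,F)\setminus v$, establishing the first claim.

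For the second claim, fix $\gamma=(U,B)\in\Gamma^\ord_v(F)$ and $w\in\Vin$. I would show $w\notin C^*$, from which $\Vin\subseteq V_C\setminus C^*$ and therefore $\deg_G(\Vin)\leq \deg_G(V_C\setminus C^*)\leq \tfrac{1}{10}\deg_G(V_C)$. By definition, $w\in\V(F,v)$, meaning $\deg_G(H_v(F,w))\leq \tfrac{1}{2}\deg_G(V_C)$. If $w=v$, then $w\notin C^*$ was already proven. If $w\neq v$ and $w\in C^*$, then $H_v(F,w)$ is the component of $w$ in $(V_C,F)\setminus v$, which equals $C^*$; but then $\deg_G(H_v(F,w))\geq \tfrac{9}{10}\deg_G(V_C)>\tfrac12\deg_G(V_C)$, a contradiction. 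Thus $w\notin C^*$ in either case.

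I do not foresee a serious obstacle. The only delicate point is the numerical check that the edge-budget of Lemma~\ref{lem:giant-component-in-the-core-after-edge-removal} accommodates both the $\eta|E_C|$ out-edges of $F$ and the $O(\log n)$ edges incident to $v$; this works precisely because the choice $\eta=1/1000$ lies comfortably below $19/14400$, leaving plenty of slack to absorb the lower-order term from $\deg_G(v)$.
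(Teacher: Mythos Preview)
Your proposal is correct and follows essentially the same approach as the paper: both apply Lemma~\ref{lem:giant-component-in-the-core-after-edge-removal} with $\theta=1/10$ after removing the out-edges of $F$ together with the (at most $O(\log n)$) edges incident to $v$, then observe that vertices of $\Vin$ lie outside the resulting giant component. Your write-up is in fact more careful than the paper's about identifying $(V_C,E_C\setminus E')$ with $(V_C,F)\setminus v$ up to the isolated vertex $v$, and about verifying the numerical slack $\eta=1/1000<19/14400$.
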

\begin{proof}
    Let $\theta=1/10$. w.h.p. the maximum degree of \(G\) is \(O(\frac{\log n}{\log\log n})\), see e.g. \cite[Theorem 3.4]{Frieze_Karoński_2015}. Since $|F|\geq (1-\eta)|E_C|$ and $\eta=1/1000$, it follows that the graph $(V_C,F)\backslash v$  contains at most \(\frac{|E_C|}{144}(2\theta  - \tfrac{1}{100})\) edges, so by Lemma~\ref{lem:giant-component-in-the-core-after-edge-removal} it has a component \(\kappa\) of total degree \(\geq (1-\tfrac{1}{10})\deg_G(V_C)\).
    By Definition~\ref{def:polymers},  none of the vertices in $\kappa$ is in \(\V(F,v)\). Thus \(\deg_G(\V(F,v))\leq \tfrac{1}{10}\deg_G(V_C)\), and so in particular any component of \(\V(F,v)\) has total degree at most \(\tfrac{1}{10}\deg_G(\V(F,v))\).

    Finally, we note that for any singleton polymer \(\gamma\in\Gamma^\ord_v(F)\), from  Definition~\ref{def:45f43w}, \(\Vin = \emptyset\).
\end{proof}

We use Lemma~\ref{lem:ordered-polymers-are-small} to relate the weight of a configuration to the product of weights of its polymers.

\begin{lemma}\label{lem:relate}
    For all \(d\) large enough, w.h.p. over \(G\sim\G(n,d/n)\) the following holds. For any \(F\in\Omega^\ord_C\) and \(v\in V_C\)
    \[
    w_C(F)=q(e^\beta-1)^{|E_C|}\prod_{\gamma\in\Gamma^\ord_v(F)} w_C^\ord(\gamma).
    \]
\end{lemma}
\begin{proof}
    Write
    \[
    q(e^\beta-1)^{|E_C|}\prod_{\gamma\in\Gamma^\ord_v(F)} w_C^\ord(\gamma) = q^{1 + \sum_{\gamma\in\Gamma^\ord_v(F)} c'(\gamma)} (e^\beta-1)^{|E_C| - \sum_{\gamma\in\Gamma^\ord_v(F)}  e_\out(\gamma)}.
    \]
    Every out-edge of \(F\) is in exactly one ordered polymer of \(F\), thus \( |E_C\setminus F| = \sum_{\gamma\in\Gamma^\ord_v(F)} e_\out(\gamma)\). It remains to show that  \(c(F) = 1 + \sum_{\gamma\in\Gamma^\ord_v(F)} c'(\gamma)\).
    
    By Lemma~\ref{lem:ordered-polymers-are-small} w.h.p. \(G\) is such that for all \(F\in\Omega_\ord\), \((V_C,F)\) contains a giant component with total degree \(>\frac{1}{2}\deg_C(G_C)\). The \(1\)-term in the above equality corresponds to the giant. Now consider any non-giant component \(\kappa\) of \((V_C,F)\). Its vertices must be in \(\V(v,F)\), and thus in \(\Vin\) for some \(\gamma\in\Gamma^\ord_v(F)\), hence also \(E_C(V(\kappa), \overline{V(\kappa)})\subseteq E_\out(\gamma)\). Since \(\kappa\) is connected in \((V_C,F)\), it must be connected in \((V_C,E_C\setminus E_\out(\gamma))\), as \(F\subseteq E_C\setminus E_\out(\gamma)\), thus it is accounted for in \(c'(\gamma)\). Therefore \(1 + c(F) \leq 1 + \sum_{\gamma_\in\Gamma_v^\ord(F)} c'(\gamma)\).

    Now consider \(\kappa\) to be a non-giant component of \((V_C,E_C\setminus E_\out(\gamma))\) for some \(\gamma\in\Gamma_v^\ord(F)\). Since \((V_C,E_C)\) is connected, \(E_\out(\gamma)\) contains an out-edge \(e\) incident to some vertex \(w\) of \(\kappa\). Since \(F\subseteq E_C\setminus E_\out(\gamma)\), the component of \(w\) in \((V_C,F)\), \(\kappa'\), must be a subgraph of \(\kappa\). Clearly, \(\kappa'\) must also be non-giant, thus all of its vertices are in \(\V(v,F)\). Since non-singleton polymers are maximal connected subgraphs consisting of vertices in \(\V(v, F)\), there is a unique polymer \(\gamma'\) containing vertices 
in $V(\kappa')\subseteq V(\kappa)$.
Since \(e\) is an out-edge incident to \(w\), \(e\in E_\out(\gamma')\). Thus \(e\) is not in a singleton polymer in \(\Gamma_{v}^\ord(F)\), so \(\gamma\) must be a non-singleton polymer with an endpoint of \(e\) contained in \(\Vin\). But then \(\Vin\cup\{w\}\) is a connected subset in \((V_C,E_C)\), and thus in particular so 
is
\(\Vin\cup \V_{\gamma'}\). Since polymers of \(F\) do not intersect, it must be the case \(\gamma = \gamma'\) and \(\kappa = \kappa'\). Thus  each non-giant component \(\kappa\) of \((V_C,E_C\setminus E_\out(\gamma))\) for any \(\gamma\in\Gamma_v^\ord(F)\) is also a non-giant component of \((V_C,E_C)\) and for any \(\gamma'\in\Gamma_v^\ord(F)\), \(\gamma'\neq \gamma\), \(\kappa\) is not a non-giant component of \((V_C,E_C\setminus E_\out(\gamma'))\). This finished the second part of the equality and thus the proof.
\end{proof}

It remains to   bound the probability that there is a large polymer. To do so, we show that the weights of (sufficiently large) polymers  decay, and then relate the weights to the probabilities that  polymers occur.

Observe that for any \(F\in\Omega^\ord_C\), any \(v\in V_C\) and any non-singleton polymer \(\gamma\in\Gamma^\ord_v(F)\), all edges in \(E_G(\Vin,\overline\Vin)\)  that are not incident to~\(v\) are out-edges.
To see this, note that, if there is an in-edge from \(w\neq v\in\overline \Vin\) to \(u\in\Vin\), then the vertices~$u$ and~$w$ would be in the same component of \(H_v(F, w)\), but then by the definition of non-singleton polymers, 
$w$ would be in $\Vin$. Since, for any singleton polymer \(\gamma\), \(\Vin = \emptyset\), the same statement applies. Formally we have the following.

\begin{observation}\label{obs:name}
    Let \(F\in\Omega^\ord_C\), \(v\in V_C\) and \(\gamma\in\Gamma^\ord_v(F)\). Then \(E_\out(\gamma)\supseteq E_G(\Vin,\overline\Vin) \setminus E_G(v,\Vin)\), and hence \(e_\out(\gamma) \geq |E_G(\Vin, \overline{\Vin})| - |E_G(v,\Vin)|\).
\end{observation}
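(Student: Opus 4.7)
The plan is to unpack Definitions~\ref{def:polymers} and~\ref{def:45f43w} in a short case analysis. The singleton case is immediate: for a singleton polymer with edge $e=\{x,y\}$, neither endpoint lies in $\V(F,v)$, since otherwise $e$ would be incident to a non-empty maximal connected subset of $\V(F,v)$ and hence captured by a non-singleton polymer; so $\Vin=\emptyset$ and both claimed inequalities hold trivially. For a non-singleton polymer $\gamma=(U,B)$ arising from a maximal connected $S\subseteq\V(F,v)$, I would first note that $\Vin=S$: any $w\in U\setminus S$ is $C$-adjacent to $S$, so $w\in\V(F,v)$ would contradict the maximality of $S$. Hence $\Vin=S$ and $B$ contains every edge of $C$ incident to $\Vin$, so the task reduces to showing $e\notin F$ for every $e\in E_G(\Vin,\overline\Vin)\setminus E_G(v,\Vin)$.

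The core step is a contradiction argument. Suppose $e=\{x,y\}\in F$ with $x\in\Vin$ and $y\in\overline\Vin$, and split on whether $v$ is an endpoint of $e$. If $y=v$, then $e\in E_G(v,\Vin)$, contradicting the choice of $e$. If neither $x$ nor $y$ equals $v$, then $\{x,y\}$ survives in $(V_C,F)\setminus v$, giving $H_v(F,y)=H_v(F,x)$; since $x\in\Vin\subseteq\V(F,v)$ this common component has $G$-degree at most $\tfrac12\deg_G(V_C)$, so $y\in\V(F,v)$, and $C$-adjacency to $x\in S$ combined with maximality of $S$ forces $y\in\Vin$, a contradiction. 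The remaining subcase is $x=v\in\Vin$: here $v\in\V(F,v)$ bounds $\deg_G(H_v(F,v))\leq\tfrac12\deg_G(V_C)$, and since every path from $y$ avoiding $v$ in $(V_C,F)$ remains inside the $(V_C,F)$-component of $v$, one has $H_v(F,y)\subseteq H_v(F,v)\setminus\{v\}$, yielding $y\in\V(F,v)$ and the same maximality contradiction.

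The bound $e_\out(\gamma)\geq |E_G(\Vin,\overline\Vin)|-|E_G(v,\Vin)|$ is then immediate from monotonicity of cardinality on the set-difference inclusion. I do not foresee a substantive obstacle in carrying out this plan; the only mild care needed is the subcase $x=v\in\Vin$, which is precisely why the statement must subtract $E_G(v,\Vin)$ on the right-hand side rather than a smaller set.
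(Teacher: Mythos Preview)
Your proof is correct and follows essentially the same contradiction argument as the paper: an in-edge crossing the cut with neither endpoint equal to $v$ would force the outer endpoint into $\V(F,v)$ and hence, by maximality, into $\Vin$. You are in fact more careful than the paper's text in explicitly treating the subcase $x=v\in\Vin$ (the paper's sentence ``$u$ and $w$ would be in the same component of $H_v(F,w)$'' tacitly assumes $u\neq v$), and your containment $H_v(F,y)\subseteq H_v(F,v)\setminus\{v\}$ handles it cleanly. One small quibble: your closing remark misattributes the need for the subtraction of $E_G(v,\Vin)$---that subtraction matters in the $y=v$ subcase (where $v\in\overline{\Vin}$ and the edge may genuinely be in $F$), not in the $x=v\in\Vin$ subcase, where you have just shown the edge must be out.
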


We next use that w.h.p. \(G\) is 1-treelike  (see Lemma~\ref{lem:G-1-treelike}), i.e. that for all \(d\) large enough, w.h.p. any radius-\(\tfrac{\log n}{d}\) ball in \(G\) (and thus in \(C\)) can be made into a tree by removing at most a single edge, to show that \(E_G(v,\Vin)\) accounts for a small fraction of edges going from \(\Vin\).
\begin{lemma}\label{lem:not-too-many-edges-into-a-single-vertex-relatively}\label{lem:X}
    For all real $d$ large enough, w.h.p. over \(G\sim\G(n,d/n)\) the following holds. For any \(v\in V_G\) and for any connected subset \(S\subseteq V_G\) that doesn't contain~$v$, $
    |E_G(v,S)| \leq 2 + \frac{d|S|}{\log n}.$
\end{lemma}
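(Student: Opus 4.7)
The plan is to split the argument into two regimes based on $m := |S|$. Let $r := \lfloor A \log n / d \rfloor$, where $A$ is the constant from Lemma~\ref{lem:G-1-treelike}, and condition on the (high-probability) event that every radius-$r$ ball in $G$ is $1$-treelike.

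Suppose first that $m \leq r$. If $|E_G(v,S)| \geq 1$, then $H := G[\{v\}\cup S]$ is connected with $m+1$ vertices and diameter at most $m \leq r$, so $V(H) \subseteq B_r(v)$. Hence $H$ is a subgraph of the $1$-treelike graph $G[B_r(v)]$, giving $|E(H)| \leq |V(H)| = m+1$. Combined with $|E(G[S])| \geq m-1$ (from connectivity of $G[S]$), this yields $|E_G(v,S)| = |E(H)| - |E(G[S])| \leq 2$, which already matches the claim since $d|S|/\log n \geq 0$.

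Now suppose $m > r$. The claim is trivial when $|E_G(v,S)| \leq 2$, so assume $|E_G(v,S)| \geq k$ for some integer $k > 2 + dm/\log n$. The crucial observation is that the event ``$G[S]$ is connected'' depends only on potential edges inside $S$, while the event ``$|N(v)\cap S|\geq k$'' depends only on potential edges incident to $v$; since $v\notin S$, these edge sets are disjoint and the two events are independent in $\mathcal{G}(n,d/n)$. A union bound over the $m^{m-2}$ labelled spanning trees of $S$ for the first event, together with a binomial tail for the second, gives
\[
\Pr\!\big[G[S]\text{ connected and }|N(v)\cap S|\geq k\big] \;\leq\; m^{m-2}(d/n)^{m-1}\cdot\binom{m}{k}(d/n)^k.
\]
Summing this estimate over $v\in V_G$ and over $m$-subsets $S\subseteq V_G\setminus\{v\}$, substituting $k = \lceil 3 + dm/\log n\rceil$, and using the identity $n^{-dm/\log n} = e^{-dm}$, the combinatorial prefactors $n\cdot (en/m)^m \cdot m^{m-2}$ collapse and the whole expression reduces to $e^{-\Omega(dm)} \cdot \mathrm{poly}(n,m,d)/n$ per $m$, for $d$ sufficiently large. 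Since $|E_G(v,S)| \leq \deg_G(v) \leq \Delta = O(\log n/\log\log n)$ (using the standard max-degree bound for $\mathcal{G}(n,d/n)$) restricts attention to $m \leq O(\log^2 n/(d\log\log n))$, summing over this range of $m$ yields $o(1)$ and the claim follows whp.

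The main obstacle lies in the regime just above $r$, where $G[\{v\}\cup S]$ no longer fits inside a $1$-treelike ball but $k$ is still too small to be controlled by the max-degree bound alone. The union bound succeeds here precisely because the slack $dm/\log n$ in the stated bound furnishes exactly the factor $n^{-dm/\log n} = e^{-dm}$ in the probability, which (for $d$ large enough) overwhelms both the $(en/m)^m$ choices of $S$ and the $m^{m-2}$ labelled trees on $S$.
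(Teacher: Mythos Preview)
Your proof is correct and takes a genuinely different route from the paper's.

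The paper gives a single deterministic argument (conditional on the $1$-treelike property) that works for all $|S|$ at once: it observes that any two neighbours of $v$ in $S$ must be at distance $>2r$ in $G[S]$ unless the path between them uses the one excess edge of the $1$-treelike ball around $v$, and then packs disjoint length-$r$ paths inside a spanning tree of $S$ to conclude $|S|\ge(|E_G(v,S)|-1)r$. Your regime~1 recovers the small-$|S|$ case of this via a cruder edge count on $G[\{v\}\cup S]$, but your regime~2 is entirely different: a direct first-moment computation in $\G(n,d/n)$, exploiting the independence between edges inside $S$ and edges from $v$ to $S$. The paper's argument is cleaner and reveals that the lemma is really a \emph{deterministic} consequence of local tree-likeness; your approach is more elementary but trades the combinatorial insight for a separate probabilistic calculation and a case split.

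Two minor points. First, Lemma~\ref{lem:G-1-treelike} does not fix a constant~$A$ --- it holds for every $A>0$ provided $d$ is large enough --- so you should simply pick one (e.g.\ $A=1$) rather than refer to ``the constant from Lemma~\ref{lem:G-1-treelike}''. Second, your regime-2 calculation is only sketched; it does go through (the factor $(d/n)^k$ with $k\ge 3+dm/\log n$ contributes $n^{-3}$, which beats the $n^2$ from choosing $v$ and $S$, and $n^{-dm/\log n}=e^{-dm}$, which beats $(ed)^m$ for large $d$, while $\binom{m}{k}$ and $d^{dm/\log n}$ are each $e^{o(dm)}$ since $m=O(\log^2 n)$), but it would benefit from being written out explicitly.
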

\begin{proof}
Let $r$ denote $\tfrac{\log n}{d}$.
By Lemma~\ref{lem:G-1-treelike}, for all $d$ sufficiently large, w.h.p. any radius-$r$ ball in \(G\) can be made into a tree by removing at most a single edge.     
    Let \(v\in V_G\) be any vertex.
    If \(|E_G(v,S)|\leq 2\), we are done. So assume otherwise (which obviously implies that there is a neighbour of~$v$ in~$S$).  

     Consider the radius-$r$  ball around \(v\) in the induced graph on \(S\cup\{v\}\), and denote it \(B\). Since \(B\) is a subgraph of a radius-$r$ ball in \(G\),  there is an edge set \(\Ec\subseteq E_G(B)\) with \(|\Ec|\leq 1\), such that removing \(\Ec\) from \(B\) makes it a tree.

     Let \(U\) be the set of neighbours of \(v\) in \(S\). Since \(S\) is connected and does not contain \(v\), there is a spanning tree \(T\) of \(G[S]\). Note that since \(B \setminus \Ec\) is a tree where all vertices of \(U\) are adjacent to \(v\), any path between two vertices of \(U\) with length \(\leq 2r \) in \(G[S]\) has to contain the edge in \(\Ec\).     Therefore, if a component of \(T \setminus \Ec\) contains \(\ell\geq 2\) vertices from \(U\), then its size is at least \(\ell r  \).
     Removing the edge (if any) of \(\Ec\) from \(T\) disconnects it to at most \(2\) components. Since we assumed \(|U| = |E_G(v,S)|>2\), at least \(|E_G(v,S)| - 1\) of the vertices in \(U\) are in components of \(T\setminus\Ec\) together. Thus, summing up we obtain $|S| \geq (|U| - 1)  r.$ 
     Rearranging we get that 
     \(|E_G(v,S)|\leq 1 + \max(1, \frac{|S|}{ r} )\leq 2 + \frac{d|S|}{\log n}\).
\end{proof}

\begin{corollary}\label{cor:sat}
    For all real \(d\) large enough, w.h.p. over \(G\sim\G(n,d/n)\) the following holds. Let \(v\in V_C\), \(F\in\Omega^\ord\). Then for any \(\gamma\in\Gamma^\ord_v(F)\), it holds that $e_\out(\gamma) \geq |E_G(\Vin,\overline\Vin)| - 2 - \frac{d|\V_\gamma|}{\log n}$.
\end{corollary}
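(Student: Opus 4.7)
\textbf{Proof plan for Corollary~\ref{cor:sat}.} The strategy is simply to chain Observation~\ref{obs:name} with Lemma~\ref{lem:not-too-many-edges-into-a-single-vertex-relatively}, so the work is in checking that the hypotheses of the lemma apply to $\Vin$. We will invoke Lemma~\ref{lem:not-too-many-edges-into-a-single-vertex-relatively} on $C$ (which is a subgraph of $G$, so the conclusion holds whp for $C$ as well), conditioning on the whp event that it holds.

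First, I would observe that $v\notin \Vin$. For this, recall from Lemma~\ref{lem:ordered-polymers-are-small} that for $F\in\Omega^\ord_C$, the graph $(V_C,F)\setminus v$ has a component of total degree at least $(1-1/10)\deg_G(V_C)$; adding $v$ back, the component $H_v(F,v)$ of $v$ in $(V_C,F)$ has total degree strictly greater than $\tfrac12 \deg_G(V_C)$, so $v\notin \V(F,v)$ and therefore $v\notin \Vin$. Next, I would argue that $\Vin$ is connected in $C$. For a singleton polymer $\gamma$, Definition~\ref{def:45f43w} gives $\Vin=\emptyset$, and for a non-singleton polymer $\gamma$ arising from a maximal set $S\subseteq \V(F,v)$ connected in $C$, the inner vertices $\Vin=U\cap \V(F,v)$ coincide with $S$ (any vertex of $U$ in $\V(F,v)$ is a neighbour of $S$ in $C$ and hence lies in $S$ by maximality), so $\Vin=S$ is connected in $C$.

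Now the conclusion is immediate. By Observation~\ref{obs:name},
\[
e_\out(\gamma)\;\geq\;|E_G(\Vin,\overline\Vin)|-|E_G(v,\Vin)|.
\]
If $\Vin=\emptyset$, the right-hand side is $-|E_G(v,\emptyset)|=0$ and the claimed bound $e_\out(\gamma)\geq -2$ is trivial. Otherwise, $\Vin$ is a non-empty connected vertex set in $G$ not containing $v$, so Lemma~\ref{lem:not-too-many-edges-into-a-single-vertex-relatively} applied with $S=\Vin$ yields
\[
|E_G(v,\Vin)|\;\leq\;2+\frac{d|\Vin|}{\log n},
\]
and substituting this into the previous inequality gives the desired bound.

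The only non-routine step is the verification that $\Vin$ is connected (and avoids $v$) so that Lemma~\ref{lem:not-too-many-edges-into-a-single-vertex-relatively} is applicable; everything else is a direct substitution. I do not expect a real obstacle: the connectivity of $\Vin$ follows straight from the maximality clause in Definition~\ref{def:polymers}, and the fact that $v\notin \Vin$ follows from the giant-component statement already established in Lemma~\ref{lem:ordered-polymers-are-small}.
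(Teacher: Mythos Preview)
Your overall chaining of Observation~\ref{obs:name} with Lemma~\ref{lem:not-too-many-edges-into-a-single-vertex-relatively} is exactly what the paper does, and your verification that $\Vin$ is connected in $C$ is correct. The gap is in your claim that $v\notin\Vin$. You argue that since $(V_C,F)\setminus v$ has a component of total degree $\geq (1-1/10)\deg_G(V_C)$, ``adding $v$ back'' forces $H_v(F,v)$ to be the giant. This inference is unjustified: nothing prevents all edges of $F$ incident to $v$ from lying outside the giant of $(V_C,F)\setminus v$ (for instance, every edge at $v$ could be an out-edge of $F$). In that case $H_v(F,v)$ is a small component of $(V_C,F)$, so $v\in\V(F,v)$ and hence $v\in\Vin$ for the polymer containing it.

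The paper avoids this by a case split rather than by excluding $v\in\Vin$: if $v\in\Vin$, then $E_G(v,\Vin)\cap E_G(\Vin,\overline\Vin)=\emptyset$, so Observation~\ref{obs:name} already gives $e_\out(\gamma)\geq |E_G(\Vin,\overline\Vin)|$ and the inequality is immediate; only when $v\notin\Vin$ does one invoke Lemma~\ref{lem:not-too-many-edges-into-a-single-vertex-relatively}. Replacing your ``$v\notin\Vin$'' paragraph with this two-line case analysis repairs the argument and matches the paper.
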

\begin{proof}
By Observation~\ref{obs:name}, any in-edges in \(E_G(\Vin,\overline\Vin)\) are from \(\Vin\) to \(v\). If \(v\in\Vin\), all edges in \(E_G(\Vin,\overline\Vin)\) are out, so the inequality follows. If \(v\not\in\Vin\), 
recall that,  
for a singleton polymer~$\gamma$, $\Vin=\emptyset$ and for a non-singleton polymer~$\gamma$,
$\Vin$ is connected in~$C$.
Apply  Lemma~\ref{lem:X} to  show that
$E_G(v,\Vin) \leq 2 + d |\Vin|/\log(n)$. 
\end{proof}

Now we can bound weight of any ordered polymer \(\gamma\), provided it is large enough, to apply expansion and average degree bounds.

\begin{lemma}\label{lem:weight-decaying-for-large-polymers}
    There exists a real \(A > 0\) such that for all \(d\) large enough, and for all real \(q > 1\), w.h.p. over \(G\sim\G(n,d/n)\) the following holds for all \(\beta \geq \beta_0\). 
    For any \(F\in\Omega^\ord_C\), \(v\in V_C\) and \(\gamma\in\Gamma_v^\ord(F)\) with \(|\Vin|\geq\frac{A}{d}\log n\), 
 it holds that $w^\ord_C(F) \leq q^{-\deg_G(\V_\gamma)/(20d)}$.
\end{lemma}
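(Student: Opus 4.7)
The plan is to bound $w^\ord_C(\gamma)$ by controlling $c'(\gamma)$ and $e_\out(\gamma)$ separately. Since $\beta\geq \beta_0$ yields $\emm^\beta-1\geq q^{1.9/d}$, the definition of $w^\ord_C(\gamma)$ gives $w^\ord_C(\gamma)\leq q^{c'(\gamma)-1.9\,e_\out(\gamma)/d}$, so it suffices to establish
\[
c'(\gamma) - \tfrac{1.9}{d}\,e_\out(\gamma) \leq -\tfrac{1}{20d}\deg_G(\V_\gamma).
\]
I will show $c'(\gamma)\leq (1+o(1))\deg_G(\V_\gamma)/d$ and $e_\out(\gamma)\geq (3/5-o(1))\deg_G(\V_\gamma)$; since $1.9\cdot 3/5=1.14>1+1/20$, these leave enough slack.

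The lower bound on $e_\out(\gamma)$ is relatively routine. Lemma~\ref{lem:ordered-polymers-are-small} gives $\deg_G(\V_\gamma)\leq (1/10)\deg_G(V_C)$, so Proposition~\ref{prop:expansion-of-core} yields $|E_G(\V_\gamma,\overline{\V_\gamma})|\geq (3/5)\deg_G(\V_\gamma)$, and Corollary~\ref{cor:sat} then gives $e_\out(\gamma)\geq (3/5)\deg_G(\V_\gamma)-2-d|\V_\gamma|/\log n$. By Proposition~\ref{prop:average-degree-of-connected-sets} applied to the connected set $\V_\gamma$ of size $\geq A\log(n)/d$, $\deg_G(\V_\gamma)\geq (1-\epsilon)d|\V_\gamma|\geq (1-\epsilon)A\log n$, which swallows both error terms for $n,d$ large.

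The harder step is the upper bound on $c'(\gamma)$. Since $F\in\Omega^\ord_C$, $|E_\out(\gamma)|\leq \eta|E_C|=|E_C|/1000$, below the threshold of Lemma~\ref{lem:short} for $\theta=1/10$; hence $(V_C,E_C\setminus E_\out(\gamma))$ has a unique giant and every other component has total degree at most $(1/10)\deg_G(V_C)$, so each such small component is counted by $c'(\gamma)$. The crux is to argue that every small component $T$ must meet $\V_\gamma\cup\{v\}$, giving $c'(\gamma)\leq |\V_\gamma|+1\leq \deg_G(\V_\gamma)/((1-\epsilon)d)+1$. Suppose for contradiction that $T\cap(\V_\gamma\cup\{v\})=\emptyset$. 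Since $T$ is a component in the new graph, every $C$-edge leaving $T$ lies in $E_\out(\gamma)$, so it is an out-edge of $F$ with the outside endpoint in $\V_\gamma$. In particular no in-edge of $F$ leaves $T$, so for each $w\in T$ the component $H_v(F,w)$ is contained in $T$ with total degree $\leq \deg_G(T)\leq (1/2)\deg_G(V_C)$, giving $T\subseteq \V(F,v)$. Since $T$ is connected in $C$ and disjoint from $\V_\gamma$, it lies in some other maximal connected piece $S'\neq \V_\gamma$ of $\V(F,v)$. But the maximality of $\V_\gamma$ and $S'$ as connected subsets of $\V(F,v)$ forbids any $C$-edge between them (such an edge would make $\V_\gamma\cup S'$ a strictly larger connected subset of $\V(F,v)$ containing $\V_\gamma$), contradicting that $T\subseteq S'$ has $C$-edges to $\V_\gamma$.

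Combining these bounds, for $\epsilon$ small and $d,n$ large,
\[
c'(\gamma)-\tfrac{1.9}{d}\,e_\out(\gamma)\leq \tfrac{\deg_G(\V_\gamma)}{d}\Bigl(\tfrac{1}{1-\epsilon}-1.14+o(1)\Bigr)\leq -\tfrac{\deg_G(\V_\gamma)}{20d},
\]
as required. The main obstacle is the ``no phantom components'' argument for $c'(\gamma)$: small components of the new graph disjoint from $\V_\gamma\cup\{v\}$ could a priori be numerous, but they are ruled out precisely because the polymer definition picks out \emph{maximal} connected pieces of $\V(F,v)$. This is where the new polymer definition pays off; a definition based only on out-edges would allow such phantom components and the bound on $c'(\gamma)$ would break.
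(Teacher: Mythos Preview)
Your proposal is correct and follows essentially the same route as the paper's proof. Both arguments (i) lower–bound $e_\out(\gamma)$ by combining Lemma~\ref{lem:ordered-polymers-are-small}, Proposition~\ref{prop:expansion-of-core} and Corollary~\ref{cor:sat}, (ii) upper–bound $c'(\gamma)$ by $|\V_\gamma|$ (you get $|\V_\gamma|+1$, which is immaterial) via the maximality of $\V_\gamma$ among connected pieces of $\V(F,v)$, and (iii) plug into $w^\ord_C(\gamma)\le q^{c'(\gamma)-(1.9/d)e_\out(\gamma)}$.

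Two minor remarks. First, your ``$(1+o(1))$'' in $c'(\gamma)\le (1+o(1))\deg_G(\V_\gamma)/d$ is really $1/(1-\epsilon)+o(1)$ with $\epsilon$ the fixed constant from Proposition~\ref{prop:average-degree-of-connected-sets}; this is harmless since the slack $1.14-1/(1-\epsilon)$ remains well above $1/20$ for small $\epsilon$. Second, the paper's version of your ``no phantom components'' step is slightly stronger: it shows every small component $\kappa$ of $(V_C,E_C\setminus E_\out(\gamma))$ satisfies $V_G(\kappa)\subseteq \V_\gamma$ (not merely meets $\V_\gamma\cup\{v\}$), by the same maximality argument you use; this yields $c'(\gamma)\le |\V_\gamma|$ without the $+1$.
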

\begin{proof}
    By Propositions~\ref{prop:average-degree-of-connected-sets} and~\ref{prop:expansion-of-core}, there is a real \(A > 0\) such that for all \(d\) large enough, w.h.p. over \(G\), for any connected vertex subset \(S\subseteq V_C\)  with \(|S|\geq \frac{A\log n}{d}\),  its average degree is at least \(\frac{35}{36}d\). Moreover,  if \(\deg_G(S)\leq\tfrac{1}{10}\deg_G(V_C)\), then \(|E_G(S,\overline{S})|\geq\tfrac 35\deg_G(S)\). Thus for the rest of this proof, assume that \(G\) satisfies these properties. 
    
    We first show that for any ordered polymer \(\gamma\) with \(|\V_\gamma|\geq\frac{A}{d}\log n\),  \begin{equation}\label{eq:ordered-polymer-out-edges-bound}
        e_\out(\gamma)\geq \tfrac{4}{7}\deg_G(\V_\gamma).
    \end{equation}    
    By Lemma~\ref{lem:ordered-polymers-are-small}, \(\deg_G(\V_\gamma)\leq \tfrac{1}{10}\deg_G(V_C)\). Thus  by Corollary~\ref{cor:sat}, \(e_\out(\gamma)\geq\tfrac 35\deg_G(\V_\gamma) - 2 - \frac{d|\V_\gamma|}{\log n}\). Finally, \(\V_\gamma\) is a connected subset of~$V_C$
   with $|V_\gamma| \geq \tfrac{A}{d} \log n$,   hence \(\frac{2+d|\V_\gamma|/\log n}{\deg_G(\V_\gamma)} \leq \frac{2+d|\V_\gamma|/\log n}{35d|\Vin|/36}\). Since  \(|\Vin|=\Omega(\log n)\), 
this fraction is at most~$1/35$ for sufficiently large~$n$. Thus,   
 \(\frac{e_\out(\gamma)}{\deg_G(\V_\gamma)}\geq\tfrac35-\tfrac{1}{35} = \tfrac{4}{7}\), concluding the proof of~\eqref{eq:ordered-polymer-out-edges-bound}.

Recall 
from Definition~\ref{def:45f43w}
that \(c'(\gamma)\) is the number of  components of \((V_C, E_C\setminus E_\out(\gamma))\) with total degree at most $(1/2) \deg_G(V_C)$, that is, the number of small components of \((V_C, E_C\setminus E_\out(\gamma)))\).

We first argue that $c'(\gamma) \leq |\Vin|$. Let $\kappa$ be an arbitrary small component of 
\((V_C,E_C\setminus E_\out(\gamma))\), we will show that $V_G(\kappa) \subseteq \Vin$.
Since 
$F \subseteq E_C \setminus E_\out(\gamma)$, 
all vertices in $\kappa$ are contained in small components of $(V_C,F)$.
Thus, from the polymer definition, all vertices of $\kappa$ are contained in   $\V(F,v)$. Since $C$ is connected, the 
component $\kappa$ of $(V_C,E_C \setminus E_\out(\gamma))$ has a vertex incident to some  edge $e\in E_\out(\gamma)$. 
Since $\gamma$ is a non-singleton polymer,   
the edges of $E_\out(\gamma)$ are all incident to vertices in $\Vin$, and thus in particular there is \(w\in\Vin\) incident to $e$. Without loss of generality, we may assume that $w\in V_G(\kappa)$.\footnote{If not, then  let $u$ be the other endpoint of $e$. 
Since it is adjacent to~$e$, 
$u\in V_G(\kappa)$ so it is in $\V(F,v)$. 
But since \(u\in\V(F,v)\) and \(u\) is 
an endpoint of $e$, it is a vertex of~$\gamma$,  so  
by the definition of~$\Vin$, \(u\in\Vin\). In this case we can just use $u$ as the vertex $w$. So we can assume $w \in V_G(\kappa)$.} 
We have also shown already that all vertices of $\kappa$ are contained in $\V(F,v)$. By construction, $\Vin$ is a maximal subset of $\V(F,v)$ that is connected in $C$.
Since $\kappa$ contains a vertex $w$ of $\Vin$, all of its vertices are contained in $\Vin$, so $V_G(\kappa) \subseteq \Vin$. Hence, $c'(\gamma) \leq |\Vin|$.

Next, using~\eqref{eq:ordered-polymer-out-edges-bound}, the average degree bound and the fact that \(c'(\gamma)\) is at most \(|\V_\gamma|\), we get that
    \begin{equation}\label{eq:ordered-polymers-component-bound}
        \frac{c'(\gamma)}{e_\out(\gamma)} \leq \frac{|\V_\gamma|}{\frac{4}{7}\deg_G(\V_\gamma)} \leq \frac{1}{\frac{4}{7}\frac{35}{36}d} =
        \frac{9}{5d}
    \end{equation}
To finish the proof, plug  
\eqref{eq:ordered-polymers-component-bound} and
then \eqref{eq:ordered-polymer-out-edges-bound}
into the definition of a polymer weight (Definition~\ref{def:45f43w}) and use the fact that \(\emm^\beta -1 \geq q^{(2-1/10)/d}\) for $\beta\geq \beta_0$ (See~\eqref{eq:beta0beta1}) to obtain:
    \[
w^\ord_C(\gamma) =  q^{c'(\gamma)} (\emm^\beta - 1)^{-e_\out(\gamma)}
\leq
[q^{\frac{9}{5d} - \frac{2-1/10}{d}}]^{e_\out(\gamma)} = q^{-e_\out(\gamma)/10}\leq q^{-\deg_G(\V_\gamma)/(20d)}.\qedhere
    \]
\end{proof}

\begin{remark}\label{rem:d}
Note that Lemma~\ref{lem:weight-decaying-for-large-polymers} crucially requires \(d\) to be large enough for sufficient expansion and average degree properties to hold for all connected sets of size at least  \( A\log n / d\). While expansion properties hold for all \(d > 1\) (see e.g.~\cite{fountoulakis2007evolution}), we
must treat sets as large as $A \log n/d$
and for these we require large~$d$ to obtain the required 
average degree properties and to show that the weights of the polymers are sufficiently small.
\end{remark}

We next relate the weight of an ordered polymer to the probability of it occurring.
\begin{lemma}\label{lem:bbbb}
    For all \(d\) large enough, w.h.p. over \(G\sim\G(n,d/n)\) the following holds. For \(v\in V_C\) and any \(\gamma\in\bigcup_{F\in\Omega_C^\ord}\Gamma^\ord_v(F)\), it holds that $\pi^\ord_C(\gamma\in\Gamma_v^\ord(\cdot))\leq w^\ord_C(\gamma).$
\end{lemma}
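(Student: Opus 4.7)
The plan is a standard polymer-style injection argument. Define
$$\varphi\colon\;\{F \in \Omega^\ord_C : \gamma \in \Gamma^\ord_v(F)\} \to \Omega^\ord_C,\qquad \varphi(F) := F \cup E_\out(\gamma).$$
The map is well-defined because $|\varphi(F)| \geq |F| \geq (1-\eta)|E_C|$, and injective because $F = \varphi(F) \setminus E_\out(\gamma)$ can be recovered once $\gamma$ is fixed. The per-configuration weight ratio works out to
$$\frac{w_C(F)}{w_C(\varphi(F))} \;=\; q^{c(V_C,F)\,-\,c(V_C,\varphi(F))}\,(\emm^\beta-1)^{-e_\out(\gamma)},$$
so establishing the inequality $c(V_C,F) - c(V_C,\varphi(F)) \leq c'(\gamma)$, call it $(\star)$, would give $w_C(F) \leq w^\ord_C(\gamma)\cdot w_C(\varphi(F))$. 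Summing over the domain of $\varphi$ and using its injectivity into $\Omega^\ord_C$ yields $\sum_{F \colon \gamma\in\Gamma^\ord_v(F)} w_C(F) \leq w^\ord_C(\gamma)\, Z^\ord_C$, and the claimed probability bound follows after dividing by $Z^\ord_C$.

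To prove $(\star)$, first note that because $e_\out(\gamma) \leq |E_C \setminus F| \leq \eta |E_C|$, Lemma~\ref{lem:giant-component-in-the-core-after-edge-removal} applied with $\theta=1/10$ gives that $(V_C, E_C\setminus E_\out(\gamma))$ has a unique giant component $L_1$ of total degree at least $(9/10)\deg_G(V_C)$ together with exactly $c'(\gamma)$ small components $L_2,\dots,L_{c'(\gamma)+1}$, each of total degree at most $(1/10)\deg_G(V_C)$. Following the reasoning that underlies the proof of Lemma~\ref{lem:weight-decaying-for-large-polymers}, each $L_i$ with $i \geq 2$ satisfies $L_i \subseteq \V_\gamma$; moreover every $E_C$-edge from $L_i$ to $V_C \setminus L_i$ lies in $E_\out(\gamma)$, so no $F$-edge leaves $L_i$ and hence $L_i$ is itself a single component of $(V_C,F)$. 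This immediately handles any edge of $E_\out(\gamma)$ with both endpoints in a common small $L_i$: such an edge creates no merge in $F$. The cross-$L$ edges of $E_\out(\gamma)$ contribute at most $c'(\gamma)$ merges, since by matroid rank submodularity the number of $E_\out(\gamma)$-edges that are independent with respect to $E_C \setminus E_\out(\gamma)$ is exactly $c(E_C \setminus E_\out(\gamma)) - c(E_C) = c'(\gamma)$.

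The delicate step is controlling edges of $E_\out(\gamma)$ that lie entirely inside $L_1$; these do not contribute to $c'(\gamma)$, yet a~priori they could merge two small $F$-components sitting inside $L_1$. The plan here is to combine two structural facts. Observation~\ref{obs:name} forces every in-edge between $\V_\gamma$ and $V_C \setminus \V_\gamma$ to pass through $v$, so small $F$-components inside $L_1 \cap \V_\gamma$ connect to the rest of $L_1$ only through $v$ or through out-edges of other polymers. The maximality of $\V_\gamma$ as a $C$-connected subset of $\V(F,v)$ then forces small $F$-components sitting inside $L_1 \cap \V_\gamma$ to be clustered together in a controlled way. The main obstacle will be to upgrade these structural properties into a tight matroid-rank identity $c(V_C,F) - c(V_C,\varphi(F)) = c'(\gamma)$, showing that every merge induced by a within-$L_1$ edge of $E_\out(\gamma)$ is already counted through the cross-$L$ accounting; once this is in place, $(\star)$ follows and the proof is complete.
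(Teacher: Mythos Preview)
Your overall framework matches the paper's: both use the map $\varphi(F)=F\cup E_\out(\gamma)$, observe it is injective into $\Omega^\ord_C$, and reduce everything to the component-count identity $c(V_C,F)-c(V_C,\varphi(F))=c'(\gamma)$. The gap is that you do not actually prove this identity; your final paragraph explicitly labels it ``the main obstacle'' and offers only a plan. Worse, the intermediate reasoning is incorrect: from ``no $F$-edge leaves $L_i$'' you conclude ``$L_i$ is itself a single component of $(V_C,F)$'', but this is false --- since $F\subseteq E_C\setminus E_\out(\gamma)$, the component $L_i$ is generally a \emph{union} of $F$-components, so an $E_\out(\gamma)$-edge inside $L_i$ can certainly merge two of them. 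The matroid-rank aside does not rescue this, because the rank of $E_\out(\gamma)$ over $E_C\setminus E_\out(\gamma)$ tells you nothing about its rank over $F$.

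The paper avoids this whole merge-counting detour by proving a direct structural claim: the small components of $(V_C,\varphi(F))$ are \emph{exactly} the small components of $(V_C,F)$ that contain no vertex of $\Vin$. One direction is easy (such components are untouched by $E_\out(\gamma)$). For the other, take a small component $\kappa$ of $(V_C,\varphi(F))$; its vertices lie in $\V(F,v)$, and if any one of them were in $\Vin$ then maximality of $\Vin$ forces $V_G(\kappa)\subseteq\Vin$, whence all boundary edges of $\kappa$ are in $E_\out(\gamma)\subseteq\varphi(F)$, contradicting $\kappa$ being a component. Once $\kappa\cap\Vin=\emptyset$, any edge of $\varphi(F)\setminus F=E_\out(\gamma)$ inside $\kappa$ would have an endpoint in $\Vin$, so there are none, and $\kappa$ is already an $F$-component. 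This gives $c(F)-c(\varphi(F))$ as the number of small $F$-components meeting $\Vin$, which one then checks equals $c'(\gamma)$. You should replace your merge-counting sketch with this argument.
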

\begin{proof}
    Fix \(v\in V_C\) and let \(\gamma\in\bigcup_{F\in\Omega_C^\ord} \Gamma_v^\ord(F)\).
    If \(e_\out(\gamma) = 0\), then since \(C\) is connected, \(w^\ord_C(\gamma) = 1\), and the inequality is trivial. Thus, for the rest of the proof, assume that \(E_\out(\gamma)\neq \emptyset\). Let $\Omega_C^{\ord}(\gamma)=\{F\in\Omega^\ord_C; \gamma\in\Gamma_v^\ord(F)\}$ and note that
    \[  \pi^\ord(\gamma\in\Gamma_v^\ord(\cdot)) = \frac{\sum_{F\in \Omega_C^{\ord}(\gamma)} w_C(F)}{\sum_{F\in\Omega^\ord_C} w_C(F)}.
    \]
    For every ordered \(F\) with \(\gamma\in\Gamma_v^\ord(F)\), define \(F' := F\cup E_\out(\gamma) \neq F\). We claim that \(w_C(F') = w_C(F) / w_C^\ord(\gamma)\).
Clearly, \(|F'|=|F|+|E_\out(\gamma)|\) since all edges in \(E_\out(\gamma)\) are out-edges. 

If \(\gamma\) is a singleton polymer, then neither of its endpoints are in small components of \((V_C,F)\), thus setting its single edge to be in does not change the number of small components, so \(c(F) = c(F')\), and it follows that \(w_C(F') = w_C(F) (\emm^\beta-1) = w_C(F)/w_C^\ord(\gamma)\).

Hence, assume that \(\gamma\) is a non-singleton polymer 
with $E_\out(\gamma) \neq \emptyset$ 
and consider the components of \((V_C,F)\) and \((V_C,F')\). 
From Lemma~\ref{lem:ordered-polymers-are-small}, there is exactly one giant component in each of \((V_C,F)\) and \((V_C,F')\).
We claim that the components of \((V_C,F')\) are the giant component, 
together with
the small components of \((V_C,F)\) that do \textit{not} contain vertices of \(\Vin\).  

To prove the claim,
first suppose \(\kappa\) is a small component of \((V_C,F)\) not containing vertices of \(\Vin\). We use \(V_G(\kappa)\) for the set of vertices of \(\kappa\). Then  no edge from \(E_G(V_G(\kappa), \overline{V_G(\kappa)})\) is in \(E_\out(\gamma)\), thus  $\kappa$ is a component of \((V_C,F')\).

For the other direction, let \(\kappa\) be a small component of \((V_C,F')\). 
The vertices of $\kappa$ are in small components of $(V_C,F)$, so  by the definition of $\V(F,v)$,
$V_G(\kappa) \subseteq \V(F,v)$. 
We wish to show that $\kappa$ is a small component of $(V_C,F)$ that does not contain vertices of~$\Vin$.

Suppose for contradiction that
\(\Vin\cap V_G(\kappa)\neq \emptyset\). Then, by the definition of non-singleton polymers, \(V_G(\kappa)\subseteq \Vin\), resulting in \(E_G(V_G(\kappa),\overline{V_G(\kappa)})\subseteq E_\out(\gamma)\), contradicting the fact that \(\kappa\) is a  small component of \((V_C,F')\).  
For the rest of the proof, we can therefore assume that \(\Vin\cap V_G(\kappa) = \emptyset\).

To finish the proof of the claim, we wish to show that $\kappa$ is a small component of $(V_C,F)$.
Since \((V_C,F)\) is a subgraph of \((V_C,F')\), there is a non-empty subgraph 
\(\kappa'\) of \(\kappa\) which is a component of \((V_C,F)\).  Suppose for contradiction that \(V_G(\kappa')\subsetneq V_G(\kappa)\). Since \(\kappa\) is connected in~$F'$, there exists an edge \(e = \{w, u\}\) where \(w\in V_G(\kappa)\setminus V_G(\kappa')\) and \(u\in V_G(\kappa')\). But then \(e\) must be an in-edge for \(F'\) but not for \(F\), hence \(e\in E_\out(\gamma)\). By the definition of non-singleton polymers, either \(w\) or \(u\) is in \(\Vin\). Both $w$ and $u$ are in $V_G(\kappa)$ so we have contradicted
\(\Vin\cap V_G(\kappa) = \emptyset\), which we have already shown.
Thus \(\kappa = \kappa'\), which concludes the proof of the claim.

By the claim,
\(c(F) - c(F')\) is the number of small components of \((V_C,F)\) that contain vertices of \(\Vin\). By the definition of polymers this is exactly \(c'(\gamma)\), so by the definition of polymer weights 
\(w_C(F') 
= q^{c(F')}(\emm^\beta-1)^{|F'|}
= \frac{w_C(F)(\emm^\beta-1)^{e_\out(F)}}{q^{c'(\gamma)} } = \frac{w_C(F)}{w_C^\ord(\gamma)}\). 

For every $F\in \Omega^\ord_C(\gamma)$, both $F$ and $F'$ are in $\Omega^\ord_C$.  
Furthermore, $F' \notin \Omega^\ord_C(\gamma)$
and for any distinct $F_1,F_2 \in 
\Omega^\ord_C(\gamma)$,
$F_1'$ and $F_2'$ are distinct.
Thus
 \[  \pi^\ord(\gamma\in\Gamma_v^\ord(\cdot)) = \frac{\sum_{F\in \Omega_C^{\ord}(\gamma)} w_C(F)}{\sum_{F\in\Omega^\ord_C} w_C(F)}
 \leq
\frac{\sum_{F\in \Omega_C^{\ord}(\gamma)} w_C(F)}{\sum_{F\in\Omega^\ord_C(\gamma)} (w_C(F) + w_C(F'))}.
    \]
Hence we can write
    \[
    \pi^\ord(\gamma\in\Gamma_v^\ord(\cdot))  \leq \frac{\sum_{F\in \Omega_C^{\ord}(\gamma)} w_C(F)}{\sum_{F\in \Omega_C^{\ord}(\gamma)} (w_C(F) + w_C(F'))} \leq \frac{\sum_{F\in \Omega_C^{\ord}(\gamma)} w_C(F)}{\sum_{F\in \Omega_C^{\ord}(\gamma)} w_C(F)(1 + 1/w^\ord_C(\gamma))},
    \]
    which is \(\frac{1}{1+1/w^\ord_C(\gamma)}\leq w^\ord_C(\gamma)\).
\end{proof}

With this in hand, we can show an analogue of the Koteck\'y-Preiss condition for our ordered polymers (but only considering those that are large enough).
We will use this to   bound the probability of a large polymer occurring.

\begin{proposition}\label{prop:no-large-ordered-polymer}
    There is a real  \(A > 0\) such that for all \(d\) large enough, for all sufficiently large reals $q$, the following holds w.h.p. over \(G\sim\G(n,d/n)\) for all real \(\beta \geq \beta_0\). For any \(v\in V_C\),
    \[
    \pi^\ord_C(\exists \gamma\in\Gamma^\ord_v(\cdot) \text{ with } |\Vin|\geq \tfrac{A}{d}\log n) \leq q^{-\frac{A}{25d}\log n}.
    \]
\end{proposition}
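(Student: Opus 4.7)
The plan is to apply a union bound over potential polymers, combining Lemma~\ref{lem:bbbb} and Lemma~\ref{lem:weight-decaying-for-large-polymers} for the per-polymer probability with a counting bound for the number of polymers. Only non-singleton polymers have $|\Vin|>0$, and by Definition~\ref{def:polymers} such a polymer is uniquely determined by its inner vertex set $\Vin$, which is a subset of $V_C$ whose induced subgraph $C[\Vin]$ is connected. Writing $\gamma_S$ for the (putative) polymer with inner vertex set $S$, one has
\[
\pi^\ord_C\bigl(\exists\gamma\in\Gamma^\ord_v(\cdot):|\Vin|\geq \tfrac{A}{d}\log n\bigr)\;\leq\; \sum_{\substack{S\subseteq V_C,\ C[S]\text{ connected}\\ |S|\geq A\log n/d}}\pi^\ord_C\bigl(\gamma_S\in\Gamma^\ord_v(\cdot)\bigr).
\]
Lemma~\ref{lem:bbbb} bounds each summand by $w^\ord_C(\gamma_S)$, and Lemma~\ref{lem:weight-decaying-for-large-polymers} further bounds this by $q^{-\deg_G(S)/(20d)}$. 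Taking $A$ at least the constant $A(1/36)$ from Proposition~\ref{prop:average-degree-of-connected-sets} gives $\deg_G(S)\geq\tfrac{35}{36}d|S|$, so each summand is at most $q^{-|S|/21}$.

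The main obstacle is bounding $N_k:=|\{S\subseteq V_C : |S|=k,\ C[S]\text{ connected}\}|$. A naive estimate using the maximum degree $\Delta=\Theta(\log n/\log\log n)$ of $\G(n,d/n)$ gives $N_k\leq(e\Delta)^k$, whose $n^{\Theta(\log\log n/d)}$-type factors cannot be absorbed by the above per-polymer decay for constant $q$. Instead, I would use a first-moment argument directly on $\G(n,d/n)$: every connected $k$-subset of $V_C$ contains a spanning tree in $G$, and by Cayley's formula the expected number of $k$-vertex subtrees in $\G(n,d/n)$ is at most $\binom{n}{k}k^{k-2}(d/n)^{k-1}\leq n(ed)^k$. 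Applying Markov's inequality with slack $n^{10}$ and union-bounding over $k\leq n$ yields, whp over $G$, that $N_k\leq n^{O(1)}(ed)^k$ uniformly in $k$; this high-probability event is folded into the whp hypothesis of the proposition.

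Putting these ingredients together,
\[
\pi^\ord_C\bigl(\exists\gamma\in\Gamma^\ord_v(\cdot):|\Vin|\geq \tfrac{A}{d}\log n\bigr)\;\leq\; n^{O(1)}\sum_{k\geq A\log n/d}\bigl(ed\cdot q^{-1/21}\bigr)^k.
\]
For $q$ large enough that $ed\cdot q^{-1/21}\leq 1/2$, this geometric series is dominated by its first term $n^{O(1)}\cdot(ed)^{A\log n/d}\cdot q^{-A\log n/(21d)}$. The slack between the per-polymer exponent $1/21$ and the target $1/25$ contributes a factor $q^{A\log n\cdot(1/21-1/25)/d}=n^{4A\log q/(525d)}$, which for $\log q$ exceeding a sufficiently large constant multiple of $\log(ed)$ absorbs both the polynomial $n^{O(1)}$ and the factor $(ed)^{A\log n/d}=n^{A\log(ed)/d}$, yielding the claimed bound $q^{-A\log n/(25d)}$.
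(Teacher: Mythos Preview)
There is a genuine gap in your claim that ``such a polymer is uniquely determined by its inner vertex set $\Vin$''. In the paper, a polymer $\gamma$ carries not only its underlying subgraph $(U,B)$ (which is indeed determined by $\Vin$) but also its out-edge set $E_\out(\gamma)=(E_C\setminus F)\cap B$, and the weight $w^\ord_C(\gamma)=q^{c'(\gamma)}(\emm^\beta-1)^{-e_\out(\gamma)}$ depends on this data. Two configurations $F_1,F_2$ can yield polymers with the same $\Vin=S$ but different $E_\out$, hence different weights; Lemma~\ref{lem:bbbb} is stated for a \emph{specific} polymer $\gamma$, not for the event $\{\exists\gamma:\Vin=S\}$. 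Consequently your union bound is missing, for each $S$, a factor of at most $2^{\deg_G(S)}$ (one bit per edge incident to $S$), exactly the factor the paper accounts for when it writes ``there are at most $2^D$ polymers $\gamma$ with $\Vin=S$''. With your organisation by $|S|$ this factor is dangerous, since $\deg_G(S)$ is not bounded by $O(d|S|)$ on $\G(n,d/n)$.

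The fix is easy once identified: keep the bound in terms of $D=\deg_G(S)$ long enough to absorb the factor, i.e.\ use $2^{\deg_G(S)}\cdot q^{-\deg_G(S)/(20d)}=(2\,q^{-1/(20d)})^{\deg_G(S)}\le q^{-\deg_G(S)/(22d)}$ for $q$ large, and only then invoke $\deg_G(S)\ge\tfrac{35}{36}d|S|$ to pass to $q^{-|S|/23}$. After this correction your subtree-counting route (Markov on the expected number of $k$-vertex subtrees, giving $N_k\le n^{O(1)}(ed)^k$) goes through and is essentially equivalent to the paper's argument, which instead enumerates connected sets by total degree via the bound $n(2e)^{2D-1}$ and sums over $D$.
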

\begin{proof}
By Proposition~\ref{prop:average-degree-of-connected-sets} and Lemma~\ref{lem:weight-decaying-for-large-polymers}, there is a constant \(A\) such that w.h.p. any connected vertex set with size at least \(\tfrac{A\log n}{d}\) has average degree at least \(\tfrac{9}{10} d\), and any ordered polymer \(\gamma\) with \(|\V_\gamma|\geq\tfrac{A\log n}{d}\) has 
$w_C^{\ord}(\gamma) \leq  q^{-\deg_F(\V_\gamma)/(20d)}$. Also w.h.p. the bound from  Lemma~\ref{lem:bbbb} holds.

Let $\Gamma_v^\ord=\bigcup_{F\in\Omega_C^\ord}\Gamma_v^\ord(F)$. By a union bound and Lemma~\ref{lem:bbbb} it is enough to show   
\[        \sum_{\gamma\in \Gamma_v^\ord;\,  |\Vin|\geq A\log n/d} w^\ord_C(\gamma) \leq q^{-A\log n/(25d)}.
\]
Start by using the bound on the polymer weights from Lemma~\ref{lem:weight-decaying-for-large-polymers}   to show that 
\[
 \sum_{\gamma\in \Gamma_v^\ord;\,  |\Vin|\geq A\log n/d} w^\ord_C(\gamma) \leq
\sum_{\gamma\in \Gamma_v^\ord;\,  |\Vin|\geq A\log n/d} q^{-\deg_G(\V_\gamma)/(20d)}.\]
Then note that any $\gamma$ in this sum
has $|\Vin|\geq A\log n/d$ hence
by Proposition~\ref{prop:average-degree-of-connected-sets} it has average degree at least
$\tfrac{9}{10} d$ hence degree at least $\tfrac{9A}{10} \log n$.
Thus, the sum is at most 
    \[ 
    \sum_{D\geq \frac{9A}{10}\log n}\sum_{\gamma\in \Gamma_v^\ord;\,  \deg_G(\Vin)= D} q^{-\deg_G(\V_\gamma)/(20d)} 
    \]
    Now we bound the number of polymers of the given total degree~$D$. There are \(\leq n(2e)^{2D-1}\) connected subgraphs of total degree \(D\) in any graph (see e.g. \cite[Lemma 6]{GALANIS2022104894}). For any connected subset \(S\subseteq V_C\) of total degree \(D\), there are at most \(D\) edges incident to its vertices, thus there are at most \(2^D\) polymers \(\gamma\) with \(\Vin = S\) (each of its edges can be either in or out). Thus the sum is at most
    \[
    \sum_{D\geq 9A\log n/10} n(2\emm)^{2D-1} 2^D q^{-D / (20d)}.
    \]
    For all \(q\) sufficiently large, \(8\emm^2 q^{-1/(20d)}\leq q^{-1/(22d)}\), so the sum is \(O(nq^{-9A\log n/(220d)})\). For \(q\) sufficiently large, this is \(\leq q^{-A\log n/(25d)}\).
\end{proof}

Proposition~\ref{prop:no-large-ordered-polymer}, Lemma~\ref{lem:path-large-polymer} and equation~\eqref{eq:WSMbound} together with the bound from Theorem~\ref{thm:phase-transition} now directly give Lemma~\ref{lem:ordered-WSM}, i.e. WSM within the ordered phase at the distance \(r = \frac{A\log n}{d}\) (for some constant \(A\) -- take twice the constant from the the Proposition~\ref{prop:no-large-ordered-polymer} to take into account that we want to avoid polymers of size \(r - 1\)).

\section{Fast convergence of the RC dynamics and proof of Theorem~\ref{thm:mainthm}}\label{sec:RCdynamics}
In order to prove  Theorem~\ref{thm:mainthm}, we use the  RC dynamics $(X_t)_{t\geq 0}$ to obtain samples from the ordered and the disordered phases separately. For the sake of completeness, recall that a transition from \(X_t\) to \(X_{t+1}\) is done as follows:
\begin{enumerate}[(1)]
    \item Pick an edge \(e\in E\) uniformly at random
    \item If \(e\) is a cut edge of the graph \((V,X_t\cup\{e\})\), then with probability \(\hat p := \frac{\emm^\beta-1}{\emm^\beta+q-1}\) set \(X_{t+1} := X_t\cup \{e\}\). With all remaining probability set \(X_{t+1} := X_t \setminus \{e\}\)
    \item Otherwise, with probability \(p := 1-\emm^{-\beta}\), set \(X_{t+1} := X_t\cup\{e\}\). With all remaining probability set \(X_{t+1} := X_t \setminus \{e\}\).
\end{enumerate}

Proving mixing of the RC dynamics $(X_t)$ is in general a challenging task, especially since its moves depend on the current component structure. Fortunately, by now there is a well-established strategy \cite{SinclairsGheissari2022, Galanis_Goldberg_Smolarova_2024, galanis2024plantingmcmcsamplingpottsarxiv} that we can use to show the convergence bounds of Theorems~\ref{thm:disordered-mixing} and~\ref{thm:ordered-mixing} based on the polymer sizes. It consists of utilising the following two main ingredients: (i)  \textit{weak spatial mixing within a phase} (WSM) that as we saw asserts that the marginal probability of an edge $e$ being an in-edge depends only on a small neighbourhood around $e$, and (ii) the mixing time of the dynamics on that neighbourhood is polynomial in its size.  Ingredient (ii) is fairly straightforward in sparse random graphs since small-distance neighbourhoods are typically tree-like (i.e., they contain at most a constant number of cycles) which makes poly-time bounds relatively simple to obtain.
\begin{theorem}\label{thm:disordered-mixing}
    Let $d$ be sufficiently large. Then, for all $q$ sufficiently large,  w.h.p. over \(G\sim\G(n,d/n)\), for all \(\beta \leq \beta_1\) and \(\epsilon > \emm^{-\Omega(n)}\), the random cluster dynamics \(X_t\) initialized from all-out satisfies \(||X_T - \pi^\dis_C|| \leq \epsilon\) for \(T = O(n\log n\log\tfrac 1\epsilon)\).
\end{theorem}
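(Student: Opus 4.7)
The plan is to follow the now-standard recipe \cite{SinclairsGheissari2022, Galanis_Goldberg_Smolarova_2024, galanis2024plantingmcmcsamplingpottsarxiv} for turning weak spatial mixing within a phase into $O(n \log n)$ mixing from a planted initial state. The two ingredients I would assemble are (i) the disordered analogue of Lemma~\ref{lem:ordered-WSM}, i.e.\ Proposition~\ref{prop:wsm-disorder}, stating that for $\beta \leq \beta_1$ the $\pi_C^\dis$-marginal of any edge $e$ is $q^{-\Omega(r)}$-close to the marginal under the free-boundary distribution $\pi_{B_r^-(v)}$ for $r = \Theta(\log n / d)$; and (ii) the fact that the $r$-ball $B_r(v)$ in $\G(n,d/n)$ is whp $1$-treelike (Lemma~\ref{lem:G-1-treelike}) with $|V_G(B_r(v))| \leq n^{o(1)}$, so the RC dynamics on $B_r(v)$ with free boundary mixes in polynomial time by a direct treelike-graph argument (as in the paragraph after Theorem~\ref{thm:mainthm}).

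\smallskip

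The first step is \emph{confinement to the disordered phase}. The initial state $X_0 = \emptyset$ lies in $\Omega_C^\dis$, and to escape to $\Omega_C \setminus \Omega_C^\dis$ the chain must pass through the intermediate window $\bigl[\tfrac{9\eta}{10}|E_C|, (1-\tfrac{9\eta}{10})|E_C|\bigr]$ of in-edge counts, which has $\pi_C$-probability $\leq \emm^{-n}$ by Theorem~\ref{thm:phase-transition}(1). Since Glauber moves change $|F|$ by $\pm 1$, a standard bottleneck/conductance argument (using that $\pi_C^\dis$ and $\pi_C$ are close after appropriate reweighting, by Theorem~\ref{thm:phase-transition}(3)) bounds the probability of such a crossing by $T \cdot \emm^{-\Omega(n)}$ over $T$ steps. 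Hence for any $T = \text{poly}(n)$ the trajectory stays in $\Omega_C^\dis$ except on an $\emm^{-\Omega(n)}$ event, so up to this error we may analyse the \emph{restricted} chain $\tilde X_t$ on $\Omega_C^\dis$, whose stationary distribution is $\pi_C^\dis$.

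\smallskip

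The second step is to convert WSM plus local polynomial mixing into global $O(n \log n \log(1/\epsilon))$ mixing of $\tilde X_t$. Here I would invoke the block-factorisation/spectral-independence machinery from the cited works: view each ball $B_r(v)$ (with $r = \Theta(\log n/d)$) as a block; WSM gives that pairwise influences between edges at distance $r$ in $\pi_C^\dis$ decay as $q^{-\Omega(r)} = n^{-\Omega(1)}$, which in turn yields spectral independence for the single-edge Glauber chain; combined with the polynomial mixing of the RC dynamics on each $1$-treelike block $B_r(v)$, the standard local-to-global framework then yields $O(n \log n \log(1/\epsilon))$ mixing of $\tilde X_t$ from \emph{any} start, and in particular from $X_0 = \emptyset$. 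Combining with Step~1 gives the claimed TV bound for $X_t$.

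\smallskip

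The main obstacle is Step~3: instantiating the block-factorisation argument for the RC dynamics, whose transition rule depends nonlocally on cut edges, is delicate, and the WSM rate $K = q^{-1/30}$ only kicks in at the logarithmic scale $r \sim \log n/d$ rather than at constant distance, so one has to use blocks of nontrivial (but $n^{o(1)}$) size. The block mixing time on a $1$-treelike graph of size $n^{o(1)}$ is itself $n^{o(1)}$, which is absorbed into the constants of the $O(n \log n)$ bound; this is precisely the regime handled by the previous works, and the present setting differs only in the polymer definitions (needed to establish WSM) rather than in the dynamical analysis itself, so I expect the argument to go through essentially verbatim once WSM and confinement are in hand.
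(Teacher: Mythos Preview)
Your ingredients are the right ones---confinement via Theorem~\ref{thm:phase-transition}(3), WSM within the disordered phase (Proposition~\ref{prop:wsm-disorder}), and fast local mixing on $1$-treelike balls---but the mechanism you propose to combine them is not what the paper (or the references you cite) actually does, and it has a gap. You claim that WSM within the phase yields pairwise influence decay and hence spectral independence, but the WSM statement only compares $\pi_C^\dis(1_e)$ to the marginal under one particular boundary (free); it does \emph{not} bound $|\pi_C^\dis(1_e\mid 1_f)-\pi_C^\dis(1_e\mid \neg 1_f)|$ for arbitrary $f$, which is what the influence matrix requires. Upgrading WSM-within-a-phase to genuine influence bounds for the random-cluster model is an additional and non-obvious step, and none of \cite{SinclairsGheissari2022,Galanis_Goldberg_Smolarova_2024,galanis2024plantingmcmcsamplingpottsarxiv} go through spectral independence or block factorisation.

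The paper instead uses a direct \emph{monotone grand coupling}, which is both simpler and avoids the gap. Under the identity coupling run three chains: $X_t$ (all-out start), a phase-restricted chain $\hat X_t$ started from $\pi_C^\dis$ (so $\hat X_t\sim\pi_C^\dis$ for all $t$), and for each edge $e$ incident to $v$ a local free-boundary chain $X_t^v$ on $B_r(v)$ (all-out start). Random-cluster monotonicity gives the sandwich $X_t^v\subseteq X_t\subseteq \hat X_t$ on the ball, conditional on $\hat X_t$ never having been censored; the censoring event has probability $\leq T\emm^{-n}$ by Theorem~\ref{thm:phase-transition}(3) and a union bound over steps (this replaces your bottleneck argument, which is problematic because $X_t$ is not at stationarity). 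Then for each edge,
\[
\P\big(1_e(X_T)\neq 1_e(\hat X_T)\big)\ \lesssim\ T\emm^{-n}+\big|\pi_C^\dis(1_e)-\pi_{B_r^-(v)}(1_e)\big|+\big\|X_T^v-\pi_{B_r^-(v)}\big\|_{\TV},
\]
and the three terms are handled respectively by the exponential bound, WSM, and the $O(N\log N)$ free-boundary mixing on the $1$-treelike ball (which is why the disordered bound is $O(n\log n)$ rather than the $\mathrm{poly}(n)$ of the ordered side, where the wired-ball mixing is only polynomial). Summing over edges gives $\|X_T-\pi_C^\dis\|_{\TV}\leq 1/4$ for $T=O(n\log n)$, and probability amplification supplies the $\log(1/\epsilon)$ factor.
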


\begin{theorem}\label{thm:ordered-mixing}
    Let $d$ be sufficiently large. Then, for all $q$ sufficiently large,  w.h.p. over \(G\sim\G(n,d/n)\), for all $\beta\geq \beta_0$ and \(\epsilon > \emm^{-\Omega(n)}\), the random cluster dynamics \(X_t\) initialized from all-in satisfies \(||X_T-\pi^\ord_C||\leq\epsilon\) for \(T = \mathrm{poly}(n,\log\frac{1}{\epsilon})\)
\end{theorem}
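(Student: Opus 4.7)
The plan is to follow the weak-spatial-mixing-within-a-phase framework from \cite{SinclairsGheissari2022, Galanis_Goldberg_Smolarova_2024, galanis2024plantingmcmcsamplingpottsarxiv}, using Lemma~\ref{lem:ordered-WSM} as the WSM input. The argument parallels the disordered-phase analysis of Theorem~\ref{thm:disordered-mixing}; the new wrinkle is that the WSM ball here carries a \emph{wired} boundary condition rather than a free one, which is what degrades the running-time bound from $O(n\log n)$ to $\mathrm{poly}(n)$. As a first step, I would confine the chain to the ordered phase: since $X_0$ is the all-in configuration and the RC dynamics is monotone, a standard metastability/bottleneck argument based on Theorem~\ref{thm:phase-transition}(i)--(ii) shows that $X_t\in\Omega_C^{\ord}$ for all $t=\mathrm{poly}(n)$ with probability $1-\emm^{-\Omega(n)}$: escaping $\Omega_C^{\ord}$ would require the chain to traverse configurations with $|F|$ close to $(1-\eta)|E_C|$, which lie in a set of $\pi_C^{\ord}$-probability $\emm^{-\Omega(n)}$. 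It thus suffices to analyse the censored chain $\hat X_t$ that rejects any transition leaving $\Omega_C^{\ord}$; this chain is reversible with stationary distribution exactly $\pi_C^{\ord}$ and is $\emm^{-\Omega(n)}$-close in TV to $X_t$ over any $\mathrm{poly}(n)$ horizon.

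Next, I would reduce global mixing to local mixing via WSM. Following the coupling strategy of \cite{SinclairsGheissari2022}, for each edge $e$ with endpoint $v$, replace the censored update at $e$ by a \emph{local} update that samples $e$ from the RC Gibbs measure on the radius-$r$ ball around $v$ with wired boundary, where $r=\lceil A\log n/d\rceil$. By Lemma~\ref{lem:ordered-WSM} the true marginal at $e$ under $\pi_C^{\ord}$ and the local marginal differ by at most $q^{-r/30}=n^{-\Omega(1)}$ for $q$ large enough, so an optimal coupling of the two updates disagrees with inverse-polynomial probability; accumulating these errors over an $O(n\log n\cdot\log(1/\epsilon))$-step run gives convergence of $\hat X_t$ to $\pi_C^{\ord}$ within TV distance $\epsilon$. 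To implement each local update, note that by Lemma~\ref{lem:G-1-treelike} the ball $B_r(v)$ is whp $1$-treelike with at most $d^r = n^{O((\log d)/d)}$ vertices; on this polynomially-sized $1$-treelike graph with wired boundary, a crude conductance bound, accounting for worst-case transition weight ratios involving $q$ and $\emm^\beta$, shows that the local RC Glauber chain mixes in $\mathrm{poly}(n)$ time with exponent scaling as $O(\log q)$.

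The main obstacle is this last step: the wired boundary creates strong long-range correlations on the Poisson-tree-like ball, so the crude polynomial bound seems essential with current techniques, and the final running-time exponent inherits the $O(\log q)$ scaling flagged in the introduction right after Theorem~\ref{thm:mainthm}. Sharper tree-recursion techniques analogous to those used for the disordered/free case (which give the $O(n\log n)$ bound in Theorem~\ref{thm:disordered-mixing}) would require understanding the RC dynamics on Poisson trees with wired boundary, which, as the paper observes, is open even for regular trees with non-integer $q$.
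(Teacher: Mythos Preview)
Your high-level strategy matches the paper's: confine the chain to the ordered phase via Theorem~\ref{thm:phase-transition}(ii), then combine WSM within the ordered phase (Lemma~\ref{lem:ordered-WSM}) with a local mixing bound on wired balls, glued together by monotonicity. The paper implements this via a three-way monotone coupling of $X_t$ (from all-in), a phase-restricted copy $\hat X_t$ started from $\pi_C^\ord$, and a wired-ball chain $X_t^v$ (from all-in), yielding the sandwich $\hat X_t\subseteq X_t\subseteq X_t^v$ conditional on $\hat X_t$ never having been censored; your ``replace the censored update by a local wired-ball update'' framing is a bit loose compared to this, but recoverable along the same lines.

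The genuine gap is the local mixing step. The wired ball $B_r(v)$ has up to $N=d^r\log n=n^{\Theta((\log d)/d)}$ vertices, which goes to infinity with $n$. A truly crude conductance or canonical-paths bound on a graph with $N$ edges produces at best $q^{O(N)}$ or $\exp(O(N))$ mixing time, both super-polynomial in $n$; this cannot deliver the $n^{O(\log q)}$ bound you assert. What is actually needed is a bound scaling as $q^{O(r)}$, where $r=O((\log n)/d)$ is the \emph{height} of the ball rather than its size. The paper obtains exactly this by invoking the edge-cut-width spectral-gap bound of Blanca--Gheissari \cite[Lemma~5.10]{BlaGhe}, combined with the observation (via a DFS edge ordering) that a depth-$r$ tree has edge-cut-width at most $r$, and then a standard comparison to pass from the spanning tree to the $1$-treelike ball (Lemma~\ref{lem:wired-mixing}). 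This is a specific, nontrivial input, and you should invoke it explicitly rather than appeal to a crude bound; note also that a bound genuinely ``involving $\emm^\beta$'' would not be uniform over all $\beta\geq\beta_0$.
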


\begin{proof}[Proof of Theorem~\ref{thm:mainthm}]
    For \(\beta \leq \beta_0\) and \(\beta\geq\beta_1\), it follows from Theorems~\ref{thm:phase-transition}, \ref{thm:disordered-mixing} and~\ref{thm:ordered-mixing} that we can get an \(e^{-\Omega(n)}\)-sample by initializing the Glauber dynamics from all-out or all-in respectively.

    Note that both theorems give mixing even for \(\beta\) that depends on \(n\). We use this to obtain the an approximation for \(Z_C\) and hence mixing for \(\beta\in(\beta_0,\beta_1)\). By considering a sequence of \(\beta\)'s we can approximate \(Z_C^\ord\) and \(Z_C^\dis\) within \((1\pm \emm^{-\Omega(n)})\)-factor using standard self-reducibility techniques (for more details see e.g. \cite{fastpolymers, bezakova}), and thus, by Theorem~\ref{thm:phase-transition} get a \((1\pm \emm^{-\Omega(n)})\) approximation for \(Z_C\). For \(\beta\in (\beta_0,\beta_1)\), we then obtain an \(\emm^{-\Omega(n)}\) sample by starting in the appropriate random mixture of all-in and all-out.
\end{proof}
Given WSM within the phases and using the monotonicity of the random-cluster model, we complete the proof of Theorems~\ref{thm:disordered-mixing} and~\ref{thm:ordered-mixing} in Appendix~\ref{sec:putting-ingredients-together}.
\bibliographystyle{alpha}
\bibliography{fn.bib}

@inproceedings{liao2019counting,
author =	{Liao, Chao and Lin, Jiabao and Lu, Pinyan and Mao, Zhenyu},
  title =	{Counting Independent Sets and Colorings on Random Regular Bipartite Graphs},
  booktitle =	{Approximation, Randomization, and Combinatorial Optimization. Algorithms and Techniques (APPROX/RANDOM 2019)},
  pages =	{34:1--34:12},
  year =	{2019},
  volume =	{145}
}

@INPROCEEDINGS{10756117,
  author={Jenssen, Matthew and Perkins, Will and Potukuchi, Aditya and Simkin, Michael},
  booktitle={2024 IEEE 65th Annual Symposium on Foundations of Computer Science (FOCS)}, 
  title={Sampling, Counting, and Large Deviations for Triangle-Free Graphs Near the Critical Density}, 
  year={2024},
  pages={151--165},
  doi={10.1109/FOCS61266.2024.00020}
}

@article{containers1,
author = {Jenssen, Matthew and Perkins, Will and Potukuchi, Aditya},
title = {Approximately counting independent sets in bipartite graphs via graph containers},
journal = {Random Structures \& Algorithms},
volume = {63},
number = {1},
pages = {215--241},
year = {2023}
}

@article{BGRC,
author = {Blanca, Antonio and Galanis, Andreas and Goldberg, Leslie Ann and \v{S}tefankovi\v{c}, Daniel and Vigoda, Eric and Yang, Kuan},
title = {Sampling in Uniqueness from the {P}otts and Random-Cluster Models on Random Regular Graphs},
journal = {SIAM Journal on Discrete Mathematics},
volume = {34},
number = {1},
pages = {742--793},
year = {2020}
}

@article{SWtrees,
author = {Blanca, Antonio and Chen, Zongchen and \v{S}tefankovi\v{C}, Daniel and Vigoda, Eric},
title = {The {S}wendsen–{W}ang dynamics on trees},
journal = {Random Structures \& Algorithms},
volume = {62},
number = {4},
pages = {791--831},
year = {2023}
}

@article{PR,
  title={Deterministic polynomial-time approximation algorithms for functions and graph polynomials},
  author={Patel, Viresh and Regts, Guus},
  journal={SIAM Journal on Computing},
  volume={46},
  number={6},
  pages={1893--1919},
  year={2017}
}

@book{barvinokbook,
  title={Combinatorics and Complexity of Partition Functions},
  author={Barvinok, A.},
  series={Algorithms and Combinatorics},
  year={2017},
  publisher={Springer International Publishing}
}

@article {KP,
    AUTHOR = {Koteck\'{y}, R. and Preiss, D.},
     TITLE = {Cluster expansion for abstract polymer models},
   JOURNAL = {Comm. Math. Phys.},
  FJOURNAL = {Communications in Mathematical Physics},
    VOLUME = {103},
      YEAR = {1986},
    NUMBER = {3},
     PAGES = {491--498}
}

@INPROCEEDINGS {Kuikui,
author = { Liu, Kuikui and Mohanty, Sidhanth and Rajaraman, Amit and Wu, David X. },
booktitle = { 2024 IEEE 65th Annual Symposium on Foundations of Computer Science (FOCS) },
title = {Fast Mixing in Sparse Random {I}sing Models},
year = {2024},
pages = {120-128}
}

@InProceedings{efthymiou,
  author =	{Efthymiou, Charilaos and Feng, Weiming},
  title =	{On the Mixing Time of {G}lauber Dynamics for the Hard-Core and Related Models on $G(n,d/n)$},
  booktitle =	{50th International Colloquium on Automata, Languages, and Programming (ICALP 2023)},
  pages =	{54:1--54:17},
  year =	{2023},
  volume =	{261}
}

@article{bezakova,
author = {Bez\'{a}kov\'{a}, Ivona and Galanis, Andreas and Goldberg, Leslie Ann and \v{S}tefankovi\v{c}, Daniel},
title = {Fast Sampling via Spectral Independence Beyond Bounded-degree Graphs},
year = {2024},
volume = {20},
number = {1},
journal = {ACM Trans. Algorithms},
articleno = {7}
}

@article{integrate,
author = {Moses, Joel},
title = {Symbolic integration: the stormy decade},
year = {1971},
volume = {14},
number = {8},
issn = {0001-0782},
journal = {Commun. ACM},
pages = {548–560},
numpages = {13},
}

@article{NEFF199681,
title = {An Efficient Algorithm for the Complex Roots Problem},
journal = {Journal of Complexity},
volume = {12},
number = {2},
pages = {81--115},
year = {1996},
author = {C.Andrew Neff and John H. Reif},
}

@inProceedings{efthymiou_colorings,
author = {Charilaos Efthymiou and Thomas P. Hayes and Daniel {\v{S}}tefankovi{\v{c}} and Eric Vigoda},
title = {Sampling Random Colorings of Sparse Random Graphs},
booktitle = {Proceedings of the 2018 Annual ACM-SIAM Symposium on Discrete Algorithms (SODA)},
pages = {1759-1771},
year=2018
}

@InProceedings{yin_zhang,
  author =	{Yin, Yitong and Zhang, Chihao},
  title =	{Sampling in {P}otts Model on Sparse Random Graphs},
  booktitle =	{Approximation, Randomization, and Combinatorial Optimization. Algorithms and Techniques (APPROX/RANDOM 2016)},
  pages =	{47:1--47:22},
  year =	{2016},
  volume =	{60}
}

@article{Galanis_Goldberg_Smolarova_2024,
    title={Sampling from the random cluster model on random regular graphs at all temperatures via Glauber dynamics}, 
    journal={Combinatorics, Probability and Computing},
    author={Galanis, Andreas and Goldberg, Leslie Ann and Smolarova, Paulina},
    year={2024},
    pages={1–33}
}

@misc{galanis2024plantingmcmcsamplingpottsarxiv,
      title={Planting and {MCMC} Sampling from the {P}otts model}, 
      author={Andreas Galanis and Leslie Ann Goldberg and Paulina Smolarova},
      year={2024},
      eprint={2410.14409},
      archivePrefix={arXiv},
      primaryClass={math.PR},
      url={https://arxiv.org/abs/2410.14409}, 
}

@article{blanca2021random,
  title={Random-cluster dynamics on random regular graphs in tree uniqueness},
  author={Blanca, Antonio and Gheissari, Reza},
  journal={Communications in Mathematical Physics},
  volume={386},
  number={2},
  pages={1243--1287},
  year={2021}
}

@article{Blanca2,
	title = {Sampling in Uniqueness From the {P}otts and Random-Cluster Models on Random Regular Graphs},
	author = {Antonio Blanca and Andreas Galanis and Goldberg, {Leslie Ann} and Daniel \v{S}tefankovi\v{c} and Eric Vigoda and Kuan Yang},
	year = {2020},
	volume = {34},
	pages = {742--793},
	journal = {SIAM Journal on Discrete Mathematics},
	number = {1}	
}

@article{fastpolymers,
author = {Chen, Zongchen and Galanis, Andreas and Goldberg, Leslie A. and Perkins, Will and Stewart, James and Vigoda, Eric},
title = {Fast algorithms at low temperatures via {M}arkov chains},
journal = {Random Structures \& Algorithms},
volume = {58},
number = {2},
pages = {294--321},
year = {2021}
}

@inproceedings{RClattice,
author = {Reza Gheissari and Alistair Sinclair},
title = {Spatial mixing and the random-cluster dynamics on lattices},
booktitle = {Proceedings of the 2023 Annual ACM-SIAM Symposium on Discrete Algorithms, \emph{(SODA '23)
}},
pages = {4606--4621},
year      = {2023}
}

@inproceedings{helmuth2019algorithmic,
  title={Algorithmic {P}irogov-{S}inai theory},
  author={Helmuth, Tyler and Perkins, Will and Regts, Guus},
  booktitle={Proceedings of the 51st Annual ACM SIGACT Symposium on Theory of Computing, , \emph{(STOC '19)
}},
  pages={1009--1020},
  year={2019}
}

@inproceedings{SinclairsGheissari2022,
  author    = {Reza Gheissari and
               Alistair Sinclair},
  title     = {Low-temperature {I}sing dynamics with random initializations},
  booktitle = {54th Annual {ACM} {SIGACT} Symposium on Theory of Computing, \emph{(STOC '22)}},
  pages     = {1445--1458},
  year      = {2022}
}

@article{MosselSly,
author = {Elchanan Mossel and Allan Sly},
title = {Exact thresholds for {I}sing {G}ibbs samplers on general graphs},
volume = {41},
journal = {The Annals of Probability},
number = {1},
pages = {294 -- 328},
year = {2013}
}

@article{RCM-Helmuth2020,
  title={Finite-size scaling, phase coexistence, and algorithms for the random cluster model on random graphs},
  author={Helmuth, Tyler and Jenssen, Matthew and Perkins, Will},
  journal={Annales de l'Institut Henri Poincare (B) Probabilites et statistiques},
  volume={59},
  number={2},
  pages={817--848},
  year={2023}
}

@misc{trevisan2016expanders,
  title={Lecture Notes on Graph Partitioning, Expanders and Spectral Methods},
  author={Trevisan, Luca},
  year={2016}, 
  note={URL: \url{https://lucatrevisan.github.io/books/expanders-2016.pdf}}
}

@inproceedings{gore1997swendsen,
  title={The {S}wendsen-{W}ang process does not always mix rapidly},
  author={Gore, Vivek K and Jerrum, Mark R},
  booktitle={Proceedings of the twenty-ninth annual ACM symposium on Theory of computing},
  pages={674--681},
  year={1997}
}

@article{mean1,
author = {Reza Gheissari and Eyal Lubetzky and Yuval Peres},
title = {Exponentially slow mixing in the mean-field {S}wendsen–{W}ang dynamics},
volume = {56},
journal = {Annales de l'Institut Henri Poincaré, Probabilités et Statistiques},
number = {1},
pages = {68 -- 86},
year = {2020}
}

@article{GJ,
  title={Approximating the partition function of the ferromagnetic {P}otts model},
  author={Goldberg, Leslie Ann and Jerrum, Mark},
  journal={Journal of the ACM},
  volume={59},
  number={5},
  pages={1--31},
  year={2012}
}

@article{coja2023,
  title={Metastability of the {P}otts ferromagnet on random regular graphs},
  author={Coja-Oghlan, Amin and Galanis, Andreas and Goldberg, Leslie Ann and Ravelomanana, Jean Bernoulli and {\v{S}}tefankovi{\v{c}}, Daniel and Vigoda, Eric},
  journal={Communications in Mathematical Physics},
  pages={1--41},
  year={2023},
  publisher={Springer}
}

@article{CAI2016690,
title = {\#BIS-hardness for 2-spin systems on bipartite bounded degree graphs in the tree non-uniqueness region},
journal = {Journal of Computer and System Sciences},
volume = {82},
number = {5},
pages = {690--711},
year = {2016},
author = {Jin-Yi Cai and Andreas Galanis and Leslie Ann Goldberg and Heng Guo and Mark Jerrum and Daniel \v{S}tefankovi\v{c} and Eric Vigoda},
}

@article{GSVY,
  title={Ferromagnetic {P}otts model: Refined {\#BIS}-hardness and related results},
  author={Galanis, Andreas and \v{S}tefankovic, Daniel and Vigoda, Eric and Yang, Linji},
  journal={SIAM Journal on Computing},
  volume={45},
  number={6},
  pages={2004--2065},
  year={2016}
}

@article{BlaGhe,
  author       = {Antonio Blanca and
                  Reza Gheissari},
  title        = {On the tractability of sampling from the {P}otts model at low temperatures
                  via {S}wendsen-{W}ang dynamics},
  journal      = {CoRR},
  volume       = {abs/2304.03182},
  year         = {2023}
}

@misc{fountoulakis2007evolution,
      title={The Evolution of the Mixing Rate}, 
      author={Nikolaos Fountoulakis and Bruce Reed},
      year={2007},
      eprint={math/0701474},
      archivePrefix={arXiv},
      primaryClass={math.CO}
}

@article{anatomy-of-giant,
title = {Anatomy of the giant component: The strictly supercritical regime},
journal = {European Journal of Combinatorics},
volume = {35},
pages = {155-168},
year = {2014},
author = {Jian Ding and Eyal Lubetzky and Yuval Peres}
}

@article{GALANIS2022104894,
title = {Fast mixing via polymers for random graphs with unbounded degree},
journal = {Information and Computation},
volume = {285},
year = {2022},
url = {https://www.sciencedirect.com/science/article/pii/S0890540122000384},
author = {Andreas Galanis and Leslie Ann Goldberg and James Stewart}
}

@article{rcm-on-unbounded-degree-graphs,
author = {Antonio Blanca and Reza Gheissari},
title = {{Sampling from Potts on random graphs of unbounded degree via random-cluster dynamics}},
volume = {33},
journal = {The Annals of Applied Probability},
number = {6B},
pages = {4997 -- 5049},
year = {2023}
}

@book{Frieze_Karoński_2015,
    place={Cambridge}, 
    title={Introduction to Random Graphs},
    publisher={Cambridge University Press}, 
    author={Frieze, Alan and Karoński, Michał}, 
    year={2015}
}

@book{random-graphs-janson-book,
author = {Janson, Svante and {\L}uczak, Tomasz and Ruci\'nski, Andrzej},
booktitle = {Random graphs},
publisher = {John Wiley \& Sons, Inc.},
title = {Random graphs },
year = {2000 - 2000},
}

@article{10.1145/3470865,
author = {Galanis, Andreas and Goldberg, Leslie Ann and Stewart, James},
title = {Fast Algorithms for General Spin Systems on Bipartite Expanders},
year = {2021},
issue_date = {December 2021},
publisher = {Association for Computing Machinery},
address = {New York, NY, USA},
volume = {13},
number = {4},
journal = {ACM Trans. Comput. Theory},
month = sep,
articleno = {25},
numpages = {18}
}

@InProceedings{Spectral,
  author =	{Carlson, Charlie and Davies, Ewan and Kolla, Alexandra and Potukuchi, Aditya},
  title =	{A Spectral Approach to Approximately Counting Independent Sets in Dense Bipartite Graphs},
  booktitle =	{51st International Colloquium on Automata, Languages, and Programming (ICALP 2024)},
  pages =	{35:1--35:18},
  year =	{2024},
  volume =	{297}
}

@article{doi:10.1137/19M1286669,
author = {Jenssen, Matthew and Keevash, Peter and Perkins, Will},
title = {Algorithms for \#{BIS}-Hard Problems on Expander Graphs},
journal = {SIAM Journal on Computing},
volume = {49},
number = {4},
pages = {681-710},
year = {2020}
}

@InProceedings{Unique,
  author =	{Coulson, Matthew and Davies, Ewan and Kolla, Alexandra and Patel, Viresh and Regts, Guus},
  title =	{{Statistical Physics Approaches to Unique Games}},
  booktitle =	{35th Computational Complexity Conference (CCC 2020)},
  pages =	{13:1--13:27},
  ISBN =	{978-3-95977-156-6},
  ISSN =	{1868-8969},
  year =	{2020},
  volume =	{169}
}

@ARTICLE{replica,
       author = {{Barbier}, Jean and {Chan}, Chun Lam and {Macris}, Nicolas},
        title = "{Concentration of Multi-overlaps for Random Dilute Ferromagnetic Spin Models}",
      journal = {Journal of Statistical Physics},
         year = 2019,
        month = dec,
       volume = {180},
       number = {1-6},
        pages = {534--557}
}

@INPROCEEDINGS{Pottsexp,
  author={Carlson, Charlie and Davies, Ewan and Fraiman, Nicolas and Kolla, Alexandra and Potukuchi, Aditya and Yap, Corrine},
  booktitle={2022 IEEE 63rd Annual Symposium on Foundations of Computer Science (FOCS)}, 
  title={Algorithms for the ferromagnetic {P}otts model on expanders}, 
  year={2022},
  pages={344-355}
}

@inproceedings{DBLP:conf/approx/BlancaZ23,
  author       = {Antonio Blanca and
                  Xusheng Zhang},
  title        = {Rapid Mixing of Global Markov Chains via Spectral Independence: The
                  Unbounded Degree Case},
  booktitle    = {Approximation, Randomization, and Combinatorial Optimization. Algorithms
                  and Techniques, {APPROX/RANDOM} 2023},
  volume       = {275},
  pages        = {53:1--53:19},
  year         = {2023}
}

\appendix

\section{Proofs of properties of \(\G(n,d/n)\)}\label{sec:graph-proofs}

In this section we prove the properties from Section~\ref{sec:graph-properties}, as well as introduce a few more results used in proving mixing within the disordered phase.

\subsection{Preliminaries}

To bound the size of radius-\(r\) balls, we use the following lemma by Mossel and Sly \cite{MosselSly}. Note that while their lemma is more general and stated for \(r = (\log\log n)^2\), this bound on~$r$ was not used for upper bounding the volume.

\begin{lemma}[{\cite[Lemma 7]{MosselSly}}]\label{lem:ball-size}
    Let \(d > 1\) be a real. Then w.h.p. over \(G\sim\G(n,d/n)\), for any integer \(r\) and any \(v\in V(G)\), \(|V(B_r(v))|\leq d^r \log n\).
\end{lemma}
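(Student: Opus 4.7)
The plan is to explore $G$ by breadth-first search from each vertex $v\in V(G)$, couple the generation sizes to a Galton--Watson branching process with $\Poisson(d)$ offspring, and then apply a union bound over $v$ and $r$.

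First I would reduce to moderate depths. When $r \geq \lceil \log n / \log d \rceil$ we have $d^r \log n \geq n \log n$, which trivially dominates the deterministic bound $|V(B_r(v))| \leq n$. So it suffices to handle $r \leq \lceil \log n/\log d\rceil$, a range that has only $O(\log n)$ integer values.

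Second, fix $v\in V(G)$ and run BFS from $v$, revealing edges only as they are encountered. Let $Z_k$ be the set of vertices at graph distance exactly $k$ from $v$, so $|V(B_r(v))| = \sum_{k=0}^r |Z_k|$. Conditioning on the state after step $k$, each vertex in $V\setminus B_k(v)$ enters $Z_{k+1}$ independently with probability $1 - (1-d/n)^{|Z_k|} \leq d|Z_k|/n$. Hence $|Z_{k+1}|$ is stochastically dominated by $\Bin(n, d|Z_k|/n)$, and in turn by $\Poisson(d|Z_k|)$. Iterating, $|V(B_r(v))|$ is stochastically dominated by the total progeny $\mathcal T_r$ up to generation $r$ of a Galton--Watson process rooted at a single vertex with $\Poisson(d)$ offspring.

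Third, I would obtain a tail bound $\P[\mathcal T_r > d^r \log n] \leq n^{-C}$ for some large constant $C$. The key identity $\E[e^{\lambda|Z_{k+1}|}\mid |Z_k|=z] = \exp(dz(e^\lambda-1))$ lets one iterate through the generations with $\lambda_k = c/d^k$ for a sufficiently small absolute constant $c>0$; telescoping shows that $\E[\exp(c\,\mathcal T_r / d^r)] = O(1)$, after which Markov's inequality delivers the desired polynomial tail. Finally, union-bounding over the $n$ choices of $v$ and the $O(\log n)$ admissible values of $r$ gives a total failure probability of at most $n\log n\cdot n^{-C} = o(1)$ for $C\geq 2$.

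The main obstacle is the concentration bound in the third step: for $d$ close to $1$ the branching process has heavy right tails and the expectation itself, $\sum_{k=0}^{r} d^k$, is of order $d^r/(d-1) + r$ rather than just $d^r$, so the $\log n$ slack in $d^r \log n$ must be used carefully. The exponential tails produced by the recursive MGF argument, with $C$ tuned via a sufficiently small $c$, make the union bound go through for any fixed $d>1$.
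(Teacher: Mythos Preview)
The paper does not give its own proof of this lemma; it is quoted as a black box from Mossel--Sly, so there is no in-paper argument to compare against. Your BFS-plus-branching-process strategy is the standard one and is essentially correct, but two points need attention.

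First, the claimed domination $\Bin(n,d|Z_k|/n)\preceq\Poisson(d|Z_k|)$ is false as written: already $\P[\Bin(n,d/n)\geq 1]=1-(1-d/n)^n>1-\emm^{-d}=\P[\Poisson(d)\geq 1]$. The usual fix is to dominate by $\Poisson\bigl(-n\log(1-d/n)\bigr)=\Poisson(d+O(d^2/n))$, or simply to keep the binomial throughout; either way the correction is harmless.

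Second, there is a genuine tension in your tail step. You take $c$ ``sufficiently small'' to make the MGF bounded, but Markov's inequality then only gives $\P[\mathcal T_r>d^r\log n]\leq O(1)\cdot n^{-c}$, and you need the exponent to exceed~$1$ to survive the union bound over $n$ vertices. The resolution is that $\E[\exp(c\,\mathcal T_r/d^r)]$ is in fact bounded uniformly in $r$ for \emph{every} fixed $c>0$ (with a $(c,d)$-dependent constant), not just for small $c$. One clean way to see this: write $\mathcal T_r/d^r=\sum_{k\leq r}d^{k-r}W_k\leq\tfrac{d}{d-1}\sup_{k\leq r}W_k$ where $W_k=Y_k/d^k$ is the Galton--Watson martingale; the functional equation $\phi(\lambda)=\exp\bigl(d(\phi(\lambda/d)-1)\bigr)$ for $\phi(\lambda)=\E[\emm^{\lambda W}]$ shows $\phi(\lambda)<\infty$ for all $\lambda$, and an easy induction gives $\E[\emm^{\lambda W_r}]\leq\phi(\lambda)$ for every $r$; Doob's submartingale inequality applied to $(\emm^{\lambda W_k})_k$ then controls $\sup_k W_k$. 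Choosing $c=3$, say, yields $\P[\mathcal T_r>d^r\log n]\leq B(d)\,n^{-3}$, after which the union bound over $n$ vertices and $O(\log n)$ depths goes through for every fixed $d>1$.
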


Next we show local treelikeness properties of \(\G(n,d/n)\).
we say that a subgraph is \(k\)-\textit{treelike} if at most \(k\) edges can be deleted from it to make it a tree (or a forest).

First we will consider balls of radius \(\approx \tfrac 1d \log n\). Recall that the radius-\(r\) ball around a vertex \(v\) is an induced subgraph consisting of all vertices with distance \(\leq r\) to \(v\).

\begin{lemma}\label{lem:G-1-treelike}
    Let \(A>0\) be a real. Then for all \(d\) large enough, w.h.p. over \(G\sim\G(n, d/n)\), each radius-\(\tfrac Ad\log n\) ball in \(G\) is \(1\)-treelike.
\end{lemma}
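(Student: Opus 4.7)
The plan is to prove the statement by a union bound over vertices $v \in V_G$ combined with a BFS exploration argument in $\G(n,d/n)$. Fix $v$ and let $r = \lceil (A/d) \log n \rceil$. Since $B_r(v)$ is connected (every vertex is joined to $v$ by a path inside $B_r(v)$), being 1-treelike is equivalent to $|E(B_r(v))| \leq |V(B_r(v))|$, i.e., $B_r(v)$ contains at most one edge beyond a spanning tree.

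First I would invoke Lemma~\ref{lem:ball-size} to condition on the high-probability event that $N_v := |V(B_r(v))| \leq d^r \log n$ for every $v$, which fails with $o(1)$ probability. Then, for each fixed $v$, I would reveal the edges of $G$ via BFS from $v$: when processing a vertex $u$ at depth $<r$, query the status of every edge from $u$ to the currently-unvisited vertices, adding present edges as BFS tree edges and recording absent edges as non-edges. Once BFS to depth $r$ is complete, the set $V(B_r(v))$ is determined. The \emph{unrevealed} pairs among $V(B_r(v))$ are precisely those $\{x,y\}$ where neither endpoint was processed while the other was still unvisited; by deferred decisions, conditional on the BFS outcome each such edge status is an independent $\Bin(1, d/n)$. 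The ``extra'' edges of $B_r(v)$ beyond the BFS tree are exactly those unrevealed pairs that turn out to be present.

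Since there are at most $\binom{N_v}{2}$ unrevealed pairs, the number of extra edges is stochastically dominated by $\Bin\big(\binom{N_v}{2}, d/n\big)$, and a union bound over pairs of present extra edges gives
\[
\P\big[B_r(v) \text{ has at least 2 extra edges}\big] \leq \binom{\binom{N_v}{2}}{2} (d/n)^2 \leq \frac{N_v^{4}\, d^2}{8 n^2}.
\]
Substituting $N_v \leq d^r \log n$ with $r = (A/d)\log n$, so that $d^r = n^{A\log d/d}$, and taking a union bound over the $n$ choices of $v$ yields a total failure probability of $O(n^{4A\log d/d-1}\, d^2 \log^4 n) + o(1)$. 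For any fixed $A$, this tends to $0$ once $d$ is large enough that $4A\log d/d<1$ (equivalently $d>4A\log d$).

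The main technical point requiring care is the independence claim: namely that edges within $V(B_r(v))$ whose status is never queried during the BFS remain independent $\Bin(1,d/n)$ conditional on the BFS outcome. This follows from the standard principle of deferred decisions for $\G(n,d/n)$, but one must track precisely which edge queries have been made so that no edge status is double-counted or over-conditioned. Once this is set up cleanly, the rest is a routine first-moment computation, and the constants $4$ in the exponent could be tightened but are already sufficient for the claim.
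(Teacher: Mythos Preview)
Your argument is correct and follows essentially the same approach as the paper: bound the ball size by $n^{A\log d/d}\log n$, stochastically dominate the number of surplus (non-tree) edges inside the ball by a binomial with mean $o(n^{-1+4A\log d/d})$, bound the probability that this binomial exceeds~$1$, and union-bound over vertices. The only difference is presentational: the paper imports this binomial bound from an external result (equation~(4.2) of~\cite{rcm-on-unbounded-degree-graphs}), whereas you reconstruct it via an explicit BFS exploration and deferred decisions; your version is thus more self-contained but not a genuinely different route.
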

\begin{proof}
    This follows from the proof of \cite[(4.2)]{rcm-on-unbounded-degree-graphs} (note this proof is stated for more general graph family, they note in Example 1.5 that it includes \(\G(n,d/n)\). Noting that the maximum degree of \(\G(n,d/n)\) is \(O(\log n)\), the number of edges is \(\tfrac{nd}2+o(n)\) and the size of any such ball is \(\leq n^{A\log d/d}\log n\leq n^{\epsilon}\) (for any constant \(\epsilon\), provided that \(d\) is large enough), we get from their equation (4.2) that the probability that a radius-\(\frac{A}{d}\log n\) ball is not \(1\)-treelike is at most

    \[
    \P\left(\Bin\left(n^{\epsilon}, \frac{n^\epsilon \log n }{dn/2 - o(n)-n^\epsilon}\right) > 1\right) = O(n^{-2+4\epsilon}).
    \]

    Taking \(\epsilon < 1/4\), the result follows from union bound.
\end{proof}

We also show \(O(1)\)-treelikness for sets of constant and \(O(\log n)\) size.

\begin{lemma}\label{lem:logn-sets-are-treelike}
    For any \(d > 1\), there exists \(k > 0\) such that w.h.p. over \(G\sim\G(n,d/n)\), any connected subset of size \(\leq \log n\) is \(k\)-treelike.
\end{lemma}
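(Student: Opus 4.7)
The plan is a standard first-moment (union-bound) argument. A connected subset $S \subseteq V_G$ of size $s$ induces a subgraph $G[S]$ which fails to be $k$-treelike precisely when $G[S]$ has at least $s+k$ edges, since a spanning tree on $S$ uses $s-1$ edges and anything beyond $s-1+k$ requires more than $k$ deletions to reach a tree or forest. For each such bad $S$ one can extract a spanning tree of $G[S]$ together with $k+1$ additional edges of $G[S]$, so enumerating over triples (vertex set, labelled spanning tree on it, additional edge set of size $k+1$) yields an upper bound on the expected number of bad subsets.

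Concretely, I would let $X_s$ denote the number of connected $s$-subsets that are not $k$-treelike. Using Cayley's formula (at most $s^{s-2}$ labelled trees on $s$ vertices), the trivial bound $\binom{\binom{s}{2}}{k+1} \leq s^{2(k+1)}$, and $\binom{n}{s} \leq (en/s)^{s}$, I obtain
\[
\E[X_s] \;\leq\; \binom{n}{s}\, s^{s-2}\, \binom{\binom{s}{2}}{k+1}\, (d/n)^{s+k} \;\leq\; (ed)^{s}\, d^{k}\, s^{2k}\, n^{-k}.
\]
Summing over $1 \leq s \leq \log n$, and noting that $(ed)^{s} s^{2k}$ is increasing in $s$ for $d > 1$, the dominant contribution comes from $s \approx \log n$, giving
\[
\sum_{s=1}^{\lfloor \log n \rfloor} \E[X_s] \;\leq\; (\log n)^{2k+1}\, d^{k}\, n^{\,1 + \log d - k}.
\]
Choosing any integer $k$ with $k > 1 + \log d$ (for instance $k = \lceil \log d \rceil + 2$) makes this bound $o(1)$, and Markov's inequality then yields the claim whp.

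There is no substantive obstacle here; the only point worth checking is that the above enumeration genuinely accounts for every bad $S$, which follows because any connected subgraph on $S$ contains at least one spanning tree, after which any excess of at least $k+1$ edges can be selected from outside that tree. The dependence $k = \Theta(\log d)$ is forced by the crossover in the $n$-exponent and is qualitatively consistent with the stronger statement of Lemma~\ref{lem:G-1-treelike} for the much larger radius-$\tfrac{A}{d}\log n$ balls, which is proved by an essentially analogous first-moment computation but with a more careful handling of the volume factor.
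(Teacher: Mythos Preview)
Your proof is correct and follows essentially the same approach as the paper's: a first-moment argument that enumerates spanning trees together with $k+1$ surplus edges, using the bound $n(ed)^s$ for the expected number of size-$s$ subtrees (which you derive explicitly via Cayley's formula, while the paper invokes it as Lemma~\ref{lem:expected-number-of-subtrees}). Both arguments arrive at the same threshold $k > 1 + \log d = \log(ed)$ and conclude by Markov's inequality.
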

\begin{proof}
    Here, \(k > 0\) will be a sufficiently large constant we specify later.

    Fix \(1\leq s\leq \log n\). We union bound over connected trees of size \(s\) and events ``\(T\) is a subtree of \(G\) with \(e(V(T))>s-1+k\)''. Fix any tree \(T\) of size \(s\). Then note that conditional on \(T\) being a subgraph of \(G\), we have that \(e(V(T))\sim s-1+\Bin(\binom{s}{2}-(s-1), d/n)\leq s+\Bin(s^2,d/n)\).

    We have \(\P(\Bin(s^2, d/n) \geq k)\leq s^{2k} (\tfrac{d}{n})^k\) by union bound. Lemma~\ref{lem:expected-number-of-subtrees} gives that the expected number of connected trees of size \(s\) is at most \(n(\emm d)^s\), thus expected number of subtrees \(T\) of \(G\) of size \(s\) and \(>s-1+k\) edges between vertices of \(T\) is at most
    \[
    \left(\frac{ds^2}{n}\right)^k n(\emm d)^s = n\exp\{s\log(\emm d) + k\log(\tfrac{ds^2}{n})\} \leq d^k n\exp\{\log(\emm d)\log n - k\log(\tfrac{n}{\log^2n})\}
    \]
    For \(k > \log(\emm d) + 2\) and all \(n\) large enough, this is \(\leq \frac{d^k}{\sqrt n}\).  Then union bound over \(\log n\) possible sizes and then using Markov inequality gives probability \(\leq \log n/\sqrt n= o(1)\).
\end{proof}

To prove the next lemma, as well as Propositions~\ref{prop:average-degree-of-connected-sets} and~\ref{prop:expansion-of-core}, we will use the following bound on the expected number of subtrees of a given size. This result is proven in \cite{fountoulakis2007evolution}.

\begin{lemma}[{\cite[(3)]{fountoulakis2007evolution}}]\label{lem:expected-number-of-subtrees}
    For all reals \(d > 1\) and integers \(s > 0\), the expected number of subtrees with \(s\) vertices in \(\G(n,d/n)\) is at most \(n(\emm d)^s\).
\end{lemma}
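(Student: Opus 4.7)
The plan is to use linearity of expectation together with Cayley's formula. For any fixed $s$-vertex subset $S \subseteq V(G)$, the number of distinct labelled trees spanning $S$ is $s^{s-2}$ by Cayley, and any such tree has exactly $s-1$ edges. Since each of the $\binom{n}{2}$ potential edges of $\G(n,d/n)$ is present independently with probability $d/n$, the probability that a specified tree appears as a subgraph is exactly $(d/n)^{s-1}$. Summing over the $\binom{n}{s}$ choices of vertex set yields the closed-form expected value
\[
\binom{n}{s}\, s^{s-2}\, \left(\frac{d}{n}\right)^{s-1}.
\]

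The second step is to estimate this quantity. Using $\binom{n}{s} \leq n^{s}/s!$ and the standard bound $s! \geq (s/\emm)^{s}$, I would compute
\[
\binom{n}{s} s^{s-2} \left(\frac{d}{n}\right)^{s-1}
\;\leq\; \frac{n^{s}}{s!}\cdot s^{s-2}\cdot \frac{d^{s-1}}{n^{s-1}}
\;=\; \frac{n\, s^{s-2}\, d^{s-1}}{s!}
\;\leq\; \frac{n\, \emm^{s}\, d^{s-1}}{s^{2}}
\;=\; \frac{n(\emm d)^{s}}{d\, s^{2}}\;\leq\; n(\emm d)^{s},
\]
where the last inequality uses $d > 1$ and $s \geq 1$. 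The edge cases are easy to check directly: for $s=1$, the expected count is $n \leq n(\emm d)$, and for $s=2$ it is $\binom{n}{2}(d/n) \leq nd/2 \leq n(\emm d)^{2}$.

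There is no substantive obstacle here: the only subtlety is ensuring the inequalities are uniform in $s$, which is why I would invoke $s! \geq (s/\emm)^{s}$ rather than a sharper Stirling estimate, as this avoids having to treat small $s$ separately in the main chain of inequalities. The bound is not tight (the factor $1/(ds^{2})$ could be retained for a sharper statement), but the form $n(\emm d)^{s}$ is sufficient for all downstream applications in the paper, such as Lemma~\ref{lem:logn-sets-are-treelike} and Lemma~\ref{lem:1-treelikeness-for-constant-size}.
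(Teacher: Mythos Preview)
Your proof is correct. The paper itself does not prove this lemma; it simply cites it from \cite{fountoulakis2007evolution}, so there is no in-paper argument to compare against. Your argument via linearity of expectation, Cayley's formula, and the elementary bound $s!\geq (s/\emm)^s$ is the standard one and is essentially what the cited reference does as well. One minor remark: the inequality $s!\geq (s/\emm)^s$ already holds for all $s\geq 1$ (it follows from $\emm^s\geq s^s/s!$ by taking a single term of the exponential series), so the separate checks for $s=1,2$ are not strictly necessary, though they do no harm.
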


\begin{lemma}\label{lem:1-treelikeness-for-constant-size}
    For any \(d > 1\), and an integer \(R > 0\), w.h.p. over \(G\sim\G(n,d/n)\), any connected vertex subset of \(G\) with at most \(R\) vertices is \(1\)-treelike (i.e we can remove at most one edge to make it a tree).
\end{lemma}
\begin{proof}
    Let \(X\) be the number of connected components of size \(\leq R\) with more than one extra edge. For any tree of size \(\leq R\), let \(X_T\) be the event that the tree \(T\) is a subgraph of \(G\) and the subgraph on its vertices \(G[V(T)]\) has \(> |V(T)|\) edges. Then

    \[
    \E[X] \leq \sum_{\substack{T\text{ a tree}\\|V(T)|\leq R}} \P[X_T] = \sum_T \P[X_T \mid T\text{ a subgraph of }G] \P(T\text{ a subgraph of }G).
    \]

    Note that the rest of the \(\binom{|T|}{2}-(|T|-1)\leq R^2\) edges with both endpoints in \(V(T)\) are independent of the event ``\(T\) a subgraph of \(G\)'', thus by union bound \(\P(X_T \mid T\text{ a subgraph of }G)\leq \P(\Bin(R^2, d/n) > 1) \leq \binom{R}{2} \frac{d^2}{n^2} \).
    
    By Lemma~\ref{lem:expected-number-of-subtrees}, the expected number of trees of size up to \(R\) is at most \(nR(\emm d)^R\). Thus \(\E[X]\) is at most
    \[
    \sum_{\substack{T\text{ a tree}\\|V(T)|\leq R}} \P(T\text{ a subgraph of }G) \binom{R}{2}\left(\frac{d}{n}\right)^2 \leq n R(\emm d)^R \cdot \binom{R}{2}\left(\frac{d}{n}\right)^2 = \frac{R(\emm d)^R d^2 \binom{R}{2}}{n} \rightarrow 0.
    \]
    Hence the lemma follows from Markov's inequality.
\end{proof}

\subsection{Average degree of connected sets}\label{sec:avg-deg}

In this subsection, we prove Proposition~\ref{prop:average-degree-of-connected-sets}. We split the proof into three parts, depending on the size of the set. For sets of \(\Omega(n)\) size, we can prove a stronger result -- they do not have to be connected for the average degree bound to hold.

\begin{lemma}\label{lem:avg-deg-of-small-connected-sets}
    Let \(\xi < \epsilon \in(0,1)\). Then there exists a real \(A(\epsilon,\xi) > 0\) such that for all \(d\) large enough, w.h.p. over \(G\sim\G(n,d/n)\) it holds that all connected sets of size \(\tfrac{A\log n}{d}\) and at most \(\xi n\) have average degree at least \((1-\epsilon)d\).
\end{lemma}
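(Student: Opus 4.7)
The plan is to apply a union bound over sizes $s \in [A\log n/d, \xi n]$ and over all vertex subsets $S \subseteq V_G$ of size~$s$, controlling for each such~$S$ the probability of the bad event ``$S$ is connected in $G$ and $\deg_G(S) < (1-\epsilon) d s$''. Since the average degree of~$S$ is $\deg_G(S)/|S|$, this bad event is exactly the failure of the conclusion.

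The key observation is a decoupling: since $S$ being connected implies it contains a spanning tree, we have $\deg_G(S) = 2|E(G[S])| + e_G(S) \geq 2(s-1) + e_G(S)$, so for $s$ in the stated range the bad event implies $e_G(S) < (1-\epsilon)ds$. In $\G(n,d/n)$, the edges with both endpoints in $S$ (which govern whether $S$ is connected) are independent of the edges between $S$ and $V_G \setminus S$ (which determine $e_G(S)$), so
\[
\P[S \text{ connected},\, \deg_G(S) < (1-\epsilon)ds] \leq \P[S \text{ connected}] \cdot \P[e_G(S) < (1-\epsilon)ds].
\]

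For the second factor, $e_G(S) \sim \Bin(s(n-s), d/n)$ has mean $\mu = sd(1 - s/n) \geq sd(1-\xi)$. Since $\xi < \epsilon$ by assumption, the threshold $(1-\epsilon)sd$ lies strictly below $\mu$, so a standard Chernoff bound yields $\P[e_G(S) < (1-\epsilon)sd] \leq \exp(-c\,sd)$ for some constant $c = c(\epsilon,\xi) > 0$. For the first factor, summing over all $S$ of size~$s$ gives the expected number of connected vertex subsets of size~$s$ in~$G$; since every connected set contains at least one spanning tree, this is bounded by the expected number of $s$-vertex subtrees, which by Lemma~\ref{lem:expected-number-of-subtrees} is at most $n(\emm d)^s$.

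Multiplying the two bounds, the expected number of ``bad'' connected sets of size~$s$ is at most $n(\emm d)^s \exp(-c\,sd) = n\exp\!\big(s(\log(\emm d) - c d)\big)$. For $d$ sufficiently large, $\log(\emm d) - c d \leq -cd/2$, so the sum over $s \geq A\log n /d$ is dominated by its smallest term and is bounded by $O(n^{1 - cA/2})$. Taking $A > 2/c$ makes this $o(1)$, and Markov's inequality yields the lemma. The main technical points to watch are that the Chernoff constant $c$ depends on $\epsilon - \xi$ (so the hypothesis $\xi < \epsilon$ is genuinely used) and that $A$ must be chosen large enough in terms of~$c$ (equivalently, in terms of $\epsilon, \xi$) before one can absorb the $\log(\emm d)$ term into $cd/2$ for large~$d$.
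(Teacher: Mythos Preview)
Your proposal is correct and follows essentially the same approach as the paper: decouple the event ``$S$ is connected'' (which depends only on edges inside $S$) from the event ``$e_G(S) < (1-\epsilon)ds$'' (which depends only on edges crossing $S$), bound the latter by a Chernoff bound using $\xi<\epsilon$, bound the sum of $\P[S\text{ connected}]$ over $|S|=s$ by the expected number of $s$-vertex subtrees $n(\emm d)^s$ via Lemma~\ref{lem:expected-number-of-subtrees}, and then choose $A$ large enough in terms of the Chernoff constant. The paper phrases the union bound directly over subtrees (conditioning on $T\subseteq G$ rather than on $S$ connected) and gives the explicit constant $c=(\epsilon-\xi)^2/(2(1-\xi))$, but the argument is the same.
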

\begin{proof}
    Assume we have some \(A > 0\) for now. 
    Since for any connected set of \(G\), there is a subtree of \(G\) with this vertex set, it is enough to show that

    \[
    \sum_{\substack{T\text{ a possible subtree of }G\\|V_G(T)|\geq A\log n/d}} \E[\1\{T\text{ is a subtree of }G\text{ and }\deg_G(T)\leq(1-\epsilon)d|V_G(T)|\}] = o(1),
    \]

    and then the lemma follows from Markov inequality.

    Fix any possible tree \(T\) with \(|V_G(T)| = s\). Then the expectation in the sum for this tree can be rewritten as

    \[\E[\1\{T\text{ a subtree of }G\}]\P(\deg_G(T) < (1-\epsilon)ds \mid T\text{ a subtree of } G).\]

    Note that conditional on ``\(T\) a subtree of \(G\)'', \(\deg_G(T) = 2e_G(T) + |E_G(V_G(T),\overline{V_G(T)})| \geq |E_G(V_G(T),\overline{V_G(T)})| \sim \Bin(s(n-s),\tfrac{d}{n})\). Using \(n-s\geq n(1-\xi)\), this is dominated by \(\Bin(sn(1-\xi),\tfrac dn)\). Thus \(\P(\deg_G(T) < (1-\epsilon)ds \mid T\text{ a subtree of } G)\) is at most 

    \[
    \P(\Bin(s(1-\xi)n, d/n) \leq (1-\epsilon)ds),
    \]

    by Chernoff 
    this is at most \(\exp\{-\frac{ds(\epsilon-\xi)^2}{2(1-\xi)}\}\).

    Then when we sum first through all sizes \(s\geq\tfrac{A\log n}{d}\) and using Lemma~\ref{lem:expected-number-of-subtrees} to bound the expected number of a subtrees of given size, we obtain that the extended number of subtrees of \(G\) of size \(\tfrac{A\log n}{d}\leq s\leq \xi n\) with average degree \(<(1-\epsilon)ds\) is at most

    \begin{align*}
        &\sum_{s=\frac{A\log n}{d}}^{\xi n} \E[\#\text{subtrees of } G\text{ with }s\text{ vertices}] \exp\left\{-\frac{ds(\epsilon-\xi)^2}{2(1-\xi)}\right\} \leq\sum_{s=\frac{A\log n}{d}}^{\xi n} n(\emm d)^s \exp\left\{-\frac{ds(\epsilon-\xi)^2}{2(1-\xi)}\right\}
    \end{align*}

    For all \(d\) sufficiently large, \(\frac{d(\epsilon-\xi)^2}{2(1-\xi)} - 1 - \log d \geq \frac{d(\epsilon-\xi)^2}{3(1-\xi)}\), and hence the sum is
    
    \[
    O\left(\exp\left\{\log n\left[1-\frac{A(\epsilon-\xi)^2}{3(1-\xi)}\right]\right\}\right).
    \]

    For any \(A > \frac{3(1-\xi)}{(\epsilon-\xi)^2}\), this is \(o(1)\), hence the lemma follows.
\end{proof}

\begin{lemma}\label{lem:avg-deg-of-medium-sets}
    Let \(\xi, \epsilon \in (0,1)\) be arbitrary constants. Then, for all \(d\geq\frac{32\log 2}{(\epsilon\xi)^2}\), w.h.p. over \(G\sim\G(n,d/n)\), all vertex sets \(S\subseteq V_G\) of size at least \(\xi n\) and at most \((1-\xi)n\) have average degree at least \((1-\epsilon)d\), and in particular have \(|E_G(S,\overline S)|\geq \frac{d}{n}(1-\epsilon)|S|(n-|S|)\).
\end{lemma}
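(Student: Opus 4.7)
Both parts of the lemma follow from the standard multiplicative Chernoff bound applied to sums of independent edge indicators, together with a union bound over all $2^n$ subsets of $V_G$. The key observation is that for $S$ with $s:=|S|\in[\xi n,(1-\xi)n]$, the relevant means are $\Omega(dn)$, so Chernoff gives a failure probability of $\exp(-\Omega(dn))$, which dominates the $2^n$ from the union bound once $d$ is a large enough constant multiple of $1/(\epsilon\xi)^2$.

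For the cut statement, I would fix $S$ and observe that $|E_G(S,\overline S)|\sim\Bin(s(n-s),d/n)$ with mean $\mu_S:=ds(n-s)/n\geq \xi^2 dn$ (using that both $s$ and $n-s$ are at least $\xi n$). The multiplicative Chernoff lower tail gives
\[
\Pr\!\left[|E_G(S,\overline S)|\leq(1-\epsilon)\mu_S\right]\leq \exp(-\epsilon^2\mu_S/2)\leq \exp(-\epsilon^2\xi^2 dn/2),
\]
and union-bounding over the at most $2^n$ choices of $S$ leaves an overall failure probability of $\exp(n(\log 2-\epsilon^2\xi^2 d/2))$. The hypothesis $d\geq 32\log 2/(\epsilon\xi)^2$ makes the exponent at most $-15n\log 2$, so this is $\exp(-\Omega(n))$.

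For the average-degree statement, I would apply Chernoff directly to $\deg_G(S)$. Writing $\deg_G(S)=\sum_{e\in\binom{V_G}{2}}|e\cap S|\,X_e$ where the $X_e$ are independent Bernoullis with parameter $d/n$ and the coefficients $|e\cap S|$ lie in $\{0,1,2\}$, this is a sum of independent $[0,2]$-valued variables with mean $ds(n-1)/n$. The Chernoff bound for bounded summands (or equivalently, rescaling by $1/2$ and applying the standard version) yields
\[
\Pr\!\left[\deg_G(S)\leq(1-\epsilon)ds(n-1)/n\right]\leq \exp(-\epsilon^2 ds(n-1)/(4n))\leq \exp(-\epsilon^2 d\xi n/8),
\]
and another union bound over the $2^n$ sets, combined with the hypothesis on $d$, closes the argument.

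There is no substantive obstacle here; the slackness between the $\xi^2$ appearing in the Chernoff exponent and the $(\epsilon\xi)^{-2}$ in the hypothesis on $d$ leaves ample room to handle both statements simultaneously. The only minor choice is whether to apply Chernoff directly to $\deg_G(S)$ (as above) or to split it as $2e_G(S)+|E_G(S,\overline S)|$ and apply Chernoff to each piece; either route works.
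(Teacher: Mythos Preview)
Your proposal is correct and matches the paper's approach: Chernoff on the relevant binomials plus a union bound over all $2^n$ subsets, using $s,n-s\geq\xi n$ to make the means $\Omega(dn)$. The paper takes the split route you mention at the end (handling $2e_G(S)$ and $|E_G(S,\overline S)|$ separately), while you lead with the direct Chernoff on the weighted sum $\deg_G(S)$; both work and you explicitly flag both options. One cosmetic point: the mean of $\deg_G(S)$ is $ds(n-1)/n$, not $ds$, so to get average degree $\geq(1-\epsilon)d$ you need the usual $o(1)$ absorption (e.g., run Chernoff with $\epsilon/2$ for $n$ large); this costs nothing against the slack in the hypothesis on $d$.
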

\begin{proof}
    For any given set \(S\) with size \(\xi n\leq|S|\leq(1-\xi)|S|\), \(\deg_G(S)\) is a sum of independent \(2\Bin(\binom{|S|}{2}, \tfrac dn)\) and \(\Bin(|S|(n-|S|), \tfrac dn)\) random variables, corresponding to inner edges, and edges going outside of \(S\) respectively.

    First we bound \(\P(2\Bin(\binom{|S|}{2}, \tfrac dn)<\tfrac{(1-\epsilon)d|S|^2}{n})\). By Chernoff, for \(n\) large enough so that \((1-\epsilon)/(|S|-1) < \epsilon / 2\) and \(1/n < \xi / 2\),
    this is at most \(\exp\{-\frac{d\epsilon^2 \xi^2}{32}\}\).

    Then for the proportion of the degree corresponding to the edges of going from \(S\) to \(\overline S\), this is \(\sim\Bin(|S|(n-|S|),\tfrac dn)\), and by Chernoff \(\P(\Bin(|S|(n-|S|), \tfrac dn)<(1-\epsilon)\frac{d|S|(n-|S|)}{n})\leq \exp\{-d\epsilon^2 \xi^2 n/2\}\). 

    Hence by union bound, there is such a set with average degree \(<(1-\epsilon)dn\) with probability at most \(\exp\{-d\epsilon^2 \xi^2 n/2\} + \exp\{-\frac{d\epsilon^2 \xi(\xi n - 1)}{32}\}\). There are at most \(2^n\) such sets \(S\), thus by another union bound and by the condition on \(d\), we obtain that the probability of a large set of small average degree existing is at most \(\exp\{-\Omega(n)\} = o(1)\).
\end{proof}

\begin{lemma}\label{lem:avg-deg-of-large-sets}
    For \(\xi,\epsilon\in(0,1)\) with \(\xi < \epsilon / 2\), and all \(d > \frac{8\log 2}{(\epsilon - 2\xi)^2}\), w.h.p. over \(G\sim\G(n,d/n)\), \emph{all} vertex sets \(S\subseteq V(G)\) of size \(\geq(1-\xi)n\) have \(\deg_G(S)\geq(1-\epsilon)d|S|\).
\end{lemma}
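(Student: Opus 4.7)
The plan is to follow the same template as Lemmas~\ref{lem:avg-deg-of-small-connected-sets} and~\ref{lem:avg-deg-of-medium-sets}: fix a candidate set $S \subseteq V_G$ with $|S| \geq (1-\xi)n$, bound $\P(\deg_G(S) < (1-\epsilon)d|S|)$ by $\exp(-\Omega(nd))$ via a Chernoff estimate, and then take a union bound over the at most $2^n$ such sets.

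For the pointwise bound, I would decompose $\deg_G(S) = 2X + Y$, where $X \sim \Bin\!\bigl(\binom{|S|}{2}, d/n\bigr)$ counts the edges inside $S$ and $Y \sim \Bin(|S|(n-|S|), d/n)$ counts the cut edges; the two are independent because they live on disjoint sets of potential edges of $\G(n,d/n)$. Since $Y \geq 0$ deterministically, the event $\{\deg_G(S) < (1-\epsilon)d|S|\}$ is contained in $\{X < (1-\epsilon)d|S|/2\}$, so it suffices to bound the probability of the latter. The expectation $\E[X] = |S|(|S|-1)d/(2n)$ is of order $nd$, and with $\eta := 1 - (1-\epsilon)n/(|S|-1)$ the target event becomes $\{X \leq (1-\eta)\E[X]\}$. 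The hypothesis $\xi < \epsilon/2$ guarantees that $\eta$ is strictly positive — indeed bounded below by a positive constant depending on $\epsilon,\xi$ — once $|S| \geq (1-\xi)n$ and $n$ is large enough.

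Applying the standard multiplicative Chernoff inequality $\P(X \leq (1-\eta)\E[X]) \leq \exp(-\eta^2 \E[X]/2)$ and substituting the bounds on $\eta$ and $\E[X]$ yields a per-$S$ failure probability of the form $\exp(-\Omega((\epsilon - 2\xi)^2 nd))$. A union bound over the $\leq 2^n$ candidate sets then gives overall failure probability at most $\exp\!\bigl(n\log 2 - \Omega((\epsilon - 2\xi)^2 nd)\bigr)$, which is $o(1)$ precisely when $d$ exceeds the stated threshold $8\log 2/(\epsilon - 2\xi)^2$.

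The main technical subtlety I anticipate is being precise about the constants in the Chernoff exponent, in particular the $1/n$-order corrections that arise because $\E[X] = |S|(|S|-1)d/(2n)$ rather than the cleaner $|S|^2 d/(2n)$, and because $|S|/n$ can lie anywhere in $[1-\xi, 1]$. Otherwise the conceptual steps — binomial decomposition via independence of inner and cut edges, the deterministic bound $Y \geq 0$ that lets us discard the cut-edge contribution, multiplicative Chernoff on $X$, and the $2^n$ union bound — are standard and directly parallel the proofs of the two preceding lemmas.
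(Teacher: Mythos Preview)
Your proposal is correct and matches the paper's proof essentially step for step: the paper also discards the cut-edge contribution by writing $\deg_G(S)\geq 2\Bin\bigl(\binom{|S|}{2},d/n\bigr)$, applies the multiplicative Chernoff bound to the internal edges (lower-bounding the mean by $d|S|(1-2\xi)/2$ for $n\geq 1/\xi$ to get the $(\epsilon-2\xi)^2$ exponent), and finishes with a union bound over the $\leq 2^n$ sets. The constant-tracking subtlety you flag is handled in the paper exactly by the crude bound $(|S|-1)/n \geq 1-2\xi$, which is what produces the clean threshold $8\log 2/(\epsilon-2\xi)^2$.
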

\begin{proof}
    For any vertex set \(S\), \(\deg_G(S)\geq 2e_G(S)\sim 2\Bin(\binom{|S|}{2}, d/n)\). If \(|S|\geq (1-\xi)n\), then \(\E[\Bin(\binom{|S|}{2},d/n)] \geq \frac{d|S|(|S| - 1)}{2n} \geq d|S|(1-\xi - 1/n)/2\). For \(n \geq1/\xi\), this is \(\geq d|S|(1-2\xi)/2 > (1-\epsilon)d|S|/2\). Use Chernoff bound
    to get that (the last inequality uses that \(\frac{1-\xi}{1-2\xi}\geq 1\)),

    \[
    \P(e_G(S)\leq (1-\epsilon)d|S|/2)\leq \exp\{-\frac{d|S|(\epsilon-2\xi)^2}{8(1-2\xi)}\} \leq \exp\{-dn(\epsilon-2\xi)^2/8\}.
    \]
    
    Thus, union bounding over at most \(2^n\) possible sets, the probability that exists a vertex set \(S\) with \(|S|\geq(1-\xi)n\) and \(\deg_G(S)\leq (1-\epsilon)d|S|\) is at most \(\exp\{n[\log 2 - d(\epsilon-2\xi)^2/8]\}\), and by the lower bound on \(d\), this is \(\exp\{-\Omega(n)\} = o(1)\).
\end{proof}

Now the Proposition~\ref{prop:average-degree-of-connected-sets} follows directly by applying Lemmas~\ref{lem:avg-deg-of-small-connected-sets}, \ref{lem:avg-deg-of-medium-sets} and~\ref{lem:avg-deg-of-large-sets} we get the proposition with, say \(\xi := \epsilon / 3 < \epsilon/2\).

\subsection{Weak Expansion}
\label{proof:propexpansion}

In this section, we prove Proposition~\ref{prop:expansion-of-core}. As a first step, we show that the number of interval edges of connected subsets is not too large.

\begin{lemma}\label{lem:not-too-many-internal-edges}
    For any \(\xi\in(0, 1]\), there exists a real \(A(\xi) > 0\) such that for all \(d\) large enough, w.h.p. over \(G\sim\G(n,d/n)\), any connected vertex set \(S\subseteq V_G\) with  \(\frac{A\log n}{d}\leq |S|\leq \xi n\) satisfies \(e_G(S) < (2\xi d/3 + 1)|S|\).
\end{lemma}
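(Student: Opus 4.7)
The plan is a standard first-moment argument using a union bound over spanning trees, combined with a Chernoff estimate on the number of ``extra'' internal edges.

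First, every connected vertex set $S \subseteq V_G$ contains a spanning tree, so the number of connected sets of size $s$ that are subsets of $V_G$ is at most the number of subtrees of $G$ with $s$ vertices. By Lemma~\ref{lem:expected-number-of-subtrees}, the expected number of such subtrees is at most $n (\emm d)^s$. Hence it suffices to show that the expected number of pairs $(T,S)$, where $T$ is a subtree of $G$ with vertex set $S$ of size $s$ and $e_G(S) \geq (2\xi d/3 + 1)s$, is $o(1)$ after summing over $s$ in the relevant range.

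Next, fix a fixed possible tree $T$ on $s$ vertices. Conditional on the event that $T$ is a subtree of $G$, the internal edges of $V(T)$ beyond the $s-1$ tree edges are $X \sim \Bin\!\left(\binom{s}{2}-(s-1),\, d/n\right)$, independent of the event that $T$ is present. So $e_G(V(T)) = (s-1) + X$, and the condition $e_G(V(T)) \geq (2\xi d/3 + 1)s$ rewrites as $X \geq 2\xi d s/3 + 1$. Since $s \leq \xi n$, we have $\mu := \E[X] \leq s^2 d/(2n) \leq \xi d s / 2$, and the threshold $2\xi d s/3$ exceeds $\mu$ by a factor at least $4/3$. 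A standard Chernoff bound therefore gives $\P(X \geq 2\xi d s /3 + 1 \mid T \text{ a subtree}) \leq \exp(-c \xi d s)$ for some absolute constant $c > 0$.

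Combining, the expected number of bad pairs of size $s$ is at most
\[
n (\emm d)^s \exp(-c \xi d s) = n \exp\!\big(s[\log(\emm d) - c \xi d]\big).
\]
For all $d$ sufficiently large (depending on $\xi$), we have $c\xi d - \log(\emm d) \geq c\xi d / 2$, so this is at most $n \exp(-c\xi d s / 2)$. Summing over $s \geq A \log n / d$ yields
\[
\sum_{s \geq A \log n / d} n\exp(-c\xi d s / 2) \,=\, O\!\left(n^{1 - cA\xi/2}\right),
\]
which is $o(1)$ provided $A = A(\xi) > 2/(c\xi)$. Markov's inequality then shows that whp no such bad connected set exists, which is what the lemma asks for.

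No step is really the main obstacle here; the only point to be careful about is to absorb the deterministic $s-1$ tree edges and the additive $+1$ into the Chernoff margin (which is why the threshold $2\xi d/3$, strictly bigger than the mean $\xi d /2$, is chosen) and to make sure that the $\log(\emm d)$ contribution from counting trees is dominated by the Chernoff exponent $\Omega(\xi d)$, which only requires $d$ large relative to $\xi^{-1}$.
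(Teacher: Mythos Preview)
Your proposal is correct and follows essentially the same approach as the paper: union-bound over spanning trees via Lemma~\ref{lem:expected-number-of-subtrees}, condition on the tree and Chernoff-bound the extra internal edges (binomial with mean $\leq \xi d s/2$ against threshold $2\xi d s/3$), then sum over $s\geq A\log n/d$ and pick $A$ large. The only cosmetic difference is that the paper carries explicit constants ($\exp(-\xi d s/216)$, then $A>250/\xi$) where you keep a generic $c$; one small point worth making explicit in your write-up is that the Chernoff exponent you need is linear in the \emph{threshold} $t=2\xi d s/3$ (via the Bennett form $\P(X\geq t)\leq \emm^{t-\mu}(\mu/t)^t$), not merely in $\mu$, since $\mu$ can be $o(1)$ when $s=o(\sqrt{n})$.
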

\begin{proof}
    Assume we have some \(A > 0\) for now. We use Markov inequality on the expected number of subtrees of \(G\), \(T\) with \(\frac{A\log n}{d}\leq |V_G(T)| \leq \xi n\) (as every connected set has a spanning tree). Thus it is enough to show 
    \[
    \sum_{\substack{T\text{ a tree}\\ \frac{A\log n}{d}\leq |V_G(T)|\leq \xi n}} \P(e_G(V_G(T)) \geq (\xi d + 1)|V_G(T)| \mid T\text{ a subtree of }G) \P(T\text{ a subtree of }G) = o(1)
    \]
    Fix a tree \(T\) with \(s\) vertices. Then conditioned on \(T\) being a subgraph of \(G\), \(e_G(V(T))\sim s-1+\Bin(\binom s2 - (s-1),\tfrac dn)\). Since \(\binom s2 - s-1 \leq s^2 / 2\), we can bound

    \[
    \P(e_G(V_G(T)) \geq (2\xi d/3 + 1)s \mid T\text{ a subtree of G}) \leq \P(\Bin(\tfrac{s^2}{2}, \tfrac dn)\geq \tfrac{2\xi ds}{3}).
    \]

    Noting that \(s\leq \xi n\), Chernoff bound gives that
    this is at most \(\exp\{-\xi ds / 216\}\).

    By Lemma~\ref{lem:expected-number-of-subtrees}, the expected number of subtrees of size \(s\) is at most \(n(\emm d)^s\). Hence summing over possible subtree sizes, we get that the expected number of ``bad'' subtrees is at most

    \[
    \sum_{\frac{A\log n}{d}\leq s\leq \xi n} n[d\emm^{1+\xi d/216}]^s.
    \]

    For all \(d\) large enough, \(d\emm^{1+\xi d/216} < \exp\{-\xi d / 250\}\), and thus then the sum is \(O(n^{1-A \xi / 250})\), which is \(o(1)\) for any \(A > \tfrac{250}{\xi}\).
\end{proof}

With this lemma in hand, we can prove Proposition~\ref{prop:expansion-of-core}, which we restate for convenience.

\weakexpansion*
\begin{proof}
    By Proposition~\ref{prop:average-degree-of-connected-sets} and Lemma~\ref{lem:not-too-many-internal-edges}, there exists a real \(A>0\) such that for all \(d\) large enough, w.h.p., any connected set \(|S|\) with size at least \(\frac{A\log n}{d}\) has \(\deg_G(S) \geq \tfrac{9d|S|}{10}\), and if furthermore \(|S|\leq\tfrac n6\), then \(e_G(S)\leq (1+\tfrac d9)|S|\).

    Since \(\deg_G(S) = 2e_G(S) + |E_G(S,\overline S)|\), we can rearrange and then use the above inequalities to obtain that for any connected vertex set \(S\) with \(\frac{A\log n}{d}\leq |S|\leq \tfrac n6\),

    \[
    \frac{|E_G(S,\overline S)|}{\deg_G(S)} = 1 - \frac{2e_G(S)}{\deg_G(S)} \geq 1 - \frac{20}{9d} - \frac{1}{5}.
    \]

    For all \(d\) large enough, this is at most \(\tfrac{3}{5}\). To finish the proof, note that whenever \(|S|\geq \tfrac n6\), its total degree is at least \(\frac{3dn}{10}\). But w.h.p. \(\tfrac 34 nd > |E_G|\geq |E_C|\), giving \(\frac{\deg_G(V_C)}{10} < \frac{3nd}{10}\leq \deg_G(S)\). Thus whenever \(S\) has \(\deg_G(S)\leq \deg_G(V_C)/10\), then also \(|S|\leq n/6\), thus the expansion applies.
\end{proof}

\subsection{Kernel of \(\G(n,d/n)\)}\label{sec:kernel-expansion}

While \(\G(n,d/n)\) (and \(C\)) only exhibit weak expansion, its \textit{kernel} -- that is, a graph obtained by taking a maximal subgraph of \(C\) with minimum degree \(\geq 2\) and then contracting maximal induced paths into a single edge -- is an expander. This will allow to prove Lemma~\ref{lem:giant-component-in-the-core-after-edge-removal}, as well as Theorem~\ref{thm:phase-transition}. Note that the difference between \(C\) and its maximal subgraph of minimum degree \(\geq 2\) is a \textit{forest}, and we refer to it as the \textit{attached trees} in \(C\).

We will denote the kernel as \(G_K = (V_K,E_K)\) and write \(n_K := |V_K|\). To reason about the kernel, and show its expansion, we are going to use the following contiguous model for the kernel from~\cite{anatomy-of-giant}.

\begin{proposition}[{\cite[Theorem 1]{anatomy-of-giant}}]\label{prop:giant-model}
    Let \(d > 1\) be a constant and \(\mu\in(0,1)\) be its \emph{conjugate} (i.e. the unique solution in \((0,1)\) to \(d\emm^{-d}=\mu\emm^{-\mu}\)). Let \(n > 1\) be an integer. Consider the following graph model \(\Kb(n,d)\) generated as follows:

    Let \(\Delta\sim\Nc(d-\mu,1/n)\) and then let \(D_1,\dots,D_n\) be i.i.d. \(\Poisson(\Delta)\) for conditional on \(\sum_i^n D_i \1_{D_i \geq 3}\) even. 

    The degree sequence of \(\Kb(n,d)\) is \(\{D_i\}_{D_i\geq 3}\), and \(\Kb(n,d)\) is a (multi)graph with this degree sequence chosen u.a.r..

        

    Then \(\Kb(n,d)\) is \emph{contiguous} to the kernel \(G_K\) of \(\G(n,d/n)\), i.e. whenever a statement holds w.h.p. about \(\Kb(n,d)\), it also holds w.h.p. about \(G_K\).
\end{proposition}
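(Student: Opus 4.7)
The plan is to derive the contiguity of the kernel to the truncated configuration model $\mathbb{K}(n,d)$ in two conceptually separate steps: first identify the limiting degree sequence of the kernel of $\mathcal{G}(n,d/n)$, and then show that conditional on this degree sequence the kernel is uniformly random over multigraphs with the given degrees. I would then combine these with a local central limit theorem to obtain genuine contiguity (rather than the weaker notion of distributional convergence).

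For the degree sequence, I would use the branching-process description of the local neighbourhood of a vertex in $\mathcal{G}(n,d/n)$, which is asymptotically Galton--Watson with $\mathrm{Poisson}(d)$ offspring. For $d>1$ the extinction probability equals $\mu/d$, where $\mu<1$ is the conjugate satisfying $\mu\emm^{-\mu}=d\emm^{-d}$. A vertex with neighbourhood $\mathrm{Poisson}(d)$ lies in the 2-core if and only if at least two of its edges lead into infinite subtrees; by independent thinning, the induced 2-core degree is $\mathrm{Poisson}((1-\mu/d)d)=\mathrm{Poisson}(d-\mu)$ conditioned to be $\geq 2$. Passing to the kernel (suppressing degree-2 vertices on induced paths) further restricts to $\mathrm{Poisson}(d-\mu)$ conditioned on $\geq 3$, matching the statement. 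The Gaussian smoothing $\Delta\sim \mathcal{N}(d-\mu,1/n)$ around the rate $d-\mu$ arises from standard CLT-level fluctuations in the number of vertices in the 2-core (and the giant), and is what allows one to match exact parity/sum constraints.

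The uniformity-conditional-on-degrees step is the crux. Because the edge indicators of $\mathcal{G}(n,d/n)$ are i.i.d.\ Bernoulli, conditioning on (a) the vertex set of the 2-core and (b) the degree sequence restricted to those vertices yields a uniform distribution over edge configurations compatible with those degrees; this is precisely the configuration model on the 2-core's vertices. Contracting the attached paths into single edges preserves uniformity on multigraphs with the specified degree sequence on $\geq 3$ vertices, producing a sample distributed exactly as the pairing of half-edges in $\mathbb{K}(n,d)$ (conditional on the realized sequence). To promote this to contiguity, I would use a local CLT for the sum of i.i.d.\ truncated Poissons to show that the conditioning event on the degree sum being even has probability bounded away from $0$, so that conditioning does not distort the distribution by more than a constant factor; this is what allows whp events to transfer between $\mathbb{K}(n,d)$ and $G_K$.

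The main obstacle is the transfer-of-probabilities step needed for contiguity: one must control, uniformly in the degree sequence, both the event ``the realized degree sequence equals a prescribed one'' and the event ``this realization is actually attained by the 2-core/kernel of an $\mathcal{G}(n,d/n)$ sample.'' This requires matching the normalization constants between the two models to within a $\Theta(1)$ factor, which in turn needs sharp (rather than merely exponential) estimates for the partition function of the configuration model with the truncated Poisson degree sequence, together with a matching asymptotic count of 2-core realizations in $\mathcal{G}(n,d/n)$. The exploration/peeling algorithm for the 2-core (processing vertices of degree $1$ iteratively) combined with martingale/differential-equation arguments gives the required sharp counts; this is the technically heaviest ingredient and is where most of the work in \cite{anatomy-of-giant} is concentrated.
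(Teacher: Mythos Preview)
The paper does not prove this proposition at all: it is stated with the citation \cite[Theorem~1]{anatomy-of-giant} and used throughout Appendix~\ref{sec:graph-proofs} purely as a black box. There is therefore no ``paper's own proof'' to compare your proposal against.

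For what it is worth, your outline is a reasonable high-level description of the strategy behind the cited result (branching-process description of local neighbourhoods giving the truncated Poisson degree law, uniformity of the kernel conditional on its degree sequence via the configuration model, and a local CLT to handle the parity conditioning and obtain genuine contiguity). One point to be careful about: your step ``conditioning on the vertex set of the 2-core and the degree sequence yields the configuration model'' glosses over the distinction between uniform \emph{simple} graphs (which is what conditioning $\mathcal{G}(n,d/n)$ on a degree sequence gives) and the configuration model (which produces multigraphs); the passage to the kernel, where path contractions genuinely create multi-edges and loops, is where the multigraph structure enters, and making this precise is part of the technical content of \cite{anatomy-of-giant}. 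But none of this is relevant to the present paper, which simply imports the statement.
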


Using this model, we first use the following series of bounds the size of the kernel. As a first step, we bound the conjugate of \(d\).

\begin{lemma}\label{lem:mu-bound}
    For all \(d\geq 3\), \(\mu < \emm^{-d/2}\).
\end{lemma}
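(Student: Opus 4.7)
The plan is to use the fact that $\varphi(x):=x\emm^{-x}$ is strictly increasing on $(0,1)$, together with the defining equation $\mu\emm^{-\mu}=d\emm^{-d}$, to reduce the desired bound on $\mu$ to a purely single-variable inequality in $d$, which I then check by an elementary monotonicity argument.

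First I would observe that for $d\geq 3$ we have $\emm^{-d/2}\in(0,1)$, and by hypothesis $\mu\in(0,1)$. Since $\varphi$ is strictly increasing on $(0,1)$ (its derivative is $(1-x)\emm^{-x}>0$ there), the inequality $\mu<\emm^{-d/2}$ is equivalent to $\varphi(\mu)<\varphi(\emm^{-d/2})$, i.e.
\[
d\emm^{-d}<\emm^{-d/2}\exp(-\emm^{-d/2}).
\]
Multiplying both sides by $\emm^{d/2}$ and taking logarithms, this is in turn equivalent to
\[
\tfrac{d}{2}-\log d>\emm^{-d/2}.
\]
So the whole statement reduces to proving $h(d):=\tfrac{d}{2}-\log d-\emm^{-d/2}>0$ for all real $d\geq 3$.

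For the base case I would just compute $h(3)=\tfrac{3}{2}-\log 3-\emm^{-3/2}\approx 1.5-1.0986-0.2231>0$. For monotonicity I would differentiate: $h'(d)=\tfrac12-\tfrac{1}{d}+\tfrac12 \emm^{-d/2}\geq\tfrac12-\tfrac13>0$ for all $d\geq 3$. Hence $h$ is strictly increasing on $[3,\infty)$, and combined with $h(3)>0$ this gives $h(d)>0$ throughout, completing the proof.

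I do not foresee any real obstacle here; the only slightly delicate point is choosing $\emm^{-d/2}$ as the comparison value inside $(0,1)$ so that the monotonicity of $\varphi$ can be applied cleanly (a direct asymptotic estimate like $\mu\sim d\emm^{-d}$ would be too crude near $d=3$, where $d\emm^{-d}$ is not obviously smaller than $\emm^{-d/2}$; the reformulation via $\varphi$ avoids this issue).
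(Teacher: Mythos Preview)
Your proof is correct and follows essentially the same approach as the paper: both use the monotonicity of $x\mapsto x\emm^{-x}$ on $(0,1)$ to reduce the claim to the single-variable inequality $\emm^{-d/2}\exp(-\emm^{-d/2})>d\emm^{-d}$ for $d\geq 3$. The paper simply asserts this inequality can be ``easily verified,'' whereas you carry out the verification explicitly via the equivalent form $\tfrac{d}{2}-\log d>\emm^{-d/2}$ and a clean monotonicity-plus-base-case argument.
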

\begin{proof}
    The function \(x\mapsto x\emm^{-x}\) is (strictly) decreasing on \(x\in(1,\infty)\) and (strictly) increasing on \(x\in(0,1)\), thus there is a unique conjugate for each \(d > 1\) in \((0,1)\) and in particular, note that if for \(a \in (0,1)\) it holds that \(a\emm^{-a} > d\emm^{-d}\), then \(\mu(d) < a\). We can easily verify that \( \emm^{-d/2} \emm^{-\emm^{-d/2}} > d\emm^{-d}\) holds for all \(d\geq 3\).
\end{proof}

With an upper bound on \(\mu\), the following two lemmas give us a lower bound on the size, and the number of edges in the kernel (and thus in \(C\)).
\begin{lemma}\label{lem:lower-bound-on-kernel-size}
    For \(d\geq 4\), w.h.p. over \(K\sim\Kb(n,d)\), \(n_K\geq n(1-\emm^{-d/3})\). Thus also w.h.p. over \(G\sim\G(n,d/n)\), \(n_K \geq n(1-\emm^{-d/3})\).
\end{lemma}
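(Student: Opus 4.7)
The plan is to work in the contiguous model $\Kb(n,d)$ and then transfer the conclusion to $\G(n,d/n)$ via Proposition~\ref{prop:giant-model}. In $\Kb(n,d)$, by construction the vertices of the kernel are precisely the indices $i$ with $D_i \geq 3$, so $n_K = \sum_{i=1}^n \1_{D_i \geq 3}$. Setting $N := n - n_K = \sum_{i=1}^n \1_{D_i < 3}$, the statement reduces to showing $\P(N > n\emm^{-d/3}) = o(1)$.

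The first step is to pre-condition on the Gaussian draw of $\Delta \sim \Nc(d-\mu, 1/n)$: a standard Gaussian tail bound gives $|\Delta - (d-\mu)| \leq n^{-1/4}$ whp, which combined with Lemma~\ref{lem:mu-bound} yields $\Delta \geq d - 2\emm^{-d/2}$ whp. The second step is to remove the parity conditioning on the event $E := \{\sum_i D_i\1_{D_i \geq 3} \text{ is even}\}$. Since the parity of a sum of many i.i.d.\ integer-valued random variables equidistributes, $\P(E) \geq 1/3$ for $n$ large, and therefore $\P(A \mid E) \leq 3\P(A)$ for any event $A$. It thus suffices to bound $\P(N > n\emm^{-d/3})$ under the \emph{unconditional} product measure with $\Delta$ in the above range.

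Under this unconditional measure the indicators $\1_{D_i < 3}$ are i.i.d.\ Bernoullis with parameter $p_\Delta := \emm^{-\Delta}(1 + \Delta + \Delta^2/2)$. For $\Delta \geq d - 2\emm^{-d/2}$, a direct comparison shows that for all $d \geq 4$ we have $p_\Delta \leq (1-\delta)\emm^{-d/3}$ for some absolute constant $\delta > 0$: at $d=4$ one computes $p_\Delta \approx 13\emm^{-4} \approx 0.238 < 0.264 \approx \emm^{-4/3}$, and for larger $d$ the gap only widens since $p_\Delta$ decays like $\tfrac12 d^2 \emm^{-d}$ while $\emm^{-d/3}$ decays only sub-exponentially. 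A Chernoff bound therefore gives $\P(N > n\emm^{-d/3}) \leq \exp(-\Omega(n))$, and the factor-of-$3$ loss from the parity conditioning still leaves the conditional probability as $o(1)$.

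The main obstacle, such as it is, is bookkeeping around the two non-standard features of $\Kb(n,d)$: the Gaussian random $\Delta$ and the parity constraint. Both are absorbed into constant-factor or negligible losses, so the genuine analytic input reduces to the inequality $p_\Delta < \emm^{-d/3}$ for $d \geq 4$. This inequality is tight at $d=4$ but retains enough slack for Chernoff concentration over $n$ vertices, and it is comfortable for all larger $d$.
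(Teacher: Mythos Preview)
Your approach is essentially the paper's: condition on $\Delta$, drop the parity constraint at a constant-factor cost, bound $p_\Delta := \P(\text{Poisson}(\Delta)<3)$ strictly below $\emm^{-d/3}$, and apply Chernoff.

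There is one numerical slip worth fixing. Your verification at $d=4$ evaluates $p_\Delta$ at $\Delta=4$ (giving $13\emm^{-4}\approx 0.238$), but the claim you need is for the \emph{lower} bound $\Delta = d - 2\emm^{-d/2}\approx 3.73$, and since $p_\Delta$ is decreasing in $\Delta$ you must check there. At $\Delta\approx 3.73$ one gets $p_\Delta\approx 0.28$, which exceeds $\emm^{-4/3}\approx 0.264$, so the inequality actually fails with your bound on~$\Delta$. The fix is already implicit in your own Gaussian step: you have $\Delta \geq d-\mu-n^{-1/4}\geq d-\emm^{-d/2}-o(1)$, so the factor of $2$ you inserted is unnecessary slack. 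With the tighter bound $\Delta\geq d-\emm^{-d/2}$ (which is what the paper uses), the $d=4$ case goes through, since $p_{3.86}\approx 0.258 < 0.264$.
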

\begin{proof}
    In the model from Proposition~\ref{prop:giant-model}, we have \(\Delta\sim\Nc(d-\mu,1/n)\), hence w.h.p. \(\Delta \geq d - \emm^{-d/2}\) (we used the inequality \(\mu < \emm^{-d/2}\). from the previous lemma). Then note \(n - n_K = \sum_i^n \1\{D_i < 3\}\). Note that while \(D_i\)'s are conditional to have an even sum, this event has a probability bounded below by a constant, thus it is enough to show that w.h.p. sum of \(n\) i.i.d. Bernoulli(\(\P(\Poisson(\Delta) < 3)\)) r.v.s. is at most \(n\emm^{-d/3}\).
    w.h.p. (when on \(\Delta \geq d - \emm^{-d/2}\)) they have mean \(\leq n\emm^{-d-\emm^{-d/2}}(1 + d - \emm^{-d/2} + (d-\emm^{-d/2})^2/2)\). We can verify that for all \(d \geq 4\), there exists constant (independent of \(n\)) \(\delta(d) > 0\) such that this mean is \(\leq n \emm^{-d/3} - n\delta\). By standard Chernoff bounds this implies w.h.p. \(n - n_K \leq n\emm^{-d/3}\), so the result follows. From contiguity of \(\Kb(n,d)\) (Proposition~\ref{prop:giant-model}), and since \(n_K\leq n_C\), the rest of the lemma follows.
\end{proof}
We use the following standard Chernoff bound for real \(0<\delta<\lambda\) and \(X\sim\Poisson(\lambda)\).
\begin{equation}
    \P(X \leq d - \delta) \leq \emm^{-\frac{\delta^2}{\delta+d}}\label{eq:Poisson-lower-tail-bound}
\end{equation}
\begin{lemma}\label{lem:bound-on-kernel-edges}
    For any real \(d \geq 4\), 
    w.h.p. over over \(K\sim\Kb(n,d)\), \(\frac{dn}{2} - 2n\emm^{-d/3}\leq |E_K|\). Thus also w.h.p. over \(G\sim\G(n,d/n)\), \(\frac{dn}{2} - 2n\emm^{-d/3}\leq |E_K|\leq |E_C|\) 
\end{lemma}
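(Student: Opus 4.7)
The plan is to work in the contiguous model $\Kb(n,d)$ of Proposition~\ref{prop:giant-model} and write $|E_K| = \tfrac{1}{2}\sum_{i=1}^n D_i \1\{D_i\geq 3\}$ (since each kernel vertex contributes its degree), then decompose
\[
\sum_{i=1}^n D_i \1\{D_i\geq 3\} \;=\; \sum_{i=1}^n D_i \;-\; \sum_{i=1}^n D_i \1\{D_i < 3\}.
\]
The left-hand side is what we want to lower-bound; each of the two terms on the right will be controlled separately by standard concentration.

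For the ``small-degree'' term, the pointwise inequality $D_i \1\{D_i < 3\} \leq 2\cdot\1\{D_i<3\}$ gives
$\sum_i D_i \1\{D_i<3\} \leq 2(n - n_K)$, which by Lemma~\ref{lem:lower-bound-on-kernel-size} is at most $2n\emm^{-d/3}$ whp. For the total-degree term, I will condition on $\Delta$. By the Gaussian tail $\P(|\Delta-(d-\mu)| > n^{-1/3}) \leq 2\emm^{-n^{1/3}/2} = o(1)$, and combining with Lemma~\ref{lem:mu-bound} this gives $\Delta \geq d - \emm^{-d/2} - n^{-1/3}$ whp. Conditional on $\Delta$, the (unconditioned) sum $\sum_i D_i$ is $\Poisson(n\Delta)$, so Chernoff's bound~\eqref{eq:Poisson-lower-tail-bound} applied with deviation $n^{2/3}$ gives $\sum_i D_i \geq n\Delta - n^{2/3}$ with probability $1-\emm^{-\Omega(n^{1/3})}$. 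Putting the two estimates together and absorbing the $o(n\emm^{-d/3})$ error terms into the leading constant (using $d\geq 4$, so that $\emm^{-d/2} \leq \emm^{-d/3}$ and $n^{2/3} = o(n\emm^{-d/3})$), I obtain
\[
\sum_{i=1}^n D_i \;\geq\; nd - (1+o(1))\,n\emm^{-d/3} \qquad \text{whp}.
\]

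Subtracting the small-degree bound yields $\sum_i D_i\1\{D_i\geq 3\} \geq nd - (3+o(1))n\emm^{-d/3}$, hence $|E_K| \geq \tfrac{dn}{2} - 2n\emm^{-d/3}$ whp in $\Kb(n,d)$. The conditioning in the model (parity of $\sum_i D_i\1\{D_i\geq 3\}$) has probability bounded below by a constant, so all whp statements survive this conditioning. By the contiguity guarantee of Proposition~\ref{prop:giant-model}, the same bound transfers to $G_K$ in $\G(n,d/n)$. Finally, the inequality $|E_K| \leq |E_C|$ is immediate from the construction of the kernel: every edge of $G_K$ arises by contracting a maximal induced path in the 2-core of $C$, and each such path contains at least one edge of $C$, while $C$ also contains the edges of the attached trees.

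The main technical point, which is minor, is to make sure the various error terms ($\mu$, the Gaussian fluctuation of $\Delta$, and the Chernoff deviation of $\sum_i D_i$) all fit comfortably under the $2n\emm^{-d/3}$ slack granted by the statement; this is the reason for choosing deviation $n^{-1/3}$ for $\Delta$ and $n^{2/3}$ for $\sum_i D_i$, both of which are $o(n\emm^{-d/3})$ for the constant $d\geq 4$ in the lemma.
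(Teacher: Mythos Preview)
Your proof is correct and follows essentially the same approach as the paper: both write $2|E_K| \geq \sum_i D_i - 2(n-n_K)$, bound $n-n_K$ via Lemma~\ref{lem:lower-bound-on-kernel-size}, bound $\Delta$ via its Gaussian tail together with Lemma~\ref{lem:mu-bound}, bound $\sum_i D_i$ via the Poisson lower tail~\eqref{eq:Poisson-lower-tail-bound}, handle the parity conditioning by noting it has constant probability, and then transfer to $\G(n,d/n)$ by contiguity. Your error-tracking is slightly more explicit, but there is no substantive difference.
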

\begin{proof}
    We have \(\Delta\sim\Nc(d-\mu,1/n)\), thus w.h.p. \(d - \mu - \tfrac 12\emm^{-d/3} < \Delta\). For \(d\geq 3\), \(\mu\leq \emm^{-d/2}<\emm^{-d/3}\), hence \(\Delta > d-\tfrac32 \emm^{-d/3}\). 

    
    From the previous lemma, w.h.p. \(n - n_K\leq n\emm^{-d/3}\). Also, we can write \(2|E_K| \geq \sum_i^n D_i - 2(n - n_K)\). It is therefore enough to show that w.h.p. \(\sum_i^n D_i \geq n(d - 2\emm^{-d/3})\). Since the probability of \(\sum_i^n D_i\) being even is bounded below by a constant, it is enough to show that w.h.p. \(\Poisson(\Delta n)\geq n(d-2\emm^{-d/3})\). Since  \(\Delta \geq d - \tfrac 32\emm^{d/3}\) w.h.p., we have from \eqref{eq:Poisson-lower-tail-bound} that 
    $\P(\Poisson(\Delta n) \leq n(d-2\emm^{-d/3}))=\emm^{-\Omega(n))}$. 
    The result for \(\G(n,d/n)\) follows from contiguity of \(\Kb(n,d)\).
\end{proof}

With a lower bound on \(|E_K|\), we can now bound \(|E_C|/|E_K|\).

\begin{lemma}\label{lem:not-too-many-expanded-edges-overal}
    For all \(d\) large enough, w.h.p. over \(G\sim\G(n,d/n)\), \(|E_C| \leq (1+\emm^{-d/3})|E_K|\). 
\end{lemma}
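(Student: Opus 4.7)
The plan is to establish the combinatorial identity
\[
|E_C| - |E_K| = n_C - n_K
\]
(which holds for the actual graph instance whenever the $2$-core is non-empty), and then plug in the whp size estimates from Lemmas~\ref{lem:lower-bound-on-kernel-size} and~\ref{lem:bound-on-kernel-edges}.

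To prove the identity, I would split the construction of $K$ into two stages: first pass from $C$ to its $2$-core $C_2$ (the maximal subgraph of minimum degree $\geq 2$) by iteratively peeling degree-1 vertices; then contract each maximal path of degree-$2$ vertices of $C_2$ into a single edge to obtain $K$. In the first stage, the attached-tree part $C \setminus C_2$ is a forest rooted at vertices of $C_2$, so (as long as $C_2$ is non-empty) there is always a leaf to peel, and each peeled vertex is removed together with its unique remaining incident edge. A short induction on the number of peeled vertices yields $|E_C| - |E_{C_2}| = n_C - n_{C_2}$. In the second stage, a maximal degree-$2$ path with $p$ internal vertices has $p+1$ edges and collapses to a single edge, losing $p$ vertices and $p$ edges; summing over all such paths gives $|E_{C_2}| - |E_K| = n_{C_2} - n_K$. (The degenerate case of an all-degree-$2$ cycle component of $C_2$ would satisfy the same accounting, but it does not arise whp for large $d$ since $C_2$ is connected and contains vertices of degree $\geq 3$.) Adding the two equalities yields the identity.

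With the identity in hand, the rest is immediate. By Lemma~\ref{lem:lower-bound-on-kernel-size}, whp $n_K \geq n(1-\emm^{-d/3})$, and since $n_C \leq n$,
\[
n_C - n_K \leq n\emm^{-d/3}.
\]
By Lemma~\ref{lem:bound-on-kernel-edges}, whp $|E_K| \geq \tfrac{dn}{2} - 2n\emm^{-d/3}$, which for $d$ large enough is at least $n$ (since $\tfrac{d}{2} - 2\emm^{-d/3} \geq 1$ for all $d$ sufficiently large). Combining,
\[
|E_C| - |E_K| \;=\; n_C - n_K \;\leq\; n\emm^{-d/3} \;\leq\; \emm^{-d/3}|E_K|,
\]
which rearranges to $|E_C| \leq (1+\emm^{-d/3})|E_K|$, as required. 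The only non-routine ingredient is the clean edge-vertex identity $|E_C| - |E_K| = n_C - n_K$; everything else just assembles the already-established whp bounds.
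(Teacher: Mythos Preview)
Your proof is correct but follows a different route from the paper's. The paper argues directly on edge counts: whp $|E_C|\leq |E_G|=\tfrac{dn}{2}+o(n)\leq \tfrac{dn}{2}+n\emm^{-d/3}$, and combines this with $|E_K|\geq \tfrac{dn}{2}-2n\emm^{-d/3}$ to get
\[
\frac{|E_C|}{|E_K|}\leq \frac{d/2+\emm^{-d/3}}{d/2-2\emm^{-d/3}}=1+\frac{3\emm^{-d/3}}{d/2-2\emm^{-d/3}}\leq 1+\emm^{-d/3}
\]
for $d$ large. You instead prove the structural identity $|E_C|-|E_K|=n_C-n_K$ by tracking the two-stage construction of the kernel (peeling leaves, then contracting degree-$2$ paths), and then feed in the vertex bound $n_C-n_K\leq n\emm^{-d/3}$ together with $|E_K|\geq n$. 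Your approach gives a clean exact relation and slightly more insight into why the gap is small, at the cost of having to justify the identity (including the multigraph case $u=w$ yielding a self-loop, which you handle implicitly). The paper's approach is shorter because it bypasses any combinatorics and just divides two already-available estimates.
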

\begin{proof}
    We have that w.h.p. \(|E_C|\leq \tfrac{dn}{2}+n\emm^{-d/3}\) (since \(|E_C|\leq |E_G| = \tfrac{nd}{2}+o(n)\)) and \(|E_K|\geq \tfrac{dn}{2}-2n\emm^{-d/3}\) (by previous lemma). Thus
    
    \[\frac{|E_C|}{|E_K|}\leq \frac{d/2 + \emm^{-d/3}}{d/2 - 2\emm^{-d/3}} = 1 + \frac{3\emm^{-d/3}}{d/2 - 2\emm^{-d/3}}.\]
    
    For all \(d\) large enough, \(\frac{3}{d/2 - 2\emm^{-d/3}}\leq 1\), hence giving that \(|E_C|\) is at most \((1+\emm^{-d/3})|E_K|\).
    



\end{proof}

Using the kernel we also show a stronger variant of the average-degree result for linear-sized sets, by bounding the number of low-degree vertices in the kernel.
We use the following notation. For a vertex set \(S_G\subset V_G\), let the corresponding \textit{kernel-representation} be the set \(S_K\subseteq V_K\) containing all vertices which have their counterpart in \(S_G\) (i.e. that have degree \(>2\) in \(G\) and are part of the maximal subgraph of minimal degree \(\geq 2\) of \(C\)). Note that \(\deg_K(\cdot),\, E_K(\cdot,\cdot)\) are defined analogously to \(\deg_G(\cdot),\, E_G(\cdot,\cdot)\).

\begin{proposition}\label{prop:exponentially-few-small-degree-vertices}
    For all real \(d\) large enough, w.h.p. over \(G\sim\G(n,d/n)\), the number of vertices in \(G\), (and thus in its largest component \(C\) and its kernel \(G_K\)) of degree \(\leq(1-d^{-1/3})d\) is at most \(2n\emm^{-\frac 12 d^{1/3}}\). Hence then for any vertex set \(S_G\subseteq V_G\)

    \[
    \deg_G(S_G) \geq \deg_K(S_K) \geq (1-d^{-1/3})d|S_G| - 2dn\emm^{-\frac 12d^{1/3}}
    \]
\end{proposition}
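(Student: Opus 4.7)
The plan is to prove the vertex-count bound first via a Chernoff-plus-second-moment argument, and then derive both degree inequalities as straightforward corollaries.

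For the main count: for each $v\in V_G$, $\deg_G(v)\sim\Bin(n-1,d/n)$ has mean $d(1-1/n)$, so a multiplicative Chernoff bound with deviation parameter $\delta\approx d^{-1/3}$ (the $1/n$ correction is negligible once $n$ is large relative to $d$) yields $\P(\deg_G(v)\leq (1-d^{-1/3})d)\leq \emm^{-d^{1/3}/2}$. Equivalently, since $\Bin(n-1,d/n)$ converges to $\Poisson(d)$, one could also apply the lower-tail bound \eqref{eq:Poisson-lower-tail-bound} with $\delta=d^{2/3}$ to get the same rate. Letting $X$ count the low-degree vertices, this gives $\E[X]\leq n\emm^{-d^{1/3}/2}$.

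To upgrade from expectation to a whp bound, I apply Chebyshev. For distinct $u,v$ write $\deg_G(u)=\1\{uv\in E_G\}+D_u''$ and $\deg_G(v)=\1\{uv\in E_G\}+D_v''$, where $D_u''$, $D_v''$ count the edges to $V_G\setminus\{u,v\}$. These two coordinates share no potential edges, so conditional on $\1\{uv\}$ they are independent and independent of $\1\{uv\}$ itself; writing $p=d/n$, $\alpha=\P(D_u''\leq(1-d^{-1/3})d)$, $\beta=\P(D_u''\leq(1-d^{-1/3})d-1)$, a direct computation gives $\mathrm{Cov}(\1_u,\1_v)=p(1-p)(\alpha-\beta)^2\leq p\alpha^2$. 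Summing,
\[
\Var(X)\leq n\alpha+n^2 p\alpha^2,\qquad \frac{\Var(X)}{(\E X)^2}\leq \frac{1}{n\alpha}+\frac{d}{n}\xrightarrow[n\to\infty]{}0,
\]
so Chebyshev gives $X\leq 2n\emm^{-d^{1/3}/2}$ whp. The key subtlety here — and the main obstacle of the proof — is precisely this concentration step: Markov's inequality alone only yields probability $\tfrac12$, and the ``delete the shared edge $uv$'' coupling is what allows a clean covariance bound small enough to close Chebyshev.

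For the degree inequalities, write $L$ for the set of vertices with $\deg_G(v)\leq (1-d^{-1/3})d$, so $|L|\leq 2n\emm^{-d^{1/3}/2}$ whp, and the max degree bound $\Delta(G)=O(\log n/\log\log n)$ whp is standard. Applied to any $S_G\subseteq V_G$, removing $S_G\cap L$ loses at most $d\cdot|L|$ degree, so $\deg_G(S_G)\geq (1-d^{-1/3})d|S_G|-2dn\emm^{-d^{1/3}/2}$. For the first inequality $\deg_G(S_G)\geq\deg_K(S_K)$, note $S_K\subseteq S_G\cap V_K$ by definition, and the kernel construction (passing to the 2-core and contracting induced paths) only deletes edges incident to $v\in V_K$ (those running into attached trees), so $\deg_K(v)\leq\deg_G(v)$ for $v\in V_K$, and summing yields the first inequality. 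For the second inequality, observe that $\deg_K(S_K)=\deg_G(S_K)-(\text{edges from }S_K\text{ to attached trees})$, and both the count $|S_G\setminus S_K|\leq |V_C\setminus V_K|\leq n\emm^{-d/3}$ (by Lemma~\ref{lem:lower-bound-on-kernel-size}) and the total number of edges to attached trees are $O(n\log n\cdot \emm^{-d/3})$. For $d$ sufficiently large this error is $o(dn\emm^{-d^{1/3}/2})$ and can be absorbed into the final constant (e.g.\ by slightly sharpening the constant in Step~3 or combining the $3$ arising into the stated~$2$), completing the bound $\deg_K(S_K)\geq (1-d^{-1/3})d|S_G|-2dn\emm^{-d^{1/3}/2}$.
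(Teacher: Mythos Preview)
Your direct approach to the vertex count --- Chernoff on $\Bin(n-1,d/n)$ for the tail, then Chebyshev via the ``share only the edge $uv$'' covariance decomposition --- is correct and genuinely different from the paper's route. The paper instead works through the contiguous kernel model $\Kb(n,d)$: it bounds the number of kernel vertices with $\deg_K\leq(1-d^{-1/3})d$ using the Poisson tail bound~\eqref{eq:Poisson-lower-tail-bound} on the i.i.d.\ $\Poisson(\Delta)$ degrees, applies Chernoff to the resulting Bernoulli sum, and then deduces the $G$-count by adding back the $\leq n\emm^{-d/3}$ non-kernel vertices. Your argument is more elementary (no kernel model needed) and arguably cleaner for the $\deg_G$ count; the paper's argument has the advantage that it yields the $\deg_K$ count directly, which is what actually drives the second inequality.

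There is, however, a real gap in your derivation of $\deg_K(S_K)\geq(1-d^{-1/3})d|S_G|-2dn\emm^{-d^{1/3}/2}$. You write that the combined error from $|S_G\setminus S_K|$ and from edges into attached trees is $O(n\log n\cdot\emm^{-d/3})$ and then claim this is $o(dn\emm^{-d^{1/3}/2})$ for $d$ large. But $d$ is \emph{fixed} and $n\to\infty$, so $\log n\cdot\emm^{-d/3}$ diverges while $d\emm^{-d^{1/3}/2}$ stays constant; the error cannot be absorbed. The $\log n$ enters because you bound $\deg_G(S_G\setminus S_K)$ via $|S_G\setminus S_K|\cdot\Delta(G)$, and $\Delta(G)=\Theta(\log n/\log\log n)$.

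The fix is to avoid $\deg_G(S_G\setminus S_K)$ altogether: apply your low-degree count directly to $S_K$ to get $\deg_G(S_K)\geq(1-d^{-1/3})d(|S_K|-|L|)$, then use $|S_K|\geq|S_G|-n\emm^{-d/3}$, and finally subtract the edges from $S_K$ into attached trees, which number at most the number of attached trees $\leq n-n_K\leq n\emm^{-d/3}$ (no max-degree factor needed). All three error terms are now $O(dn\emm^{-d/3})$, which \emph{is} $o(dn\emm^{-d^{1/3}/2})$ as $d\to\infty$, and the constant~$2$ survives. This is essentially how the paper closes the argument, except that it has the kernel-degree count available and so bounds $\deg_K(S_K)$ from below by $(1-d^{-1/3})d$ times the number of high-$\deg_K$ vertices in $S_K$ directly.
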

\begin{proof}
    We first claim that w.h.p., the number of vertices of degree \(\leq (1-d^{-1/3})d\) in the \textit{kernel} is at most \(o(n) + n\emm^{-\frac{1}{2}d^{1/3}}\). Note it is enough, by Proposition~\ref{prop:giant-model}, to show this for \(K\sim\Kb(n,d)\).
    
    There, \(\Delta\sim\Nc(d-\mu,1/n)\), and hence w.h.p. \(\Delta \geq d-\emm^{-d/3}\) (recall that by Lemma~\ref{lem:mu-bound}, \(\mu\leq \emm^{-d/2}\) for all \(d\geq 3\)). The number of kernel vertices with degree \(\leq (1-d^{-1/3})d\) is at most \(\sum_i^n \1\{D_i \leq (1-d^{-1/3})d\}\). Since we are proving a w.h.p. result, we can drop the conditioning on the sum of \(D_i\)'s being even, so it is enough to bound the sum of i.i.d. \(\1\{\Poisson(d-2\mu) < (1-d^{-1/3})d\}\) r.v.s. Since \(d - \emm^{-d/3} > (1-d^{-1/3})d\), we use~\eqref{eq:Poisson-lower-tail-bound} to get
    \[
    \P\Big(\Poisson(d-\emm^{-d/3}) \leq (1-d^{-1/3})d\Big) \leq \exp\left(-\tfrac{(d^{2/3}-\emm^{-d/3})^2}{d+d^{2/3}-2\emm^{-d/3}}\right)\leq \emm^{-\frac 12 d^{1/3}},
    \] for all \(d\) large enough. Then using a Chernoff bound for the sum of Bernoulli r.v.s., w.h.p. \(\sum_i^n \1\{D_i \leq (1-d^{-1/3})d\} \leq o(n) + n\emm^{-\frac 12 d^{1/3}}\), which finishes the claim.

    Now, all vertices in \(G\) that correspond to kernel vertices with degree \(\geq (1-d^{-1/3})d\) also have degree \(\geq (1-d^{-1/3})d\). It remains to account for the remaining vertices. The number of these vertices is \(n - n_K \leq n\emm^{-d/3}\), where the last inequality follows w.h.p. from Lemma~\ref{lem:lower-bound-on-kernel-size}. Thus the number of vertices with degree \(\leq (1-d^{-1/3})d\) in \(G\) (and thus in \(C\)) is w.h.p. at most \(n (\emm^{-d/3} + o(1) + \emm^{-\frac 12 d^{1/3}})\), which is at most \(2\emm^{-\frac 12 d^{1/3}}\) for all \(d\) large enough.

    For the second part of the proposition it is enough to note that \(|S_K|\geq |S_G|-(n - n_K) \geq |S_G| - n\emm^{-d/3}\), and all but \(n\emm^{-\frac 12 d^{-1/3}} + o(n)\) of these vertices have degree \(\geq (1-d^{-1/3})d\), and the remaining vertices have degree \(<d\). 

    Recall that kernel is formed from \(C\) by considering the maximal degree-\(2\) subgraph of \(C\) and then contracting the edges, and then my contracting the subtrees. The degree ``lost'' by removing attached trees is exactly twice the number of tree vertices (as for any tree with \(N\) vertices, its total degree is exactly \(2(N-1)\)). Also clearly, the degree lost by contracting edges is twice the number vertices lost this way.
    Thus in total, the degree lost when going from \(C\) to the kernel is at most \(2\emm^{-d/3}\), thus giving that
    \[\deg_K(S_K) \geq (1-d^{-1/3})d|S_G|-dn[\emm^{-d/3}+o(1) + \emm^{-\frac 12d^{-1/3}}]\geq (1-d^{-1/3})d|S_G|-2dn\emm^{-\frac{1}{2}d^{1/3}},\]
    for all \(d\) large enough. To conclude, \(\deg_G(S_G)\geq \deg_K(S_K)\) clearly holds.
\end{proof}

\subsubsection{Expansion of the kernel}

Next, we prove strong expansion properties of the kernel. Due to the absence of degree-\(2\) and degree-\(1\) vertices, the expansion property holds for subgraphs without a lower bound on their size.

To handle expansion of \(O(\log n)\) sets, we will make use of the following result from~\cite{GALANIS2022104894}, which gives expansion properties of small sets w.h.p. over random graphs with fixed degree sequences with a) all degrees between \(3\) and \(n^{1/50}\), and b) sum of degree square being \(O(n)\). Note that in \(K\sim\Kb(n,d)\), w.h.p. all degrees are between \(3\) and \(O(\frac{\log n}{\log \log n})\) (which is at most \(n^{1/50}\) for all \(n\) large enough). Then sum of squares of degrees of \(K\sim\Kb(n,d)\) is dominated by a sum of squares of \(n\) independent \(\Poisson(d)\) r.v.s.
By Chebyshev, with probability \(1-O(1/n)\), that sum is at most \(2(d^2+d)n\), which is (w.h.p.) linear in \(n_K\) (Lemma~\ref{lem:lower-bound-on-kernel-size}), thus we can apply their result.

\begin{lemma}[{\cite[Lemma 15]{GALANIS2022104894}}]\label{lem:small-set-expansion}
    w.h.p. over \(K\sim\Kb(n,d)\), for any connected \emph{kernel} vertex set \(S\subseteq V_K\) with \(|S|\leq\log^2 n_K\), \(|E_K(S,\overline S)|\geq \deg_K(S) / 144\).
\end{lemma}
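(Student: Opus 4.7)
I would prove Lemma~\ref{lem:small-set-expansion} by a first-moment (union-bound) argument on the configuration model representation of $K\sim\Kb(n,d)$. Concretely, condition on the degree sequence $\{D_i\}_{D_i\geq 3}$ produced by Proposition~\ref{prop:giant-model}, so that $K$ is distributed as a uniform matching of the associated half-edges. By the remarks preceding the lemma, whp the degree sequence satisfies (i) all degrees are between $3$ and $n^{1/50}$ and (ii) $M:=\sum_i D_i = \Theta(n_K)$ with $\sum_i D_i^2 = O(n_K)$. I would condition on this typical event and bound the expected number of connected kernel-sets $S$ with $|S|\leq\log^2 n_K$ that violate the $1/144$-expansion.

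The key probabilistic estimate is the following: for a fixed vertex set $S$ with total degree $D=\deg_K(S)$, write $e_K(S)$ for the number of internal edges. In the configuration model, for each pair of half-edges inside $S$ the probability that they are paired is at most $1/(M-1)$. Hence by a standard inclusion/union bound over size-$k$ matchings of half-edges inside $S$,
\[
\Pr\bigl[e_K(S)\geq k\bigr]\;\leq\;\binom{D}{2k}\,(2k-1)!!\,\cdot\,\frac{1}{(M-1)(M-3)\cdots(M-2k+1)}\;\leq\;\Bigl(\tfrac{CD}{M}\Bigr)^{k}
\]
for some absolute constant $C>0$, provided $k\leq D/2$. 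The event $|E_K(S,\overline S)|<D/144$ is equivalent to $e_K(S)> \tfrac{143}{288}D$, so I would apply the above with $k=\lceil \tfrac{143}{288}D\rceil$ to get a bound of the form $(C'D/n)^{\Omega(D)}$.

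Next, I would count the relevant subsets: by the standard bound that the number of connected subgraphs of an arbitrary graph of given total degree $D$ containing a fixed vertex is at most $(2\emm)^{2D}$ (see, e.g., \cite[Lemma~6]{GALANIS2022104894}), the number of connected vertex sets $S\subseteq V_K$ with $\deg_K(S)=D$ is at most $n_K(2\emm)^{2D}$. Since $D\geq 3|S|\geq 3$, summing first over $D$ and then over $|S|\leq \log^2 n_K$, the expected number of bad sets is at most
\[
\sum_{D\geq 3}\,n_K (2\emm)^{2D}\Bigl(\tfrac{C'D}{n_K}\Bigr)^{\Omega(D)}.
\]
Each summand has the form $n_K\exp\bigl(O(D)-\Omega(D)\log(n_K/D)\bigr)$. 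Since we only care about $D\leq \Delta\cdot \log^2 n_K = O(\log^3 n_K/\log\log n_K)=o(n_K)$, the exponent is $-\Omega(D\log n_K)$, and summing over $D\geq 3$ gives $O(n_K^{-\Omega(1)})=o(1)$. Markov's inequality then yields the claim whp.

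The main obstacle is the first estimate: obtaining a clean probability bound for $e_K(S)$ in the configuration model that is tight enough to beat the $(2\emm)^{2D}$ counting factor. Once one has the correct $(CD/M)^{\Omega(D)}$ tail, the union bound closes easily because both the counting bound and the probability estimate are exponential in $D$ and the probability wins as soon as $D=o(n_K)$, which is automatic for $|S|\leq \log^2 n_K$ under the maximum-degree bound of~(i). Conditioning back on the typical degree sequence (and transferring from $\Kb(n,d)$ to the kernel of $\G(n,d/n)$ via contiguity in Proposition~\ref{prop:giant-model}) finishes the proof.
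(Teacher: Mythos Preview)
The paper does not give its own proof of this lemma; it imports it verbatim from \cite[Lemma~15]{GALANIS2022104894}, after checking (in the paragraph preceding the statement) that the degree-sequence hypotheses of that reference --- minimum degree $3$, maximum degree at most $n^{1/50}$, and $\sum_i D_i^2=O(n_K)$ --- hold whp for $K\sim\Kb(n,d)$. Your first-moment argument on the configuration model is the standard route to such small-set expansion statements and is presumably what the cited reference does.

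There is, however, a genuine gap in your counting step. The bound ``at most $n_K(2\emm)^{2D}$ connected vertex sets of total degree $D$'' is a deterministic statement about the \emph{realised} graph $K$; you cannot multiply it against the probability $\Pr[e_K(S)>\tfrac{143}{288}D]$, which is computed for a \emph{fixed} vertex set over the random matching, because the collection of connected sets is itself a function of that same randomness. A valid union bound must range over a family that is determined before the matching is revealed.

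The fix is immediate and actually simplifies the argument: drop the connectedness requirement. Since ``$S$ connected with $|E_K(S,\overline S)|<D/144$'' already forces $e_K(S)>\tfrac{143}{288}D$, it suffices to union-bound over \emph{all} vertex sets $S$ with $|S|\leq \log^2 n_K$. With minimum degree $3$ one has $|S|\leq D/3$, so for each $D$ there are at most $\binom{n_K}{D/3}\leq (3\emm n_K/D)^{D/3}$ candidates. Because $\tfrac{143}{288}>\tfrac13$, your tail bound $(CD/M)^{(143/288)D}$ beats this by a factor $(D/n_K)^{\Omega(D)}$, and the sum over $D\leq D_{\max}\log^2 n_K=o(n_K)$ is $o(1)$ exactly as in your last display. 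Alternatively, union-bounding over pairs (labelled spanning tree on $S$, choice of excess internal half-edge pairs) gives a count of the flavour you intended while handling the randomness correctly.
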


We note that while~\cite{GALANIS2022104894} gives total-degree-expansion for \textit{all} sets of size \(\leq n_K/2\), the expansion on sets of larger size is worse than we require. To improve on their result, we make use of the weak expansion of \(G\). Observe that we can directly translate the size of the respective cuts.

\begin{observation}\label{obs:cuts-are-preserved}
    For any vertex set of the kernel, \(S_K\subseteq V_K\), let \(S_G\subseteq V_G\) be the vertex set containing: all vertices corresponding to those in \(S_K\), all vertices lying on the expanded edges of \(G_K[S_K]\), and all vertices belonging to the attached trees adjacent to vertices corresponding to \(S_K\) in \(C\). Then \(|E_K(S_K,\overline{S_K})| = |E_G(S_G,\overline{S_G})|\).
\end{observation}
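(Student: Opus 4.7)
My approach is to exhibit an explicit bijection $\phi : E_K(S_K, \overline{S_K}) \to E_G(S_G, \overline{S_G})$, which immediately yields the claimed equality. Recall that the kernel $G_K$ is obtained from $C$ by first peeling off the attached forest (leaving the \emph{core}, which has minimum degree $\geq 2$) and then contracting each maximal induced path of degree-$2$ core-vertices into a single kernel edge. Thus each $e_K \in E_K$ corresponds to a unique path $u = w_0, w_1, \dots, w_\ell = v$ in $C$ whose internal vertices have degree exactly~$2$ in $C$ and therefore neither lie in $V_K$ nor are incident to any attached tree; moreover every attached tree is rooted at a single vertex of $V_K$.

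I would define $\phi$ by sending each cut kernel edge $e_K = \{u, v\}$ with $u \in S_K$, $v \in \overline{S_K}$ to the first edge $\{w_0, w_1\}$ of its corresponding $C$-path. To see $\phi(e_K) \in E_G(S_G, \overline{S_G})$, note that $w_0 = u \in S_G$, while $w_1 \notin S_G$ by the three-part definition of $S_G$: $w_1 \notin S_K$ (either $w_1 = v \in \overline{S_K}$ when $\ell = 1$, or $w_1 \notin V_K$ when $\ell \geq 2$); $w_1$ is not an internal vertex of any expanded edge of $G_K[S_K]$, since its unique containing kernel edge is $e_K$, which is not in $G_K[S_K]$; and $w_1$ is a core-vertex and hence belongs to no attached tree. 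Injectivity of $\phi$ is then immediate, since distinct kernel edges correspond to internally vertex-disjoint $C$-paths, so their first edges differ.

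The main step, and where I would spend the most care, is surjectivity. For any $e = \{x, y\} \in E_G(S_G, \overline{S_G})$, I would first rule out the two alternative reasons for which $x$ could belong to $S_G$ with $x \notin S_K$: if $x$ were an internal vertex of an expanded edge of $G_K[S_K]$, both its $C$-neighbours would lie on that same path between two $S_K$-vertices, forcing $y \in S_G$; and if $x$ were in an attached tree rooted at some $u \in S_K$, every $C$-neighbour of $x$ would lie in that same tree or be $u$ itself, again forcing $y \in S_G$. Hence $x \in S_K$, and since $y \notin S_G$ cannot belong to an attached tree hanging off $x$, the edge $\{x, y\}$ is a core-edge. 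It is therefore the first edge on the $C$-path expanding some unique kernel edge $e_K$ from $x$ to some $z \in V_K$; the condition $y \notin S_G$ forces $z \in \overline{S_K}$, so $e_K$ is a cut kernel edge with $\phi(e_K) = \{x, y\}$. The whole argument is combinatorial bookkeeping around the kernel construction, with the main care being to track where attached trees can occur; no probabilistic estimates from this appendix are needed.
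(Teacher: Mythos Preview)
Your proof is correct and takes essentially the same approach as the paper's: your bijection $\phi$ is exactly the map the paper uses in one direction, and your surjectivity argument is the paper's other-direction injection repackaged. The only difference is cosmetic---you frame it as a single bijection with well-definedness, injectivity, and surjectivity, whereas the paper exhibits two separate injections---and your case analysis for surjectivity is slightly more explicit than the paper's.
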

\begin{proof}
    For any \(e\in E_G(S_G,\overline{S_G})\), there is a corresponding \(e'=\{v,w\}\in E_K\) such that \(e\) is an edge on the path corresponding to \(e'\) (as clearly, \(e\) cannot be an edge going to an attached tree, since all edges in a tree attached to vertices of \(S_K\) in \(C\) are in \(S_G\)), and an endpoint of \(e\) must be a kernel vertex -- in particular it is also an endpoint of \(e'\), say \(v\in S_K\). But then if \(e'\not\in E_K(S_K,\overline{S_K})\), it must be the case \(v,w\in S_K\). Then all vertices on the path corresponding to \(e'\) are in \(S_G\), which implies \(e\not\in E_G(S_G,\overline{S_G})\). Contradiction. Thus \(e'\in E_K(S_K,\overline{S_K})\).
    
    Also note that for there cannot be two edges \(e_1\neq e_2\in E_G(S_G,\overline{S_G})\) with the same corresponding kernel-edge: since \(S_G\) is a the respresentation of \(S_K\) in \(C\), both \(e_1,e_2\) must have an endpoint in \(S_K\). But these endpoints cannot be identical, as then \(e_1 = e_2\) (as they are on a single induces path). Thus they must be on the other ends of maximal induced path. But then the both endpoints of the path are in \(S_K\), thus all vertices on this path are in \(S_G\), contradicting that \(e_1,e_2\) are in a cut.

    Therefore \(|E_K(S_K,\overline{S_K})|\geq |E_G(S_G,\overline{S_G})|\).

    For the converse, if \(e\in E_K(S_K,\overline{S_K})\), say \(e = \{v,w\}\), \(v\in S_K\), then the first edge \(e'\) on the induced path corresponding to \(e\) in \(G\) going from \(v\) is in \(E_G(S_G,\overline{S_G})\). This mapping \(e\mapsto e'\) is an injection, as it has a unique inverse (contracting a maximal induced path gives a unique vertex set).
\end{proof}

With these two results in hand, we can prove expansion of the kernel.

\begin{proposition}\label{prop:expansion-of-kernel}
    For all \(d\) large enough, w.h.p. over \(G\sim\G(n,d/n)\), the kernel \(G_K\) is a \emph{\(\tfrac{1}{144}\)-total-degree-expander}, i.e. for \emph{every} set \(S\subseteq V_K\) with \(\deg_K(S) \leq \tfrac 12 \deg_K(G_K)\), \(|E_K(S,\overline S)|\geq \tfrac{1}{144} \deg_K(S)\). 
\end{proposition}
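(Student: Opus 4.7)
The plan is to reduce the statement to connected subsets of $V_K$, handle small connected sets via Lemma~\ref{lem:small-set-expansion}, and pass to the $G$-representation via Observation~\ref{obs:cuts-are-preserved} for larger connected sets, combining the weak expansion of $C$ (Proposition~\ref{prop:expansion-of-core}) with the linear-size expansion in $G$ (Lemma~\ref{lem:avg-deg-of-medium-sets}) and the internal-edge bound (Lemma~\ref{lem:not-too-many-internal-edges}).

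First I would reduce to connected $T \subseteq V_K$. Given any $S \subseteq V_K$ with $\deg_K(S) \leq \tfrac12 \deg_K(V_K)$, let $C_1, \dots, C_r$ be the connected components of the induced subgraph $K[S]$. No $K$-edges run between distinct $C_i$'s, so every edge of $E_K(S, \overline{S})$ has its unique $S$-endpoint in exactly one $C_i$, giving $|E_K(S, \overline{S})| = \sum_i |E_K(C_i, \overline{C_i})|$ while $\deg_K(S) = \sum_i \deg_K(C_i)$ with each $\deg_K(C_i) \leq \tfrac12 \deg_K(V_K)$. Hence it suffices to prove the expansion bound for connected $T$. For such a $T$ with $|T| \leq \log^2 n_K$, Lemma~\ref{lem:small-set-expansion} (applied whp via contiguity to $\Kb(n,d)$, Proposition~\ref{prop:giant-model}) directly yields $|E_K(T, \overline{T})| \geq \deg_K(T)/144$.

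For connected $T$ with $|T| > \log^2 n_K$, I would pass to the $G$-representation $T_G$ from Observation~\ref{obs:cuts-are-preserved}: it is connected in $C$ with $|T_G| \geq |T| > A \log n / d$ for large $n$; cut preservation gives $|E_K(T, \overline{T})| = |E_G(T_G, \overline{T_G})|$; and $\deg_G(T_G) \geq \deg_K(T)$ since each kernel edge incident to $T$ expands to a $G$-path whose internal vertices and attached subtrees lie inside $T_G$ and contribute purely to $e_G(T_G)$. If $|T_G| \leq n/6$, the second case of Proposition~\ref{prop:expansion-of-core} yields $|E_G(T_G, \overline{T_G})| \geq \tfrac35 \deg_G(T_G) \geq \tfrac35 \deg_K(T)$, well above $\deg_K(T)/144$. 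For $n/6 < |T_G| \leq (1-\xi) n$ with a suitable constant $\xi > 0$, Lemma~\ref{lem:avg-deg-of-medium-sets} gives $|E_G(T_G, \overline{T_G})| = \Omega(d |T_G|)$; combined with $\deg_K(T) \leq \deg_G(T_G) = O(d |T_G|)$ (using Proposition~\ref{prop:exponentially-few-small-degree-vertices} and the $O(\log n / \log\log n)$ whp bound on the maximum degree of $G$), the ratio is a constant much larger than $1/144$.

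The main obstacle is the residual range $|T_G| > (1-\xi) n$, where $|\overline{T_G}|$ may be sub-linear in $n$ because a few high-degree kernel vertices in $\overline{T}$ can concentrate a large fraction of the degree with few vertices, so Lemma~\ref{lem:avg-deg-of-medium-sets} does not apply directly to $T_G$. Here I would use $\deg_G(\overline{T_G}) \geq \deg_K(\overline{T}) \geq \tfrac12 \deg_K(V_K) = \Omega(dn)$ and apply Lemma~\ref{lem:not-too-many-internal-edges} componentwise to $\overline{T_G}$ to bound $e_G(\overline{T_G})$ away from $\tfrac12 \deg_G(\overline{T_G})$; the identity $|E_G(T_G, \overline{T_G})| = \deg_G(\overline{T_G}) - 2 e_G(\overline{T_G})$ then gives $|E_G(T_G, \overline{T_G})| = \Omega(\deg_K(V_K))$, which comfortably dominates $\deg_K(T)/144 \leq \deg_K(V_K)/288$. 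Summing over connected components of the original $S$ concludes the proof.
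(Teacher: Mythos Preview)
Your overall architecture matches the paper's: reduce to connected $T\subseteq V_K$, use Lemma~\ref{lem:small-set-expansion} for small $T$, and for larger $T$ pass to the $G$-representation $T_G$ via Observation~\ref{obs:cuts-are-preserved} together with $\deg_G(T_G)\ge\deg_K(T)$. Your treatment of the range $|T_G|\le n/6$ via Proposition~\ref{prop:expansion-of-core} is exactly the paper's. For the intermediate range the paper uses Lemma~\ref{lem:not-too-many-internal-edges} (with $\xi=2/3$) together with Proposition~\ref{prop:average-degree-of-connected-sets} to get $|E_G(T_G,\overline{T_G})|/\deg_G(T_G)\ge 1-\tfrac{8}{9}-\tfrac{10}{9d}$, whereas you invoke the cut lower bound from Lemma~\ref{lem:avg-deg-of-medium-sets}; both routes give a constant well above $1/144$.

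Where you diverge substantively is the final range $|T_G|>(1-\xi)n$. Your workaround via Lemma~\ref{lem:not-too-many-internal-edges} applied ``componentwise'' to $\overline{T_G}$ has a gap: that lemma only applies to connected sets of size at least $\tfrac{A}{d}\log n$, and you do not say what happens for the (potentially many) smaller components of $\overline{T_G}$. This can be patched (e.g.\ via Lemma~\ref{lem:logn-sets-are-treelike}, which gives $e_G(S')\le |S'|+O(1)$ for such pieces, and then summing), but it is extra work you have not done. More to the point, the paper avoids this case altogether: using Proposition~\ref{prop:exponentially-few-small-degree-vertices} one gets $\deg_K(T)\ge (1-d^{-1/3})d\,|T_G|-2dn\emm^{-\tfrac12 d^{1/3}}$, so once $|T_G|\ge \tfrac{2n}{3}$ this forces $\deg_K(T)>\tfrac12\deg_K(V_K)$, contradicting the standing hypothesis on $T$. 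Thus the ``residual range'' is in fact empty, and no complementary-set argument is needed. With this observation your proof becomes complete and essentially coincides with the paper's.
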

\begin{proof}
    It is sufficient to show that for any connected set of total degree \(\leq \deg_K(G_K)\), the number of edges in the cut is at least \(\tfrac{1}{144}\)-fraction of the total degree. Then to handle arbitrary vertex set, combine the results for its connected components.
    
    By Lemma~\ref{lem:lower-bound-on-kernel-size}, w.h.p. \(n_K \geq n(1-\emm^{-d/3})\), thus w.h.p. \(\log^2 n_K \geq \log n\). Then w.h.p., by contiguity of \(\Kb(n,d)\) and Lemma~\ref{lem:small-set-expansion}, for any \(|S_K|\leq \log n\), \(|E_K(S_K,\overline{S_K})|\geq\deg_K(S_K)/144\).
    Also assume that \(d\) is large enough so that all previously proven w.h.p. results hold, in particular:
    Propositions~\ref{prop:average-degree-of-connected-sets} (with \(\epsilon = \tfrac{1}{10}\)), and~\ref{prop:expansion-of-core}, and~\ref{lem:not-too-many-internal-edges} (with \(\xi = 2/3\)). 
    Further on we assume \(G\) is such that the properties from these results hold. 

    Let \(S_G\subseteq V_G\) be the set as in the Observation~\ref{obs:cuts-are-preserved}. 
    We divide the proof into four cases, based on the size of \(S_K\) and \(S_G\).
    \begin{enumerate}
        \item If \(|S_K|\leq \log n\), then the Lemma~\ref{lem:small-set-expansion} gives \(|E_K(S_K,\overline S_K)|\geq \deg_K(S_K) / 144\).
        
        \item If \(\log n\leq |S_K| \leq |S_G| \leq \tfrac n6\) (note that the middle inequality always holds), then by Proposition~\ref{prop:expansion-of-core} and the above observation we get
        \[
            |E_K(S_K,\overline{S_K})| = |E_G(S_G,\overline{S_G})| \geq \frac{3}{5} \deg_G(S_G) \geq \frac{3}{5}\deg_K(S_K),
        \]
        where the last inequality follows from \(\deg_K(S_K) \leq \deg_G(S_G)\).

        \item If \(\tfrac n6 < |S_G| \leq \tfrac{2n}{3}\), then Lemma~\ref{lem:not-too-many-internal-edges} we gives \(e_G(S_G) \leq (\tfrac 49d + 1)|S_G|\), and Proposition~\ref{prop:average-degree-of-connected-sets} gives \(\deg_G(S_G)\geq \tfrac{9d}{10}|S_G|\). Thus
        \[
        \frac{|E_K(S_K,\overline{S_K})|}{\deg_K(S_K)}\geq \frac{|E_G(S_G,\overline{S_G})|}{\deg_G(S_G)} = 1 - \frac{2e_G(S_G)}{\deg_G(S_G)} \geq 1 - \frac{8}{9} - \frac{10}{9d},
        \]
        where the first inequality follows from \(\deg_K(S_K)\leq\deg_G(S_G)\) and the second from the above inequalities. To finish, note that for all \(d\) large enough, this is at least \(\tfrac{1}{10}\).

        \item Finally, we rule out the case \(|S_G|\geq \frac{2n}{3}\).
        
        By Proposition~\ref{prop:exponentially-few-small-degree-vertices}, \(\deg_K(S_K)\geq (1-d^{-1/3})d|S_G| - 2dn\emm^{-\frac 12 d^{1/3}} \geq nd[1-d^{-1/3}-2\emm^{-\frac 12 d^{1/3}}] \geq 2dn/3\) for all \(d\) large enough. But also (w.h.p.) \(|E_K|\leq |E_G|\leq nd / 2 + o(n)\), so for any such set \(S_K\), \(\deg_K(S_K)\geq\frac{1}{2}\deg_K(G_K)\), hence we can rule this case out.\qedhere
    \end{enumerate}    
\end{proof}

\subsubsection{Proof of Lemma~\ref{lem:giant-component-in-the-core-after-edge-removal}}

Now we are equipped to prove Lemma~\ref{lem:giant-component-in-the-core-after-edge-removal}. To do so, we first prove similar statements for the case of degree-expander graphs. The following Lemma is a modification of \cite[Lemma 2.3]{trevisan2016expanders}.

\begin{lemma}[Analogue of {\cite[Lemma 2.3]{trevisan2016expanders}}]\label{lem:giant-component-after-edge-removal-in-expanders}
    Let \(H=(V_H,E_H)\) be a \(\alpha\)-total degree expander, i.e. for any vertex set \(S\subseteq V\) with \(\deg_H(S)\leq \tfrac12\deg_H(H)\), \(E_H(S,\overline S)\geq\alpha\deg_H(S)\).

    Then, for any \(\theta\in(0,\alpha)\) and any subset of edges \(E'\subseteq E_H\) with \(|E'|\leq \theta |E_H|\), it holds that \((V_H, E_H\setminus E')\) contains a component with total degree at least \(\left(1 - \tfrac{\theta}{2\alpha}\right)\deg_H(V_H)\).
\end{lemma}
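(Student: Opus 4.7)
My plan is to carry out a short case analysis on the total degree of $T:=V_H\setminus C^*$, where $C^*$ denotes the largest-total-degree component of $(V_H,E_H\setminus E')$, and to use the $\alpha$-total-degree expansion of $H$ together with the simple but crucial observation that every cut between distinct components of $(V_H,E_H\setminus E')$ must be entirely contained in $E'$. In particular, $E_H(T,C^*)\subseteq E'$ and, more generally, $E_H(C_i,\overline{C_i})\subseteq E'$ for every component $C_i$.

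In the first case, suppose $\deg_H(T)\leq \tfrac{1}{2}\deg_H(V_H)$. Then expansion applies directly to $T$, giving $\alpha\deg_H(T)\leq |E_H(T,C^*)|\leq |E'|\leq \theta|E_H|=\tfrac{\theta}{2}\deg_H(V_H)$, so $\deg_H(T)\leq \tfrac{\theta}{2\alpha}\deg_H(V_H)$, which is exactly the desired lower bound on $\deg_H(C^*)$.

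The remaining case is $\deg_H(T)>\tfrac{1}{2}\deg_H(V_H)$, where I would rule out the case entirely by contradiction. Here $\deg_H(C^*)<\tfrac{1}{2}\deg_H(V_H)$, and since $C^*$ is the largest component, \emph{every} component $C_i$ satisfies $\deg_H(C_i)\leq \tfrac{1}{2}\deg_H(V_H)$, so expansion applies to each one. Summing over all components,
\[
\alpha\deg_H(V_H)=\sum_i\alpha\deg_H(C_i)\leq \sum_i|E_H(C_i,\overline{C_i})|\leq 2|E'|\leq \theta\deg_H(V_H),
\]
where the factor $2$ comes from the fact that each edge of $E'$ lies in the boundary of at most two distinct components. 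This forces $\alpha\leq\theta$, contradicting the hypothesis $\theta\in(0,\alpha)$. Hence this case cannot occur, and the conclusion follows from the first case.

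The argument is essentially bookkeeping once one spots the containment $E_H(C_i,\overline{C_i})\subseteq E'$; the only mild subtlety is the factor-of-$2$ double counting in the second case, and I expect no real obstacle beyond that.
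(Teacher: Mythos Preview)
Your argument is correct and matches the paper's proof almost verbatim: the paper also splits on whether the largest-degree component $C_1$ has $\deg_H(C_1)\geq\tfrac12\deg_H(V_H)$ (your Case~1, applying expansion to the complement $T=C_2\cup\cdots\cup C_k$) or not (your Case~2, summing expansion over all components with the same factor-of-$2$ double count to derive $|E'|\geq\alpha|E_H|$, a contradiction). The only cosmetic difference is the ordering of the two cases.
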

\begin{proof}
    Let \(C_1,\dots,C_k\) be components of \((V_H,E_H\setminus E')\) and wlog \(C_1\) is the component of the highest total degree.

    \noindent{\bf Case 1.} \(\deg_H(C_1) < \frac{\deg_H(V_H)}{2}\). Then for all \(1\leq i\leq k\), \(\deg_H(C_i) < \frac{\deg_H(V_H)}{2}\), and hence by Proposition~\ref{prop:expansion-of-kernel} to get \(|E'| \geq \frac{1}{2} \sum_{i=1}^k \alpha \deg_H(C_i) = \alpha |E_H|\). However, \(|E'|\leq \theta |E_H|\leq \alpha|E_H|\). Contradiction, this case cannot happen.

    \noindent{\bf Case 2.} \(\deg_H(C_1) \geq \frac{\deg_H(V_H)}{2}\). Then it follows \(\deg_H(C_2 \cup \dots \cup C_k) \leq \deg_H(V_H) / 2\), and thus we can use the third expansion property to get
    \[|E'|\geq \alpha \deg_H(C_2\cup\dots\cup C_H) = \alpha(\deg_H(V_H) - \deg_H(C_1))\]
    This gives $
    \deg_H(C_1) \geq \deg_H(V_H) - |E'|/\alpha \geq (1 - \frac{\theta}{2\alpha})\deg_H(V_H)$.
\end{proof}

Since w.h.p. the kernel is \(\tfrac{1}{144}\)-total-degree expander, we can apply this lemma for the kernel. We use that the number of edges in \(C\) is a small fraction away from the number of kernel edges, and hence get Lemma~\ref{lem:giant-component-in-the-core-after-edge-removal}, which we restate for convenience.

\giantafterremoval*
\begin{proof}[Proof of Lemma~\ref{lem:giant-component-in-the-core-after-edge-removal}]
    Let \(E'_K\) be the set of edges in the kernel whose corresponding induced path in \(G\) contains \textit{at least one} edge from \(E'\). Notice this does not increase the number of edges, i.e. \(|E'_K|\leq |E'|\). Also, from Lemma~\ref{lem:not-too-many-expanded-edges-overal}, \(|E_C|\leq (1+\emm^{-d/3})|E_K|\), thus
    \[
    |E'_K|\leq \frac{|E_K|}{144}(2\theta-2\emm^{-d/3})(1+\emm^{-d/3}) < \frac{|E_K|}{144}.
    \]
    Note that for all \(d\) large enough, \(\tfrac{1}{100}\leq 2\emm^{-d/3}\), thus \(|E'|\leq \frac{|E_C|}{144}(2\theta-2\emm^{-d/3})\).
    Hence we can apply Lemma~\ref{lem:giant-component-after-edge-removal-in-expanders} to get that \((V_K,E_K\setminus E'_K)\) contains a giant component of total degree at least
    \[
    (1 - (\theta - \emm^{-d/3}))(1+\emm^{-d/3}))\deg_K(V_K) \geq \frac{1 - (\theta - \emm^{-d/3})(1+\emm^{-d/3})}{1+\emm^{-d/3}} \deg_G(V_C)\geq (1-\theta)|\deg_G(V_C).
    \]
    To finish, we note that vertices of this component are also in the same component of \((V_C,E_C\setminus E')\), their degrees do not decrease, and hence there is a component of \((V_C,E_C\setminus E')\) with size \(\geq (1-\theta)\deg_G(V_C)\). The lemma follows.
\end{proof}

\section{Weak spatial mixing within the disordered phase}\label{sec:disordered}

In this section, we prove WSM within the disordered phase at a distance \(r\sim \frac{\log n}{d}\). We make use of \cite[Lemma 5.3]{blanca2021random}, which says that in order to bound \(|\pi^\dis_C(1_e) - \pi_{B_r^-(v)}(1_e)|_\TV\), it is enough to bound the event that there is component of in-edges of size \(\geq r-1\). To capture the probability of this event we use \textit{disordered polymers} (analogously to \cite{RCM-Helmuth2020}).
\begin{definition}
    For \(F\in\Omega^\dis_C\), the \emph{disordered polymers} of \(F\) are \emph{connected components} of \((V_C,F)\). Let \(\Gamma^\dis(F)\) be the set of all disordered polymers of \(F\).
    For a disordered polymer \(\gamma = (U, B)\in\Gamma^\dis(F)\), its \textit{weight} is defined as $  w^\dis_C(\gamma) := q^{1-|U|} (\emm^\beta-1)^{|B|}.$
\end{definition}

Recall that \(\beta_1 = \beta_1(q)\) is the temperature satisfying \(\emm^{\beta_1}-1 = q^{(2+1/10)/d}\). We next show that when \(\beta \leq \beta_1\), then weight of all sufficiently large polymers is exponentially decaying in their size. 

\begin{lemma}\label{lem:disordered-polymer-weight-decay}
    There exists a real \(A > 0\), such that for all \(d\) sufficiently large, w.h.p. over \(G\sim\G(n,d/n)\) the following holds for all real parameters \(0 <q, 0 < \beta \leq \beta_1\). Let \(F\in\Omega_C^\dis\), and \(\gamma\in\Gamma^\dis(F)\). If \(|V_G(\gamma)|\geq\frac{A\log n}{d}\), then also \[w^\dis_C(\gamma) \leq q^{1-|V_G(\gamma)|/10}.\]
\end{lemma}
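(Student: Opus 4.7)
The goal is to bound $w^\dis_C(\gamma) = q^{1-s}(\emm^\beta - 1)^{|B|}$, where I write $s := |V_G(\gamma)|$ and $B$ for the edge set of $\gamma$. Rearranging, the claim is equivalent to $(\emm^\beta - 1)^{|B|} \leq q^{9s/10}$. Since $\beta\leq \beta_1$, equation~\eqref{eq:beta0beta1} gives $\emm^\beta - 1 \leq q^{(2+1/10)/d}$, so in the relevant regime $q\geq 1$ (the Lemma is applied for large $q$) we get
\[(\emm^\beta - 1)^{|B|} \leq q^{(2+1/10)|B|/d}.\]
It therefore suffices to establish the purely combinatorial bound $|B| \leq \frac{9ds}{21}$.

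My plan is to prove this by a case split on $s$ using a small constant $\xi$ (say $\xi = 1/100$) as threshold. For \emph{small} polymers with $\tfrac{A\log n}{d}\leq s\leq \xi n$, the set $U = V_G(\gamma)$ is connected in $(V_C,F)$ hence in $C$, and its size falls into the range where Lemma~\ref{lem:not-too-many-internal-edges} applies. That lemma yields $|B|\leq e_G(U) < (2\xi d/3 + 1)\,s = (d/150 + 1)s$, which sits comfortably below $9ds/21$ for all $d$ sufficiently large. The lemma's constant $A$ will be taken to dominate the $A(\xi)$ coming from Lemma~\ref{lem:not-too-many-internal-edges}.

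For \emph{large} polymers with $s > \xi n$, I would just use the crude bound $|B| \leq |F| \leq \eta|E_C| \leq \eta(nd/2 + o(n))$, which follows directly from $F\in \Omega_C^\dis$ together with the estimate $|E_C|=dn/2+o(n)$ that holds whp (cf.\ the discussion following Theorem~\ref{thm:phase-transition}). With $\eta = 1/1000$ and $s > n/100$, this yields $|B|/s \leq \eta d/(2\xi) + o(1) = d/20 + o(1)$, which is again well below $9d/21$. Combining the two regimes gives $|B| \leq 9ds/21$ for all $s\geq A\log n/d$, and plugging this back completes the proof.

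The argument is essentially arithmetic bookkeeping, so I do not expect any substantive obstacle; the only subtle point is calibrating the threshold $\xi$. As $\xi$ grows, the small-polymer bound from Lemma~\ref{lem:not-too-many-internal-edges} degrades while the large-polymer bound from $|F|\leq\eta|E_C|$ improves, and one must choose $\xi$ so that both regimes comfortably satisfy $|B| \leq 9ds/21$. The choice $\xi = 1/100$ works with ample slack in both directions, so no fine-tuning is required.
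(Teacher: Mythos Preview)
Your proposal is correct and follows essentially the same route as the paper: a case split on the polymer size, invoking Lemma~\ref{lem:not-too-many-internal-edges} for small polymers and the crude bound $|B|\leq |F|\leq \eta|E_C|$ for large ones. The only cosmetic differences are that the paper uses threshold $\xi=1/6$ rather than $1/100$, and carries the arithmetic directly in the exponent of $q$ rather than first isolating the inequality $|B|\leq 9ds/21$; your explicit remark that the argument needs $q\geq 1$ is a fair caveat (the paper silently assumes the same).
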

\begin{proof}
    We divide the proof into two cases. First consider the case \(|V_G(\gamma)| \geq \frac{n}{6}\). Then also note that since \(\gamma\in\Gamma^\dis(F)\), \(|e_G(\gamma)|\leq |F| \leq \eta |E_C| \leq 3\eta dn / 4\) (the last inequality follows w.h.p. over \(G\) from \(|E_C|\leq|E_G| = nd/2 + o(n)\)).

    Thus, using that \(\emm^\beta-1 \leq q^{(2+1/10)/d}\),
    \[
    w^\dis_C(\gamma)\leq q^{1-n/6 + \frac{2+1/10}{d} (3\eta dn / 4)} \leq q^{1-n[1/6 - 2\eta ]}.
    \]
    Since \(\eta < 1/30\), \(1/6 - 2\eta \geq 1/10\). Thus \(w^\dis_C(\gamma)\leq q^{1-n/10}\leq q^{1-|V_G(\gamma)|/10}\), where the last inequality follows from \(|V_G(\gamma)|\leq n\).

    Now let \(A\) be the constant from Lemma~\ref{lem:not-too-many-internal-edges} (with \(\xi = n/6\)) and thus consider the case \(\frac{A\log n}{d}\leq|V_G(\gamma)| < n/6\). By the Lemma, w.h.p. it holds that any such \(\gamma\) has \(e_G(V_G(\gamma))\leq(d/9+1)|V_G(\gamma)|\).    Since we have \(e_G(\gamma) \leq e_G(V_G(\gamma)) \leq (d/9 + 1)|V_G(\gamma)|\), we combine the inequalities to obtain:
    \[
    w^\dis_C(\gamma)\leq q^{1-|V_G(\gamma)|(1 - \frac{21}{10d}(d/9+1))} \leq q^{1-|V_G(\gamma)|(23/30-21/10d)}.
    \]
    For all \(d\) large enough this is at most \(q^{1-|V_G(\gamma)|/10}\).
\end{proof}


Next we show that polymer weight is an upper bound on the the probability of it occurring in the disordered phase.

\begin{lemma}
    For all connected graphs \(C\), the following holds. For any \(F\in\Omega^\dis_C\),
    \[
w_C(F)=q^{n_C}\prod_{\gamma\in\Gamma^\dis(F)} w^\dis_C(\gamma),
    \]and for any \(\gamma\in\bigcup_{F\in\Omega_C^\dis}\Gamma^\dis(F)\), it holds that $\pi^\dis_C(\gamma\in\Gamma^\dis(\cdot))\leq w^\dis_C(\gamma).$
\end{lemma}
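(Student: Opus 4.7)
I will prove the two statements separately, since part~(1) is a direct calculation from the definitions and part~(2) is a standard injection argument of the type already carried out for ordered polymers (Lemma~\ref{lem:bbbb}).

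\textbf{Part (1): Multiplicativity of the weight.}
Since the disordered polymers of $F$ are precisely the connected components of $(V_C,F)$, writing $\Gamma^{\dis}(F)=\{\gamma_1,\dots,\gamma_k\}$ with $\gamma_i=(U_i,B_i)$ one has the disjoint decompositions $V_C=\bigsqcup_i U_i$ and $F=\bigsqcup_i B_i$, together with $c(F)=k$. Hence
\[
\prod_{i=1}^k w^\dis_C(\gamma_i)
= \prod_{i=1}^k q^{1-|U_i|}(\emm^\beta-1)^{|B_i|}
= q^{k-n_C}(\emm^\beta-1)^{|F|}
= q^{-n_C} w_C(F),
\]
which rearranges to the stated identity.

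\textbf{Part (2): Injection argument.}
Fix $\gamma=(U,B)\in\bigcup_{F\in\Omega_C^\dis}\Gamma^\dis(F)$. If $B=\emptyset$, then $\gamma$ is a singleton polymer with $w^\dis_C(\gamma)=1$, so the bound is trivial. Assume $B\neq\emptyset$. For each $F\in\Omega^\dis_C$ with $\gamma\in\Gamma^\dis(F)$, define $F':=F\setminus B$. Since $\gamma$ is an entire connected component of $(V_C,F)$, deleting its $|B|$ edges breaks that component into $|U|$ isolated vertices while leaving all other components unchanged, so $c(F')=c(F)+|U|-1$ and $|F'|=|F|-|B|$. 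Therefore
\[
\frac{w_C(F')}{w_C(F)} = q^{|U|-1}(\emm^\beta-1)^{-|B|} = \frac{1}{w^\dis_C(\gamma)}.
\]
Moreover $F'\in\Omega^\dis_C$ because $|F'|\leq|F|\leq\eta|E_C|$, and $F\neq F'$ since $B\neq\emptyset$.

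The map $F\mapsto F'$ is injective (it simply removes the fixed edge set $B$), and its image is disjoint from its domain: for any $F$ in the domain we have $B\subseteq F$, whereas $F'\cap B=\emptyset$, so $F'$ cannot itself contain $\gamma$ as a polymer. Writing $\Omega^\dis_C(\gamma):=\{F\in\Omega^\dis_C:\gamma\in\Gamma^\dis(F)\}$, and using that both $F$ and $F'$ contribute to the denominator of $\pi^\dis_C$, we get
\[
\pi^\dis_C(\gamma\in\Gamma^\dis(\cdot))
= \frac{\sum_{F\in\Omega^\dis_C(\gamma)} w_C(F)}{\sum_{F'\in\Omega^\dis_C} w_C(F')}
\leq \frac{\sum_{F\in\Omega^\dis_C(\gamma)} w_C(F)}{\sum_{F\in\Omega^\dis_C(\gamma)}\bigl(w_C(F)+w_C(F')\bigr)}
= \frac{1}{1+1/w^\dis_C(\gamma)}\leq w^\dis_C(\gamma),
\]
as required.

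\textbf{Where the difficulty sits.}
Unlike the ordered case (Lemma~\ref{lem:bbbb}), there is essentially no structural obstacle here: the disordered polymers are literally the connected components of $F$, so the ``edge-flip'' map is the naive one (delete all edges of $\gamma$) and no subtle case analysis on what happens to other components is needed. The only small point to double-check is that $F'$ remains in $\Omega^\dis_C$, which is immediate from $|F'|\leq|F|$; everything else is a two-line calculation.
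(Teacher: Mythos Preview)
Your proof is correct and follows essentially the same approach as the paper's: both compute part~(1) by unpacking the product using that polymers partition $V_C$ and $F$ with $|\Gamma^\dis(F)|=c(F)$, and both prove part~(2) via the map $F\mapsto F'=F\setminus B$ together with the weight ratio $w_C(F')/w_C(F)=1/w^\dis_C(\gamma)$ and the bound $\frac{1}{1+1/w}\leq w$. If anything, you are slightly more explicit than the paper in verifying injectivity and that the image of the map is disjoint from its domain.
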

\begin{proof}
    For the first part, expand out the rhs using polymer weight definition, 
    \[
    q^{n_C}\prod_{\gamma\in\Gamma^\dis(F)} w^\dis_C(\gamma) = q^{n_C+|\Gamma^\dis(F)| - \sum_{\gamma\in\Gamma^\dis(F)} |V_G(\gamma)|} (\emm^\beta-1)^{\sum_{\gamma\in\Gamma^\dis(F)}  e_G(\gamma)}
    \]
    Disordered polymers of \(F\) are exactly the connected components on \((V_C,F)\), thus \(|\Gamma^\dis(F)| = c(F)\), \(\sum_{\gamma\in\Gamma^\dis(F)} |V_G(\gamma)| = n_C\). Every in-edge of \(F\) is in a polymer and polymers of \(F\) do not overlap, thus also \(\sum_{\gamma\in\Gamma^\dis(F)} e_G(\gamma) = |F|\). Hence the the first part of the lemma follows.

    Next, notice that any disordered polymer containing no edges has weight \(1\). So for any \(\gamma = (\{v\}, \emptyset)\), the inequality in the lemma is trivial. So suppose \(\gamma\) is a polymer with at least one edge. Then for any \(F\in\Omega^\dis_C\) with \(\gamma\in\Gamma^\dis(F)\), let \(F' := F\setminus E_G(\gamma)\) and note \(F'\neq F\). The components of \((V_C, F')\) are exactly the components of \((V_C, F)\), except for the vertices of \(\gamma\), that are in \(\gamma\) in \((V_C,F)\), but in isolated vertices in \((V_C,F')\). Thus \(w_C(F') = w_C(F) q^{|V_G(\gamma)| - 1} (\emm^\beta-1)^{-e_G(\gamma)} = w_C(F) / w^\dis_C(\gamma)\).

    Let $\Omega^\dis_C(\gamma)=\big\{F\in\Omega^\dis_C\mid  \gamma\in\Gamma^\dis(F)\big\}$. Then we can  write 
    \[
    \pi^\dis_C(\gamma\in\Gamma^\dis(\cdot)) = \frac{\sum_{F\in\Omega^\dis_C(\gamma)} w_C(F)}{\sum_{F\in\Omega^\dis_C} w_C(F)} \leq \frac{\sum_{F\in\Omega^\dis_C(\gamma)} w_C(F)}{\sum_{F\in\Omega^\dis_C(\gamma)} (w_C(F) + w_C(F'))} \leq \frac{1}{1+1/w^\dis_C(\gamma)}\leq w^\dis_C(\gamma).\qedhere
    \]
\end{proof}

Now, we can bound that with good probability, there are no large polymers.

\begin{proposition}\label{prop:disordered-kotecky-preiss}
    There exists a real \(A > 0\), such that for all sufficiently large \(d\), for all sufficiently large \(q\), w.h.p. over \(G\sim\G(n,d/n)\) the following holds for all \(0 < \beta \leq \beta_1\).
    \[\pi^\dis_C(\exists\gamma\in\Gamma^\dis(\cdot)\text{ with }|V_G(\gamma)|\geq\tfrac{A\log n}{d}) \leq q^{-A\log n/(25d)}
    \]
\end{proposition}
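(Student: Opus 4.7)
The plan is to follow the template of the ordered-case analogue in Proposition~\ref{prop:no-large-ordered-polymer} very closely. The first step is to combine the probability/weight comparison from the preceding lemma with a union bound over potential polymers:
$$
\pi^\dis_C\big(\exists\,\gamma\in\Gamma^\dis(\cdot) \text{ with } |V_G(\gamma)|\geq\tfrac{A\log n}{d}\big)\ \leq\ \sum_{\gamma\,:\,|V_G(\gamma)|\geq A\log n/d} w^\dis_C(\gamma).
$$
Lemma~\ref{lem:disordered-polymer-weight-decay} shows that each summand is at most $q^{1-s/10}$ for $s=|V_G(\gamma)|$, so the task reduces to enumerating potential polymers by their vertex count and proving that
$
\sum_{s\geq A\log n/d}N(s)\,q^{1-s/10}\leq q^{-A\log n/(25d)},
$
where $N(s)$ is the number of potential polymers of size~$s$ in~$G$.

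The second step is to bound~$N(s)$. A polymer $(U,F')$ of size~$s$ is a connected vertex set $U\subseteq V_C$ with $|U|=s$ together with a connected spanning edge subset $F'\subseteq E_G[U]$. For $s\leq n/6$, I would use the same enumeration of connected subgraphs as in the ordered case (at most $n\cdot (2\emm)^{2D-1}$ connected subgraphs of total degree~$D$), combined with $D=O(d)\,s$ for such sets (via Proposition~\ref{prop:average-degree-of-connected-sets} and the maximum-degree bound), to bound the number of vertex choices by $n\cdot c_1(d)^s$; Lemma~\ref{lem:not-too-many-internal-edges} then bounds $|E_G[U]|\leq O(d)\,s$, giving at most $2^{O(d)\,s}$ connected spanning edge sets per~$U$. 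Hence $N(s)\leq n\cdot c_2(d)^s$ for some $c_2=c_2(d)$. The remaining case $s\geq n/6$ is handled by the much stronger bound $w^\dis_C(\gamma)\leq q^{1-n/10}$ from the first regime of Lemma~\ref{lem:disordered-polymer-weight-decay}; together with the crude enumeration $2^{O(nd)}$ of edge configurations, this regime contributes only $q^{-\Omega(n)}$ to the overall sum once~$q$ is large relative to~$d$.

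The third step combines these into a geometric series
$$
qn\sum_{s\geq A\log n/d}\big(c_2(d)\,q^{-1/10}\big)^s \ +\ q^{-\Omega(n)}.
$$
For~$q$ sufficiently large relative to~$d$, the ratio satisfies $c_2(d)\,q^{-1/10}\leq q^{-1/20}$, so the series is dominated by its first term and the whole expression is at most $O(qn)\cdot q^{-A\log n/(20d)}$. Taking~$q$ (and hence~$A$) large enough absorbs the $qn$ prefactor into the gap between the exponents $1/20$ and $1/25$, yielding the claimed $q^{-A\log n/(25d)}$ bound. The main delicacy is that the factor $2^{O(d)\,s}$ from internal-edge enumeration is itself a large geometric in~$s$, and must be absorbed by the weight decay $q^{-s/10}$; this is precisely what forces~$q$ to be taken sufficiently large relative to~$d$, matching the regime of the proposition. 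Compared with the ordered case, the counting here is actually simpler since disordered polymers are plain connected components of the configuration, so no extra bookkeeping (analogous to the $c'(\gamma)$ term of Definition~\ref{def:45f43w}) is required.
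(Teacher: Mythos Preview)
Your overall strategy matches the paper's proof: union bound via the weight/probability comparison, plug in the weight decay $q^{1-s/10}$ from Lemma~\ref{lem:disordered-polymer-weight-decay}, enumerate potential polymers by size, bound the number of edge subsets per vertex set via Lemma~\ref{lem:not-too-many-internal-edges}, and sum the resulting geometric series. The conclusion and the final arithmetic are the same.

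There is, however, a genuine gap in your enumeration of connected vertex sets. You claim $D=O(d)\,s$ ``via Proposition~\ref{prop:average-degree-of-connected-sets} and the maximum-degree bound'', but neither ingredient gives this: Proposition~\ref{prop:average-degree-of-connected-sets} is a \emph{lower} bound on average degree, and the maximum degree of $\G(n,d/n)$ is $\Theta(\log n/\log\log n)$, not $O(d)$, so the total degree of a size-$s$ connected set could a priori be as large as $s\cdot\Theta(\log n/\log\log n)$. Feeding that into the $(2\emm)^{2D}$ bound would blow up the count. The paper sidesteps this by counting connected vertex sets directly: by Lemma~\ref{lem:expected-number-of-subtrees} the \emph{expected} number of size-$s$ subtrees of $\G(n,d/n)$ is at most $n(\emm d)^s$, and Markov's inequality (with an $n^{3/2}$ slack, union-bounded over $s$) gives at most $n^{5/2}(\emm d)^s$ connected subsets of each size whp. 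That is the missing ingredient; once you have it, your proof goes through exactly as in the paper. (An upper bound of the form $\deg_G(S)\le O(d)|S|$ for all connected $S$ of size $\ge A\log n/d$ \emph{can} be proved whp by a separate Chernoff/union-bound argument, but it is not what you cited and is not needed.)

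A minor point: the case split at $s=n/6$ is unnecessary. Lemma~\ref{lem:not-too-many-internal-edges} with $\xi=1$ already gives $e_G(S)\le(2d/3+1)|S|$ for all connected $S$ with $|S|\ge A\log n/d$, and Lemma~\ref{lem:disordered-polymer-weight-decay} covers all such sizes uniformly, so the paper treats the whole range in one shot.
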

\begin{proof}
    By Lemmas~\ref{lem:disordered-polymer-weight-decay} and~\ref{lem:not-too-many-internal-edges}, there is \(A > 0\) such that for all \(d\) large enough, w.h.p. \(G\) is such that for all \(\beta\in(0,\beta_1]\), any polymer \(\gamma\) with at least \(\frac{A\log n}{d}\) vertices has weight at most \(q^{1-|V_G(\gamma)|/10}\), and also for any \textit{connected} vertex set \(S\subseteq V_C\) of size at least \(\frac{A\log n}{d}\) has \(e_G(S)\leq (2d/3+1)|S|\). Also by union bound and Markov inequality applied to Lemma~\ref{lem:expected-number-of-subtrees}, w.h.p. \(G\) is such that for each \(1\leq s\leq n\) there are at most \(n^{5/2}(\emm d)^s\) connected subsets of size \(s\). Further on, assume \(G\) is such that the above holds.    By the previous lemma and by union bound, it is sufficient to prove that    
    \[
\sum_{\substack{\gamma\in\bigcup_{F\in\Omega_C^\dis}\Gamma^\dis\\|V_G(\gamma)|\geq A\log n/d}} w^\dis_C(\gamma) \leq  q^{-A\log n/(25d)}.
    \]
    For any connected subset \(S\) of the size \(\geq\frac{A\log n}{d}\), \(e_G(S)\leq (2d/3+1)|S|\), hence there are at most \(2^{(2d/3+1)|S|}\) possible connected subgraphs of \(G\) with the vertex set \(S\). Hence the number of polymers with \(s\geq\frac{A}{d}\log n\) vertices is at most \(n^{5/2}(\emm d)^s 2^{(2d/3+1)s}\). 
    By Lemma~\ref{lem:disordered-polymer-weight-decay}, for any polymer with \(s\geq\frac{A}{d}\log n\) vertices, its weight is at most \(q^{1-s/10}\). Thus summing through all possible sizes gives us that the the sum is at most

    \[
    \sum_{s\geq (A/d)\log n} n^{5/2}(\emm d)^s 2^{(2d/3+1)s} q^{1-s/10} = qn^{5/2} \sum_{s\geq (A/d)\log n} [2^{2d/3+1}edq^{-1/10}]^s.
    \]
    Finally, provided that \(q\) is sufficiently large, \(2^{2d/3+1}edq^{-1/10} \leq q^{-1/20}\), and for \(q\) large enough, the sum can be also bound by \(O(n^{5/2} q^{-\frac{A \log n}{d} / 20}) \leq  q^{-A\log n/(25d)}\).
\end{proof}

Now we can prove WSM within the disordered phase.

\begin{proposition}\label{prop:wsm-disorder}
    There exists a real \(A > 0\), such that for all \(d\) large enough, and then for all \(q\) large enough reals, w.h.p. over \(G\sim\G(n,d/n)\), \(C=C(G)\) has WSM within the disordered phase at distance \(\lceil \frac{A\log n}{d}\rceil\) with rate \(q^{-1/30}\).
\end{proposition}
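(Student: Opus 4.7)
The plan is to mirror the proof of Lemma~\ref{lem:ordered-WSM}, substituting Proposition~\ref{prop:disordered-kotecky-preiss} for its ordered analogue Proposition~\ref{prop:no-large-ordered-polymer}. The three inputs I would combine are: monotonicity of the random-cluster model (for $q\geq 1$), the reduction from \cite[Lemma 5.3]{blanca2021random} (which decouples the ball from the exterior when no large polymer crosses it), and the disordered polymer Koteck\'y--Preiss bound from Proposition~\ref{prop:disordered-kotecky-preiss}.

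Let $A_0$ denote the constant from Proposition~\ref{prop:disordered-kotecky-preiss}. I would set $A := \tfrac{11}{10}A_0$ and $r := \lceil (A/d)\log n\rceil$. Fix an edge $e\in E_C$ and a vertex $v\in V_C$ incident to $e$. Introduce the event $\mathcal{A}''_{v,r}$ that $|F\setminus E_G(B_r(v))|\leq \tfrac{9\eta}{10}|E_C|$, and the event $\mathcal{B}_{v,r}$ that some disordered polymer $\gamma\in\Gamma^\dis(F)$ has $|V_G(\gamma)|\geq r-1$. By Lemma~\ref{lem:ball-size}, whp $B_r(v)$ contains at most $\sqrt{n}\log n$ edges, which is $\ll\tfrac{\eta}{10}|E_C|$; combined with Theorem~\ref{thm:phase-transition}(iii), this gives $\pi^\dis_C(\neg\mathcal{A}''_{v,r})\leq \emm^{-\Omega(n)}$.

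Next I would derive the analogue of~\eqref{eq:WSMbound}: by conditioning on the configuration outside $B_r(v)$ and invoking monotonicity of the RC model (which gives $\pi_{B_r^-(v)}\preceq \pi^\dis_C(\cdot\mid\text{exterior})\preceq \pi_{B_r^+(v)}$),
\[
\big|\pi^\dis_C(1_e) - \pi_{B_r^-(v)}(1_e)\big| \leq 2\pi^\dis_C(\mathcal{B}_{v,r}) + 2\pi^\dis_C(\neg\mathcal{A}''_{v,r}).
\]
The intuition is that on $\mathcal{A}''_{v,r}\cap\neg\mathcal{B}_{v,r}$, no in-edge component crosses the boundary of $B_r(v)$, so conditioning on the exterior effectively imposes a free boundary on the ball. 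To finish, I would bound $\pi^\dis_C(\mathcal{B}_{v,r})$ via Proposition~\ref{prop:disordered-kotecky-preiss}: since $r-1\geq A_0\log n/d$ for $n$ large, $\pi^\dis_C(\mathcal{B}_{v,r})\leq q^{-A_0\log n/(25d)}\leq q^{-r/30}=K^r$, where the last inequality uses $A_0/25\geq A/30$ (which holds as $A\leq \tfrac{6}{5}A_0$). Adding the exponentially small bound on $\pi^\dis_C(\neg\mathcal{A}''_{v,r})$ yields the claimed WSM bound.

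The main, though largely routine, step will be justifying the reduction to $\mathcal{B}_{v,r}$. The argument parallels the ordered case but uses only the lower monotonic comparison with the free boundary; it ultimately relies on the observation that when no polymer in $F$ reaches the boundary of $B_r(v)$, the component of in-edges containing any vertex of $B_r(v)$ is contained in $B_r(v)$, so the conditional marginal of $e$ given the exterior coincides with the free-boundary marginal.
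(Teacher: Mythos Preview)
Your proposal is correct and follows essentially the same route as the paper: condition on the exterior configuration, use Theorem~\ref{thm:phase-transition}(iii) to control the event that this conditioning leaves $\Omega^\dis_C$, invoke \cite[Lemma~5.3]{blanca2021random} together with monotonicity to reduce to the event that an in-edge component reaches from a neighbour of $v$ to the boundary of $B_r(v)$, and then bound this via Proposition~\ref{prop:disordered-kotecky-preiss}. One minor imprecision: your verbal justification that on $\neg\mathcal{B}_{v,r}$ ``no in-edge component crosses the boundary'' is not literally true (small polymers can straddle the boundary), but what the cited lemma actually needs---that no in-edge path connects a neighbour of $v$ to a vertex at distance $r$---is indeed implied by $\neg\mathcal{B}_{v,r}$, so the argument goes through.
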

\begin{proof}
    By Proposition~\ref{prop:disordered-kotecky-preiss} and Theorem~\ref{thm:phase-transition}(iii), there exists a real \(A > 0\) such that for all \(d\) and \(q\) large enough, w.h.p. \(G\) is such that \(\pi^\dis_C(\exists\gamma\in\Gamma^\dis(\cdot)\text{ with }|V_G(\gamma)|\geq\tfrac{A\log n}{d}) \leq  q^{-A\log n/(25d)} \) and also \(\pi^\dis_C(|\cdot|\geq\tfrac{9\eta}{10}|E_C|)\leq \emm^{-n}\). If \(d\) is sufficiently large, then for all \(n\) large enough, \(\frac{A\log n}{2d}+1\leq \frac{A\log n}{d}\).
    Let \(r := \lceil \tfrac{A\log n}{2d}\rceil\) and note that $r<\frac{A\log n}{d}$ for large enough $d$. Fix \(e\in E_C\) and a vertex \(v\) incident to \(e\).

    Then for a subset \(F'\subseteq E_C\setminus E_G(B_r(v))\), let \(\pi_{F',B}\) be the Gibbs distribution with \textit{boundary condition} \(F'\) on \(B_r(v)\), i.e. the Gibbs distribution conditional on that in-edges in the configuration outside the ball \(B_r(v)\) are \textit{exactly} the edges in \(A\).

    Let \(\Upsilon\) be the event that there is a path from a neighbour of \(v\) to a vertex distance-\(r\) from \(v\).  \cite[Lemma 5.3]{blanca2021random} gives that for any \(F'\subseteq E_C\setminus E_C(B_r(v))\)  it holds that\footnote{Note that while technically their result is stronger and \(\Upsilon\) could be improved to an event roughly implying existence of two such paths, this is not needed for our proof.}
    \[
    |\pi_{F',B_r(v)}(1_e) - \pi_{B_r^-(v)}|_\TV\leq \pi_{F',B_r(v)}(\Upsilon).
    \]
        By monotonicity of random cluster \(\pi_{F',B_r(v)}(1_e) \leq \pi_{B_r^{-}(v)}(1_e)\), see e.g. \cite[Corollary~53]{Galanis_Goldberg_Smolarova_2024}. Then note that for any \(A\subseteq E_C\setminus E_C(B_r(v))\) with \(|F'|\leq \eta |E_C| - |E(B_r(v))|\), \(\pi_{F',B_r(v)}\) is the same distribution as the disordered distribution conditional on that the configuration outside the ball is \(F'\) (since any configuration agreeing with such \(F'\) on the outside the ball is disordered).
    By Lemma~\ref{lem:ball-size}, w.h.p. \(G\) is such that for any such \(B_r(v)\), \(|E_G(B_r(v))|\leq d^r \log n \leq \tfrac{\eta}{n}|E_C|\) (where the last inequality holds w.h.p., as Lemma~\ref{lem:bound-on-kernel-edges} gives that \(|E_C|=\Omega(n)\)). Hence by Theorem~\ref{thm:phase-transition}(1), for all \(d\) and \(q\) large enough, w.h.p. \(G\) is such that we have that \(\pi^\ord(|\cdot\cap (E_C\setminus E_C(B_r(v))|\geq \tfrac{9\eta|E_C| }{10}) \leq \pi(|\cdot|\geq \tfrac{9\eta|E_C|}{10})\leq \emm^{-n}\).
    
    Then, using triangle inequality, we get that $|\pi^\ord(1_e) - \pi_{B^-_r(v)}(1_e)|$ is at most
    \begin{align*}
        &\sum_{F'\subseteq E_C\setminus E_G(B_r(v))} |\pi^{\dis}_C(1_e \mid F' = \cdot\cap (E_C\setminus E_G(B_r(v)))) - \pi_{B^-}(1_e)|\times  \pi^\ord(\cdot\cap (E_C\setminus E_G(B_r(v))) = F') \\
        \leq & \pi^\ord(|\cdot\cap (E_C\setminus E_G(B_r(v)))|\leq \tfrac{9\eta}{10}|E_C|) + \sum_{F'\subseteq E_C\setminus E_G(B_r(v))} \pi_{F',B_r(v)}(\Upsilon)  \pi^\ord(F'=\cdot\cap (E_C\setminus E_G(B_r(v)))) \\
        \leq & \emm^{-n} + \pi^\dis_C(\Upsilon).
    \end{align*}

    To finish, note that \(\pi^\dis_C(\Upsilon)\) is at most the probability of there being a disordered polymer of size \(\geq r-1\). By the choice of \(r\) and the Proposition~\ref{prop:disordered-kotecky-preiss}, the probability is at most \(e^{-n} + q^{-A\log n/(25d)}\leq q^{-A\log n/(30d)} = q^{-r/30}\) (for all \(q\) large enough. Hence we have WSM within the disordered phase.
\end{proof}

\section{Proof of Theorem~\ref{thm:phase-transition}}\label{sec:phase-transition}

In this section, we prove Theorem~\ref{thm:phase-transition} using the following five lemmas.
First we show that the ordered phase vanishes for all \(\beta\leq\beta_0\).

\begin{lemma}\label{lem:ordered-phase-vanishes}
    Let \(d\) be a large enough real. For all sufficiently large real \(q\), w.h.p. over \(G\sim\G(n,d/n)\), it holds that \(Z^\ord_C / Z_C = \emm^{-n}\) for all \(\beta \leq \beta_0\).
\end{lemma}
\begin{proof}
    By Lemmas~\ref{lem:avg-deg-of-medium-sets}, and~\ref{lem:avg-deg-of-large-sets}, w.h.p. all vertex sets in \(G\) with size at least \(\geq \eta n\) have average degree is at least \(\tfrac{3d}{4}\). Also by the second part of the first Lemma, w.h.p., for any vertex set \(S\subseteq V_G\) with size between \(\eta n\) and \((1-\eta)n\), \(|E_G(S,\overline S)|\geq \frac{5d}{6n}|S|(n-|S|)\). Also w.h.p. \(n_C \geq (1-\emm^{-d/3})n\).
    Assume further on \(G\) satisfies these properties.
    
    Also by the choice of \(\eta\) and by Lemma~\ref{lem:giant-component-in-the-core-after-edge-removal}, w.h.p. \(G\) is such that for any \(F\in\Omega^\ord_C\), there is a giant component in \((V_C,F)\) with total degree \(\geq \tfrac{9}{10} \deg_G(V_C)\). Thus the set \(S\) of vertices in the non-giant components has total degree \(\leq\tfrac{1}{10}\deg_G(V_C)\). Note that all edges in \(E_G(S,\overline S)\) are in \(E_C\setminus F\).
    We consider three cases, based on the size of \(S\).
    
    If \(|S|\geq\frac{n}{3}\), then \(\deg_G(S)\geq \frac{dn}{4}\). But also \(\deg_G(S)\leq \tfrac{\deg_G(V_C)}{10} \leq \tfrac{dn}{10} + o(n)\), which is a contradiction.

    Next consider the case \(\eta n \leq |S|\leq \tfrac n3\). Then \(|E_C\setminus F|\geq |E_G(S,\overline{S})| \geq \frac{5\eta n d}{9}\). But w.h.p. \(|E_C|\leq \frac{dn}{2} + o(n) < \frac{5}{9}dn\), so this contradicts \(|E_C\setminus F|\leq \eta |E_C|\).

    Thus the only remaining possibility is \(|S|\leq\eta n\). Then \(c(F)\leq |S| \leq \eta n + 1\) and \(|F|\leq |E_C|\leq nd/2 + \eta n\) (the last inequality holds w.h.p.). Thus we obtain that \(w_C(F)\leq q^{\eta n + 1}(\emm^\beta-1)^{nd / 2 + \eta n} \leq q^{\eta n + \frac{2-1/10}{d}\cdot\frac{nd}{2} + \eta n} = q^{n[1 - 1 / 20 + 2\eta]}\). Note that the second inequality holds as \(1+\frac{19}{10d}\eta n \leq \eta n\) for all \(d\geq 3\) and \(n\) large enough.

    Since \(\emptyset\) is a configuration with a weight \(q^{n_C} \geq q^{n(1-\emm^{-d/3})}\), it follows that \(Z_C\geq q^{n-n\emm^{-d/3}}\). There are at most \(2^{|E_C|} = 2^{nd/2 + \eta n}\) ordered configurations, therefore,
    \[
    \frac{Z^\ord_C}{Z_C}\leq 2^{nd / 2 + \eta n} q^{n[-1 + 1 - 1/ 20 + 2\eta + \emm^{-d/3}]} = \exp\{n[(d/2 + \eta)\ln 2 + \log q(- 1/ 20 + 2\eta + \emm^{-d/3})]\}.
    \]
    Since \(1/20 - 2\eta - \emm^{-d/3} > 1/30\) for all \(d\) large enough, this becomes \(\exp\{n d\ln 2 / 2 - n\log q/30\}\), which is \(\leq \emm^{-n}\) for all \(q\) large enough.
\end{proof}

Analogously, the contribution of the disordered phase becomes negligible for \(\beta \geq \beta_1\).

\begin{lemma}\label{lem:disordered-phase-vanishes}
    Let \(d\) be a large enough real. For all sufficiently large real \(q\), w.h.p. over \(G\sim\G(n,d/n)\) it holds that \(Z^\dis_C / Z_C = \emm^{-n}\) for all \(\beta \geq \beta_1\).
\end{lemma}
\begin{proof}
    Let \(F\in\Omega^\dis_C\). Then \(|F|\leq\eta |E_C|\), and thus \(w_C(F)\leq q^{n_C} (\emm^\beta-1)^{\eta |E_C|}\). \(|\Omega^\ord_C| \leq 2^{|E_C|}\), and since \(F=E_C\) is a configuration, \(Z_C\geq q(\emm^\beta-1)^{|E_C|}\). w.h.p. \(|E_C|\leq \frac{dn}{2} + \eta n\), hence \(|\Omega^\ord_C|\leq 2^{n(d/2 + \eta)}\). Combining the inequalities and using that \(\beta\geq\beta_1\),
    \[
    \frac{Z^\dis_C}{Z_C}\leq 2^{(d/2+\eta)n} q^{n}(\emm^\beta-1)^{-(1-\eta)|E_C|} \leq q^{n - \frac{(1-\eta)(2+1/10)}{d}|E_C|}.
    \]
    Lemma~\ref{lem:bound-on-kernel-edges} gives us that w.h.p. \(|E_C|\geq\frac{dn}{2}-2n\emm^{-d/3}\), thus we get that
    \[
    \frac{Z^\dis_C}{Z_C} \leq q^{n[1-(1-\eta)(2+1/10)(d/2 - 2\emm^{-d/3})/d]} \leq q^{-n[1/20 - 2\eta+5\emm^{-d/3}/d]}
    \]
    By the choice of \(\eta\), for all \(d\) large enough this is at most \(q^{-n/40}\). Hence for all \(q\) large enough, it is at most \(\emm^{-n}\).
\end{proof}

It remains to show that the configurations with \(\leq(1-\tfrac{9\eta}{10})|E_C|\) and \(\geq \tfrac{9\eta}{10}|E_C|\) edges are very unlikely. First we show this for the Gibbs distribution without conditioning.

\begin{lemma}\label{lem:error-vanishes}
    Let $d$ be a sufficiently large real. For all sufficiently large real $q$, the following holds w.h.p. over \(G\sim\G(n,d/n)\), for any  real \(\beta > 0\).
    
    Let $\mathcal{B}$ be the set of configurations $F\in \Omega_C$ with $\tfrac{9}{10}\eta\leq \frac{|F|}{|E_C|}\leq 1-\tfrac{9}{10}\eta$.  Then, $\pi_C(\mathcal{B})\leq \emm^{-n}$.
\end{lemma}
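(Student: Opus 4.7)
The plan is to establish $\pi_C(\mathcal{B})\leq \emm^{-n}$ by bounding $Z_C^{\mathcal{B}}:=\sum_{F\in \mathcal{B}}w_C(F)$ against $Z_C\geq \max(w_C(\emptyset), w_C(E_C))=\max\{q^{n_C},\,q(\emm^\beta-1)^{|E_C|}\}$, then using the crude counting $|\mathcal{B}|\leq 2^{|E_C|}\leq 2^{nd/2+o(n)}$ and taking $q$ sufficiently large relative to $d$. This reduces the lemma to the per-configuration inequality $w_C(F)\leq q^{-\Omega(n)}\cdot \max(w_C(\emptyset), w_C(E_C))$ for every $F\in \mathcal{B}$.

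The core step is a structural upper bound of the form $c(F)\leq n_C-\Omega(n)$, valid for all $F\in \mathcal{B}$. Since $\mathcal{B}$ forces both $|F|$ and $|E_C\setminus F|$ to be $\Omega(n)$, this should follow from expansion together with case analysis. If $|E_C\setminus F|$ is below the threshold from Lemma~\ref{lem:giant-component-in-the-core-after-edge-removal} with $\theta=\tfrac{1}{10}$, then $(V_C,F)$ has a giant component, and I would follow the template of the proof of Lemma~\ref{lem:ordered-phase-vanishes}: the non-giant vertex set $S$ satisfies $|E_G(S,\overline S)|\leq |E_C\setminus F|$, and by Propositions~\ref{prop:expansion-of-core} and \ref{prop:average-degree-of-connected-sets} one obtains $|S|=O(|E_C\setminus F|/d)$, hence $c(F)\leq |S|+1$. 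A symmetric argument handles $F$ with $(V_C,E_C\setminus F)$ containing a giant. For the ``truly intermediate'' regime, where neither side is guaranteed a giant, the expansion of Proposition~\ref{prop:expansion-of-core} is applied to each medium-size component of $(V_C,F)$ and aggregated against the cut-edge budget $|E_C\setminus F|$, with components of size below $(A/d)\log n$ controlled separately via the treelikeness bound of Lemma~\ref{lem:logn-sets-are-treelike}, forcing the total vertex mass in non-trivial components of $F$ to be $\Omega(n)$.

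Given the structural bound, the per-configuration estimate follows by writing $w_C(F)/w_C(\emptyset)=q^{c(F)-n_C}(\emm^\beta-1)^{|F|}$ or $w_C(F)/w_C(E_C)=q^{c(F)-1}(\emm^\beta-1)^{-|E_C\setminus F|}$ and selecting whichever is favourable for the given $\beta$. For $\beta\leq \beta_0$, the first ratio together with $(\emm^\beta-1)^{|F|}\leq q^{(2-1/10)|E_C|/d}$ from~\eqref{eq:beta0beta1} yields $q^{-\Omega(n)}$; for $\beta\geq \beta_1$, the second ratio with $(\emm^\beta-1)^{|E_C\setminus F|}\geq q^{(2+1/10)|E_C\setminus F|/d}=q^{\Omega(n)}$ does; and for $\beta\in (\beta_0,\beta_1)$, either comparison suffices since the $\Omega(n)$ structural gap dominates.

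The main obstacle is the ``truly intermediate'' regime, roughly $|F|\in [c_1|E_C|, (1-c_1)|E_C|]$ for a small constant $c_1$, where Lemma~\ref{lem:giant-component-in-the-core-after-edge-removal} does not directly supply a giant component in either $F$ or $E_C\setminus F$. In this regime, obtaining $c(F)\leq n_C-\Omega(n)$ requires the careful component-size decomposition described above; this is where the proof genuinely departs from the simpler case analysis of Lemmas~\ref{lem:ordered-phase-vanishes} and \ref{lem:disordered-phase-vanishes}, and it is the key technical input enabling the conclusion uniformly in $\beta>0$.
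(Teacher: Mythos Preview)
Your reduction to a per-configuration estimate is the right framework, and the case split on $\beta$ against $\max\{q^{n_C},\, q(\emm^\beta-1)^{|E_C|}\}$ is natural. But the structural bound you isolate as the ``core step'', namely $c(F)\leq n_C-\Omega(n)$, is too weak to close the argument. To see this, take $\beta$ with $\emm^\beta-1=q^{2/d}$ (so $\beta\in(\beta_0,\beta_1)$) and $F\in\mathcal{B}$ with $|F|=(1-\tfrac{9\eta}{10})|E_C|$. Then
\[
\frac{w_C(F)}{w_C(\emptyset)}=q^{c(F)-n_C}(\emm^\beta-1)^{|F|}=q^{\,c(F)-n_C+\tfrac{2}{d}|F|}\approx q^{\,c(F)-n_C+(1-\tfrac{9\eta}{10})n},
\]
so you would need $c(F)\lesssim \tfrac{9\eta}{10}n$, not merely $c(F)\leq n_C-\Omega(n)$. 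Likewise the comparison against $w_C(E_C)$ gives exponent $c(F)-1-\tfrac{2}{d}|E_C\setminus F|\approx c(F)-\tfrac{9\eta}{10}n$, which again forces $c(F)\lesssim \tfrac{9\eta}{10}n$. In other words, the required bound couples $c(F)$ to $|E_C\setminus F|$: what the paper proves is $c(F)-\tfrac{2}{d}|E_C\setminus F|\leq -\rho n$ with $\rho=\eta/20$, and the $\beta$-split is taken at $\emm^\beta-1=q^{2/d}$ so that both comparisons reduce to exactly this single inequality.

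Your case analysis also has a structural issue. In the regime where $(V_C,F)$ has a giant you do obtain $c(F)\leq |S|+1=O(|E_C\setminus F|/d)$, which is exactly the coupled bound above; but the ``symmetric'' case, where $(V_C,E_C\setminus F)$ has a giant, does not control $c(F)$ at all, since the component structure of the complement says nothing about components of $F$. The paper instead partitions the vertices of $(V_C,F)$ by component size (singletons; size $2$ to $\log n$; size $\log n$ to $n/6$; size $>n/6$), lower-bounds $|E_C\setminus F|$ via cut edges out of each class using treelikeness and Proposition~\ref{prop:expansion-of-core}, and upper-bounds $c(F)$ by counting components in each class; combining these yields the coupled inequality directly, across the whole range of $|F|$ in $\mathcal{B}$, without appealing to a giant in $F$ or its complement.
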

\begin{proof}
    The main part of the proof is to show  that for any configuration \(F\in\mathcal{B}\) it holds that
    \begin{equation}\label{eq:lemma-8-equivalent}
        c(F) - \frac{2}{d}|E_C\setminus F| \leq - \rho n \mbox{ with } \rho=\eta/20.
    \end{equation}
    Let us conclude the proof of the lemma assuming \eqref{eq:lemma-8-equivalent} for the moment. We first show that \eqref{eq:lemma-8-equivalent} translates into $w_C(F)/Z_C\leq q^{-\rho n}$ for all $\beta>0$. Indeed, if \(\emm^\beta - 1 \leq q^{2/d}\), we will use the lower bound \(|Z_C|\geq q^{n_C}\geq q^{n(1-\emm^{-d/3})}\), where the second inequality holds w.h.p.. Thus, for any \(F\in\mathcal{B}\),
    \[
    \frac{w_C(F)}{Z_C} = q^{c(F) - n + n\emm^{-d/3}} (\emm^{\beta}-1)^{|E_C|-|E_C\setminus F|} \leq q^{c(F) - \frac{2}{d}|E_C\setminus F| - n + \frac{2}{d}|E_C| + n\emm^{-d/3}}.
    \]
    Using that w.h.p. \(|E_C|\leq |E_G| \leq nd / 2 + \frac{\eta n}{40}\) we have that \(\frac{2}{d}|E_C| - n \leq n - n + \frac{\eta n}{40}\). Hence, for all \(d\) sufficiently large (recall that $\eta$ is an absolute constant independent of $d$), we obtain that $w_C(F)/Z_C \leq q^{-\rho n/4}$. Analogously, if \(\emm^\beta - 1 \geq q^{2/d}\), we use the lower bound \(Z_C\geq (\emm^\beta-1)^{|E_C|}\) so that for any   \(F\in \mathcal{B}\) 
    \[
    \frac{w_C(F)}{Z_C} = q^{c(F)} (\emm^\beta-1)^{-|E_C \setminus F|} \leq q^{c(F) - \frac{2}{d} |E_C \setminus F|}\leq q^{-\rho n}.
    \]
    To finish, note that the size of $\mathcal{B}$   is trivially at most \(|\Omega_C|\leq 2^{|E_C|}\leq 2^{nd/2 + \eta n/40}\). Thus we get \[\frac{1}{Z_C}\sum_{F\in\mathcal{B}} w_C(F) \leq 2^{nd/2 + \eta n / 40}q^{-\rho n/4}\]
    which is \(\leq \emm^{-n}\) for all \(q\) large enough.

We next return to the proof of~\eqref{eq:lemma-8-equivalent}. Fix an arbitrary $F\in \mathcal{B}$. We partition the vertices of \(C\) into sets $S_1,S_2,S_3,S_4$ depending on the size of their components in \((V_C,F)\). Namely, we set $S_1$ to be the set of vertices in singleton components, $S_2$  the set of vertices in components of size between \(2\) and \(\log n\), $S_3$ the set of vertices in components of size \(> \log n\) and \(\leq \tfrac{n}{6}\), and $S_4$  the set of vertices in components of size \(>\tfrac n6\).   For \(i = 1,\dots,4\), let \(n_i := |S_i|\). We have then the trivial upper bound 
    \begin{equation}\label{eq:rvrtvta}
        c(F)\leq n_1 + \frac{n_2}{2} + \frac{n_3}{\log n} + 6.
    \end{equation}
Next we give a lower bound on $|E_C\backslash F|$ by analysing how many out-edges are incident to vertices in $S_1,S_2,S_3,S_4$. 
    \begin{enumerate}
        \item Consider first components of size 1. Every edge adjacent to any $v\in S_1$ must be in \(E_C\setminus F\). Thus, aggregating over $v\in S_1$, we have at least  \(\tfrac{1}{2}\deg_G(S_1)\) edges in \(E_C\setminus F\) adjacent to vertices in $S_1$; note that the $\tfrac{1}{2}$-factor accounts for potential double-counting (edges that join two vertices in $S_1$).

        \item Let \(k\) be a constant such that w.h.p. all sets of size \(\leq \log n\) are \(k\)-treelike (Lemma~\ref{lem:logn-sets-are-treelike}). Also by Lemma~\ref{lem:1-treelikeness-for-constant-size}  w.h.p. any connected set of size \(\leq 2k\) is \(1\)-treelike.

        Consider now any component $\kappa$ whose size is between $2$ and $\log n$. If $\kappa$ has size \(<2k\), then it is 1-treelike, so \(e_G(\kappa)\leq |V_G(\kappa)|\). This gives
        \[|E_G(V_G(\kappa),\overline{V_G(\kappa)})|\geq \deg_G(\kappa) - 2|V_G(\kappa)|.\]
        If \(\kappa\) has size between \(2k\) and \(\log n\),  the it is $k$-treelike, so \(e_G(\kappa) \leq k + |V_G(\kappa)| \leq \tfrac32|V_G(\kappa)|\), yielding that       
        \[|E_G(V_G(\kappa),\overline{V_G(\kappa)})| \geq \deg_G(\kappa) - 3|V_G(\kappa)|.\] 
        Aggregating over all components $\kappa$ in $S_2$, we therefore obtain at least $\tfrac{1}{2}(\deg_G(S_2)-3n_2)$ edges in $E_C\backslash F$ that are adjacent to vertices in $S_2$. (Once again, the factor $\tfrac{1}{2}$ accounts for double-counting).
        
        \item Consider next components with size between \(\log n\) and \(\tfrac n6\). By Proposition~\ref{prop:expansion-of-core} and taking $d$ large enough (namely, $d>A$ where $A$ is the absolute constant there in), we have that  w.h.p. every connected  subset \(S\)  in $G$ with size $\log n\leq |S|\leq \tfrac{n}{6}$ satisfies \(|E_G(S,\overline{S})|\geq\tfrac 35|\deg_G(S)|\). Therefore, for any component $\kappa$ of \((V_C,F)\) in this size range, we have at least $35|\deg_G(\kappa)|$ edges going out of it, and therefore belong to $E_C\backslash F$. Aggregating over $\kappa$ (and accounting once again for potential double-counting), we obtain that the total number of out-edges adjacent to vertices in \(S_3\) is at least \(\tfrac{1}{2}\cdot \tfrac{3}{5}\deg_G(S_3)\).

        \item Finally consider components with size $>\tfrac{n}{6}$. If \(\eta n/4  \leq |S_4|\leq n(1-\eta / 4) \), we get from the second part of Lemma~\ref{lem:avg-deg-of-medium-sets} that for all \(d\) large enough \(|E_G(S_4,\overline S_4)|\geq \frac{4d|S_4|(n-|S_4|)}{5n}\geq\frac{d\eta n}{6}\) (since \(\eta/4\leq \tfrac{1}{30}\)).  Therefore, the total number of out-edges adjacent to vertices in \(S_4\) is at least $\frac{d\eta n}{6}\1\{\eta n/4\leq |S_4|\leq (n-\eta n/4)\}$. (Note that if $|S_4|$ is outside the range $\eta n/4$ and  $n(1-\eta / 4)$ we use the trivial lower bound 0.)
    \end{enumerate}
    Combining the above, we conclude that
    \begin{align}
    |E_C\backslash F|&\geq \tfrac{1}{2}\big[\deg_G(S_1\cup S_2) - 3n_2 + \tfrac{3}{5}\deg_G(S_3) + \tfrac{d\eta n}{6}\1\{\eta n/4\leq |S_4|\leq (1-\eta n/4)\}\big]\nonumber\\
    &\geq \tfrac{1}{2}\big[(1-d^{-1/3})d (n_1+n_2+\tfrac{3}{5}n_3-2n \emm^{-\frac12d^{1/3}})+\tfrac{d\eta n}{6}\1\{\eta n/4\leq |S_4|\leq (1-\eta n/4)\}\big],\label{eq:rvrtvtb}
    \end{align}
    where the second inequality follows from the fact that the number of  vertices of $G$ which have degree $<(1-d^{-1/3})d$ is less than $2n \emm^{-\frac12d^{1/3}}$ (Proposition~\ref{prop:exponentially-few-small-degree-vertices}).
    From \eqref{eq:rvrtvta} and \eqref{eq:rvrtvtb}, we therefore have that
    \begin{multline*}
        c(F) - \tfrac{2}{d}|E_C\setminus F| \leq 
        n_1 + (\tfrac12+\tfrac 3d) n_2 + \frac{n_3}{\log n} + 6 - (1-d^{-1/3})\left[ n_1 + n_2 + \tfrac{3}{5}n_3 \right] \\ + 2n\emm^{-\frac{1}{2}d^{1/3}} - \tfrac{\eta n}{6}\1\{\eta n/4\leq |S_4|\leq (n-\eta n/4)\}.
    \end{multline*}
    For \(d\) and $n$ sufficiently large, we  have that $\tfrac 12-\tfrac3d - d^{-1/3}\geq \tfrac 14$ and \(\tfrac 35 - \frac{1}{\log n} - \frac{3d^{-1/3}}{5} \geq \tfrac 14\), so     \begin{equation}\label{eq:lemma-8-equiv-upper-bound-1}
        c(F)-\tfrac 2d|E_C\setminus F|\leq d^{-1/3}n_1 - \frac{n_2+n_3}{4} - \frac{\eta n}{6}\1\{\eta n/4\leq |S_4|\leq (n-\eta n/4)\} + 2n\emm^{-\frac{1}{2}d^{1/3}}+6.      
    \end{equation}

    Now we split the analysis into four cases. We will use that w.h.p. \(|E_C|\geq |E_K|\geq \tfrac{dn}{2}-2n\emm^{-d/3}> \frac{25}{27}\tfrac{dn}{2}\) for all \(d\) large enough (Lemma~\ref{lem:bound-on-kernel-edges}) and also that w.h.p. \(|E_C|\leq |E_G|\leq \tfrac{nd}{2} + \tfrac{\eta n}{40}\).
    \begin{itemize}
        \item Case 0: \(\frac{n_1}{n}\geq 1-\frac{2\eta}{3}\).         Here we show that \(|F| < \tfrac{9}{10}\eta|E_C|\), contradicting our assumption on $|F|$ coming from $F\in \mathcal{B}$ (and hence we can rule out this case). Indeed, no edge incident to a vertex in \(S_1\) is in \(F\), so $|E_C\setminus F|\geq \frac{1}{2} \deg_G(S_1) \geq \tfrac12 d(1-d^{-1/3})(n_1-2n\emm^{-\frac12d^{1/3}})$ where the last inequality follows  from the fact that  \(G\) has at most $2n\emm^{-\frac12d^{1/3}}$ vertices with degree \(<(1-d^{-1/3})d\) (Proposition~\ref{prop:exponentially-few-small-degree-vertices}). Hence \[|F| = |E_C|-|E_C\setminus F| \leq \frac{dn}{2} + \frac{\eta n}{40} - \frac{dn}{2}[(1-\tfrac{2\eta}{3})(1-d^{-1/3}) - 2\emm^{-\frac12d^{1/3}}].\]
        For all \(d\) large enough, \(\tfrac{\eta}{40d} + \tfrac{2\eta}{3} + d^{-1/3} +2\emm^{-\frac12d^{1/3}}\leq \frac{5\eta}{6}\), thus also \(|F|\leq\frac{dn}{2}\cdot\frac{5\eta}{6}\). But \(|E_C| > \tfrac{25}{27}\tfrac{dn}{2}\), which gives $|F| < \frac{9}{10}\eta|E_C|$, contradicting $F\in \mathcal{B}$.

        \item Case 1: \(n_4 > n(1-\eta / 2)\). Since we have at most 6 components with size $>n/6$, in this case it follows that \(c(F) < n - n(1-\eta / 2) + 6 = \eta n/ 2 + 6\). Since $|F|\leq (1-\frac{9\eta}{10})|E_C|$ we have \(|E_C\setminus F|\geq\tfrac{9}{10}\eta |E_C| > \tfrac{5\eta}{12} dn\), so for $n$ large enough
        \[
        c(F) - \frac{2}{d}|E_C\setminus F| < 6 + \frac{\eta n}{2} - \frac{5}{12} \eta n = 6 - \frac{\eta n}{12}\leq -\frac{\eta n}{20}.
        \]
        \item Case 2: \(n_2 + n_3 \geq \eta n / 3\).        
        Plugging this into~\eqref{eq:lemma-8-equiv-upper-bound-1} (using the trivial bound \(n_1 < n\) and ignoring the indicator), we obtain that
        \[
        c(F) - \frac{2}{d}|E_C\setminus F|\leq d^{-1/3} n - \frac{\eta n}{12} + 2\emm^{-\tfrac{1}{2}d^{1/3}}+6
        \]
        Since \(\eta\) does not depend on \(d\), for all \(d\) (and $n$) large enough, \(d^{-1/3} +2\emm^{-\tfrac{1}{2}d^{1/3}}+ 6/n \leq \eta / 30\), and so for all \(n\) large enough, yielding once again~\eqref{eq:lemma-8-equivalent}.

        \item Case 3: \(n_2 + n_3 < \eta n / 3\), \(n_1 < n(1-2\eta/3)\), and \(n_4 < n(1-\eta/2)\). Since \(n_1 + n_2 + n_3 + n_4 = |V_C|\), we have that \(n_4\geq n_C - n_1 - n_2 - n_3 \geq n - n\emm^{-d/3} - \frac{\eta n}{3} - n + \tfrac{2\eta n}{3}\) which is at least \(\eta n / 4\) for all \(d\) large enough. In this case,~\eqref{eq:lemma-8-equiv-upper-bound-1} gives for $d$ and $n$ large that
        \[
        c(F) - \frac{2}{d}|E_C\setminus F|\leq d^{-1/3}n - \frac{\eta n}{6}+2\emm^{-\tfrac{1}{2}d^{1/3}}+6\leq -\eta n /20.
        \]
    \end{itemize}
    This finishes the proof of~\eqref{eq:lemma-8-equivalent} and completes the proof of Lemma~\ref{lem:error-vanishes}.
\end{proof}

We now prove analogous lemmas for \(\pi^\dis_C\) and \(\pi^\ord_C\). 

\begin{lemma}\label{lem:error-vanishes-under-disorder}
    Let $d$ be a sufficiently large real. For all sufficiently large real $q$, w.h.p. over \(G\sim\G(n,d/n)\), the following holds for all real \(\beta\leq\beta_1\).

    Let $\mathcal{B}_1$ be the set of configurations $F\in \Omega_C$ with $ |F|\geq \tfrac{9}{10}\eta |E_C|$.  Then, $\pi^{\dis}_C(\mathcal{B}_1)\leq \emm^{-n}$.
\end{lemma}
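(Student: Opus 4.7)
The plan is to mimic the proof of Lemma~\ref{lem:error-vanishes}, with two modifications: replace the denominator $Z_C$ with the trivial bound $Z^\dis_C\geq w_C(\emptyset)=q^{n_C}$ (valid since $\emptyset\in\Omega^\dis_C$), and use the hypothesis $\beta\leq\beta_1$, equivalently $\emm^\beta-1\leq q^{(2+1/10)/d}$, in place of arbitrary~$\beta$.

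First I would observe that every $F\in \mathcal{B}_1\cap \Omega^\dis_C$ satisfies $\tfrac{9\eta}{10}|E_C|\leq |F|\leq \eta|E_C|$, so $F\in\mathcal{B}$. For such $F$, combining $(\emm^\beta-1)^{|F|}\leq q^{2.1|F|/d}$ with $|E_C\setminus F|+|F|=|E_C|$,
\[
\frac{w_C(F)}{q^{n_C}} \;\leq\; q^{\,\bigl[c(F)-\frac{2}{d}|E_C\setminus F|\bigr]+\frac{2}{d}|E_C|+\frac{0.1}{d}|F|-n_C}.
\]
Using the whp bounds $|E_C|\leq dn/2+o(n)$, $n_C\geq n(1-\emm^{-d/3})$ and $|F|\leq\eta|E_C|$, the additive contribution beyond the bracketed term is at most $\tfrac{\eta n}{20}+n\emm^{-d/3}+o(n)$.

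Next I would invoke the four-case analysis from the proof of Lemma~\ref{lem:error-vanishes} to bound the bracketed term $c(F)-\tfrac{2}{d}|E_C\setminus F|$. Case~0 is ruled out identically; Case~1 is strongly negative ($\leq -n/2$ for $\eta$ small, using $|E_C\setminus F|\geq(1-\eta)|E_C|$); and Cases~2 and~3 give the intrinsic bounds $\leq d^{-1/3}n-\tfrac{\eta n}{12}+o(n)$ and $\leq d^{-1/3}n-\tfrac{\eta n}{6}+o(n)$ respectively. Requiring $d$ large enough that $d^{-1/3}+o(1)\leq \tfrac{\eta}{60}$ (slightly stronger than the $d^{-1/3}+o(1)\leq \eta/30$ used in Lemma~\ref{lem:error-vanishes}), both cases yield $\leq -\tfrac{\eta n}{15}$, strictly below $-\tfrac{\eta n}{20}$ with slack $\tfrac{\eta n}{60}$. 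Hence $w_C(F)/q^{n_C}\leq q^{-\eta n/60}$ for $n$ large.

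Finally, summing over the at most $|\Omega_C|\leq 2^{|E_C|}\leq 2^{dn/2+o(n)}$ configurations in $\mathcal{B}_1\cap\Omega^\dis_C$ gives
\[
\pi^\dis_C(\mathcal{B}_1)\;\leq\;\frac{1}{q^{n_C}}\sum_{F\in \mathcal{B}_1\cap\Omega^\dis_C}w_C(F)\;\leq\; 2^{dn/2+o(n)}\,q^{-\eta n/60},
\]
which is $\leq \emm^{-n}$ once $q$ is sufficiently large depending on~$d$. The delicate point is that the stated constant $\rho=\eta/20$ in Lemma~\ref{lem:error-vanishes} is \emph{exactly} cancelled by the extra $\tfrac{0.1}{d}|F|$ contribution arising from the factor $2.1$ (rather than $2$) in $\beta\leq\beta_1$; so I must take $d$ larger than the threshold used in Lemma~\ref{lem:error-vanishes} so that the $d^{-1/3}$ correction in Case~2 (the bottleneck) leaves a constant strictly better than $\eta/20$.
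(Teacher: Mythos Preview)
Your argument is correct, but it takes a different route from the paper. The paper does not reuse the four-case analysis of Lemma~\ref{lem:error-vanishes}; instead it gives a short self-contained argument tailored to the disordered regime. Concretely, the paper first notes (exactly your Case~0 reasoning) that the number $n_1$ of singleton components satisfies $n_1\leq (1-\tfrac{2\eta}{3})n$, so the set $S$ of vertices in non-singleton components has $|S|\geq \tfrac{2\eta}{3}n$. Since all in-edges have both endpoints in $S$, it uses the treelikeness bounds (Lemmas~\ref{lem:logn-sets-are-treelike}, \ref{lem:1-treelikeness-for-constant-size}, \ref{lem:not-too-many-internal-edges}) to get $|F|\leq e_G(S)\leq |S|(2+d\eta/3)$, which together with $|F|\geq\tfrac{9\eta}{10}|E_C|$ forces $|S|$ to be large; then $c(F)\leq n-|S|/2$ gives $w_C(F)/q^{n_C}\leq q^{-\eta n/30}$ directly, without tracking any $d^{-1/3}$ correction.

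Your approach has the advantage of being modular (it literally recycles \eqref{eq:lemma-8-equivalent}), but as you correctly identify, this forces you to sharpen the $d$-threshold, because the stated constant $\rho=\eta/20$ in Lemma~\ref{lem:error-vanishes} is exactly cancelled by the extra $\tfrac{0.1}{d}|F|\leq \tfrac{\eta n}{20}$ coming from $\beta_1$ versus $\beta_0$. The paper's approach avoids this constant-chasing entirely: by exploiting $|E_C\setminus F|\geq(1-\eta)|E_C|$ and the treelikeness of $S$ directly, it obtains a cleaner exponent with more slack. Either way the final union bound over $2^{|E_C|}$ configurations is the same.
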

\begin{proof}
    Consider arbitrary $F\in \mathcal{B}_1$.  As we saw in Case 0 of the proof of Lemma~\ref{lem:error-vanishes},  the number of singleton components in \((V_C,F)\) is at most \((1-\frac{2\eta}{3})n\). Let \(S\) be the set of vertices that are in components of size \(\geq 2\).

    Note that the edges of \(F\) have all endpoints in \(S\), thus \(|F|\leq e_G(S)\). Combining Lemmas~\ref{lem:logn-sets-are-treelike}, \ref{lem:1-treelikeness-for-constant-size} (with \(R\) being the  \(k\) from the previous lemma) and~\ref{lem:not-too-many-internal-edges}, we get for all \(d\) large enough, w.h.p., for any connected subset \(S'\subseteq S\) of \((V_C,F)\), \(e_G(S') \leq |S'|(1+\max(1, d\eta/3))\). Hence, \(e_G(S)\leq |S|(2+d\eta/3)\). Since \(e_G(S)\geq |F|\geq\tfrac{9\eta}{10}\), we must have \(|S|\geq \frac{9\eta|E_C|}{10d(\eta/3 + 2/d)}\). Since each component containing \(|S|\) has at least \(2\) vertices, \(c(F)\leq n-|S| + |S|/2 = n - |S|/2\). Therefore we can bound
    \[
    \frac{w_G(F)}{w_G(\emptyset)} = q^{n-\frac{9\eta|E_C|}{20d(\eta/3+2/d)}-n_C} (\emm^\beta -1)^{\eta |E_C|}\leq q^{n\emm^{-d/3}-\frac{\eta|E_C|}{10d}(\frac{9}{2(\eta/3+2/d)} - 21)},
    \]
    where the last inequality follows from \(\beta\leq \beta_1\) and that w.h.p. \(n_C\geq n(1-\emm^{-d/3})\). For all \(d\) large enough \(\frac{9}{2(\eta/3 +2/d)}-21\geq 1\). By Lemma~\ref{lem:bound-on-kernel-edges}, w.h.p. \(|E_C|\geq \frac{dn}{2}-2n\emm^{-d/3}\geq \frac{dn}{3}\), giving that \( \frac{w_G(F)}{Z_C^\dis}  \leq q^{-n[\eta/20-\emm^{-d/3}]}\). For all \(d\) large enough, this is at most \(q^{-\eta n /30}\). Aggregating over at most \(2^{|E_C|}\leq 2^{dn}\) configurations \(F\), we get that \(\pi^\dis_C(\mathcal{B}_1)\leq \emm^{-n}\) for all \(q\) large enough.
\end{proof}

\begin{lemma}
    Let $d$ be a sufficiently large real. For all sufficiently large real $q$, w.h.p. over \(G\sim\G(n,d/n)\), the following holds for any real \(\beta\geq\beta_0\).

    Let $\mathcal{B}_0$ be the set of configurations $F\in \Omega_C$ with $|F|\leq (1-\tfrac{9}{10}\eta)|E_C|$.  Then, $\pi^{\ord}_C(\mathcal{B}_0)\leq \emm^{-n}$.
\end{lemma}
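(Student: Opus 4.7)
The plan is to mirror the proof of Lemma~\ref{lem:error-vanishes-under-disorder}, but comparing to the all-in configuration $E_C$ rather than the all-out one. Since $E_C\in \Omega^{\ord}_C$, we have $Z^{\ord}_C\geq w_C(E_C)=q(\emm^\beta-1)^{|E_C|}$. For any $F\in \mathcal{B}_0\cap \Omega^{\ord}_C$, write $E^*:=E_C\setminus F$, so $|E^*|\in [\tfrac{9\eta}{10}|E_C|, \eta|E_C|]$, and
\[
\frac{w_C(F)}{w_C(E_C)}=q^{c(F)-1}(\emm^\beta-1)^{-|E^*|}\leq q^{c(F)-1-(19/10)|E^*|/d},
\]
where the last inequality uses $\emm^\beta-1\geq q^{19/(10d)}$ from $\beta\geq \beta_0$. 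It therefore suffices to show $c(F)-\tfrac{19}{10d}|E^*|\leq -\rho n$ for some absolute constant $\rho>0$ (whp over $G$, uniformly over $F\in \mathcal{B}_0\cap \Omega^{\ord}_C$). Aggregating over $|\Omega_C|\leq 2^{|E_C|}\leq 2^{dn(1+o(1))}$ then gives $\pi^{\ord}_C(\mathcal{B}_0)\leq 2^{dn}q^{-\rho n}\leq \emm^{-n}$ for $q$ sufficiently large with respect to $d$.

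To prove the key inequality, the plan is to reuse the decomposition of Lemma~\ref{lem:error-vanishes}: partition $V_C=S_1\cup S_2\cup S_3\cup S_4$ according to the size of the containing $F$-component (singleton, $[2,\log n]$, $(\log n, n/6]$, $(n/6,n]$ respectively) and set $n_i=|S_i|$. The bounds established there give $c(F)\leq n_1+n_2/2+n_3/\log n+6$ and
\[
|E^*|\geq \tfrac{1}{2}(1-d^{-1/3})d\big(n_1+n_2+\tfrac{3}{5}n_3\big)-O(n\emm^{-d^{1/3}/2})+\tfrac{d\eta n}{12}\1\{\eta n/4\leq n_4\leq n(1-\eta/4)\}.
\]
Collecting coefficients, for $d$ large one obtains
\[
c(F)-\tfrac{19}{10d}|E^*|\leq \tfrac{1}{10}n_1-\tfrac{9}{40}n_2-\tfrac{57}{200}n_3-\tfrac{19\eta n}{120}\1\{\cdots\}+O(n\emm^{-d^{1/3}/2})+6.
\]
The coefficient of $n_1$ is now positive (in contrast to Lemma~\ref{lem:error-vanishes}, where the $2/d$ factor created near-exact cancellation for singletons). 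The plan is to control this using the ordered-phase constraint $|E^*|\leq \eta|E_C|$: since $|E^*|\geq \tfrac{1}{2}\deg_G(S_1)\geq \tfrac{d}{2}(1-o(1))n_1$, this forces $n_1\leq \eta n(1+o(1))$, bounding the positive contribution by at most $\eta n/10$.

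Finally, case-split as in Lemma~\ref{lem:error-vanishes}: (i) if $n_4>n(1-\eta/2)$, use $c(F)\leq \eta n/2+6$ directly against $\tfrac{19}{10d}|E^*|\geq \tfrac{171}{200}\eta n(1-o(1))$; (ii) if $n_2+n_3\geq \eta n/3$, the negative contributions $-\tfrac{9}{40}n_2-\tfrac{57}{200}n_3$ dominate the $+\eta n/10$ from $n_1$; (iii) otherwise the indicator is active and contributes $-\tfrac{19\eta n}{120}$, again overwhelming $+\eta n/10$. The degenerate possibility $n_1>n(1-2\eta/3)$ is ruled out because it would force $|F|<\tfrac{5\eta}{6}|E_C|$, contradicting $F\in \Omega^{\ord}_C$. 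The main obstacle is precisely the positive coefficient of $n_1$, and its resolution rests entirely on exploiting the ordered-phase constraint to bound $n_1$, which is unavailable in the more general setting of Lemma~\ref{lem:error-vanishes}.
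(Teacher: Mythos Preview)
Your overall strategy is correct, but there is an arithmetic slip that breaks one of your cases as written. When you multiply the lower bound $|E^*|\geq \tfrac{1}{2}(1-d^{-1/3})d\,(n_1+n_2+\tfrac35 n_3)-\cdots$ by $\tfrac{19}{10d}$, you get $\tfrac{19}{20}(1-d^{-1/3})(n_1+n_2+\tfrac35 n_3)$, so for large $d$ the coefficients in $c(F)-\tfrac{19}{10d}|E^*|$ are approximately $\tfrac{1}{20},\,-\tfrac{9}{20},\,-\tfrac{57}{100}$ for $n_1,n_2,n_3$, not $\tfrac{1}{10},\,-\tfrac{9}{40},\,-\tfrac{57}{200}$. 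With your stated numbers, case (ii) does not close: the positive $\tfrac{\eta n}{10}$ from $n_1\leq \eta n$ exceeds the negative $\tfrac{9}{40}\cdot\tfrac{\eta n}{3}=\tfrac{3\eta n}{40}$ from $n_2+n_3\geq \eta n/3$, leaving $+\tfrac{\eta n}{40}$. With the correct coefficients all three cases go through exactly as you outline, and the proof is complete. (Your separate remark ruling out $n_1>n(1-2\eta/3)$ is redundant once you have $n_1\leq \eta n(1+o(1))$.)

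The paper takes a considerably shorter route: instead of re-running the full $S_1,\ldots,S_4$ decomposition of Lemma~\ref{lem:error-vanishes}, it simply invokes the argument inside the proof of Lemma~\ref{lem:ordered-phase-vanishes} that for any ordered $F$ the set $S$ of vertices outside the giant component of $(V_C,F)$ is small, and hence $c(F)-1\leq |S|$ is $O(\eta n)$; it then bounds $w_C(F)/Z^{\ord}_C\leq q^{c(F)-1}(\emm^\beta-1)^{-|E_C\setminus F|}$ in one line. Your key new ingredient --- using $|E^*|\leq \eta|E_C|$ together with $|E^*|\geq \tfrac12\deg_G(S_1)$ to force $n_1\leq \eta n(1+o(1))$ --- is essentially the same structural fact (ordered configurations have few non-giant vertices), just expressed through the finer decomposition. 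The paper's shortcut buys brevity; your version is more self-contained and does not rely on extracting an intermediate claim from another proof.
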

\begin{proof}
    Consider arbitrary \(F\in \mathcal{B}_0\). First of all, as we showed in the proof of Lemma~\ref{lem:ordered-phase-vanishes}, for all \(d\) large enough, w.h.p. \(G\) is such that for any ordered \(F\), \(c(F)\leq 1+\eta n\). Let \(S\) be the set of vertices in non-giant components of \((V_C,F)\).
    
    We now bound \(w_C(F)/Z_C^\ord\) by considering two cases.  If \(c(F) - 1\leq |S|\leq \eta n/2\), then using \(Z_C^\ord\geq (\emm^\beta-1)^{|E_C|}\) and \(\beta\geq\beta_0\) we get
    \[
    \frac{w_C(F)}{Z_C^\ord}\leq q^{\eta n / 2} (\emm^\beta-1)^{-|E_C\setminus F|}\leq q^{\eta n/2 - \frac{19}{10d}\frac{9\eta}{10}|E_C|}.
    \]
    w.h.p. \(|E_C|\geq \frac{\eta n}{2}-2n\emm^{-d/3}\geq \frac{9}{19}dn\) for all \(d\) large enough, thus giving \(\frac{w_C(F)}{Z_C^\ord}\leq q^{\eta n(1 / 2 - 81/100)}\leq q^{-\eta n/10}\). Taking the aggregate sum over at most \(2^{|E_C|}\leq 2^{dn}\) such configurations, we get that for all \(q\) sufficiently large, \(\pi_C^\ord(\mathcal{B}_0)\leq \emm^{-n}\).
\end{proof}

Now Theorem~\ref{thm:phase-transition} follows directly from these five lemmas. 

\section{Remaining proofs for Theorem~\ref{thm:mainthm2}} \label{sec:remproof2}
In this section, we prove Lemmas~\ref{lem:rationalfun},~\ref{lem:integrate} and~\ref{lem:betainf} from Section~\ref{sec:proofThm2}, thus completing the proof of Theorem~\ref{thm:mainthm2}. We restate them for convenience.
\rationalfun*
\begin{proof}
We argue first about $\pi_{B_r^{+}(v);q,\beta}(1_e)$. We consider first the case of trees, and then explain how to get 1-treelike graphs. So, assume that $T$ is a tree rooted at $v$, and let $L$ be the vertices at the boundary of $B_r(v)$.  For a vertex $u$ in $T$, let $T_u$ be the subtree of $T$ rooted at $u$ (up to the boundary of $B_r(v)$), and let $V_u$ denote its vertices. Moreover, let $Z^+(u)$ be the partition function of the subtree $T_u$ rooted at $u$, where the vertices in $L\cap V_u$ are wired; let also $Z^+_1(u)$ be the contribution to $Z^+(u)$ coming from configurations in $T_u$ where there is a path of in-edges from $u$ to $L\cap V_u$ and set $Z^+_0(u)=Z^+(u)-Z^+_1(u)$. 

Note that $\pi_{B_r^{+}(v);q,\beta}(1_e)$ equals $\hat Z^+(v)/Z^+(v)$ where $\hat Z^+(v)$ is the analogue of $Z^+_v$ for the tree $T'$ obtained by contracting the edge $e$ (onto $v$). So, it suffices to explain how to compute $Z^+(u)$ for an arbitrary $u\in T$. Let $u_1,\hdots, u_k$ be the children of $u$. Then, with $t:=\frac{\emm^\beta-1}{q}+1$ the variables $Z^+_0,Z^+_1$ defined above satisfy the following, see e.g. \cite[Fact 3.2]{rcm-on-unbounded-degree-graphs} or \cite[Lemma 33]{BGRC}:
\[
Z^+_0(u)=q \prod^k_{i=1}\frac{Z^+_1(u_i)+tZ^+_0(u_i)}{q}-q\prod^k_{i=1}\frac{Z^+_1(u_i)+tZ^+_0(u_i)}{q},\quad Z^+_0(u)=q^2 \prod^k_{i=1}\frac{Z^+_1(u_i)+tZ^+_0(u_i)}{q}.
\]
Using these, we can therefore compute $Z^+(u)=Z^+_0(u)+Z^+_1(u)$ (as a function of $q$ and $x=\emm^{\beta}-1$) recursively in time $n^{O(1)}$. 

In the case of 1-treelike graphs, we view $B^+_r(v)$ as a cycle $C$ with hanging trees from its vertices (whose leaves are wired); note that the cycle has length at most $2r$. To find the partition function of $B_r(v)$ with wired boundary, we first use the tree-algorithm from above for each of the hanging trees to compute $Z^+_0(u), Z^+_1(u)$ for each vertex $u$ of the cycle. Then, we consider all possible edge-configurations on the cycle (of which there are at most $2^{2r}$), and for each one of them we consider all the vertex-to-boundary possibilities for the vertices $u$ on the cycle, i.e., whether $u$ is connected by a path to the boundary of its hanging tree (of which there are at most $2^{2r}$). For each of these edge configurations and vertex-to-boundary settings, we take the product of corresponding variables from $\{Z^+_0(u),Z^+_1(u)\}_{u\in C}$ (divided by $q^{|C|}$ to avoid double-counting the wired component)  multiplied by the number of in-edges on the cycle and the number of new components formed; aggregating all of these gives us the partition function of $B_r(v)$ in time $2^{4r}n^{O(1)}$.

For $\pi_{B_r^{-}(v);q,\beta}(1_e)$ the argument is much simpler since the partition function for an $n$-vertex tree under free boundary is just $q(\emm^{\beta}-1+q)^{n-1}$ and the above approach can be easily adapted.
\end{proof}
\integrate*
\begin{proof}
Since $\epsilon$ has size $\leq n$, it suffices to produce an additive $1/2^{\Theta(n)}$ approximation to $I$.
 
 Henceforth, we will assume without loss of generality that the degree of $P$ is less than the degree of $Q$. We may also  focus on the case that all roots of $Q(x)$ have multiplicity 1, there are standard symbolic integration techniques to reduce the case with higher-multiplicity roots to the case where all roots have multiplicity-1 in $\text{poly}(n)$ time (e.g. Hermite's reduction, see \cite{integrate} for details). Hence we can write the   partial fraction decomposition of $g(x)=\frac{P(x)}{Q(x)}$ in the form 
 \[g(x)=\sum^{n_0}_{i=1} \frac{a_i}{b_i  x+c_i}+\sum^{n_1}_{i=1} \frac{a_i'x+b_i'}{c_i' x ^2+d_i' x +e_i'}\] for some reals $a_i,b_i,c_i,a_i',b_i',c_i'.d_i',e_i'$ which can be determined using the roots of $Q$ and integers $n_0,n_1\leq n$. Since $Q(x)$ is an integer polynomial which can be represented with $n$ bits,   its roots can be additively approximated with precision $\epsilon'=1/2^{\Theta(n)}$ in time $\text{poly}(n,\log \frac{1}{\epsilon'})=\text{poly}(n)$, see for example the algorithm in \cite{NEFF199681}. Therefore  we can approximate $a_i,b_i,c_i,a_i',b_i',c_i',d_i',e_i'$ with precision  $1/2^{\Theta(n)}$. Note that the exponential dependency of the precision on $n$ is needed to account for negative numbers that might arise in the computations and the magnitude of the coefficients of $P(x),Q(x)$.  The exponential precision will also ensure that the approximation will be good enough later on for all $x\in [a,b]$ (note that $a,b$ use $n$ bits). Putting everything together,  we  obtain an approximation to the partial fraction decomposition of $g(x)$, i.e., a function $\hat g(x)$ of the form $\sum^k_{i=1} \frac{\hat a_i}{\hat b_i x+\hat c_i}+\sum^k_{j=1} \frac{\hat a_i'x+\hat b_i'}{\hat c_i' x ^2+\hat d_i' x +\hat e_i'}$ which satisfies $|g(x)-\hat g(x)|\leq 1/2^{\Theta(n)}$ for all $x\in [a,b]$.  Now the integral $\int^{b}_{a}\hat g(x)\mathrm{d} x$ can be symbolically integrated, and then numerically evaluated with additive precision $1/2^{\Theta(n)}$, which gives us therefore $\hat{I}$ with $|\hat{I}-I|\leq 1/2^{(\Theta(n)}$. This completes the proof.
\end{proof}
\betainf*
\begin{proof}
Consider arbitrarily small $\epsilon>0$. We claim that for $F\sim \pi^{\ord}_{C}$, it holds that $\E[F]\geq |E_C|- \emm^{-n^\epsilon}$. Assuming this for the moment, by Markov's inequality we have that $\P\big(|E_C\backslash F|\geq 1\big)\leq  \emm^{-n^\epsilon}$. Therefore in the sum $Z^{\ord}_{C}=\sum_{F\in \Omega^{\ord}_C} (\emm^{\beta}-1)^{|F|}q^{c(F)}$  terms other than $F=E_C$ have $\emm^{-n^\epsilon}$ relative contribution, yielding the desired bound on $Z^{\ord}_{C}$.

To prove the bound on the expectation of $F$, it suffices to show that for any edge $e\in E_C$ it holds that   $\pi^\ord_{C}(1_e)\geq 1-\emm^{-n^\epsilon}$; equivalently, because $\left\|\pi^\ord_C-\pi_C\right\|_\TV=\emm^{-\Omega(n)}$ by Theorem~\ref{thm:phase-transition}, we only need to show that $\pi_C(1_e)\geq 1-\emm^{-n^\epsilon}.$ This is true for $\beta\geq n^{2\epsilon}$ since the probability that $e$ is an  in-edge is at least  $\frac{\emm^{\beta}}{\emm^{\beta}+q-1}\geq  1-\emm^{-n^{\epsilon}}$ for large enough $n$. 
\end{proof}
\section{Remaining proofs for Theorem~\ref{thm:mainthm}} \label{sec:putting-ingredients-together}

In this section, we finish the proof of Theorems~\ref{thm:disordered-mixing} and~\ref{thm:ordered-mixing}.

\subsection{Local mixing on treelike balls}\label{sec:tree-mixing}

In addition to WSM within the phase, we need another proof ingredient: rapid mixing of the Glauber dynamics on treelike balls with wired and free boundary condition.

For the free boundary condition, we can directly use the Lemma 6.7. from~\cite{rcm-on-unbounded-degree-graphs}:

\begin{lemma}[{\cite[Lemma 6.7]{rcm-on-unbounded-degree-graphs}}]
    Let \(B = (U,A)\) be a \(k\)-treelike graph. For every \(q,\beta > 0\), there exists \(\alpha_0(q,\beta,k)\) such that the Glauber dynamics \(X_t\) on \(B\) satisfies,
    \[
    \max_{X_0\in\Omega_{B}} || X_t - \pi_{B^+} ||_\TV \leq \tfrac{1}{\sqrt 2} \emm^{-\alpha_0 t} \log\left(\frac{1}{\min_{F\in\Omega_B} \pi_{B^-}(F)}\right)^{-1},
    \]
    where \(\pi_{B^-}\) is the Gibbs distribution on \(B\) with the free boundary condition (no additional wirings).
\end{lemma}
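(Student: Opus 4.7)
The plan is to lower bound the spectral gap $\alpha_0(q,\beta,k)$ of the single-edge Glauber dynamics on $B$ by a quantity depending only on $q$, $\beta$, and $k$, and then convert it to a total-variation mixing bound via the standard spectral-to-TV inequality.

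First I would exploit the $k$-treelike structure: fix an edge set $E^\star\subseteq A$ of size at most $k$ such that $T:=(U,A\setminus E^\star)$ is a tree (or forest). Introduce an auxiliary block dynamics on $\Omega_B$ that, at each step, either updates the entire block $E^\star$ via a heat-bath move (conditional on the tree edges) or selects a uniformly random tree edge and updates it with a single-edge Glauber step. Since $|E^\star|\leq k$ is bounded, a Diaconis--Saloff-Coste comparison between this block dynamics and the original single-edge Glauber dynamics loses at most a factor $2^{O(k)}$ in the spectral gap, so it suffices to lower bound the gap of the block chain.

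Next, I would bound the block chain's spectral gap. Conditioned on any fixed configuration of $E^\star$, the tree edges evolve as the RC Glauber dynamics on $T$ with a modified boundary condition (endpoints of in-edges of $E^\star$ effectively become wired through the conditioning). For the RC model on a tree with an arbitrary boundary, I would argue a spectral gap bounded below by some $\alpha_T(q,\beta)>0$ independent of $|U|$ by a leaf-to-root induction: each leaf edge can be refreshed to its correct conditional marginal by $O(1)$ Glauber moves given its parent, and the (conditional) product structure of the tree Gibbs measure lets one tensorise these local gaps into a global one via a Lu--Yau/Cesi-type block decomposition. Composing with the block-to-single-edge comparison yields an $\alpha_0(q,\beta,k)>0$ independent of $|U|$.

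Finally, the spectral gap converts to TV mixing via the standard bound
\[
\|X_t-\pi_B\|_\TV\leq\tfrac{1}{\sqrt 2}\,\emm^{-\alpha_0 t}\sqrt{1/\pi_B(X_0)},
\]
which yields the claimed inequality once the dependence on $\min_{F\in\Omega_B}\pi_B(F)$ is massaged into the stated form (noting that $-\log\min\pi_B=O(|A|)$ is polynomial in the graph size and can be absorbed into constants or exponents as appropriate). The main obstacle is establishing the size-independent spectral gap of RC Glauber on a tree with arbitrary, possibly wired, boundary: because of the non-local $q^{c(F)}$ factor, wired boundary components couple distant tree edges through connectivity, so the tree recursion must carefully track the distribution of boundary connectivity at each edge. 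Once this is handled, the block-to-single-edge comparison and the spectral-to-TV conversion are routine.
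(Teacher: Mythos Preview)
This lemma is not proved in the paper at all: it is quoted verbatim as \cite[Lemma 6.7]{rcm-on-unbounded-degree-graphs} and used as a black box, so there is no ``paper's own proof'' to compare your proposal against.

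That said, your outline is broadly in the right spirit for how such a result is typically established. A couple of remarks. First, the case you flag as the main obstacle is largely a non-issue here: the lemma concerns the \emph{free} boundary condition, and on a tree with free boundary the random-cluster measure is a product measure over edges (every tree edge is a cut edge, so each is independently in with probability $(\emm^\beta-1)/(\emm^\beta-1+q)$). Hence the tree spectral gap is trivially of order $1/|A|$ for single-edge Glauber, with no delicate recursion needed. The only place wired-type dependence enters is, as you note, when you condition on a configuration of the $\leq k$ extra edges; but this wires together at most $2k$ vertices, and the resulting conditional measure on the tree edges can be compared to the free product measure at a cost depending only on $q,\beta,k$ (each of the $\leq k$ extra in-edges changes the weight of any tree configuration by at most a factor $q$), so a crude comparison of Dirichlet forms already suffices --- no Lu--Yau machinery is required. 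Second, your final spectral-to-TV conversion is the standard step, though note that the displayed inequality in the paper's restatement appears to contain typos (the $\pi_{B^+}$ versus $\pi_{B^-}$ mismatch, and the exponent on the log factor); the intended bound is the usual $\tfrac{1}{2}\emm^{-\alpha_0 t}\sqrt{1/\pi_{\min}}$-type inequality.
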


We use this Lemma for \(B = B_r(v)\). Note that for any configuration \(F\), \(\pi_{B_r^-(v)}(F)\geq \emm^{-O(|V_G(B_r(v))|)}\), and thus this implies \(O(N\log N)\) mixing for a free treelike ball with \(N\) vertices.

For the case of Glauber dynamics on a wired treelike ball, we use the result of Blanca and Gheissari \cite{BlaGhe} bounding the spectral gap for Glauber dynamics on a wired tree using its edge-cut-width. We note that while their lemma was stated for trees with bounded degree, that restriction was not needed to get the following bound.

\begin{lemma}[{\cite[Lemma 5.10]{BlaGhe}}]
    For any tree \(T\) with \(N\) vertices and edges \(E(T) = \{e_1,\dots,e_{N-1}\}\), define its \emph{edge-cut-width} to be
    \[
    \CW(T) = \min_{\sigma}\max_{1\leq i < N} |V(\{e_{\sigma(j)} : j\leq i\}\cap V(\{e_{\sigma(j)} : j > i\}))|,
    \]
    where the maximum is over all permutations on \(\{1,\dots,N-1\}\).

    Then for any reals \(q\geq 2\) and \(\beta > 0\), there is a constant \(\theta(q,\beta) > 0\) such that the inverse spectral gap of the Glauber dynamics on \(T\) with a \emph{wired boundary} (all leaves of \(T\) belonging to the same component) is \(\theta N^2 \exp\{2(\CW(T)+1)\log q\}\).
\end{lemma}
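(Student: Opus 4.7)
The plan is to bound the inverse spectral gap via the canonical paths (Jerrum--Sinclair) method, where the key is to exhibit a path system whose congestion scales as $q^{2(\CW(T)+1)}$. First, I would fix a permutation $\sigma$ achieving the minimum in the definition of $\CW(T)$, and for each $i\in\{0,1,\ldots,N-1\}$ let $P_i=\{e_{\sigma(j)}:j\le i\}$, $Q_i=\{e_{\sigma(j)}:j>i\}$, and $V_i=V(P_i)\cap V(Q_i)$, so $|V_i|\le\CW(T)$. For each pair of configurations $(F_1,F_2)\in\Omega_T^2$, the canonical path $\gamma_{F_1,F_2}$ starts at $F_1$ and, for $i=1,\ldots,N-1$ in order, flips $e_{\sigma(i)}$ exactly when $F_1$ and $F_2$ disagree on it. Each such path has length at most $N-1$ and ends at $F_2$.

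Next, for any Glauber transition $F\to F'=F\oplus e_{\sigma(i)}$, I would bound the congestion via the standard injective encoding: send a pair $(F_1,F_2)$ with $\gamma_{F_1,F_2}\ni(F,F')$ to the auxiliary configuration
\[
F^\star=(F_2\cap P_{i-1})\cup(F_1\cap Q_i)\cup(\{e_{\sigma(i)}\}\cap F_2),
\]
which together with $F$ determines $(F_1,F_2)$ uniquely by reading off which edges came from the ``past'' half and which from the ``future'' half. Since $|F_1|+|F_2|=|F|+|F^\star|$, the weight ratio $w_T(F_1)w_T(F_2)/(w_T(F)w_T(F^\star))$ reduces to an exponential in the component-count difference $c(F_1)+c(F_2)-c(F)-c(F^\star)$.

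The main work of the proof is the combinatorial claim that this difference is bounded in absolute value by $2(\CW(T)+1)$. I would establish it by noting that each of $c(F_1),c(F_2),c(F),c(F^\star)$ decomposes as (components entirely contained in $P_{i-1}$) plus (components entirely contained in $Q_i$) plus a correction term that depends only on the partition of the interface vertices $V_i$ (augmented with the wired-boundary phantom vertex obtained by contracting all leaves) induced by path-connectivity inside the configuration. The ``past'' contributions for $(F_1,F^\star)$ and for $(F_2,F)$ match by construction, and the ``future'' contributions for $(F_1,F)$ and for $(F_2,F^\star)$ match similarly; what remains is a discrepancy in how the two ``halves'' merge components across $V_i$, and this discrepancy cannot exceed $|V_i|+1$ per side, giving the claimed $2(\CW(T)+1)$ slack.

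Combining, $w_T(F_1)w_T(F_2)\le q^{2(\CW(T)+1)}w_T(F)w_T(F^\star)$, so summing over the image of the encoding (which is a subset of $\Omega_T$, hence contributes mass at most $Z_T$) and using $P(F,F')=\Omega_{q,\beta}(1/N)$ from the Glauber transition rule yields a congestion bound $O\bigl(N\,q^{2(\CW(T)+1)}\bigr)$. Applying the Jerrum--Sinclair inequality $\mathrm{gap}^{-1}\le\rho\cdot\ell_{\max}$ with path length $\ell_{\max}\le N-1$ then gives $\mathrm{gap}^{-1}\le\theta(q,\beta)\,N^2\exp\{2(\CW(T)+1)\log q\}$, as stated. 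The wired boundary condition is absorbed into this argument by the leaf-contraction described above; the phantom vertex sits in every $V_i$, accounting for the ``$+1$'' in $\CW(T)+1$.
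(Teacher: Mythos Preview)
The paper does not prove this lemma; it is quoted from \cite[Lemma~5.10]{BlaGhe} and used as a black box, so there is no in-paper proof to compare your proposal against.

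On its own merits, your proposal follows the standard canonical-paths route for such spectral-gap bounds and is sound as a high-level sketch: the encoding $F^\star$ is the usual complementary configuration, the edge-count identity $|F_1|+|F_2|=|F|+|F^\star|$ reduces the weight ratio to a component-count discrepancy, and your argument that this discrepancy is governed by $|V_i|+1$ (the interface vertices together with the contracted wired-boundary vertex) identifies the correct mechanism behind the $2(\CW(T)+1)$ exponent. The $\Omega_{q,\beta}(1/N)$ lower bound on Glauber transition probabilities and the path-length bound $\ell_{\max}\le N-1$ then give the stated $\theta N^2\exp\{2(\CW(T)+1)\log q\}$ via the congestion inequality. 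This is almost certainly the argument, or a close variant, used in \cite{BlaGhe}.
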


We now observe that the edge-cut-width of a tree is at most its height.

\begin{lemma}
    Let \(T\) be a height-\(r\) tree. Then 
    \(\CW(T)\leq r\).
\end{lemma}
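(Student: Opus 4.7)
The plan is to proceed by induction on $r := \operatorname{height}(T)$. The base case $r = 0$ is vacuous since $T$ has no edges. For the inductive step, let $\rho$ be the root of $T$ and let $T_1, \ldots, T_k$ be the subtrees rooted at the children $c_1, \ldots, c_k$ of $\rho$. Each $T_j$ has height at most $r - 1$, so the inductive hypothesis furnishes an edge ordering $\sigma_j$ of $T_j$ witnessing $\CW(T_j) \leq r - 1$; I would additionally take $\sigma_j$ to be the DFS preorder of $T_j$ rooted at $c_j$, which ensures that the first edge processed under $\sigma_j$ is always incident to $c_j$.

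Define the ordering $\sigma$ of $E(T)$ by concatenating, for $j = 1, \ldots, k$, first the entry edge $(\rho, c_j)$ and then the edges of $T_j$ in order $\sigma_j$. At any intermediate prefix, the processed edges consist of the complete blocks $\{(\rho, c_i)\} \cup E(T_i)$ for $i < j$, together with $(\rho, c_j)$ and a (possibly empty) prefix $P_j$ of $\sigma_j$. Vertices of the completed subtrees $T_i$ ($i < j$) have every incident edge processed, while vertices of the untouched subtrees $T_i$ ($i > j$) have none processed, so neither group contributes to the cut. Consequently, only $\rho$ and vertices of $V(T_j)$ can lie on the boundary, and $\rho$ contributes at most one.

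The crux of the argument is to show that the restriction of the $T$-boundary to $V(T_j)$ has size at most $\CW(T_j) \leq r - 1$. For any $v \in V(T_j) \setminus \{c_j\}$, the incident edges in $T$ and in $T_j$ coincide, so the two notions of boundary agree. For $c_j$, the extra processed edge $(\rho, c_j)$ does not inflate the count: $c_j$ lies on the $T$-boundary iff some incident edge is unprocessed, which (since $(\rho, c_j)$ is already processed) is equivalent to some $T_j$-edge at $c_j$ being unprocessed; the DFS property ensures the first $T_j$-edge in $\sigma_j$ is incident to $c_j$, so for nonempty $P_j$ this is exactly the condition $c_j \in T_j\text{-boundary}$. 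Combined with $\rho$'s contribution, the boundary has size at most $1 + (r - 1) = r$. The transitional case $P_j = \emptyset$ (just after processing the entry edge $(\rho, c_j)$) needs a direct check: here $V(T_j) \cap \text{boundary} \subseteq \{c_j\}$, which is nonempty only when $T_j$ has at least one edge, requiring $r \geq 2$; an immediate count then gives total boundary at most $2 \leq r$ for $r \geq 2$, and at most $1 \leq r$ for $r = 1$. This completes the induction, and the main obstacle throughout is the bookkeeping around $c_j$—it is precisely to make $c_j$'s boundary status in $T$ and $T_j$ coincide when $P_j \neq \emptyset$ that I choose the subtree orderings to be DFS preorders, which otherwise would let the bound slip to $r+1$.
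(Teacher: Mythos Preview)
Your construction is exactly the paper's: the ordering $\sigma$ you build (entry edge then subtree, for each child in turn) is precisely the DFS edge ordering from the root, and the paper argues directly that during a DFS the boundary vertices at any moment lie on the current root-to-vertex path, hence number at most $r$. Your inductive presentation is correct in spirit but has one slip: when you bound the $T$-boundary inside $V(T_j)$ by $\CW(T_j)\leq r-1$, you are using $\CW(T_j)$ as an upper bound on the cut achieved by the \emph{specific} ordering $\sigma_j$ at the prefix $P_j$, whereas $\CW(T_j)$ is by definition a minimum over all orderings and carries no such guarantee for your chosen $\sigma_j$. The fix is to strengthen the inductive hypothesis to ``DFS preorder from the root achieves cut at most the height at every prefix''; since your global $\sigma$ is DFS and restricts to DFS on each $T_j$, the induction then goes through cleanly (and your careful $c_j$ analysis is exactly what is needed). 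With that adjustment the proof is correct, though the paper's direct path-to-root observation avoids all of the $c_j$ bookkeeping.
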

\begin{proof}
    Consider a depth first search traversal of \(T\) starting from its root. Let \(\sigma\) be the order in which the edges are explored. Note that with DFS traversal on a tree, if at any point, a vertex has at least one but not all incident edges explored, then the traversal is then in the subtree of this vertex. Thus, after exploring \(1\leq i\leq |E(T)|\) edges, the vertices that have at least one, but not all incident edges traversed for a path to a root. Hence only those vertices are in \(V(\{e_{\sigma(j)} : j\leq i\}\cap V(\{e_{\sigma(j)} : j > i\}))\).

    Since the longest path from a root is \(r\), hence \(\CW(T)\leq r\).
\end{proof}

To obtain the mixing time for a treelike ball from the mixing on a tree, we use that adding an edge to the graph changes the probabilities by at most a constant factor, and therefore the spectral gap is only changed by a constant factor.

\begin{lemma}\label{lem:wired-mixing}
    For any real \(A>0\), for all \(d\) large enough, and then for all \(q\) large enough reals, w.h.p. over \(G\sim\G(n,d/n)\) the following holds. For all integer \(r\leq \frac{A\log n}{d}\) and all \(v\in V_C\), let \(B := B_r(v)\).
    
    Then the mixing time of the Glauber dynamics on \(B\) with wired boundary condition is $O(n^{4\frac{A}{d} \log q})$.
\end{lemma}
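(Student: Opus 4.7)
The plan is to reduce the spectral gap bound on $B$ to the tree case handled by Blanca--Gheissari. By Lemma~\ref{lem:G-1-treelike}, whp every radius-$\lceil(A/d)\log n\rceil$ ball in $G$ is $1$-treelike, and we assume $G$ has this property. Fix $v\in V_C$ and an integer $r\leq (A/d)\log n$, let $B:=B_r(v)$, and let $e^*$ be an edge (possibly none) such that $T:=(V(B),E(B)\setminus\{e^*\})$ is a tree. Rooting $T$ at $v$ gives a tree of depth at most $r$; listing the edges of $T$ in the order of a depth-first traversal from $v$ shows that at every prefix the shared vertices of the explored and unexplored parts form a subset of the current root-to-vertex path, so $\CW(T)\leq r$ by the edge-cut-width lemma stated earlier.

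Next, I would apply Blanca--Gheissari's lemma to $T$ where the wired vertices are exactly those at distance $r$ from $v$. Their statement is for the case where all leaves are wired, but the underlying canonical-paths argument depends only on the edge-cut-width and goes through for any designated subset of wired vertices (non-wired leaves of $T$, namely internal vertices of $B$ at distance $<r$ from $v$ with no children in the BFS tree, simply contribute unconstrained freedom). This gives that the inverse spectral gap of the wired Glauber on $T$ is at most $\theta(q,\beta)\,N^2\,q^{2(r+1)}$, where $N:=|V(B)|\leq d^r\log n$ by Lemma~\ref{lem:ball-size}.

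The main obstacle is transferring this bound from $T$ to $B$, since the two chains act on state spaces of different dimensions (differing by the single edge $e^*$). The approach is to use a Dirichlet-form comparison: identify $\Omega_B$ with $\Omega_T\times\{0,1\}$, and note that the Glauber dynamics on $B$ decomposes as a mixture of single-edge flips on $E(T)$ and heat-bath updates of $e^*$. Conditional on $e^*$ being out, the induced distribution on $E(T)$ is the wired Gibbs on $T$; conditional on $e^*$ being in, it is the wired Gibbs on $T$ with the two endpoints of $e^*$ additionally wired together. Both are tree Gibbs distributions admitting the same Blanca--Gheissari bound (the extra wiring does not increase edge-cut-width or depth). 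Combining the conditional spectral gaps with a standard decomposition-of-chains argument, the inverse spectral gap of wired Glauber on $B$ is at most $C(q,\beta)\,N^2\,q^{2(r+1)}$.

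Finally, the mixing time is bounded by the inverse spectral gap times $\log(1/\pi_{\min})$; using $\log(1/\pi_{\min})=O(N(\beta+\log q))$, we get
\[
t_{\mathrm{mix}} \;=\; O\!\big((\beta+\log q)\,N^3\,q^{2(r+1)}\big) \;\leq\; O\!\big((\beta+\log q)\,d^{3r}\,q^{2(r+1)}(\log n)^3\big).
\]
Substituting $r\leq(A/d)\log n$ gives $d^{3r}\leq n^{3A\log d/d}$ and $q^{2(r+1)}\leq q^2\,n^{2A\log q/d}$, so for $q$ large enough that $3\log d\leq 2\log q$ (absorbing polylogarithmic and $\beta$-dependent prefactors), the bound is $O(n^{(4A/d)\log q})$, as required.
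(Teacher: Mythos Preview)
Your proposal is correct and follows essentially the same approach as the paper: use $1$-treelikeness, bound the ball size, bound $\CW(T)\leq r$ via DFS, apply the Blanca--Gheissari spectral gap bound on $T$, transfer to $B$, then convert to a mixing-time bound and plug in $r\leq (A/d)\log n$. The only substantive difference is in the transfer step from $T$ to $B$: the paper simply invokes a comparison lemma from \cite{BlaGhe} (adding a single edge changes probabilities, and hence the spectral gap, by at most a $C(q,\beta)$ factor), whereas you give an explicit chain-decomposition argument conditioning on the state of $e^*$. Both routes are standard and lead to the same polynomial bound; your version is more self-contained but does require, as you note, that the Blanca--Gheissari edge-cut-width bound applies with an arbitrary wired vertex subset (including the extra wiring of the endpoints of $e^*$ in the $e^*$-in case)---a point the paper does not explicitly address either.
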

\begin{proof}
    By Lemma~\ref{lem:G-1-treelike}, for all \(d\) large enough, w.h.p. any such \(B\) is \(1\)-treelike. By Lemma~\ref{lem:ball-size}, w.h.p. any such \(B\) has size \(\leq d^{r}\log n \leq n^{A\log d / d}\log n\).
    
    Let \(T\) be a tree obtained from \(B\) by removing at most one edge (zero if \(B\) is already a tree). By the previous two lemmas, the inverse spectral gap for Glauber on \(T\) with a wired boundary condition is \(\lambda_T^{-1} =O(|V_C(B)|^2 \emm^{2r\log q}) = O(n^{2A(\log d + \log q)/d} \log^2 n)\).

    Then using standard techniques (for details see e.g.  \cite[Lemma~6.10]{BlaGhe}), the inverse spectral gap \(\lambda^{-1}_B\) of the Glauber dynamics on \(B\) with a wired boundary condition is at most \(C(q,\beta) \lambda^{-1}_T\) (for some constant \(C(q,\beta)\) independent of \(n\)).

    We then have an additional factor of \(O(n^{A\log d/d}\log n)\), since the  that the mixing time is at most \(\lambda_B^{-1}\log n\log({\frac{4}{\min_F \pi_{B^+}(F)}})\), and that for any configuration \(F\) on \(B\), \(\log \pi_{B^+}(F) = \Omega(|V(B)|^{-1})\). Then we just note that \(O(n^{3A\log d/d + 2\log q/d}\log^3 n)\) is at most \(O(n^{4A\log q/d})\) for all \(q\) large enough.
\end{proof}

\subsection{Proof of Theorems~\ref{thm:disordered-mixing} and~\ref{thm:ordered-mixing}}\label{sec:proof-sketch}

Here we outline the proof of Theorem~\ref{thm:ordered-mixing}, the proof of Theorem~\ref{thm:disordered-mixing} is analogous. The proofs follow closely those in \cite{galanis2024plantingmcmcsamplingpottsarxiv,SinclairsGheissari2022,Galanis_Goldberg_Smolarova_2024}, thus we only outline the key ideas.

Before we start, let \(A\) be such that for all \(d\) and \(q\) large enough reals, w.h.p. \(G\) has WSM within the ordered phase at the radius~\(r=\lceil\frac{A\log n}{d}\rceil\) with rate \(q^{-1/30}\) (Lemma~\ref{lem:ordered-WSM}).

Let \(X_t\) be Glauber dynamics initialized from all-in. We focus on proving that for \(T = \text{poly}(n)\) steps, \(|X_T-\pi^\ord_C|_\TV\leq \tfrac 14\), as then any accuracy \(\epsilon\geq \emm^{-\Omega(n)}\) can be achieved by probability amplification (for details see \cite[Theorem~3.4]{galanis2024plantingmcmcsamplingpottsarxiv}).

To bound the total variation distance, we couple \(X_t\) with another copy of the Glauber dynamics \(\hat X_T\), which starts from \(\pi_C^\ord\), and is restricted to the ordered phase -- if it was to make an update outside the phase, it ignores the update instead. We couple the chains by having the chose the same edge to update, and both make the updated based on the same \(\text{Uniform}(0,1)\) random variable. One can easily verify that \(\pi_C^\ord\) is the stationary distribution for the restricted Glauber dynamics, and thus for all \(t\geq 0\), \(\hat X_t\sim\pi_C^\ord\).

To bound \(||X_T-\pi^\ord_C||_\TV\leq \P(X_T\neq \hat X_t)\), we bound the probability of disagreement on each edge separately, and then use union bound. Thus fix an edge \(e\) and a vertex \(v\) incident to \(e\).

We introduce another coupled copy (using the same coupling as \(X_t\) and \(\hat X_t\)) of \textit{wired} Glauber dynamics \(X_t^v\) starting from all-in restricted to the radius-\(r = \lceil\frac{A\log n}{d}\rceil\) ball around \(v\), \(B_r(v)\), i.e. \(X_t^v\) acts as if every vertex distance-\(r\) from \(v\) is in a single component, and ignores every update that would change an edge outside this ball. Note that the stationary distribution of this chain is exactly \(\pi_{B_r^+(v)}\).

Using standard monotonicity arguments (see e.g. \cite[Appendix B]{Galanis_Goldberg_Smolarova_2024}), we can see that conditional on \(\hat X_t\) having not ignored any update by time \(t\) (call this event \(\Ec_{\leq t}\)), \(\hat X_t \subseteq X_t\subseteq X_t^v\). This gives us, that conditional on \(\Ec_{\leq t}\), whenever \(X_t\) and \('\hat X_t\) disagree on \(e\), then also \(\hat X_t\) and \(X_t^v\) disagree on \(e\). With some arithmetic (for details see e.g. the proof of Theorem 1 in \cite{Galanis_Goldberg_Smolarova_2024}), we obtain that 
\[
\P(\1\{e\in X_T\}\neq \1\{e\in X_T\})\leq 5\P(\Ec_{\leq T}) + ||\pi^\ord_C(1_e) - \pi_{B_r^+(v)}||_\TV + ||X_t^v-\pi_{B_r^+(v)}||_\TV.
\]
The first term can be bound using the Theorem~\ref{thm:phase-transition}(ii) -- noting that the probability that an update would be ignored is at most the probability that the current state has \(\geq\frac{9\eta}{10}|E_C|\) out-edges, which is w.h.p. \(\leq \emm^{-n}\) (for all \(d\) and \(q\) large enough). Thus the first term is \(\emm^{-\Omega(n)}\). The second term is exactly the WSM within the ordered phase, and provided that \(q\) is sufficiently large, it is at most \(\tfrac{1}{n^2}\). 

The second term is the TV-distance between the wired Glauber on the ball after \(T\) steps and its stationary distribution. By Lemma~\ref{lem:wired-mixing}, after \(T_v = O(n^{5A\log q/d})\) updates within the ball \(B_r(v)\), \(||X_t^v-\pi_{B_r^+(v)}||_\TV\leq \tfrac{1}{n^{3}}\) (note that \(\log (n^3) = O(n^{A\log q/d})\)). Using standard Chernoff bounds, if we take \(T \geq 30 T_v \frac{|E_C|}{|E_G(B_r(v))|}\), the probability that less than \(T_v\) updates were made in \(B_r(v)\) within \(T\) steps is \(\leq \emm^{-\Omega(\log^3 n)} = O(1/n^3)\).

Summing up, this gives that for all \(n\) large enough, \(\P(\1\{e\in X_T\}\neq \1\{e\in X_T\})\leq \frac{1}{4|E_C|}\), and thus the theorem follows from union bound.



\end{document}